\newtheorem{definition}{\textbf{Definition}}
\newtheorem{theorem}{\textbf{Theorem}}
\newtheorem{lemma}{\textbf{Lemma}}
\newtheorem{corollary}{\textbf{Corollary}}
\newtheorem{remark}{\textbf{Remark}}
\newtheorem{proposition}{\textbf{Proposition}}
\newtheorem*{proof}{\textbf{Proof}}
\newcommand{\GF}{\mathsf{GF}}
\newcommand{\formNF}{\mathcal{F}}
\newcommand{\formN}{\mathcal{F}^\mathsf{nf}}
\newcommand{\gameNF}[2]{\langle #1,#2 \rangle}
\newcommand{\outComeNF}{\mathsf{O}}
\newcommand{\outCNF}{\varrho}
\newcommand{\Head}{\ensuremath{\mathsf{head}}}
\newcommand{\prefUniq}[1]{\mathsf{Pref}(#1)}
\newcommand{\A}{\mathsf{A} }
\newcommand{\B}{\mathsf{B} }
\newcommand{\setA}{A}
\newcommand{\setB}{B}
\newcommand{\colSet}{\mathsf{K} }
\newcommand{\s}{\ensuremath{\mathsf{s}}}
\newcommand{\Opt}{\ensuremath{\mathsf{Opt}}}
\newcommand{\colFunc}{\mathsf{col} }
\newcommand{\Dist}{\mathcal{D} }
\newcommand{\Supp}{\mathsf{Supp} }
\newcommand{\distribSet}{\mathsf{D} }
\newcommand{\distribFunc}{\mathsf{dist} }
\newcommand{\prob}[2]{\mathbb{P}^{#1}_{#2} }
\newcommand{\MarVal}[1]{\chi_{#1} }
\newcommand{\LiftVal}[1]{\nu_{#1} }
\newcommand{\G}{\mathcal{G} }
\newcommand{\Aconc}{\mathcal{C} }
\newcommand{\Games}[2]{\langle #1,#2 \rangle }
\newcommand{\AparamConc}{\ensuremath{\langle 
		Q,(\setA_q)_{q \in Q},(\setB_q)_{q \in Q},\distribSet,\delta,\distribFunc,\colSet,\colFunc \rangle}}
\newcommand{\SetStrat}[2]{\ensuremath{\mathsf{S}_{#1}^{#2} }}
\newcommand{\minit}{m_{\text{\upshape init}}}
\newcommand{\restr}[2]{\ensuremath{\left.#1\right|_{#2}}}
\newcommand{\cyl}{\ensuremath{\mathsf{Cyl}}}
\newcommand{\Borel}{\mathsf{Borel}}
\newcommand{\outM}{\ensuremath{\mathsf{out}}}
\newcommand{\va}{\ensuremath{\mathsf{val}}}
\newcommand{\N}{\ensuremath{\mathbb{N}}}
\date{Université Paris-Saclay, CNRS, ENS Paris-Saclay, LMF, 91190 Gif-sur-Yvette, France}
\begin{document}
	\author{Benjamin Bordais, Patricia Bouyer and Stéphane Le Roux}
	
	\title{Sub-game optimal strategies in concurrent games with prefix-independent objectives}
	\maketitle
	
	\begin{abstract}
		We investigate concurrent two-player win/lose stochastic games on
		finite graphs with prefix-independent objectives. We characterize
		subgame optimal strategies and use this characterization to show
		various memory transfer results: 1) For a given (prefix-independent)
		objective, if every game that has a subgame \emph{almost-surely
			winning} strategy also has a positional one, then every game that
		has a subgame \emph{optimal} strategy also has a positional one;
		% \bbcomment{Est-ce qu'il est clair ici que la quantification sur
		% les jeux est à objetif fixé (donc on quantifie sur les arènes
		% finis avec cet objectif) ?}
		2) Assume that the (prefix-independent) objective has a neutral
		color. If every \emph{turn-based} game that has a subgame
		almost-surely winning strategy also has a positional one, then every
		game that has a \emph{finite-choice} (notion to be defined)
		%\bbcomment{Faut-il l'expliquer,  comme c'est fait dans
		%l'introduction ?} 
		subgame optimal strategy also has a positional one.
		
		We collect or design examples to show that our results are tight in
		several ways. We also apply our results to Büchi, co-Büchi, parity,
		mean-payoff objectives, thus yielding simpler statements.
	\end{abstract}
	
	\section{Introduction}
	
	Turn-based two-player win/lose (stochastic) games on finite graphs have been
	intensively studied in the context of model checking in a broad
	sense~\cite{thomas02,BCJ18}.
	%\bbcomment{\fbox{Faut-il une référence ici ?}}.
	% Player $\A$ represents (the model of) the system, Player $\B$
	% represents the environment, the winning condition (aka objective) of
	% Player $\A$ represents the specification, and a winning strategy of
	% Player $\A$ corresponds to a controller guaranteeing the
	% specification.
	These games behave well regarding optimality in various settings. Most
	importantly for this paper, \cite{GIH10} proved the following results
	for finite turn-based stochastic games with prefix-independent
	objectives: (1) every game has deterministic optimal strategies; (2)
	from every value-$1$ state, there is an optimal, i.e. almost-surely
	winning, strategy; (3) if from every value-$1$ state of every game
	there is an optimal strategy using some fixed amount of memory, every
	game has an optimal strategy using this amount of memory. These
	results are of either of the following generic forms:
	\begin{itemize}
		\item In all games, (from all nice states) there is a nice strategy.
		
		\item If from all \emph{nice states} of all games there is a nice
		strategy, so it is from all \emph{states}.
	\end{itemize}
	
	%	They are of the form "in all games, (from all nice states)
	% there is a nice strategy" or "if from all \emph{nice states} of all
	% games there is a nice strategy, so it is from all \emph{states}".
	%	\begin{itemize}
	% \item Every game has deterministic optimal strategies.
	% 
	% \item From every value-$1$ state, there is an optimal, i.e., almost-surely
	%winning strategy.
	% 
	% \item If from every value-$1$ state of every game there is an optimal
	%strategy using some fixed amount of memory, every game has an optimal strategy
	%using this amount of memory.
	%	\end{itemize}
	The concurrent version of these turn-based (stochastic) games has a
	higher modeling power than the turn-based version: this is really
	useful in practice since real-world systems are intrinsically
	concurrent~\cite{KNPS21}. They are played on a finite graph as
	follows: at each player state, %of the finitely many player states,
	the two players stochastically and independently choose one among
	finitely many actions. This yields a Nature state, which
	stochastically draws a next player state, from where each player
	chooses one action again, and so on.  Each player state is labelled by
	a color, and who wins depends on the infinite sequence of colors
	underlying the (stochastically) generated infinite sequence of player
	states.
	% This yields a color \bbcomment{\fbox{J'ai l'impression que cela
	% laisse penser que la couleur vue dépend de la paire d'actions jouée
	% (ça ne dépend que l'état)}} and a Nature state, which stochastically
	% draws a player state, from where each player chooses one action
	% again, and so on. Who wins depends on the infinite sequence of
	% colors which is generated.
	Unfortunately, these concurrent games do
	not behave well in general even for simple winning conditions and
	simple graph structures, like finite graphs:
	\begin{itemize}
		\item Reachability objectives: there is a game without optimal
		strategies \cite{everett57};
		\item B\"uchi objectives: there is a game with value $1$ while all
		finite-memory strategies have value $0$ \cite{AH00};
		\item Co-B\"uchi objectives: although there are always positional
		$\varepsilon$-optimal strategies \cite{AH06}, there is a game with optimal
		strategies but without finite-memory optimal strategies \cite{BBSArXivICALP}; 
		\item Parity \cite{AH00} and mean-payoff
		\cite{DBLP:journals/iandc/ChatterjeeI15} objectives: there is a game
		with subgame almost-surely-winning strategies, but where all
		finite-memory strategies have value $0$.
	\end{itemize}
	
	In this paper, we focus on concurrent stochastic finite
	games. Therefore, the generic forms of our results will be more
	complex, in order to take into account the above-mentioned
	discrepancies. They will somehow 
	be given as generic statements as follows:
	% The generic forms of , so the generic forms of our results
	% are more complex, as below:
	\begin{itemize}
		\item Every game that has a \emph{nice} strategy also has a
		\emph{nicer} one.		
		\item If all \emph{special games} that have a nice strategy have a
		nicer one, so it is for all \emph{games}.
	\end{itemize}
	Much of the difficulty consists in fine-tuning the strength of
	``nice'', ``nicer'' and ``special'' above. We present below our main
	contributions on finite two-player win/lose concurrent stochastic
	games with prefix-independent objectives:
	\begin{enumerate}
		\item\label{main1} We provide a characterization of subgame optimal
		strategies, which are strategies that are optimal after every
		history (Theorem~\ref{thm:subgame_optimal_arbitrary_strategy}): a Player $\A$
		strategy is subgame optimal iff 1) it is
		locally optimal and 2) for every Player $\B$ deterministic strategy,
		after every history, if the visited states have the same positive
		value, Player $\A$ wins with probability $1$. This characterization is 
		used to prove all the results below.
		
		\item\label{main2} We prove memory transfer results from subgame
		almost-surely winning strategies to subgame optimal strategies:
		\begin{enumerate}
			\item\label{main21} Theorem~\ref{thm:transfer_memory}: If every game that has
			a subgame
			\emph{almost-surely winning} strategy also has a positional one,
			then every game that has a subgame \emph{optimal} strategy also
			has a positional one.
			
			\item Corollary~\ref{coro:application_buchi_cobuchi}: every Büchi or co-Büchi
			game that has a subgame
			optimal strategy has a positional one. (Whereas parity games may
			require infinite memory \cite{AH00}.)
		\end{enumerate}
		Note that the transfer result~{\bfseries\sffamily
			\color{darkgray}\ref{main21}} is generalized from positional to
		finite memory in appendix.
		
		%	The results~\ref{main21} and \ref{main22} are proved within
		% the same proof. The first helps prove the transfer result
		% \ref{main22}, which is generalized from positional to finite
		% memory in appendix.
		%	\begin{enumerate}
		% \item\label{main21} In a given game, there \emph{exists} a
		%   subgame optimal strategy iff there \emph{exists} a locally
		%   optimal strategy that has positive value from every
		%   positive-value state.
		% 
		% \item\label{main22} If every game that has a subgame
		%   \emph{almost-surely winning} strategy also has a positional one,
		%   then every game that has a subgame \emph{optimal} strategy also
		%   has a positional one. Corollary: every Büchi or co-Büchi game
		%   that has a subgame optimal strategy has a positional
		%   one. (Whereas parity games may require infinite memory)
		%	\end{enumerate} 
		
		\item We say that a strategy has finite-choice, if it uses only
		finitely many action distributions. Note that finite-memory
		(resp. deterministic) strategies clearly have finite choice.
		\begin{enumerate}
			\item\label{main31} Theorem~\ref{thm:fair_strategy_uniformly_optimal}: In a
			given game, if there is a finite-choice
			optimal strategy, there is a finite-choice \emph{subgame} optimal
			strategy.
			
			\item\label{main32} Theorem~\ref{thm:transfer_memory_finite_choice}: Assume
			that the objective has a neutral
			color. If every \emph{turn-based} game that has a subgame
			almost-surely winning strategy also has a positional one, then
			every game that has a \emph{finite-choice} subgame optimal
			strategy also has a positional one.
			
			\item\label{main33} Corollary~\ref{coro:transfer_memory_finite_choice}: every
			parity or mean-payoff game that
			has a finite-memory subgame optimal strategy also has a
			positional one.
		\end{enumerate}
		Note
		% that finite-memory (resp. deterministic) strategies clearly has
		% finite choice, and
		that {\bfseries\sffamily \color{darkgray}\ref{main31}} and
		{\bfseries\sffamily \color{darkgray}\ref{main32}} are false if the
		word finite-choice is removed \cite{BBSArXivICALP}. The proof of
		{\bfseries\sffamily \color{darkgray}\ref{main32}} invokes
		{\bfseries\sffamily \color{darkgray}\ref{main31}}. Flavor (and
		proofs) of {\bfseries\sffamily \color{darkgray}\ref{main32}} and
		{\bfseries\sffamily \color{darkgray}\ref{main21}} are similar, but
		both premises and conclusions are weakened in {\bfseries\sffamily
			\color{darkgray}\ref{main32}}, as emphasized.
	\end{enumerate}
	
	%\pat{Benjamin, je ne comprends trop ce qu'apporte ce qui suit, mais on peut
	%juste \'ecrire ``Along the way we also show variants of results
	%from\cite{GIH10}, that we will highlight when relevant.'' Ou on peut ne rien
	%\'ecrire de plus}	{\color{gray} Along the way we also show variants of results
	%from \cite{GIH10}: (mais est-ce qu'on veut en parler ici ?) \begin{itemize}
	%\item From every value-$1$ state, there is a subgame almost-surely winning
	%strategy.  \pat{c'est faux, ca, non ?} \bbcomment{Dans les jeux concurrents
	%c'est faux, c'est vrai dans les jeux à tours (mais on ne le montre pas)}
	\
	%		\item If in every game whose states have value $1$ there is a sub-game
	%\emph{optimal} strategy using some fixed amount of memory, then every game has
	%a sub-game \emph{optimal} strategy using this amount of memory.
	%	\end{itemize}}	
	
	\textbf{Related works.}  A large part of this paper is
	dedicated to the extension to concurrent games of the results from
	\cite{GIH10} regarding the transfer of memory from almost-surely
	winning strategies to optimal strategies in turn-based games. Note
	that the proof technique used in \cite{GIH10} is different %\pat{
	and could
	not be adapted to our more general setting. In their proof, both
	players agree on a preference over Nature states and play
	according to this preference. 
	%\fbox{peut-on dire pourquoi ca ne s'etend pas ? Si non, pas grave. Si oui, on
	%peut dire un mot}
	% from what we do in this paper.
	% The authors showed that there always is a live and self-consistent
	% permutation of the Nature states that both players can agree on. They
	% can then play according to this permutation (which can be used to
	% specify their preferences over these Nature states). 
	In our proof, we slice the graph into value areas (that is, sets of
	states with the same value), and show that it is sufficient to play an
	almost-sure winning strategy in each slice; we then glue these (partial)
	strategies together to get a subgame-optimal strategy over the whole
	graph.
	
	% On the other
	% hand, we slice the game into value areas (i.e. subset of states who
	% have the same value) where it can be shown, under the right
	% assumption, that subgame almost-surely winning exist. We then glue
	% these strategies together into a subgame optimal strategy.
	
	%\pat{
	The slicing technique was already used in the context of concurrent
	games in~\cite{AH06}. The authors focus on parity objectives and
	establishes a memory transfer result from limit-sure winning
	strategies to almost-optimal strategies. As an application, they show
	that, for co-Büchi objectives, since positional strategies are
	sufficient to win limit-surely, they also are to win
	almost-optimally. Their construction made heavy use of the specific
	nature of parity objectives.%}

	% A similar technique is used in \cite{AH06}. In that paper, the authors
	% deal with the parity objective and establish a transfer result from
	% limit-sure winning strategies to almost-optimal strategies. In
	% particular, they are able to show that, for the co-Büchi objective,
	% because positional strategies are sufficient to be limit-sure, they
	% also are to be almost-optimal. The authors also slice the game into
	% \textquotedblleft value classes \textquotedblright. Almost-optimal
	% strategies are then obtained by stitching together limit-surely
	% winning strategies. However, note that their construction relies on
	% the specificities of the parity objectives.
	
	We also mention \cite{CH07}, where the focus is also on concurrent
	games with prefix-independent objectives. In particular, the authors
	establish a (very useful) result: if all states have positive
	values, then they all have value 1. (Note that a strengthening of
	this result is presented in this paper
	(Theorem~\ref{thm:all_not_zero_implies_almost_sure}), which also
	appears as an adaptation of a result proved in \cite{GIH10}). %\pat{
	This result is then used in another context with non-zero-sum
	games. %\fbox{finalement, le r\'esultat, technique, n'est pas tr\`es pertinent
	%pour nous}}
	% relate \bbcomment{the values in a concurrent
	%     two-player win/lose game with an arbitrary prefix-independent
	%     objective and the probability to reach the value-1 area for a
	%     ($\varepsilon$-)Nash equilibrium in a nonzero sum reachability
	%     game.}
	%
	%reduce two-played win/lose concurrent games to
	%two-player nonzero sum reachability games. %\pat{\fbox{je ne connais pas bien
	%ce papier, mais en quoi ca rend plus simple de reduire a des jeux nonzero sum
	%?} Peut-etre mieux donner la contribution de ce papier}
	
	%\pat{
	Finally, some recent works on concurrent games have been done in
	\cite{BBSFSTTCS21,BBSCSL22,BBSArXivICALP}, where the goal is the
	following: local interactions of the two players in the player
	state are given by bi-dimensional tables; those tables can be
	abstracted as \emph{game forms}, where (output) variables are
	issues of the local interaction (possibly several issues are
	labelled by the same variable). The goal of this series of works
	is to give (intrinsic) properties of these game forms, so that,
	when used in a graph game, the existence of optimal strategies is
	ensured. For instance, in~\cite{BBSCSL22}, a property of games
	forms, called RM, is given, which ensures that, if one only uses
	RM game forms in a graph, then for every reachability objective,
	Player $\A$ will always have an optimal strategy for that
	objective. This property is a characterization of well-behaved
	game forms regarding reachability objectives since every game form
	which is not RM can be embedded into a (small) graph game in such
	a way that Player $\A$ does not have an optimal strategy. This
	line of works really differs from the target of the current
	paper.%}
	
	%  in which it was proved that
	% under right restrictions, there always exists optimal strategies in
	% concurrent games for some specific objectives. (In fact, the author
	% also showed that these restrictions are tight in some sense.) However,
	% it is different from what we are doing in this paper since these
	% restrictions were defined by the local interactions
	% (i.e. bi-dimensional tables describing the interactions of the
	% players) that were allowed to appear in concurrent games, according to
	% the objective considered.
	
	% \fbox{Tr\`es important : comparer ici avec nos autres papeiers : ce
	% n'est pas la m\^eme approche, dans les autres : briques de base ``safe''}
	
	% \fbox{?} \bbcomment{Je crois que cela parle du papier CSL sur
	% l'atteignabilité, mais je ne suis pas spur qu'il faille en parler}
	% The author showed that in every reachability game, for all
	% $\epsilon$ there exists a positional $\epsilon$-optimal strategy
	% that is optimal from every state from where there exists an optimal
	% strategy. They also showed that in every Büchi game if from every
	% state there is an optimal strategy, there is a positional strategy
	% that is optimal from every state.
	
	%\bbcomment{\fbox{Est-ce que tu veux mentionner les références sur les
	%    jeux concurrents que tu as donné dans l'autre papier ?}}
	%\pat{\fbox{j'en ai mentionn\'e un au d\'ebut, est-ce ok ?}}
	
	\textbf{Structure of the paper.}  Section~\ref{sec:prelim}
	presents notations, Section~\ref{sec:gf} recalls the notion of game
	forms, Section~\ref{sec:games} introduces our formalism,
	Section~\ref{sec:conditions_subgame_optimal} exhibits a necessary and
	sufficient pair of conditions for subgame optimality,
	Section~\ref{sec:transfer} shows a memory transfer from subgame
	almost-surely winning to subgame optimal in concurrent games, and
	Section~\ref{sec:fair_strategies} adapts the results of the previous
	section to the case of the existence of a subgame finite-choice
	strategy.
	
	\section{Preliminaries}
	\label{sec:prelim}
	Consider a non-empty set $Q$. We denote by $Q^*$, $Q^+$ and $Q^\omega$ the set
	of finite sequences, non-empty finite sequences and infinite sequences of
	elements of $Q$ respectively. For $n \in \N$, we denote by $Q^n$ (resp. $Q^{\leq
		n}$) the set of sequences of (resp. at most) $n$ elements of $Q$. For all $\rho
	= q_1 \cdots q_n \in Q^n$ and $i \leq n$, we denote by $\rho_i$ the element $q_i
	\in Q$ and by $\rho_{\leq i} \in Q^i$ the finite sequence $q_1 \cdots q_i$. 
	For a subset $S \subseteq Q$, we denote by $Q^* \cdot S^\omega \subseteq
	Q^\omega$ the set of infinite paths that eventually settle in $S$ and by $(Q^*
	\cdot S)^\omega \subseteq Q^\omega$ the set of infinite paths visiting
	infinitely often the set $S$.
	
	A \emph{discrete probability distribution} over a non-empty finite set $Q$ is a function $\mu: Q \rightarrow [0,1]$ such that $\sum_{x \in Q} \mu(x) = 1$. The	\emph{support} $\Supp(\mu)$ of a probability distribution $\mu: Q \rightarrow [0,1]$ is the	set of non-zeros of the distribution: $\Supp(\mu) = \{ q \in Q \mid \mu(q) \in
	(0,1] \}$. %The probability distribution we consider have a countable support.
	The set of all distributions over %the set 
	$Q$ is denoted $\Dist(Q)$.
	%We also define the notion of game in normal form. A game in normal form is a
	%tuple $\formNF = \langle A,B,\delta \rangle$ such that $A,B$ are finite and
	%$\delta: A \times B \rightarrow [0,1]$. A Player $\A$ (resp. $\B$) strategy
	%$\sigma_\A$ (resp. $\sigma_\B$) is a probability distribution over $A$ (resp.
	%$B$), i.e. $\sigma_\A \in \Dist(A)$ (resp. $\sigma_\B \in \Dist(B)$). The value
	%of a pair of strategies $\outM_{\formNF}(\sigma_\A,\sigma_\B)$ is equal to
	%$\outM_{\formNF}(\sigma_\A,\sigma_\B) = \sum_{a,b \in A \times B} \sigma_\A(a)
	%\cdot \sigma_\B(b) \cdot \delta(a,b)$. By von Neuman minimax theorem (ref), we
	%have $\sup_{\sigma_\A \in \Dist(A)} \inf_{\sigma_\B \in \Dist(B)}
	%\outM_{\formNF}(\sigma_\A,\sigma_\B) = \inf_{\sigma_\B \in \Dist(B)}
	%\sup_{\sigma_\A \in \Dist(A)} \outM_{\formNF}(\sigma_\A,\sigma_\B)$. This value
	%is denoted $\outM_{\formNF}$. We denote by $\Opt_\A(\formNF)$ the set of Player
	%$\A$ strategies realizing this value (i.e. optimal strategies).
	
	\section{Game forms}
	\label{sec:gf}
	
	%\pat{\fbox{Benjamin, a-t-on besoin de tous les d\'etails de ces d\'efinitions
	%de game forms ??}}
	We recall the definition of game forms -- 
	informally, bi-dimensional tables with variables -- and of games in normal
	forms -- game forms whose outcomes are values between $0$ and $1$.
	\begin{definition}[Game form and game in normal form]
		\label{def:arena_game_nf}
		A \emph{game form} (GF for short) is a tuple
		$\formNF = \langle \mathsf{Act}_\A,\mathsf{Act}_\B,\outComeNF,\outCNF \rangle$
		where $\mathsf{Act}_\A$ (resp. $\mathsf{Act}_\B$) is the non-empty finite set
		of %(pure) strategies 
		actions available to Player $\A$ (resp. $\B$), $\outComeNF$
		is a non-empty set of outcomes, and
		$\outCNF: \mathsf{Act}_\A \times \mathsf{Act}_\B \rightarrow \outComeNF$ is a
		function that associates an outcome to each pair of
		actions. When the set of outcomes $\outComeNF$ is equal to
		$[0,1]$, we say that $\formNF$ is a \emph{game in normal
			form}. For a valuation $v \in [0,1]^\outComeNF$ of the
		outcomes, the notation $\gameNF{\formNF}{v}$ refers to the
		game in normal form
		$\langle \mathsf{Act}_\A,\mathsf{Act}_\B,[0,1],v \circ \outCNF \rangle$.
	\end{definition}
	%An example of game form (resp. game in normal form) is given in
	%Figure~\ref{fig:local_iter} (resp. \ref{fig:valued_local_iter_not_uniform}),
	%where $\mathsf{Act}_\A$ (resp. $\mathsf{Act}_\B$) are rows (resp. columns) of
	%the table. 
	We use %games in normal forms, and later 
	game forms to represent interactions between two
	players. The strategies available to Player $\A$ (resp. $\B$) are
	convex combinations of actions given as the rows
	(resp. columns) of the table. In a game in normal form, Player $\A$
	tries to maximize the outcome, whereas Player $\B$ tries to minimize
	it.
	\begin{definition}[Outcome of a game in normal form]
		\label{def:outcome_game_form}
		Consider a game in normal form
		$\formNF = \langle \mathsf{Act}_\A,\mathsf{Act}_\B,[0,1],\outCNF \rangle$. The
		set $\Dist(\mathsf{Act}_\A)$ (resp. $\Dist(\mathsf{Act}_\B)$) is the
		set of strategies available to Player $\A$
		(resp. $\B$). For a pair of strategies
		$(\sigma_\A,\sigma_\B) \in \Dist(\mathsf{Act}_\A) \times
		\Dist(\mathsf{Act}_\B)$, the outcome
		$\outM_\formNF(\sigma_\A,\sigma_\B)$ in $\formNF$ of the
		strategies $(\sigma_\A,\sigma_\B)$ is defined as:
		\begin{equation*}
			\outM_\formNF(\sigma_\A,\sigma_\B) := \sum_{a \in \mathsf{Act}_\A}
			\sum_{b \in \mathsf{Act}_\B} \sigma_\A(a) \cdot \sigma_\B(b) \cdot
			\outCNF(a,b) \in [0,1]
		\end{equation*}
	\end{definition}	
	
	%The definition of the value of a game in normal form follows:
	\begin{definition}[Value of a game in normal form and optimal strategies]
		\label{def:alternative_value_game_normal_form}
		Consider a game in normal form
		$\formNF = \langle \mathsf{Act}_\A,\mathsf{Act}_\B,[0,1],\outCNF \rangle$ and
		a
		strategy $\sigma_\A \in \Dist(\mathsf{Act}_\A)$ for Player $\A$.  The
		\emph{value} of the strategy $\sigma_\A$, denoted
		$\va_\formNF(\sigma_\A)$ is equal to:
		$\va_\formNF(\sigma_\A) := \inf_{\sigma_\B \in \Dist(\mathsf{Act}_\B)}
		\outM_{\formNF}(\sigma_\A,\sigma_\B)$, and analogously for Player
		$\B$, with a $\sup$ instead of an $\inf$. When
		$\sup_{\sigma_\A \in \Dist(\mathsf{Act}_\A)} \va_\formNF(\sigma_\A) =
		\inf_{\sigma_\B \in \Dist(\mathsf{Act}_\B)} \va_\formNF(\sigma_\B)$, it
		defines the \emph{value} of the game $\formNF$, denoted
		$\va_\formNF$.
		
		% Note that von Neuman's minimax theorem \cite{vonNeuman} ensures it
		% does.
		% as soon as the game $\formNF$ is finite*
		A strategy $\sigma_\A \in \Dist(\mathsf{Act}_\A)$ ensuring
		$\va_\formNF = \va_\formNF(\sigma_\A)$ is called \emph{optimal}. The
		set of all optimal strategies for Player $\A$ is denoted
		$\Opt_\A(\formNF) \subseteq \Dist(\mathsf{Act}_\A)$, and analogously for
		Player $\B$. Von Neuman's minimax theorem~\cite{vonNeuman} ensures the
		existence of
		optimal strategies (for both players). 
	\end{definition}
	In the following, strategies in games in normal forms will be called
	$\GF$-strategies, in order not to confuse them with strategies in
	concurrent (graph) games.
	
	\section{Concurrent games and optimal strategies}
	\label{sec:games}
	\subsection{Concurrent arenas and strategies}
	We introduce the definition of concurrent arenas played on a finite graph.
	\begin{definition}[Finite stochastic concurrent arena]
		A \emph{colored concurrent arena} $\Aconc$ is a tuple 
		$\AparamConc$ where $Q$ is the non-empty finite set of states, for all $q \in
		Q$, $\setA_q$ (resp. $\setB_q$) is the non-empty finite set of 
		actions available to Player $\A$ (resp. $\B$) at state $q$%, $q_0 \in Q$ is
		%the initial state
		, $\distribSet$ is the finite set of Nature states, $\delta: \bigcup_{q \in Q}
		(\{ q \} \times \setA_q \times \setB_q) \rightarrow \distribSet$ is the
		transition function, $\distribFunc: \distribSet \rightarrow \Dist(Q)$ is the
		distribution function. Furthermore, $\colSet$ is the non-empty finite set of
		colors and $\colFunc: Q \rightarrow \colSet$ is the coloring function.%The arena
		%is also colored as we are given a non-empty set of $\colSet$ and a coloring
		%function . 
		%$\colSet$ is the non-empty set of colors and $\colFunc: Q \rightarrow
		%\colSet$ is the coloring function. 
	\end{definition}
	
	In the following, the arena $\Aconc$ will refer to the tuple $\AparamConc$,
	unless otherwise stated. A concurrent game is obtained from a concurrent arena
	by adding a winning condition: the set of infinite paths winning for Player $\A$
	(and losing for Player $\B$).
	\begin{definition}[Finite stochastic concurrent game]
		A finite \emph{concurrent game} %played on a set of colors $\colSet$ 
		is %then 
		a pair $\Games{\Aconc}{W}$ where $\Aconc$ is a finite concurrent colored arena
		and $W \subseteq \colSet^\omega$ is Borel. The set $W$ is called the
		\emph{objective}, as it corresponds to the set of colored paths winning for
		Player $\A$.
	\end{definition}
	
	In this paper, we only consider a specific kind of objectives:
	prefix-independent ones. Informally, they correspond to objectives $W$ such that
	an infinite path $\rho$ is in $W$ if and only if %is equivalent to the fact that
	any of its suffixes is in $W$. More formally:
	\begin{definition}[Prefix-independent objectives]
		For a non-empty finite set of colors $\colSet$ and %an objective 
		$W \subseteq \colSet^\omega$, %We say that 
		$W$ is said to be \emph{prefix-independent} (PI for short) if, for all $\rho
		\in \colSet^\omega$ and $i \geq 0$, $\rho \in W \Leftrightarrow \rho_{\geq i}
		\in W$. 
	\end{definition}
	In the following, we refer to concurrent games with prefix-independent
	objectives as PI concurrent games. Our main results will be stated for arbitrary prefix-independent objectives, but we will apply them to specific objectives, namely parity, and relevant special cases Büchi and co-Büchi.	
	\begin{definition}[Parity, Büchi, co-Büchi objectives]
		\label{def:parity_buchi_cobuchi}
		Let $\colSet \subset \N$ be a finite non-empty set of integers. Consider a
		concurrent arena $\Aconc$ with $\colSet$ as set of colors. For an infinite path
		$\rho \in Q^\omega$, we denote by $\colFunc(\rho)_\infty \subseteq \N$ the set
		of colors seen infinitely often in $\rho$: $\colFunc(\rho)_\infty := \{ n \in \N
		\mid \forall i \in \N,\; \exists j \geq i,\; \colFunc(\rho_j) = n \}$. Then, the
		\emph{parity objective} w.r.t. $\colFunc$ is the set
		$W^{\mathsf{Parity}}(\colFunc) := \{ \rho \in Q^\omega \mid \max
		\colFunc(\rho)_\infty \text{ is even } \}$. The Büchi (resp. co-Büchi) objective
		correspond to the parity objective with $\colSet := \{ 1,2 \}$ (resp. $\colSet
		:= \{ 0,1 \}$).
	\end{definition}
	
	Strategies are then defined as functions that, given the history of the game
	(i.e. the sequence of states already seen) associate %a distribution over the
	%actions available at the current state, in other words they associate 
	a distribution on the actions available to the Player.
	\begin{definition}[Strategies]
		Consider a concurrent game $\Aconc% = \AConc
		$. A strategy for Player $\A$  is a function $\s_\A: Q^+ \rightarrow \Dist(A)$
		with $A := \bigcup_{q \in Q} A_q$ such that, for all $\rho = q_0 \cdots q_n \in
		Q^+$, we have $\s_\A(\rho) \in \Dist(A_{q_n})$.		We denote by
		$\SetStrat{\Aconc}{\A}$ the set of all strategies in arena $\Aconc$ for Player
		$\A$. This is analogous for Player $\B$.
		%(resp. Player $\B$)
		%is a function $\s_\A: Q^+ \rightarrow \Dist(A)$ with $A := \bigcup_{q \in Q}
		%A_q$ (resp. $\s_\B: Q^+ \rightarrow \Dist(B)$ with $B := \bigcup_{q \in Q}
		%B_q$) such that, for all $\rho = q_0 \cdots q_n \in Q^+$, we have $\s_\A(\rho)
		%\in \Dist(A_{q_n})$. %A strategy $\s_\A: Q^+ \rightarrow \Dist(A)$ for Player
		%$\A$ is said to be \emph{positional} if, for all $\pi = \rho \cdot q \in Q^+$
		%and $\pi' = \rho' \cdot q' \in Q^+$, if $q = q'$, then $\s_\A(\pi) =
		%\s_\B(\pi')$ (that is, the strategy only depends on the current state of the
		%game). The definition is analogous for Player $\B$. 
		%We denote by $\SetStrat{\Aconc}{\A}$ %and $\SetPosStrat{\Aconc}{\A}$ 
		%(resp. $\SetStrat{\Aconc}{\B}$% and $\SetPosStrat{\Aconc}{\B}$
		%) the set of all strategies %and positional strategies respectively 
		%in arena $\Aconc$ for Player $\A$ (resp. Player $\B$). 
	\end{definition}

	We would like to define the outcome of the game given two strategies (one for
	each Player). First, we define the probability to go from a state $q$ to another
	state $q'$ given two $\GF$-strategies at a given state.%in the game form
	%$\formNF_q$.
	\begin{definition}[Probability Transition]
		\label{def:mu_state}
		Consider a concurrent arena $\Aconc% = \AConc
		$, a state $q \in Q$ and two strategies $(\sigma_\A,\sigma_\B) \in \Dist(A_q)
		\times \Dist(B_q)$% in the game form $\formNF_q$
		. Consider another state $q' \in Q$. The probability to go from $q$ to $q'$ if
		the players plays, in q, $\sigma_\A$ and $\sigma_\B$, denoted
		$\prob{q,q'}{}{}(\sigma_\A,\sigma_\B)$, is equal to:
		\begin{displaymath}
		\prob{q,q'}{}{}(\sigma_\A,\sigma_\B) = \sum_{a \in A_q} \sum_{b \in B_q}
		\sigma_\A(a) \cdot \sigma_\B(b) \cdot \distribFunc \circ \delta(q,a,b)(q')
		\end{displaymath}
		%That is, it is equal to the outcome of the game played in $\formNF_q$ where
		%the players opt for the strategies $\sigma_\A$ and $\sigma_\B$ and the value of
		%a Nature state $d \in \distribSet$ is set to be the probability to go from $d$
		%to $q'$, that is $\distribFunc(d)(q')$.
	\end{definition}
	
	Let us now define the probability of occurrence of any finite path, and
	consequently of any Borel set, given two strategies.
	\begin{definition}[Probability distribution given two strategies]
		Let us consider a concurrent arena $\Aconc% = \AConc
		$ and $\s_\A,\s_\B \in \SetStrat{\Aconc}{\A} \times \SetStrat{\Aconc}{\B}$ two
		arbitrary strategies for Player $\A$ and $\B$. We denote by
		$\mathbb{P}^{\s_\A,\s_\B}: Q^+ \rightarrow \Dist(Q)$ the function giving the
		probability distribution over the next state of the arena given the sequence
		of states already seen. That is, for all finite path $\pi = \pi_0 \ldots \pi_n
		\in Q^+$ and $q \in Q$, we have:
		\begin{displaymath}
		\mathbb{P}^{\s_\A,\s_\B}(\pi)[q] = \prob{\pi_n,q}{}{}(\s_\A(\pi),\s_\B(\pi))
		\end{displaymath}
		
		Then, the probability of occurrence of a finite path $\pi = \pi_0 \cdots \pi_n
		\in Q^+$ from a state $q_0 \in Q$ with the pair of strategies $(\s_\A,\s_\B)$ is
		equal to $\prob{\Aconc,q_0}{\s_\A,\s_\B}(\pi) = \Pi_{i = 0}^{n-1}
		\mathbb{P}^{\s_\A,\s_\B}(\pi_{\leq i})[\pi_{i+1}]$ if $\pi_0 = q_0$ and $0$
		otherwise. %The set of finite or infinite paths that can occur on a game
		%$\Aconc$ is denoted $\OutStrat{\Aconc}^\vartriangle$ for $\vartriangle \in \{
		%+,\omega \}$ with $\OutStrat{\Aconc}^\vartriangle = \{ \pi \in Q^\vartriangle
		%\mid \exists \s_\A,\s_\B \in \SetStrat{\Aconc}{\A} \times
		%\SetStrat{\Aconc}{\B},\; \prob{\Aconc}{\s_\A}{\s_\B}(\pi_{\leq |\pi|-1}) > 0
		%\}$. %In order to measure the probability of sets, we consider 
		The probability of a cylinder set $\cyl(\pi)$ is
		$\prob{\Aconc,q_0}{\s_\A,\s_\B}[\cyl(\pi)] = \prob{\s_\A,\s_\B}{}(\pi)$ for any
		finite path $\pi \in Q^*$. This induces the probability of any Borel set in the
		usual way, %via Theorem~\ref{thm:probability_unique_Borel} 
		we denote by $\prob{\Aconc,q_0}{\s_\A,\s_\B}: \Borel(Q) \rightarrow [0,1]$ the
		corresponding probability measure. Note that the set of infinite paths in
		$Q^\omega$ whose sequence of colors correspond to a Borel set $W \subseteq
		\colSet^\omega$ is also a Borel set (as the preimage of a Borel set by a
		continuous function).
	\end{definition}
	
	%Given two strategies $\s_\A,\s_\B$ for both players in an arena $\Aconc$ from a starting state $q_0$, we define in the usual manner the probability $\prob{\Aconc,q_0}{\s_\A,\s_\B}$ of a finite path which induces the probability of an arbitrary Borel subset of infinite paths. This is done in Appendix~\ref{appen:prob_borel_set}. 
	Values of strategies and of the game follow and	are defined below.
	\begin{definition}[Value of strategies and of the game]
		Let $\mathcal{G} = \Games{\Aconc}{W}$ be a PI concurrent
		game %played on the set of states $Q$
		and consider a strategy $\s_\A \in \SetStrat{\Aconc}{\A}$
		for Player $\A$. The function
		$\MarVal{\G}[\s_\A]: Q \rightarrow [0,1]$ giving the value
		of the strategy $\s_\A$ is such that, for all $q_0 \in Q$,
		we have
		$\MarVal{\G}[\s_\A](q_0) := \inf_{\s_\B \in \SetStrat{\Aconc}{\B}}
		\prob{\Aconc,q_0}{\s_\A,\s_\B}[W]$. The function
		$\MarVal{\G}[\A]: Q \rightarrow [0,1]$ giving the value for
		Player $\A$: is such that, for all $q_0 \in Q$, we have
		$\MarVal{\G}[\A](q_0) := \sup_{\s_\A \in
			\SetStrat{\Aconc}{\A}} \MarVal{\G}[\s_\A](q_0)$. The
		function $\MarVal{\G}[\B]: Q \rightarrow [0,1]$ giving the
		value of the game for Player $\B$ is defined similarly by
		reversing the supremum and infimum.
		
		By Martin's result on the determinacy of Blackwell games \cite{martin98}, for
		all concurrent games $\G = \Games{\Aconc}{W}$, the value functions for both
		Players are equal, this defines the value function $\MarVal{\G}: Q \rightarrow
		[0,1]$ of the game: $\MarVal{\G} := \MarVal{\G}[\A] = \MarVal{\G}[\B]$.
		%Furthermore, a strategy $\s_\A \in \SetStrat{\Aconc}{\A}$ such that, from a
		%state $q_0 \in Q$ ensures $\MarVal{\G}[\s_\A](q_0) = \MarVal{\G}(q_0)$ is said
		%to be optimal from the state $q_0$. If this holds for all states $q_0 \in Q$,
		%i.e. $\MarVal{\G}[\s_\A] = \MarVal{\G}$, the strategy $\s_\A$ is said to be
		%uniformly optimal.
		\label{def:determinacy}
	\end{definition}
	
	We define value areas: subsets of states %of the game 
	whose values are the same.
	\begin{definition}[Value area]
		In a PI concurrent game $\mathcal{G}% = \Games{\Aconc}{W}
		$, %we denote by 
		$V_\mathcal{G}$ refers to the set of values appearing in the game%of the
		%states% of the game
		: $V_\mathcal{G} := \{ \MarVal{\G}[q] \mid q \in Q \}$. Furthermore, for all %
		%and, for all values 
		$u \in V_\mathcal{G}$, $Q_u \subseteq Q$ refers to the set of states whose
		values are $u$ w.r.t. $\MarVal{\G}$: $Q_u := \{ q \in Q \mid \MarVal{\G}(q) = u
		\}$.
		\label{def:slice_of_game}
	\end{definition}
	
	In concurrent games, game forms appear at each state and describe the
	interactions of the players at that state. Furthermore, the valuation mapping
	each state to its value in the game can be lifted, via a convex combination,
	into a valuation of the Nature states. This, in turn, induces a natural way to
	define the game in normal form appearing at each state.
	\begin{definition}[Local interactions, Lifting valuations% to the Nature states
		]
		In a PI concurrent game $\G$ where the valuation $\MarVal{\G}: Q \rightarrow
		[0,1]$ gives the values of the game, the lift $\LiftVal{\G}: \distribSet
		\rightarrow [0,1]$ is such that, for all %Nature states 
		$d \in \distribSet$, we have $\LiftVal{\G}(d) := \sum_{q \in Q} \MarVal{\G}(q)
		\cdot \distribFunc(d)(q)$ (recall that $\distribFunc: \distribSet \rightarrow
		\Dist(Q)$ is the distribution function).
		
		%Consider a state 
		Let $q \in Q$. The \emph{local interaction} at state $q$ is the game form
		$\formNF_q = \langle \setA_q,\setB_q,\distribSet,\delta(q,\cdot,\cdot)
		\rangle$. %That is, the strategies available for Player $\A$		(resp. $\B$) are
		%the distributions over the actions in $\setA_q$ (resp. $\setB_q$) and the
		%outcomes are the states.
		%For all states $q \in Q$, t
		The game in normal form at state $q$ is then $\formN_q :=
		\Games{\formNF_q}{\LiftVal{\G}}$.
	\end{definition}

	Note that, the values of the game in normal form $\formN_q$ %is equal to the value
	and of the state $q$ are equal.
	\begin{proposition}
		\label{prop:gnf}
		In a PI concurrent game $\G$, for all states $q \in Q$, we have
		$\MarVal{\G}(q) = \outM_{\formN_q}$.
	\end{proposition}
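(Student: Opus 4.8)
The plan is to establish the two inequalities $\MarVal{\G}[\A](q) \geq \outM_{\formN_q}$ and $\MarVal{\G}[\B](q) \leq \outM_{\formN_q}$ separately, and then invoke the determinacy result recorded in Definition~\ref{def:determinacy} (namely $\MarVal{\G}[\A] = \MarVal{\G}[\B] = \MarVal{\G}$) to conclude. By von Neumann's minimax theorem (Definition~\ref{def:alternative_value_game_normal_form}) I first fix optimal $\GF$-strategies $\sigma_\A \in \Opt_\A(\formN_q)$ and $\sigma_\B \in \Opt_\B(\formN_q)$ in the normal-form game $\formN_q$, so that $\outM_{\formN_q}(\sigma_\A,\tau_\B) \geq \va_{\formN_q} = \outM_{\formN_q}$ for every $\tau_\B \in \Dist(\setB_q)$, and dually $\outM_{\formN_q}(\tau_\A,\sigma_\B) \leq \outM_{\formN_q}$ for every $\tau_\A \in \Dist(\setA_q)$.

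For the lower bound I fix $\varepsilon > 0$. Since $Q$ is finite, for each state $q'$ I may fix a strategy $\s_\A^{q'} \in \SetStrat{\Aconc}{\A}$ with $\MarVal{\G}[\s_\A^{q'}](q') \geq \MarVal{\G}(q') - \varepsilon$, i.e. one that guarantees at least $\MarVal{\G}(q') - \varepsilon$ against \emph{every} Player $\B$ strategy. I then build a strategy $\s_\A$ that plays $\sigma_\A$ on the history $q$ and, after the first transition to a successor $q'$, plays the shifted strategy $\s_\A^{q'}$ (on a history $q \cdot \pi$ it returns $\s_\A^{q'}(\pi)$, where $q'$ is the first state of $\pi$). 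The key step is a one-step decomposition of the induced measure: for any $\s_\B \in \SetStrat{\Aconc}{\B}$ with first move $\tau_\B := \s_\B(q)$, and writing $\s_\B'$ for the shifted continuation, I claim $\prob{\Aconc,q}{\s_\A,\s_\B}[W] = \sum_{q' \in Q} \prob{q,q'}{}{}(\sigma_\A,\tau_\B) \cdot \prob{\Aconc,q'}{\s_\A^{q'},\s_\B'}[W]$. This is exactly where prefix-independence is used: a path $q \cdot q' \cdot \rho$ lies in $W$ iff its suffix $q' \cdot \rho$ does, so the game resumed at $q'$ again has objective $W$.

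Bounding each inner factor from below by $\MarVal{\G}(q') - \varepsilon$ (valid for the induced $\s_\B'$ because $\s_\A^{q'}$ is $\varepsilon$-optimal uniformly in $\B$), and recalling from Definition~\ref{def:mu_state} that $\prob{q,q'}{}{}(\sigma_\A,\tau_\B) = \sum_{a,b}\sigma_\A(a)\tau_\B(b)\distribFunc(\delta(q,a,b))(q')$, the weighted sum $\sum_{q'}\prob{q,q'}{}{}(\sigma_\A,\tau_\B)\MarVal{\G}(q')$ collapses, by the definition of the lift $\LiftVal{\G}$, to $\sum_{a,b}\sigma_\A(a)\tau_\B(b)\LiftVal{\G}(\delta(q,a,b))$, which is precisely the outcome $\outM_{\formN_q}(\sigma_\A,\tau_\B) \geq \outM_{\formN_q}$. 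Hence $\prob{\Aconc,q}{\s_\A,\s_\B}[W] \geq \outM_{\formN_q} - \varepsilon$ for every $\s_\B$, so $\MarVal{\G}[\A](q) \geq \outM_{\formN_q} - \varepsilon$, and letting $\varepsilon \to 0$ yields $\MarVal{\G}[\A](q) \geq \outM_{\formN_q}$. The upper bound $\MarVal{\G}[\B](q) \leq \outM_{\formN_q}$ is entirely symmetric, Player $\B$ playing $\sigma_\B$ at $q$ and then $\varepsilon$-optimal continuations; combining the two bounds via Definition~\ref{def:determinacy} gives the claimed equality.

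I expect the main obstacle to be making the one-step decomposition of the probability measure fully rigorous: one must check that the glued strategy is well defined on all histories, justify the conditioning on the first transition at the level of the Borel measure $\prob{\Aconc,q}{\s_\A,\s_\B}$ rather than merely on cylinders, and confirm that prefix-independence identifies the tail objective from each successor $q'$ with $W$ itself. Once this decomposition is in place, the remaining computation—recognizing the lifted one-step expectation as the normal-form outcome $\outM_{\formN_q}(\sigma_\A,\tau_\B)$ and applying optimality of $\sigma_\A$—is routine.
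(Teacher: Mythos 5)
Your proof is correct. The paper does not actually supply a proof of Proposition~\ref{prop:gnf} (it is treated as a recalled fact), but the ingredients you use are exactly the ones the paper relies on elsewhere: the identity $\sum_{q' \in Q} \mathbb{P}^{q,q'}(\sigma_\A,\sigma_\B)\cdot\MarVal{\G}(q') = \outM_{\formN_q}(\sigma_\A,\sigma_\B)$, which you re-derive from the definition of the lift $\LiftVal{\G}$, is precisely Proposition~\ref{prop:outcome_valuation} recalled in Appendix~\ref{app:conditions_subgame_optimal}, and your one-step decomposition of $\mathbb{P}^{\s_\A,\s_\B}_q[W]$ via prefix-independence and residual strategies is the same manipulation used in the paper's proof of Proposition~\ref{prop:optimal_implies_locally_optimal}. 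Combining an optimal $\GF$-strategy at $q$ with $\varepsilon$-optimal continuations for each player and closing with Martin's determinacy is the standard argument; the only points needing care are the ones you already flag (extending the first-step conditioning from cylinders to Borel sets, and defining the glued strategy on irrelevant histories), and both are routine.
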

	
	%Note that, since the objective is prefix-independent, the values of states
	%considered is the same regardless of the history of the game.
	
	\subsection{More on strategies}
	\label{subsec:definition_strategies}
	In this subsection, we define several kinds of strategies we will be
	interested in later on. Let us fix a PI concurrent game $\G$ for the rest of this section.
	%
	%In the following, we will especially be interested in the existence of some
	%kind of strategies. 
	First, we consider optimal strategies, i.e. strategies realizing the value of
	the game. We also consider positively-optimal strategies, i.e. strategies whose values are positive from all states where it is possible. This is defined formally below in Definition~\ref{def:positively_optimal_strat}.%whose value is positive. %Consider a concurrent
	%game $\G$ and 
	\begin{definition}[(Positively-) optimal strategies]
		\label{def:positively_optimal_strat}
		A Player $\A$ strategy $\s_\A \in \SetStrat{\Aconc}{\A}$ is (resp.
		\emph{positively}-)\emph{optimal} from a state $q \in Q$ if $\MarVal{\G}(q) =
		\MarVal{\G}[\s_\A](q)$ (resp. if $\MarVal{\G}(q) > 0 \Rightarrow
		\MarVal{\G}[\s_\A](q) > 0$). It is (resp. positively-)optimal if this %If this 
		holds from all states $q \in Q$.%, the strategy $\s_\A$ is %said to be
		%\emph{uniformly} 
		% optimal.
	\end{definition}
	Note that the definition of optimal strategies we consider is sometimes
	referred to as uniform optimality, as it holds from every state of the game. 
	% %Uniformly 
	% Optimal strategies are optimal from every state. 
	However, it does not say anything about what happens once some sequence of
	states have been seen. We would like now to define a notion of strategy that is
	optimal from any point that can occur after any finite sequence of states has
	been seen. This correspond to subgame optimal strategies. To define them, we
	need to introduce %Let us first define 
	the notion of residual strategy.
	
	\begin{definition}[Residual and Subgame Optimal Strategies]
		Consider a Player $\A$ strategy $\s_\A$. For all finite sequences $\rho \in Q^+$, the \emph{residual strategy}
		$\s_\A^\rho$ of a the strategy $\s_\A$ % after $\rho$ 
		is the strategy $\s_\A^\rho: Q^+ \rightarrow \Dist(A)$ such that, for all $\pi \in Q^+$, we have $\s_\A^\rho(\pi) := \s_\A(\rho \cdot \pi)$.
		
		The Player $\A$ strategy $\s_\A$ is \emph{subgame optimal} if, for all $\rho =	\rho' \cdot q \in Q^+$, the residual strategy $\s_\A^\rho$ is optimal from $q$, i.e. $\MarVal{\G}[\s_\A^\rho](q) = \MarVal{\G}(q)$.
	\end{definition}
	
	%We can then define the notion of subgame optimal strategies.
	%\begin{definition}[subgame optimal strategies]
	%	Consider a concurrent game $\Aconc$ with a prefix-independent objective $W$.
	%A player $\A$ strategy $\s_\A$ is \emph{subgame optimal} if, for all $\rho =
	%\rho' \cdot q \in Q^+$, the residual strategy $\s_\A^\rho$ is optimal from $q$,
	%i.e. $\MarVal{\G}[\s_\A^\rho](q) = \MarVal{\G}(q)$.
	%\end{definition}
	Note that, in particular, subgame optimal strategies are %uniformly 
	optimal strategies. %Furthermore, since the objective considered is
	%prefix-independent, the fact a strategy is optimal does not depend on the past
	%history of the game.
	When such strategies do exist, we want them to be as simple as possible, for
	instance we want them to be positional% In fact, %Specifically, 
	%we want the existence of (%uniformly, 
	%subgame) optimal strategies that are positional
	, that is that they only depend on the current state of the game.
	
	Furthermore, once a Player $\A$ strategy is fixed we obtain an (infinite)
	MDP. In such a context, $\varepsilon$-optimal strategies can
	be chosen among deterministic strategies (see for instance the
	explanation in~\cite[Thm. 1]{DBLP:journals/iandc/Chatterjee0GH15}) where deterministic strategies are such that, after any finite sequence of states, a specific action is played with probability 1. %Hence, we will As for Player $\B$, we will consider a specific kind of strategies, namely deterministic strategies. That is because, once a Player $\A$ strategy is fixed we obtain an (infinite)	MDP. In such a context, $\varepsilon$-optimal strategies can be chosen among deterministic strategies (see for instance the explanation in~\cite[Thm. 1]{DBLP:journals/iandc/Chatterjee0GH15}).
	Both of these notions are defined formally below in Definition~\ref{def:positional_deterministic_strategies}. 
	\begin{definition}[Positional, Deterministic strategies]
		\label{def:positional_deterministic_strategies}
		A Player $\A$ strategy $\s_\A$ is \emph{positional} if, for all states $q \in Q$ and %finite 
		paths $\rho \in Q^+$ we have $\s_\A(\rho \cdot q) = \s_\A(q)$. 
		
		A Player $\B$ strategy $\s_\B$ is \emph{deterministic} if, for all finite sequences $\rho \cdot q \in Q^+$, there exists $b \in B_q$ such that $\s_\B(\rho \cdot q)(b) = 1$.
	\end{definition}
	%Finally, in the following we will consider deterministic Player $\B$
	%strategies.
	%{definition}[Deterministic strategies]
	%	A Player $\B$ strategy $\s_\B$ is \emph{deterministic} if, for all finite
	%sequences $\rho \in Q^+$, there exists $b \in B$ such that $\s_\B(\rho)(b) =
	%1$.
	%\end{definition}
	
	%Separating
	
	\section{Necessary and sufficient condition for subgame optimality}
	\label{sec:conditions_subgame_optimal}
	In this section, we present a necessary and sufficient pair of
	conditions for a Player $\A$ strategy to be subgame optimal, formally stated in
	Theorem~\ref{thm:subgame_optimal_arbitrary_strategy}. %Note that it is not a
	%necessary condition in general (it is, if the strategy considered if
	%positional). 
	The arguments given %in this section 
	here are somewhat similar to the ones given in Section 4 of
	\cite{BBSArXivICALP}, which deals with the same question restricted to
	positional strategies.
	
	%This condition consists in two properties that need to be ensured by a Player
	%$\A$ strategy to be subgame optimal. 
	The first %property 
	condition is local: it specifies how a strategy behaves in the
	games in normal form at each local interaction of the game. As mentioned in
	Proposition~\ref{prop:gnf}, at each state $q$, the value of the game in normal
	form $\formN_q$ is equal to the value of the state $q$ (given by the valuation
	$\MarVal{\G} \in [0,1]^Q$). This suggests that, for all finite sequences of
	states $\rho \in Q^+$ %(in particular, with the ) 
	ending at that state $q$, the $\GF$-strategy $\s_\A(\rho)$ %layed by the
	%strategy $\s_\A$ after $\rho$ is seen 
	needs to be optimal in the game in normal form $\formN_q$ for the residual
	strategy $\s_\A^\rho$ to be optimal from $q$. Strategies with such a property
	are called locally optimal.
	%Specifically, the property considered if local optimality, that is a strategy
	%that plays local optimal strategies at each game in normal forms at each state.
	%Specifically:	
	This is %indeed 
	a necessary condition for subgame optimality. (However,
	%surprisingly, 
	it is neither a necessary nor a sufficient
	condition for optimality, as argued in Section~\ref{sec:transfer}%\fbox{OU} as
	%witnessed by \fbox{..... Mettreici la ref de l'exemple dans l'autre papier
	%??}).}
	.)
	\begin{definition}[Locally optimal strategies]
		Consider a PI concurrent game $\G$. % (with a prefix-independent objective)
		%with the valuation $v: Q \rightarrow [0,1]$ of the states. We define by
		%$\mu_v: \distribSet \rightarrow [0,1]$ the lift of the valuation $v$ to the
		%Nature states: $\mu_v(d) = \sum_{q \in Q} \distribFunc(d)(q) \cdot v(q)$. Then,
		%
		A Player $\A$ strategy $\s_\A$ is locally optimal if, for all $\rho = \rho'
		\cdot q \in Q^+$, the $\GF$-strategy $\s_\A(\rho)$ is optimal in the game in
		normal form $\formN_q$. That is -- recalling that $\LiftVal{\G} \in
		[0,1]^\distribSet$ lifts the valuation $\MarVal{\G} \in [0,1]^Q$ to the Nature
		states -- for all $b \in B_q$: 
		\begin{displaymath}
		\MarVal{\G}(q) \leq \outM_{\formN_q}(\s_\A(\rho),b) = \sum_{a \in A_q} \s_\A(\rho)(a) \cdot \LiftVal{\G} \circ
		\delta(q,a,b)
		\end{displaymath}
		\label{def:locally_optimal}
	\end{definition}
	
	\begin{lemma}[Proof in
		Appendix~\ref{proof:prop_sub_game_opt_implies_locally_opt}]
		\label{prop:sub_game_opt_implies_locally_opt}
		In a PI concurrent game,
		% with a prefix-independent objective%\footnote{Note that this also holds for
		%objectives that are not prefix-independent up to a change in
		%Definition~\ref{def:locally_optimal}}
		subgame optimal strategies are locally optimal.
	\end{lemma}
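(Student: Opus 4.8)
The plan is to prove the contrapositive: assuming $\s_\A$ is \emph{not} locally optimal, I will exhibit a history after which $\s_\A$ fails to realize the value, contradicting subgame optimality. So suppose there is some $\rho = \rho' \cdot q \in Q^+$ for which the $\GF$-strategy $\s_\A(\rho)$ is not optimal in the game in normal form $\formN_q$. Since by Proposition~\ref{prop:gnf} the value of $\formN_q$ equals $\MarVal{\G}(q)$, non-optimality of $\s_\A(\rho)$ means $\va_{\formN_q}(\s_\A(\rho)) = \inf_{b \in B_q} \outM_{\formN_q}(\s_\A(\rho),b) < \MarVal{\G}(q)$ (the infimum over $\Dist(B_q)$ being attained at a pure action by linearity). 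Hence there is a Player $\B$ action $b \in B_q$ and a constant $\eta > 0$ with $\outM_{\formN_q}(\s_\A(\rho),b) \le \MarVal{\G}(q) - \eta$.

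The idea is then to let Player $\B$ exploit this locally weak move and afterwards defend as well as possible. I would fix a small $\varepsilon \in (0,\eta)$ and build, in the game starting at $q$ and played against the residual $\s_\A^\rho$, a Player $\B$ strategy $\s_\B$ that plays $b$ on the first move out of $q$ and, upon reaching any successor state $q'$, switches to a strategy that is $\varepsilon$-optimal for Player $\B$ against the corresponding residual of $\s_\A$ from $q'$. Such defending strategies exist because $\MarVal{\G}[\,\cdot\,](q')$ is defined as an infimum over Player $\B$ strategies, and since $Q$ is finite a single $\varepsilon$ works uniformly for the finitely many successors $q'$.

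The core computation is to decompose the winning probability over this first transition. Using the Markov structure of the induced measure together with prefix-independence of $W$ (so that winning in the continuation depends only on the suffix starting at $q'$), one obtains
\begin{equation*}
\prob{\Aconc,q}{\s_\A^\rho,\s_\B}[W] = \sum_{q' \in Q} \prob{q,q'}{}{}(\s_\A(\rho),b) \cdot p_{q'},
\end{equation*}
where $p_{q'}$ is the probability of winning $W$ from $q'$ under the residual strategies and satisfies $p_{q'} \le \MarVal{\G}(q') + \varepsilon$: indeed, by subgame optimality the residual of $\s_\A$ from $q'$ has value exactly $\MarVal{\G}(q')$, and Player $\B$ plays $\varepsilon$-optimally against it. Recognizing, via the lift, that $\sum_{q'} \prob{q,q'}{}{}(\s_\A(\rho),b) \cdot \MarVal{\G}(q') = \sum_{a} \s_\A(\rho)(a) \cdot \LiftVal{\G} \circ \delta(q,a,b) = \outM_{\formN_q}(\s_\A(\rho),b) \le \MarVal{\G}(q) - \eta$, I get $\prob{\Aconc,q}{\s_\A^\rho,\s_\B}[W] \le \MarVal{\G}(q) - \eta + \varepsilon < \MarVal{\G}(q)$. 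Therefore $\MarVal{\G}[\s_\A^\rho](q) < \MarVal{\G}(q)$, contradicting subgame optimality of $\s_\A$.

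I expect the main obstacle to be the one-step decomposition: writing the winning probability as the average, over the first transition, of the continuation winning probabilities, and justifying through prefix-independence that each continuation term is controlled by $\MarVal{\G}(q')$ together with the $\varepsilon$-optimal Player $\B$ response. Once this decomposition is in place, the identity linking $\sum_{q'} \prob{q,q'}{}{}(\s_\A(\rho),b) \cdot \MarVal{\G}(q')$ to $\outM_{\formN_q}$ through the lift $\LiftVal{\G}$ is routine, and the strict inequality follows by the choice $\varepsilon < \eta$.
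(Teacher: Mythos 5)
Your proposal is correct and follows essentially the same route as the paper's proof: identify a pure action $b$ witnessing the local deficit, let Player $\B$ play $b$ once and then continue with an $\varepsilon$-optimal defence, decompose the winning probability over the first transition using prefix-independence, and convert $\sum_{q'}\prob{q,q'}{}{}(\s_\A(\rho),b)\cdot\MarVal{\G}(q')$ into $\outM_{\formN_q}(\s_\A(\rho),b)$ via the lift (the paper's Proposition~\ref{prop:outcome_valuation}). The only cosmetic difference is that the paper takes a uniformly $\varepsilon/2$-optimal Player $\B$ strategy rather than an $\varepsilon$-best response to the specific residual, and note that your appeal to subgame optimality at the successors $q'$ is not even needed, since the value of any residual strategy is automatically at most $\MarVal{\G}(q')$.
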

	Note that this was already shown for positional strategies in
	\cite{BBSArXivICALP}.
	
	Local optimality does not ensure subgame optimality in general. However, it
	does ensure that, for all Player $\B$ deterministic strategies, the game
	almost-surely eventually settles in a value area, % where all states have the
	%same values, 
	i.e. in some $Q_u$ for some $u \in V_\G$.
	\begin{lemma}[Proof in Appendix~\ref{proof:lem_locally_opt_implies_same_value}]
		\label{lem:locally_opt_implies_same_value}
		Consider a PI concurrent game $\G% = \Games{\Aconc}{W}
		$ % with a prefix-independent objective %$W$% and the valuation $v :=
		%\MarVal{\G}$ mapping each state to its value
		and a Player $\A$ locally optimal strategy $\s_\A$. For all Player $\B$
		deterministic strategies, almost surely the states seen infinitely often have
		the same value% w.r.t. $\MarVal{\G}$
		. That is: 
		\begin{displaymath}
			\mathbb{P}^{\s_\A,\s_\B}[\bigcup_{u \in V_{\G}} Q^* \cdot
			(Q_u)^\omega] = 1
		\end{displaymath}
	\end{lemma}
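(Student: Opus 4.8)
The plan is to recognise the sequence of values along a random play as a bounded submartingale, deduce its almost-sure convergence from the martingale convergence theorem, and then exploit the finiteness of $V_\G$ to turn mere convergence into the stronger statement that the play settles in a single value area. Throughout, I would fix an initial state $q_0$ and a Player $\B$ strategy $\s_\B$, work in the probability space of infinite plays $\pi \in Q^\omega$ under $\prob{\Aconc,q_0}{\s_\A,\s_\B}$, and study the process $X_n(\pi) := \MarVal{\G}(\pi_n)$, which is bounded in $[0,1]$ and measurable with respect to the prefix of length $n+1$.

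The key step is to show that $(X_n)_n$ is a submartingale. Conditioning on a prefix $\pi_{\leq n}$ ending in a state $q := \pi_n$, and expanding the expectation of $X_{n+1}$ over the next state using the transition probabilities of Definition~\ref{def:mu_state} together with the definition of the lift $\LiftVal{\G}$, one regroups the double sum over actions to obtain
\begin{displaymath}
\mathbb{E}\big[X_{n+1} \mid \pi_{\leq n}\big] = \sum_{q' \in Q} \prob{\pi_n,q'}{}(\s_\A(\pi_{\leq n}),\s_\B(\pi_{\leq n})) \cdot \MarVal{\G}(q') = \sum_{b \in B_q} \s_\B(\pi_{\leq n})(b) \cdot \outM_{\formN_q}(\s_\A(\pi_{\leq n}),b).
\end{displaymath}
Local optimality of $\s_\A$ (Definition~\ref{def:locally_optimal}) guarantees that each outcome $\outM_{\formN_q}(\s_\A(\pi_{\leq n}),b)$ is at least $\MarVal{\G}(q) = X_n$; since the coefficients $\s_\B(\pi_{\leq n})(b)$ form a distribution, this yields $\mathbb{E}[X_{n+1} \mid \pi_{\leq n}] \geq X_n$. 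I note in passing that this computation never uses that $\s_\B$ is deterministic, so the lemma in fact holds for arbitrary Player $\B$ strategies.

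Being a bounded submartingale, $(X_n)_n$ converges almost surely to some limit $X_\infty$. Here the finiteness of $Q$, hence of the value set $V_\G$, becomes crucial: the finitely many elements of $V_\G$ are separated by a fixed positive gap, so any $V_\G$-valued convergent sequence must be eventually constant. Consequently, almost surely there exist an index $N$ and a value $u \in V_\G$ such that $X_n = u$, i.e.\ $\pi_n \in Q_u$, for all $n \geq N$. This says precisely that the play lies in $Q^* \cdot (Q_u)^\omega$, and taking the union over $u \in V_\G$ gives probability $1$, as required.

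The only genuine obstacle is the bookkeeping in the submartingale computation; once the conditional expectation is rewritten as a $\s_\B$-average of normal-form outcomes, local optimality delivers the submartingale inequality at once, after which everything reduces to the martingale convergence theorem combined with the discreteness of $V_\G$.
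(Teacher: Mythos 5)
Your proof is correct, but it takes a genuinely different route from the paper's. You observe that local optimality makes the value process $X_n = \MarVal{\G}(\pi_n)$ a bounded submartingale (the displayed identity is exactly Proposition~\ref{prop:outcome_valuation} averaged over $\s_\B(\pi_{\leq n})$), invoke Doob's convergence theorem, and use the positive gap separating the finitely many elements of $V_\G$ to upgrade almost-sure convergence to eventual constancy. The paper instead argues by induction on the value levels $u_0 < \cdots < u_n$: it shows via local optimality that whenever the play leaves $Q_{u_{k+1}}$ there is a uniform positive conditional probability of escaping upward to $Q_{\geq k+2}$, so the event of oscillating forever between $Q_{u_{k+1}}$ and $Q_{\leq k}$ while staying in $Q_{\leq k+1}$ has probability zero; a level-by-level bookkeeping then yields the claim. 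Your argument is considerably shorter and more conceptual; it also makes transparent that the restriction to deterministic $\s_\B$ is unnecessary (which the paper's proof also never really uses, since Proposition~\ref{prop:locally_opt_next} holds for mixed column choices by linearity), and the limit variable $X_\infty$ it produces would streamline the subsequent Lemma~\ref{lem:locally_opt_implies_convex_comb} via bounded convergence, $\MarVal{\G}(q) = X_0 \leq \mathbb{E}[X_\infty] = \sum_{u} u \cdot \mathbb{P}[Q^* \cdot (Q_u)^\omega]$. What the paper's elementary approach buys is independence from Doob's theorem and slightly finer quantitative information about how plays migrate between value areas, at the cost of length. The only point worth making explicit in your write-up is the measure-theoretic phrasing of the submartingale property (the inequality must hold $\mathbb{P}$-a.s.\ with respect to the filtration of prefixes, i.e.\ on every cylinder of positive probability), but that is routine.
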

	\begin{proof}[Sketch]
		First, if a state of value 1 is reached (i.e. a state in $Q_1$), then all
		states that can be seen with positive probability have value 1 (i.e. are in
		$Q_1$), since the strategy $\s_\A$ is locally optimal. Let now $u \in V_{\G}$ be
		the highest value in $V_{\G}$ that is not 1 and consider the set of infinite
		paths such that the set $Q_u$ is seen infinitely often but the game does not
		settle in it, i.e. the set $(Q^* \cdot (Q \setminus Q_u))^\omega \cap (Q^* \cdot
		Q_u)^\omega \subseteq Q^\omega$. Since the strategy $\s_\A$ is locally optimal
		(and since $V_{\G}$ is finite), one can show that there is a positive
		probability $p > 0$ such that, the conditional probability of reaching $Q_1$
		knowing that $Q_u$ is left is at least $p$. %each time the set $Q_u$ is left,
		%then the set $Q_1$ is reached with probability at least $p$. 
		Hence, if $Q_u$ is left infinitely often, almost-surely the set $Q_1$ is seen
		%infinitely often
		(and never left). It follows that the probability of the event $(Q^* \cdot (Q
		\setminus Q_u))^\omega \cap (Q^* \cdot Q_u)^\omega$ is 0. This implies that,
		almost-surely, if the set $Q_u$ is seen infinitely often, then at some point it
		is never left. The same arguments can then be used with the highest value in
		$V_{\G}$ that is less than $u$, etc. Overall, we obtain that, for all $u \in
		V_{\G}$, if a set $Q_u$ is seen infinitely often, it is eventually never left
		almost-surely. 
	\end{proof}
	
	Local optimality ensures that, at each step, the %convex combination of the
	expected values of the states reached %, weighted by the probability to reach them,
	does not worsen (and may even improve if Player $\B$ does not play optimally). %In
	%fact, 
	By propagating this property, we obtain that, given a Player $\A$ locally
	optimal strategy and a Player $\B$ deterministic strategy, the convex
	combination of the values $u$ in $V_\G$ weighted by the probability of settling
	in the %corresponding 
	value area $Q_u$, from a state $q$ is at least equal to its %the 
	value $\MarVal{\G}(q)$. % of the state $q$% is at least equal to t
	This is stated in Lemma~\ref{lem:locally_opt_implies_convex_comb} below.
	\begin{lemma}[Proof in
		Appendix~\ref{proof:lem_locally_opt_implies_convex_comb_ok}]
		\label{lem:locally_opt_implies_convex_comb}
		For a PI concurrent game $\G% = \Games{\Aconc}{W}
		$, % with a prefix-independent objective% $W$% and the valuation $v :=
		%\MarVal{\G}$ mapping each state to its value
		a Player $\A$ locally optimal strategy $\s_\A$, a Player $\B$ deterministic
		strategy $\s_\B$ %For a finite sequence $\rho = \rho' \cdot q \in Q^+$%, we have
		and a state $q \in Q$: %the convex combination of the values $u \in V_{\G}$ of
		%the game $\G$ weighted by the probability to stay indefinitely in $Q_u$ is at
		%least $\MarVal{\G}(q)$. That is:
		\begin{displaymath}
			\MarVal{\G}(q) \leq \sum_{u \in V_{\G}} u \cdot
			\mathbb{P}^{\s_\A,\s_\B}_q[Q^* \cdot (Q_u)^\omega]
		\end{displaymath}
	\end{lemma}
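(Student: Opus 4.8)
The plan is to read the sequence of state values along a play as a bounded submartingale under the measure $\mathbb{P}^{\s_\A,\s_\B}_q$, and to identify its almost-sure limit using Lemma~\ref{lem:locally_opt_implies_same_value}. Concretely, for a play $\pi = \pi_0 \pi_1 \cdots \in Q^\omega$ with $\pi_0 = q$, set $X_n := \MarVal{\G}(\pi_n)$, and regard $(X_n)_{n \geq 0}$ as a stochastic process on $(Q^\omega, \mathbb{P}^{\s_\A,\s_\B}_q)$ equipped with the filtration $(\mathcal{F}_n)$ generated by the first $n+1$ states. Each $X_n$ takes values in $[0,1]$, so the process is bounded, and $X_0 = \MarVal{\G}(q)$ is deterministic.

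First I would show that $(X_n)$ is a submartingale. Fix a history $\rho = \rho' \cdot q' \in Q^{n+1}$. Since $\s_\B$ is deterministic, after $\rho$ it plays a single action $b \in B_{q'}$ with probability $1$, so by Definitions~\ref{def:mu_state} and the definition of $\LiftVal{\G}$ (swapping the two finite sums) the conditional expectation of the next value is
\begin{displaymath}
\mathbb{E}[X_{n+1} \mid \rho] = \sum_{q'' \in Q} \MarVal{\G}(q'') \cdot \prob{q',q''}{}{}(\s_\A(\rho),b) = \sum_{a \in A_{q'}} \s_\A(\rho)(a) \cdot \LiftVal{\G} \circ \delta(q',a,b).
\end{displaymath}
By local optimality of $\s_\A$ (Definition~\ref{def:locally_optimal}) applied with this action $b$, the right-hand side is at least $\MarVal{\G}(q') = X_n$. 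Hence $\mathbb{E}[X_{n+1} \mid \mathcal{F}_n] \geq X_n$ almost surely, so $(X_n)$ is a submartingale and $(\mathbb{E}[X_n])_n$ is non-decreasing with $\mathbb{E}[X_n] \geq \mathbb{E}[X_0] = \MarVal{\G}(q)$ for all $n$.

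Next, since $(X_n)$ is bounded, the submartingale convergence theorem gives almost-sure convergence to some limit $X_\infty$, and by bounded convergence $\mathbb{E}[X_\infty] = \lim_n \mathbb{E}[X_n] \geq \MarVal{\G}(q)$. It then remains to evaluate $\mathbb{E}[X_\infty]$. By Lemma~\ref{lem:locally_opt_implies_same_value}, almost every play eventually settles in a single value area; the events $(Q^* \cdot (Q_u)^\omega)_{u \in V_\G}$ are pairwise disjoint and their union has probability $1$. On the event $Q^* \cdot (Q_u)^\omega$, all but finitely many states lie in $Q_u$, so $X_n = u$ for $n$ large and thus $X_\infty = u$ there almost surely. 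Therefore $X_\infty = \sum_{u \in V_\G} u \cdot \mathds{1}_{Q^* \cdot (Q_u)^\omega}$ almost surely, and taking expectations yields
\begin{displaymath}
\sum_{u \in V_\G} u \cdot \mathbb{P}^{\s_\A,\s_\B}_q[Q^* \cdot (Q_u)^\omega] = \mathbb{E}[X_\infty] \geq \MarVal{\G}(q),
\end{displaymath}
which is precisely the claimed inequality.

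The main obstacle I anticipate is the bookkeeping for the submartingale property: one must unfold the conditional expectation exactly into the lifted valuation so that Definition~\ref{def:locally_optimal} applies verbatim, and it is there that the determinism of $\s_\B$ is used, namely to substitute a single action $b$ into the local-optimality inequality rather than an arbitrary $\GF$-strategy. Once that identity is in place, the convergence and the identification of $X_\infty$ are routine consequences of boundedness and of Lemma~\ref{lem:locally_opt_implies_same_value}, the only care being the interchange of limit and expectation via bounded convergence.
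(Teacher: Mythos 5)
Your proof is correct, and it rests on exactly the same two pillars as the paper's: the one-step inequality coming from local optimality against the single action that the deterministic $\s_\B$ plays (the paper phrases this as Proposition~\ref{prop:locally_opt_next} and as the induction hypothesis $\mathcal{P}(i)$, which is precisely your statement $\mathbb{E}[X_i]\geq X_0$), and the almost-sure settling in a value area from Lemma~\ref{lem:locally_opt_implies_same_value}. Where you differ is in how the passage to the limit is handled. The paper does it by hand: it introduces a function $f$ counting value changes, chooses a bound $N$ on the number of changes and horizons $n_k$ so that the finite-time distribution over value areas approximates the settling probabilities up to $\varepsilon/(2(N+1)|V_\G|)$, and then chains the resulting inequalities. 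You instead observe that $(X_n)$ is a bounded submartingale, that $X_n$ converges almost surely (which, as you could note, already follows from the settling lemma without invoking the submartingale convergence theorem at all), identify $X_\infty=\sum_u u\cdot\mathds{1}_{Q^*\cdot(Q_u)^\omega}$, and conclude by bounded convergence that $\mathbb{E}[X_\infty]=\lim_n\mathbb{E}[X_n]\geq\MarVal{\G}(q)$. This buys a substantially shorter and cleaner argument at the cost of importing standard martingale/dominated-convergence machinery; the paper's version is more elementary and self-contained but requires the delicate bookkeeping you correctly anticipated avoiding. Both yield the same (and in fact the stronger, residual-path) statement, so there is no gap.
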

	Note that if Player $\B$ plays subgame optimally, %(with a deterministic
	%strategy)
	then this inequality is an equality.
	\begin{proof}[Sketch]
		First, let us denote $\mathbb{P}^{\s_\A,\s_\B}_q$ by $\mathbb{P}$. %Now, %we
		%extend the valuation function $\MarVal{\G}$, the value of $\pi = q_0 \cdot q_n
		%\in Q^+$ being equal to the value the last state $\MarVal{\G}(q_n)$. Then, 
		It can be shown by induction that, for all $i \in \N^*$, we have the property
		$\mathcal{P}(i): \MarVal{\G}(q) \leq \sum_{\pi \cdot q' \in q \cdot Q^i}
		\MarVal{\G}(q') \cdot \mathbb{P}(\pi \cdot q') =  \sum_{u \in V_\G \setminus \{
			0 \}} u \cdot \mathbb{P}[q \cdot Q^{i-1} \cdot Q_u]$. Furthermore, since by
		Lemma~\ref{lem:locally_opt_implies_same_value}, the game almost-surely settles
		in a value area, it can be shown that for $n$ large enough, the probability of
		being in $Q_u$ after $n$ steps (i.e. $\mathbb{P}[q \cdot Q^{n-1} \cdot Q_u]$) is
		arbitrarily close to the probability of eventually settling in $Q_u$ (i.e.
		$\mathbb{P}%^{\s_\A^\rho,\s_\B^\rho}_\rho(
		[Q^* \cdot (Q_u)^\omega]$). We can then apply $\mathcal{P}(n)$ to obtain the
		desired inequality.
	\end{proof}
	
	Recall that we are considering a pair of conditions to
	characterize that a strategy is subgame optimal. The first
	condition is local optimality. To summarize, we have seen that	
	the fact that a strategy is locally optimal ensures that, from
	any state $q$, the expected values of the value areas where the game settles % the mean of the values of the game, weighted by the probability to settle in the corresponding value area from a state $q$ 
	is at least $\MarVal{\G}(q)$. However, local
	optimality does not ensure anything as to the probability of
	$W$ given that the game settles in a specific value area, as
	witnessed in
	Appendix~\ref{appen:exemple_opt_loc_opt_not_sub_game_opt}. This
	is where the second condition comes into play. For the
	explanations regarding this condition, we will need
	Lemma~\ref{lem:chaterjee_value_0_1} below: a
	consequence of Levy's 0-1 Law.
	\begin{lemma}[See Appendix~\ref{appen:levy}]
		\label{lem:chaterjee_value_0_1}
		Let $\mathcal{M}$ be a countable Markov chain with a PI objective. If there is
		a $q \in Q$ such that $\MarVal{\mathcal{M}}(q) < 1$, then $\inf_{q' \in Q}
		\MarVal{\mathcal{M}}(q') = 0$.
	\end{lemma}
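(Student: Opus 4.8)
The plan is to invoke Levy's 0-1 law for the tail event $W$ in the probability space of infinite trajectories of $\mathcal{M}$. Fix the natural filtration $(\mathcal{F}_n)_{n \geq 0}$ generated by the coordinate process $(X_n)_{n \geq 0}$ of the chain, so that $W$, being a Borel subset of $Q^\omega$, is $\mathcal{F}_\infty$-measurable, and write $\mathbb{P}_q$ for the law of the chain started at a state $q$.

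The first step is to identify the conditional probability of winning, given the history, with the value of the current state: I would show that for every starting state and every $n$ one has $\mathbb{P}[W \mid \mathcal{F}_n] = \MarVal{\mathcal{M}}(X_n)$ almost surely. This rests on two ingredients. Prefix-independence of $W$ gives $\rho \in W \Leftrightarrow \rho_{\geq n} \in W$, so that membership in $W$ is an event about the suffix starting at time $n$. The Markov property then ensures that the conditional law of that suffix given $\mathcal{F}_n$ depends only on $X_n$ and equals the law of a trajectory started at $X_n$. Combining the two yields $\mathbb{P}[W \mid \mathcal{F}_n] = \mathbb{P}_{X_n}[W] = \MarVal{\mathcal{M}}(X_n)$.

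Next I apply Levy's 0-1 law: since $W \in \mathcal{F}_\infty$, the martingale $\mathbb{P}[W \mid \mathcal{F}_n]$ converges almost surely to $\mathbf{1}_W$. By the identity above this means $\MarVal{\mathcal{M}}(X_n) \to \mathbf{1}_W \in \{0,1\}$ almost surely; in particular, on the event $\{\rho \notin W\}$ the sequence of values of the visited states tends to $0$.

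Finally I conclude. Starting the chain from a state $q$ with $\MarVal{\mathcal{M}}(q) < 1$, we have $\mathbb{P}_q[W] < 1$, so the losing event $\{\rho \notin W\}$ has positive probability and is therefore non-empty. On this event the values $\MarVal{\mathcal{M}}(X_n)$ converge to $0$, so for every $\varepsilon > 0$ there is a finite prefix $q = x_0 \cdots x_n$ of positive probability with $\MarVal{\mathcal{M}}(x_n) < \varepsilon$, whence $x_n \in Q$ witnesses $\inf_{q' \in Q} \MarVal{\mathcal{M}}(q') < \varepsilon$. As $\varepsilon > 0$ is arbitrary, $\inf_{q' \in Q} \MarVal{\mathcal{M}}(q') = 0$. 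The main obstacle is the clean justification of the conditional-probability identity $\mathbb{P}[W \mid \mathcal{F}_n] = \MarVal{\mathcal{M}}(X_n)$, where prefix-independence and the Markov property must be combined carefully; the rest is a direct reading of the 0-1 law.
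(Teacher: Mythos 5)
Your proof is correct and follows essentially the same route as the paper: the paper cites an adaptation of Levy's 0-1 law for prefix-independent objectives (that $W$ coincides, up to a null set, with the event that the values of the visited states converge to $1$, and its complement with convergence to $0$) and declares the lemma a direct consequence, which is exactly the deduction you carry out in your final paragraph. The only difference is that you also spell out the martingale identity $\mathbb{P}[W \mid \mathcal{F}_n] = \MarVal{\mathcal{M}}(X_n)$ underlying that citation, which the paper delegates to the reference.
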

	Consider now %
	%It follows that a locally optimal strategy is such that, for all positive
	%values $u \in V_{\G} \setminus \{ 0 \}$, the product of $u$ and the probability
	%to settle in the value area $Q_u$ is at less than or equal to the probability
	%of $W$ and settle in $Q_u$ is subgame optimal. 
	%However, we can not show, from this result, that any locally 
	%
	a Player $\A$ subgame optimal strategy $\s_\A$ and a Player $\B$ deterministic
	strategy. Let us consider what happens if the game eventually settles in $Q_u$
	for some $u \in V_\G \setminus \{ 0 \}$. Assume towards a contradiction that
	there is a finite path after which the probability of $W$ given that the play
	eventually settles in $Q_u$ is less than 1. Then, there is a continuation of
	this path ending in $Q_u$ for which this probability of $W$ is less than $u$.
	Indeed, it was shown that, for a PI objective, in a countable Markov chain
	(which is what we obtain once strategies for both players are fixed), if there
	is a state with a value less than 1, then the infimum of the values in the
	Markov chain is 0 (this is what is stated in
	Lemma~\ref{lem:chaterjee_value_0_1}). Following our above
	towards-a-contradiction-assumption, there would be a finite path from which the
	Player $\A$ strategy $\s_\A$ is not optimal. This is in contradiction with the
	fact that it is subgame optimal. Hence, a second necessary condition -- % the
	%first one being that the strategy is locally optimal 
	in addition to the local optimality assumption -- for subgame optimality is:
	from all finite paths, for all Player $\B$ deterministic strategies, for all
	positive values $u \in V_{\G} \setminus \{ 0 \}$, the probability of $W$ and
	eventually settling in $Q_u$ is equal to the probability of eventually settling
	in $Q_u$. We obtain the theorem below.

	\begin{theorem}[Proof in Appendix~\ref{proof:lem_optimal_arbitrary_strategy}]
		Consider a concurrent game $\G$ with a PI objective $W$ %Let $v := \MarVal{\G}
		%\in [0,1]^Q$ be the vector giving the value of the game $\G$ and 
		and a Player $\A$ strategy $\s_\A \in \SetStrat{\Aconc}{\A}$. The strategy
		$\s_\A$ is subgame optimal if and only if:% ensuring, for some $q \in Q$:
		\begin{itemize}
			\item it is locally optimal;
			\item for all $\rho \in Q^+$, for all Player $\B$ deterministic strategies
			$\s_\B$, for all values $u \in V_\G \setminus \{ 0 \}$, we have
			$\mathbb{P}_\rho^{\s_\A^\rho,\s_\B^\rho}[W \cap Q^* \cdot (Q_u)^\omega] =
			\mathbb{P}_\rho^{\s_\A^\rho,\s_\B^\rho}[Q^* \cdot (Q_u)^\omega]$.
		\end{itemize}
		%Then, the strategy $\s_\A$ is optimal from $q$.
		\label{thm:subgame_optimal_arbitrary_strategy}
	\end{theorem}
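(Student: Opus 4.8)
The statement is an equivalence, so I would prove the two implications separately, noting first that one inequality is free: for any history $\rho = \rho' \cdot q$, the residual $\s_\A^\rho$ is just one Player~$\A$ strategy, so $\MarVal{\G}[\s_\A^\rho](q) \le \MarVal{\G}[\A](q) = \MarVal{\G}(q)$; subgame optimality therefore amounts to the reverse inequality $\MarVal{\G}[\s_\A^\rho](q) \ge \MarVal{\G}(q)$ holding for every $\rho$. Necessity of the first bullet (local optimality) is exactly Lemma~\ref{prop:sub_game_opt_implies_locally_opt}, so for the ``only if'' direction only the second bullet requires work, while for the ``if'' direction I must derive the reverse inequality from the two bullets.

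For the \emph{sufficiency} direction, fix $\rho = \rho'\cdot q$, and observe that the residual $\s_\A^\rho$ is again locally optimal. Since fixing the Player~$\A$ strategy leaves an MDP controlled by Player~$\B$, the infimum $\inf_{\s_\B} \mathbb{P}_\rho^{\s_\A^\rho,\s_\B}[W]$ defining $\MarVal{\G}[\s_\A^\rho](q)$ is unchanged if we restrict to \emph{deterministic} Player~$\B$ strategies (the fact cited from \cite{DBLP:journals/iandc/Chatterjee0GH15}). So fix a deterministic $\s_\B$. By Lemma~\ref{lem:locally_opt_implies_same_value} the play almost surely settles in some value area, whence $\mathbb{P}_\rho^{\s_\A^\rho,\s_\B}[W] = \sum_{u \in V_{\G}} \mathbb{P}_\rho^{\s_\A^\rho,\s_\B}[W \cap Q^*\cdot (Q_u)^\omega]$. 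The second bullet turns each summand with $u \ne 0$ into $\mathbb{P}_\rho^{\s_\A^\rho,\s_\B}[Q^*\cdot(Q_u)^\omega]$; dropping the nonnegative $u=0$ term and using $u \le 1$ gives $\mathbb{P}_\rho^{\s_\A^\rho,\s_\B}[W] \ge \sum_{u \ne 0} u\cdot \mathbb{P}_\rho^{\s_\A^\rho,\s_\B}[Q^*\cdot(Q_u)^\omega] = \sum_{u \in V_{\G}} u \cdot \mathbb{P}_\rho^{\s_\A^\rho,\s_\B}[Q^*\cdot(Q_u)^\omega]$, which is $\ge \MarVal{\G}(q)$ by Lemma~\ref{lem:locally_opt_implies_convex_comb}. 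Taking the infimum over deterministic $\s_\B$ yields $\MarVal{\G}[\s_\A^\rho](q) \ge \MarVal{\G}(q)$, as needed.

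For the \emph{necessity} of the second bullet, assume $\s_\A$ is subgame optimal and suppose for contradiction that the equality fails for some $\rho$, some deterministic $\s_\B$ and some $u \in V_{\G}\setminus\{0\}$; equivalently $\mathbb{P}_\rho^{\s_\A^\rho,\s_\B^\rho}[\overline{W} \cap Q^*\cdot(Q_u)^\omega] > 0$. The case $u=1$ is simpler, since subgame optimality forces $\mathbb{P}_\pi^{\s_\A^\pi,\s_\B^\pi}[W]=1$ from every history $\pi$ ending in $Q_1$, so I assume $0<u<1$. I condition on the event $Q^*\cdot(Q_u)^\omega$ of settling in $Q_u$: this produces a countable Markov chain with PI objective $W$ in which every history has value equal to the conditional probability of $W$, and by assumption this value is $<1$ at $\rho$. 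Lemma~\ref{lem:chaterjee_value_0_1} then forces the infimum of these conditional values to be $0$. The key point, and the main obstacle, is to locate a single history $\pi$ extending $\rho$ that simultaneously (i) ends in $Q_u$, (ii) has conditional probability of $W$ below $u$, and (iii) has \emph{unconditional} probability of settling in $Q_u$ arbitrarily close to $1$. Item (iii) is what isolates the area $Q_u$; it holds eventually along almost every conditioned path because, by Levy's $0$--$1$ law (the tool underlying Lemma~\ref{lem:chaterjee_value_0_1}), the unconditional probability of the tail event ``settle in $Q_u$'' tends to $1$ along plays that do settle in $Q_u$, while the conditional $W$-probability is a bounded martingale converging to the indicator of $W$ and hence lies below $u$ on the positive-measure set where $W$ fails.

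Once such a $\pi$ is fixed, ending at some $q'' \in Q_u$, I split $\mathbb{P}_\pi^{\s_\A^\pi,\s_\B^\pi}[W]$ into the part settling in $Q_u$ and the rest: the first is $(\text{conditional } W\text{-probability})\cdot \mathbb{P}_\pi^{\s_\A^\pi,\s_\B^\pi}[Q^*\cdot(Q_u)^\omega] < u$, and the second is at most $\mathbb{P}_\pi^{\s_\A^\pi,\s_\B^\pi}[\overline{Q^*\cdot(Q_u)^\omega}]$, which is tiny by (iii); choosing the thresholds so their sum is strictly below $u$ gives $\mathbb{P}_\pi^{\s_\A^\pi,\s_\B^\pi}[W] < u = \MarVal{\G}(q'')$. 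This contradicts subgame optimality, which guarantees $\MarVal{\G}[\s_\A^\pi](q'') = \MarVal{\G}(q'')=u$ and hence $\mathbb{P}_\pi^{\s_\A^\pi,\s_\B^\pi}[W] \ge u$ against every, in particular this, Player~$\B$ strategy. I expect the delicate part to be precisely the simultaneous control in (ii) and (iii): making the conditional $W$-value small via Lemma~\ref{lem:chaterjee_value_0_1} while forcing the settling probability close to $1$ via the martingale and $0$--$1$-law behaviour of the settling event, together with checking that the conditioned process is genuinely a countable Markov chain to which Lemma~\ref{lem:chaterjee_value_0_1} applies.
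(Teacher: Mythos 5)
Your sufficiency direction is the paper's argument essentially verbatim: reduce to deterministic Player $\B$ strategies, use Lemma~\ref{lem:locally_opt_implies_same_value} to decompose $\mathbb{P}[W]$ over the settling events, replace each positive-$u$ summand via the second bullet, and conclude with Lemma~\ref{lem:locally_opt_implies_convex_comb}. For the necessity of the second bullet, however, you take a genuinely different route. The paper never conditions on the settling event: it introduces the auxiliary prefix-independent objective $W_u := W \cup (Q^*\cdot(Q\setminus Q_u))^\omega$, observes that your hypothesis forces $\mathbb{P}[W_u]<1$ in the (unconditioned) countable Markov chain of histories, applies Lemma~\ref{lem:chaterjee_value_0_1} to obtain a history $\pi$ with $\mathbb{P}_\pi[W_u]<u/2$, and then relocates such a history into $Q^*\cdot Q_u$ by an elementary decomposition (if every continuation ending in $Q_u$ had $W_u$-probability at least $u/2$, then so would $\pi$, since paths that never return to $Q_u$ lie in $W_u$ for free). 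Because $W\subseteq W_u$, this immediately yields $\mathbb{P}_\pi[W]<u/2<u$ at a history ending in $Q_u$, contradicting subgame optimality -- and it entirely avoids your requirement (iii) that the unconditional settling probability be close to $1$, since the non-settling mass is absorbed into $W_u$ rather than estimated. Your route -- conditioning on $Q^*\cdot(Q_u)^\omega$ and extracting a good prefix from the almost-sure convergence of $\mathbb{P}_{\pi_{\leq n}}[Q^*\cdot(Q_u)^\omega]\to 1$ and $\mathbb{P}_{\pi_{\leq n}}[W]\to 0$ on the positive-measure event $(\lnot W)\cap Q^*\cdot(Q_u)^\omega$ -- does work, and both arguments ultimately rest on Levy's 0--1 law (Theorem~\ref{thm:levy}); but it carries the extra obligations you flag yourself (that the conditioned process is a countable Markov chain with a prefix-independent objective, and that (ii) and (iii) can be met at a common prefix). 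Note also that once you argue via almost-sure convergence on $(\lnot W)\cap Q^*\cdot(Q_u)^\omega$, the detour through the conditioned chain and Lemma~\ref{lem:chaterjee_value_0_1} becomes redundant: $\mathbb{P}_{\pi_{\leq n}}[W]\to 0$ already drives the value below $u$ at some prefix that, by settling, ends in $Q_u$, which is all the contradiction requires.
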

	\begin{proof}[Sketch]
		Lemma~\ref{prop:sub_game_opt_implies_locally_opt} states that local optimality
		is necessary and we have informally argued above why the second condition is
		also necessary for subgame optimality. As for the fact that they are sufficient
		conditions, this is a direct consequence of
		Lemmas~\ref{lem:locally_opt_implies_same_value}
		and~\ref{lem:locally_opt_implies_convex_comb} and the fact that deterministic
		strategies can achieve the same values as arbitrary strategies in MDPs (which we
		obtain once a Player $\A$ strategy is fixed), as cited in
		Subsection~\ref{subsec:definition_strategies}%see for instance in the
		%explanations in Theorem 1 from \cite{DBLP:journals/iandc/Chatterjee0GH15}
		.
	\end{proof}
	
	%An interesting question could then be to consider 
	One may ask what happens in the special case where the strategy $\s_\A$
	considered is positional. %That is, what would the characterization of subgame
	%optimal positional strategies? (This is equivalent to positional uniformly
	%optimal). 
	As mentioned above, such a characterization was already presented in
	\cite{BBSArXivICALP}\footnote{The proof was only presented for a
		specific class of objectives.% In fact, this characterization only dealt with
		%a specific kind of objectives to ease the proof.
	}. Overall, we obtain a similar result except that the second condition is
	replaced by what happens in the game restricted to the End Components in the
	Markov Decision Process induced by the positional strategy $\s_\A$. %Such a  see
	%Lemma 17 in \cite{BBSArXivICALP}.
	
	\section{From subgame almost-surely winning to subgame optimality}
	\label{sec:transfer}
	
	\begin{figure}
		\begin{minipage}[b]{0.48\linewidth}
			\centering
			\includegraphics[scale=1]{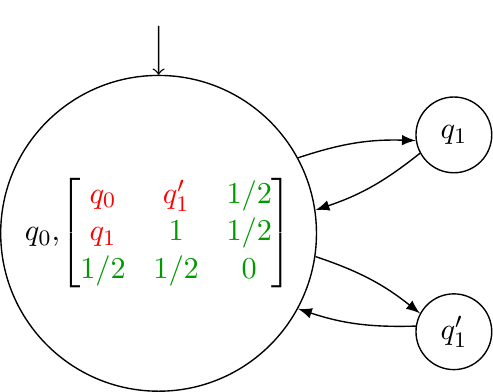}
			\caption{A co-Büchi game.}
			\label{fig:co_buchi_unfair}
		\end{minipage}
		\begin{minipage}[b]{0.48\linewidth}
			\centering
			\includegraphics[scale=1.3]{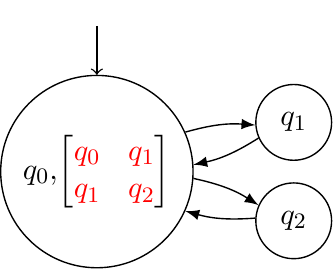}
			\caption{A parity game.}
			\label{fig:parity_unfair}
		\end{minipage}
	\end{figure}
	
	In \cite[Thm. 4.5]{GIH10}, the authors have proved a transfer result in PI
	turn-based games%with prefix-independent objectives
	: the amount of memory sufficient to play optimally in every state of value 1
	of every game is also sufficient to play optimally in every game. %to play
	%optimally corresponds to the amount of memory sufficient to be almost-surely
	%winning. %(rewrite theorem 4.5?)%if a given amount of memory $M$ is sufficient
	%to play almost-sure strategies (i.e. whenever there exists an almost-sure
	%strategies, there exists one using this amount of memory), then there exists
	%optimal strategies using that same amount of memory. 
	This result does not hold on concurrent games as is. First, although there are
	always optimal strategies in PI turn-based games (as proved in the same paper
	\cite[Thm. 4.3]{GIH10}), there are PI concurrent games without optimal
	strategies. %may not exist optimal strategies in a given concurrent game %from
	%which there is no optimal strategies (regardless of the memory considered)
	%(which cannot happen in a turn-based setting, as proved in the same paper
	%\cite[Thm. 4.3]{GIH10}). 
	%However, it is not the only issue. Indeed
	Second, infinite memory may be required to play optimally in co-Büchi
	concurrent games whereas almost-surely  winning strategies can be found among
	positional strategies in a turn-based setting. This can be seen in the game of
	Figure~\ref{fig:co_buchi_unfair} with $\colFunc(q_0) = 0$ and $\colFunc(q_1) =
	\colFunc(q_1') = 1$. The green values in the local interaction at state $q_0$
	are the values of the game if they are reached (the game ends immediately). If a
	green value is not reached, the objective of Player $\A$ is to see only finitely
	often states $q_1$ and $q_1'$. It has already been argued in
	\cite{BBSArXivICALP} that the value of this game is $1/2$ and that there is an
	optimal strategy for Player $\A$ but it requires infinite memory. %Specifically,
	%t
	To play optimally, Player $\A$ must play the top row with probability $1 -
	\varepsilon_k$ and the middle row with probability $\varepsilon_k$ for
	$\varepsilon_k > 0$ that goes (fast) to 0 when $k$ goes to $\infty$ (where $k$
	denotes the number of steps). The $\varepsilon_k$ must be chosen so that, if
	Player $\B$ always plays the left column with probability 1, then the state
	$q_1$ is seen finitely often with probability 1. Furthermore, as soon as the
	state $q_1'$ is visited, Player $\A$ switches to a positional strategy playing
	the bottom row with probability $\varepsilon_k'$ small enough (where $k$ denotes
	the number of steps before the state $q_1'$ was seen) and the two top rows with
	probability $(1 - \varepsilon_k')/2$. 
	
	Hence, the transfer of memory from almost-surely winning to
	optimal does not hold in concurrent games even if it is
	assumed that optimal strategies
	exist% from all states%, as it is the case in the game of
	%Figure~\ref{fig:co_buchi_unfair}
	. However, one can note that although the strategy described
	above is %uniformly
	optimal, it is not subgame optimal. Indeed, when the strategy
	switches, the value of the residual strategy is
	$1/2 - \varepsilon_k' <
	1/2$. %Furthermore, note that switching strategy is essential. Indeed, if it
	%does not switch, Player $\B$ can at some point opt for the middle column and
	%see the state $q_1'$ with very high probability (in fact, the value of the game
	%if the bottom row is not available for Player $\A$ is 0).
	In fact, there is no subgame optimal strategy in that
	game. Actually, if we assume that, not only optimal but
	subgame optimal strategies exist, then the transfer of memory
	will hold.
	%
	%we will consider 
	%
	%this is the reason why the transfer of memory does not hold: the necessary
	%assumption is not that there is a uniformly optimal strategy, but rather that
	%there is a subgame optimal strategy. 
	
	The aim of this section is twofold: first, we identify a necessary and
	sufficient condition for the existence of subgame optimal
	strategies\footnote{Note that this is %very 
		different from what we did in the previous section: there, we established a
		necessary and sufficient condition for a specific strategy to be subgame
		optimal. Here, given a game, we consider necessary and sufficient conditions on
		the game for the existence of a subgame optimal strategy.}. Second, we establish
	the above-mentioned memory transfer that relates the amount of memory to play
	subgame optimally %with the amount of memory 
	and to be almost-surely winning. Before stating the main theorem of this
	section%summarizing these results
	, let us first introduce the definition of positionally subgame almost-surely
	winnable objective, i.e. objectives for which subgame almost-surely winning
	strategies can be found among positional strategies.
	\begin{definition}[Positionally subgame almost-surely winnable objective]
		Consider a PI objective $W \subseteq \colSet^\omega$. It is said to be a
		positionally subgame almost-surely winnable objective (\textsf{PSAW} for short)
		if the following holds: in all concurrent games $\G = \Games{\Aconc}{W}$ where
		there is a subgame almost-surely winning strategy, there is a positional one. 
	\end{definition}
	
	%owever, we realized that this is due to the fact that %However, considering
	%the game exemplifying this phenomenon, one can see that this is due to the fact
	%that 
	%the optimal strategy considered is not locally optimal. In fact, we obtain the
	%same memory transfer if we assume that there exists optimal strategies (from
	%all states) that are locally optimal. That is, we are able to show that the
	%existence of a uniformly (positively) optimal strategy that is locally optimal
	%is equivalent to the existence of a subgame optimal strategy. Furthermore, in
	%that case, we show that if positional strategies are sufficient to play
	%almost-surely -- in all finite games with that objective $W$ -- then they also
	%are sufficient to play subgame optimally. 
	%Then, we have the theorem below.
	
	%Note that we only consider the case where the strategy is positional to
	%simplify the proof, however we believe that it can extended to arbitrary finite
	%memory. It is stated in the theorem below. 
	
	%Hence, this result holds by also assuming that the set of sub-optimal states
	%is empty. B
	
	% Faut-il l'exprimer avec un résultat avec des couleurs pour abstraire
	%l'objectif du jeu ? (on fixe l'objectif, et ensuite, il existe une memoire
	%telle que, pour tout jeu construit dessus, les stratégie almost-sure peivent
	%être trouvés parmi les stratégies utilisant cette quantité de mémoire).
	\begin{theorem}%[Proof~\ref{proof:thm_transfer_memory}]
		Consider a non-empty finite set of colors $K$ and a PI objective $\emptyset
		\subsetneq W \subseteq \colSet^\omega$. Consider a concurrent game $\G$ with
		objective $W$. Then, the three following assertions are equivalent:
		\begin{itemize}
			\item[a.] there exists a subgame optimal strategy;
			\item[b.] there exists an %uniformly 
			optimal strategy that is locally optimal;
			\item[c.] there exists a %uniformly 
			positively-optimal strategy that is locally optimal.
		\end{itemize}
		%If, from all states $q \in Q$, there exists a locally optimal strategy
		%$\s_\A$ whose value is positive from $q$, i.e. $\MarVal{\G}(\s_\A)[q] > 0$ then
		%there exists a subgame optimal strategy $\s_\A'$. 
		Furthermore, if this holds and if the objective $W$ is \textsf{PSAW}, then
		there exists a subgame optimal positional strategy.
		\label{thm:transfer_memory}
	\end{theorem}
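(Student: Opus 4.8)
The three implications $a \Rightarrow b \Rightarrow c$ are immediate and require no real work. A subgame optimal strategy is in particular optimal, and by Lemma~\ref{prop:sub_game_opt_implies_locally_opt} it is locally optimal, which gives $a \Rightarrow b$. An optimal strategy $\s_\A$ satisfies $\MarVal{\G}[\s_\A](q) = \MarVal{\G}(q)$ for every $q$, so whenever $\MarVal{\G}(q) > 0$ we also have $\MarVal{\G}[\s_\A](q) > 0$; hence every optimal strategy is positively-optimal, and $b \Rightarrow c$ follows by keeping the same witness. All the content is therefore in $c \Rightarrow a$, and in the positional refinement.

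For $c \Rightarrow a$, the plan is to start from a locally optimal, positively-optimal strategy $\s_\A$ and to build a subgame optimal one through the \emph{slicing} mentioned in the introduction. By Theorem~\ref{thm:subgame_optimal_arbitrary_strategy} it suffices to exhibit a \emph{locally optimal} strategy $\s_\A^\star$ such that, from every history and against every deterministic Player $\B$ strategy, conditioned on the play eventually settling in a positive value area $Q_u$, the objective $W$ holds almost surely. I would treat each area $Q_u$ with $u \in V_\G \setminus \{0\}$ separately, construct inside it a strategy that is almost-surely winning for $W$ conditioned on never leaving $Q_u$, and then glue the per-area strategies together; Lemma~\ref{lem:locally_opt_implies_same_value} guarantees that the play almost surely settles in a single area, so that infinite wandering across areas is a null event and only the ``settled'' piece matters for the second condition of Theorem~\ref{thm:subgame_optimal_arbitrary_strategy}.

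The per-slice step is where I would invoke the positive-value hypothesis. For a fixed positive $u$, consider the slice game $\G_u$ obtained by freezing the states outside $Q_u$ and declaring every exit from $Q_u$ harmless, with PI objective $W_u$ = ``$W$, or leaving $Q_u$''. Projecting $\s_\A$ onto $\G_u$ and using positive-optimality, every state of $\G_u$ has positive value: from $q \in Q_u$ the only losing event against any adversary is to stay in $Q_u$ forever while failing $W$, whose probability is at most $1 - \MarVal{\G}[\s_\A](q) < 1$ since $\MarVal{\G}[\s_\A](q) > 0$. Applying the strengthened form of the result of \cite{CH07} (Theorem~\ref{thm:all_not_zero_implies_almost_sure}), all values in $\G_u$ being positive, there is a subgame almost-surely winning strategy $\s^u$ in $\G_u$, which I would moreover choose among strategies playing optimal $\GF$-strategies of the games in normal form $\formN_q$, so that each $\s^u$ is locally optimal in $\G$. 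Gluing the $\s^u$ (playing $\s^u$ while inside $Q_u$ and restarting with $\s^{u'}$ upon entering a different area $Q_{u'}$) yields a locally optimal $\s_\A^\star$ whose settled behaviour secures $W$ almost surely, so Theorem~\ref{thm:subgame_optimal_arbitrary_strategy} certifies that $\s_\A^\star$ is subgame optimal.

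For the positional refinement, assuming $W$ is \textsf{PSAW}, I would arrange the slice games so that PSAW supplies \emph{positional} almost-surely winning strategies $\s^u$, whence gluing positional pieces gives a positional subgame optimal strategy. The step I expect to be the main obstacle is precisely this reconciliation: I must simultaneously force each per-slice strategy to be almost-surely winning inside its slice \emph{and} locally optimal in $\G$, and I must realise the slice games as genuine $\langle \cdot, W\rangle$ games so that PSAW (which is a statement about the objective $W$ itself) applies to them. The delicate point is the treatment of exits: exits to higher areas can be absorbed into the inductively solved almost-surely winning region, but local optimality may force positive-probability exits to \emph{strictly lower} areas, which must not be counted as losses; since no neutral color is assumed in this theorem, encoding ``an exit is harmless'' using the objective $W$ alone rather than an auxiliary objective is where the construction needs the most care.
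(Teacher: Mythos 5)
Your treatment of $a \Rightarrow b \Rightarrow c$ and your overall architecture for $c \Rightarrow a$ match the paper's proof: slice by value area, restrict Player $\A$ to optimal $\GF$-strategies in each $\formN_q$ so that everything playable in the slice is locally optimal in $\G$ (the paper uses Proposition~\ref{prop:finite_set_optimal_start_gnf} to keep the restricted action set finite), show via positive-optimality that all states of the slice game have positive value, invoke Theorem~\ref{thm:all_not_zero_implies_almost_sure} to get a subgame almost-surely winning strategy there, and glue. Your value bound for the slice is also the one the paper uses (its Lemma~\ref{lem:value_one_qualitatively_Gu} works with exactly your auxiliary objective $W_u = W \cup Q^*\cdot(Q\setminus Q_u)\cdot Q^\omega$).

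The one step you explicitly leave open --- realising the slice as a genuine $\Games{\cdot}{W}$ game in which every exit from $Q_u$ is harmless, without a neutral color and without changing the objective --- is a genuine gap as written, since without it \textsf{PSAW} cannot be applied to the slice and the positional refinement does not go through. The paper's resolution is simple: since $W \neq \emptyset$ is prefix-independent over the finite alphabet $\colSet$, one builds a fixed auxiliary arena $\G_W$ whose states are the colors themselves, where Player $\A$ moves alone and can step to any color; she can then trace out any word of $W$, so every state of $\G_W$ has value $1$ for the objective $W$ itself. In $\G_u$, every transition of $\G$ that leaves $Q_u$ is redirected into $\G_W$ (with the exit probability mass lumped onto its entry). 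This makes all exits wins while the objective remains literally $W$, so $\G_u$ is a concurrent game with objective $W$ and \textsf{PSAW} applies to it directly. Your worry about exits to strictly lower areas is unfounded: counting them as wins in $\G_u$ is sound because the certificate of Theorem~\ref{thm:subgame_optimal_arbitrary_strategy} only constrains plays that eventually settle in $Q_u$, while the value accounting across areas is handled separately by Lemma~\ref{lem:locally_opt_implies_convex_comb} (which needs only local optimality of the glued strategy, guaranteed by the action restriction). With the $\G_W$ gadget supplied, your argument coincides with the paper's.
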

	First, note that the equivalence is stated in terms of existence of strategies,
	not on the strategies themselves. In particular, any subgame optimal strategy is
	both %uniformly 
	optimal and locally optimal, however, an %uniformly 
	optimal strategy that is locally optimal is not necessarily a subgame optimal
	strategy. An example is provided in
	Appendix~\ref{appen:exemple_opt_loc_opt_not_sub_game_opt}. Second, it is
	straightforward that point \textit{a} implies point \textit{b} (from
	Theorem~\ref{thm:subgame_optimal_arbitrary_strategy}) and that point \textit{b}
	implies point \textit{c} (by definition of positively-optimal strategies). In
	the remainder of this section, we explain informally the constructions %(and
	%lemmas) 
	leading to the proof of this theorem, i.e. to the proof that point \textit{c}
	implies point \textit{a}. The transfer of memory %will then be 
	is a direct consequence of the way this theorem is proven. The
	full proof is given in Appendix%~\ref{app:conditions_subgame_optimal}.
	~\ref{proof:thm_transfer_almost_sure_optimal}. We fix a PI concurrent game
	$\G = \Games{\Aconc}{W}$ for the rest of the section.
	% with $W$ a prefix-independent objective
	
	The idea is as follows. As stated in
	Theorem~\ref{thm:subgame_optimal_arbitrary_strategy}, subgame optimal strategies
	are locally optimal and win the game almost-surely if the game settles in a
	value area $Q_u$ for some positive $u \in V_{\G} \setminus \{ 0 \}$. Our idea is
	therefore to consider subgame almost-surely winning strategies in the derived
	game $\G_u$: a \textquotedblleft restriction\textquotedblright \ of the game
	$\G$ to $Q_u$ (more details will be given later). We can % for all positive $u \in V_{\G} \setminus \{ 0 \}$, and 
	then glue together these subgame almost-surely winning strategies -- defined
	for all $u \in V_{\G} \setminus \{ 0 \}$ -- %them together 
	into a subgame optimal strategy.
	%We want to build a subgame optimal strategy in $\G$. For it to be subgame
	%optimal, it must ensures the properties of . As soon as the strategy considered
	%is locally optimal, the game will almost-surely settle in a $Q_u$ for some $u
	%\in V_{\G}$. 
	%
	%Consider a finite concurrent game $\Games{C}{W}$ where the values of the
	%states are given by a valuation $v: Q \rightarrow [0,1]$. We want to build a
	%positional strategy $\s_\A$ that is uniformly optimal. This strategy needs to
	%be locally optimal. Furthermore, since the objective $W$ is prefix-independent,
	%and as stated in Lemma~\ref{lem:uniformly_optimal}, what matters is what
	%happens in end components. That is, the strategy $\s_\A$ needs to be defined so
	%that if some state $q$ is in an end component $H$, should Player $\B$ decide
	%never to leave $H$, the value of the game from $q$ would be $1$. Furthermore,
	%by Proposition~\ref{prop:ec_same_vale}, all states in an EC have the same
	%value. Consider some value $u \in [0,1]$ such that $u > 0$ and a subset of
	%states $\emptyset \neq Q_u \subseteq Q$ ensures $v[Q_u] = \{ u \}$. Ideally, we
	%would want to play an almost-sure strategy when restricted (in some sense) to
	%$Q_u$. But 
	However, there are some issues:
	\begin{enumerate}
		\item the state values in the game $\G_u$ should be all equal to 1;
		\item furthermore, there must exist a subgame almost-surely winning strategy
		in $\G_u$;
		\item this subgame almost-surely winning strategy in $\G_u$ should be locally
		optimal when considered in the whole game $\G$.
	\end{enumerate}
	%first, , second, in addition, there must exists almost-sure strategies (not
	%only limit-sure) when 'restricted' to $Q_u$ and third an almost-sure strategy
	%'restricted' to $Q_u$ may not be locally optimal in the whole game $\G$. %In
	%fact, the idea is to extract a game from $Q_u$ in the following:  all states
	%not in $Q_u$ are replaced by a state of value $1$ (hence Player $\B$ has no
	%interest in leaving the area). Furthermore, in all local interactions in $Q_u$,
	%we only consider as a available actions optimal strategies in the local
	%interaction. 
	Note that the method we use here is different from what the authors of
	\cite{GIH10} did to prove the transfer of memory in turn-based
	games%\footnote{Namely, they showed that there is a live and self-consistent
	%permutation of the Nature states that both players can agree on. They can then
	%play according to this permutation (which can be used to specify their
	%preferences over these Nature states.)}
	. 
	
	Let us first deal with issue {\bfseries\sffamily
		\color{darkgray} 3}. One can ensure that the almost-surely winning strategies
	in the game $\G_u$ are all locally optimal in $\G$ by properly defining the game
	$\G_u$. More specifically, %In fact, what we do is 
	this is done by enforcing that the only Player $\A$ possible strategies in
	$\G_u$ are locally optimal in the game $\G$. To do so, we construct the game
	$\G_u$ whose state space is $Q_u$ %\footnote{%In fact, not exactly, we also
	%consider some additional states.
	%Plus some extra states gadgets.} 
	(plus gadget states) but whose set of actions $A_{\formN_q}$, at a state $q \in
	Q_u$, is such that the set of strategies $\Dist(A_{\formN_q})$ corresponds
	exactly to the set of optimal strategies in the original game in normal form
	$\formN_q$, while keeping the set of actions $A_{\formN_q}$ for Player $\A$
	finite. This is possible thanks to
	Proposition~\ref{prop:finite_set_optimal_start_gnf} below: in every game in
	normal form $\formN_q$ at state $q \in Q_u$, there exists a finite set
	$A_{\formN_q}$ of optimal strategies such that the optimal strategies in
	$\formN_q$ are exactly the convex combinations of strategies in $A_{\formN_q}$.
	This is a well known result, argued for instance in \cite{shapley1950basic}.	
	%This corresponds to the actions available at state $q$ in the game $\G_u$. The
	%above property is stated in the proposition below.
	\begin{proposition}%[Proof~\ref{proof:prop_finite_set_optimal_start_gnf}]
		\label{prop:finite_set_optimal_start_gnf}
		Consider a game in normal form $\formN = \langle \mathsf{Act}_\A,\mathsf{Act}_\B,[0,1],\delta \rangle$
		with $|\mathsf{Act}_\A| = n$ and $|\mathsf{Act}_\B| = k$. There exists a set $A_{\formN} \subseteq
		\Opt_\A(\formN)$ of optimal strategies such that $|A_{\formN}| \leq n+k$ and
		$\Dist(A_{\formN}) = \Opt_\A(\formN)$.
		% can be obtained as convex combination of elements in $S$
	\end{proposition}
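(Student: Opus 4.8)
The plan is to show that $\Opt_\A(\formN)$ is a compact convex polytope and to take for $A_\formN$ its set of vertices. First I would rewrite optimality as a finite system of linear inequalities. Write $v := \va_\formN$ for the value and recall that $\outM_\formN(\sigma_\A,\sigma_\B)$ is bilinear in $(\sigma_\A,\sigma_\B)$; hence for a fixed $\sigma_\A \in \Dist(\mathsf{Act}_\A)$ the infimum $\va_\formN(\sigma_\A) = \inf_{\sigma_\B} \outM_\formN(\sigma_\A,\sigma_\B)$ is a minimum attained at a vertex of the simplex $\Dist(\mathsf{Act}_\B)$, i.e. at some pure action $b \in \mathsf{Act}_\B$. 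Since no $\GF$-strategy can guarantee more than the value, $\sigma_\A$ is optimal iff $\va_\formN(\sigma_\A) \geq v$, which by the previous remark is equivalent to $\sum_{a \in \mathsf{Act}_\A} \sigma_\A(a)\, \delta(a,b) \geq v$ for every $b \in \mathsf{Act}_\B$. Therefore $\Opt_\A(\formN)$ is exactly the set of $\sigma_\A \in \R^{\mathsf{Act}_\A}$ satisfying the $n$ nonnegativity constraints $\sigma_\A(a) \geq 0$, the normalization $\sum_a \sigma_\A(a) = 1$, and the $k$ payoff constraints $\sum_a \sigma_\A(a)\,\delta(a,b) \geq v$. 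This is the intersection of the simplex $\Dist(\mathsf{Act}_\A)$ with $k$ closed half-spaces, hence a compact convex polyhedron, and it is nonempty because the minimax theorem guarantees an optimal $\GF$-strategy.

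Next I would take $A_\formN$ to be the (finite) set of vertices of this polytope. By the Minkowski--Weyl theorem (equivalently, Krein--Milman for polytopes) a compact convex polytope is the convex hull of its finitely many vertices. Each vertex belongs to $\Opt_\A(\formN)$, so $A_\formN \subseteq \Opt_\A(\formN)$, and, identifying each distribution in $\Dist(A_\formN)$ with the corresponding convex combination of its points, $\Dist(A_\formN)$ is precisely $\mathrm{conv}(A_\formN)$, whence $\Dist(A_\formN) = \Opt_\A(\formN)$. Note that $A_\formN$ cannot be shrunk: every vertex is an extreme point, so it must lie in any set whose convex hull is $\Opt_\A(\formN)$.

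The delicate point, and the main obstacle, is the bound $|A_\formN| \leq n+k$, since the number of vertices of a polyhedron cut out by $n+k$ inequalities is in general exponential in the dimension; here one must use the special structure of optimal sets of matrix games. Characterizing vertices as basic solutions, at a vertex $\sigma_\A$ with support $S := \Supp(\sigma_\A)$ at least $|S|-1$ payoff constraints are tight and the active constraints are linearly independent, so each vertex is a Shapley--Snow basic solution associated with a nonsingular square submatrix of $\delta$ indexed by the rows in $S$. Moreover a complementary-slackness argument lowers the effective dimension: if a column $b \in \mathsf{Act}_\B$ is used by some optimal Player $\B$ $\GF$-strategy, then $\sum_a \sigma_\A(a)\,\delta(a,b) = v$ holds on all of $\Opt_\A(\formN)$, so those payoff constraints are genuine equalities rather than facets. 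Turning these two observations into the linear count $|A_\formN| \leq n+k$ is exactly the classical basic-solutions argument, which I would invoke from \cite{shapley1950basic} rather than reprove.
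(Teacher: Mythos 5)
Your reduction of optimality to a finite linear system --- the $n$ nonnegativity constraints, the normalization $\sum_a \sigma_\A(a)=1$, and the $k$ constraints $\sum_a \sigma_\A(a)\,\delta(a,b) \geq \va_\formN$ obtained because the inner infimum over $\Dist(\mathsf{Act}_\B)$ is attained at a pure column --- is exactly the system the paper's proof has in mind, and taking $A_\formN$ to be the vertex set of the resulting compact polytope is the paper's argument as well. Everything you prove (that $\Opt_\A(\formN)$ is a nonempty compact polytope, that its vertex set is finite, lies in $\Opt_\A(\formN)$, and satisfies $\Dist(A_\formN)=\Opt_\A(\formN)$, and that no smaller set works) is correct, and finiteness of $A_\formN$ is in fact the only part of the statement used elsewhere in the paper (it is what keeps the Player~$\A$ action sets of the derived games $\G_u$ finite).

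The gap is the cardinality bound $|A_\formN|\leq n+k$, which you correctly single out as the delicate point but then do not prove. You are right that it does not follow from generic polytope counting; unfortunately the repair you sketch does not close it either. The Shapley--Snow correspondence bounds the extreme optimal strategies by the number of nonsingular square submatrices, which is exponential in $n+k$, and complementary slackness only turns some payoff constraints into equalities without controlling the vertex count. Worse, the bound as stated appears to be false: the optimal set of a matrix game can be made equal to an essentially arbitrary polytope. Given a $d$-dimensional polytope $P$ with $f$ facets placed in the interior of the $d$-dimensional simplex, take $n=d+1$ rows, one column per facet of the form $\alpha_j(a_j - c_j\mathbf{1}) + v\mathbf{1}$ with $\alpha_j>0$ small (so entries stay in $[0,1]$), plus one column constantly equal to $v$; the value is $v$ and $\Opt_\A$ is exactly $P$. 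A dodecahedron ($d=3$, $f=12$, $20$ vertices) then yields $20$ extreme points that $A_\formN$ must contain while $n+k = 4+13 = 17$. So this step would fail rather than merely being missing, and it cannot be rescued by citation; the statement one can actually prove --- and all that is needed downstream --- is that $A_\formN$ may be taken finite, bounded by the number of vertices of a polytope with at most $n+k$ facets. (The same defect is present in the paper's own one-line justification, so your instinct to distrust it was sound; the mistake is in then deferring to \cite{shapley1950basic} for a linear count it does not contain.)
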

	\begin{proof}[Sketch]
		One can write a system of $n+k$ inequalities (with some
		additional equalities) whose set of solutions is exactly the
		set of optimal $\GF$-strategies $\Opt_\A(\formN)$.
		% Consider the set in $\R^n$ of vectors whose sum of
		% components is equal to 1. We can express the set of
		% optimal strategies $\Opt_\A(\formN)$ as the solution of a
		% system of inequalities. First, with $n$ inequalities we
		% can consider only non-negative values. Furthermore, with
		% another $k$ inequalities -- specifying that the weighted
		% sum in each column is at least $u$ -- we have that the
		% solutions to the system of inequalities is exactly the
		% vectors of values corresponding to the optimal strategies
		% in the game form $\formNF$.
		The result then follows from standard system of inequalities
		arguments as the space of solutions is in fact a polytope
		with at most $n+k$ vertices.
	\end{proof}
	
	We illustrate this construction: a part of a concurrent game
	is depicted in Figure \ref{fig:local_optimal_necessary} and
	the change of the interaction of the players at state $q_0$ is
	depicted in Figures~\ref{fig:FormOriginal},~\ref{fig:FormOriginalValued},~\ref{fig:FormOriginalValuedOnlyOpt} and~\ref{fig:FormOnlyOpt}.
	\begin{figure}
		\begin{minipage}[b]{0.23\linewidth}
			\hspace*{-2.4cm}
			\vspace*{-0.1cm}
			\centering
			\includegraphics[scale=0.9]{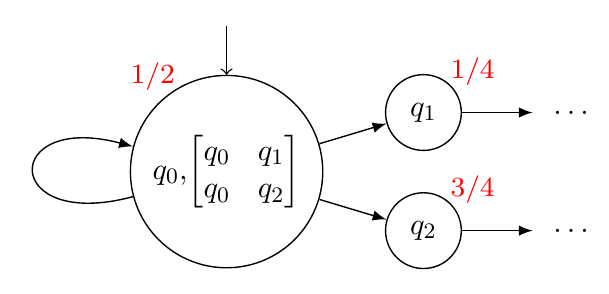}
			\caption{A concurrent game with $A_{q_0} = \{ a_1,a_2 \}$.}
			\label{fig:local_optimal_necessary}
		\end{minipage}
		\hspace*{0.1cm}
		\begin{minipage}[b]{0.15\linewidth}
			\hspace*{-1cm}
			%\vspace*{0.5cm}
			\centering
			\includegraphics[scale=1.2]{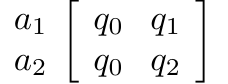}
			\caption{The local interaction $\formNF_{q_0}$ at state $q_0$.}
			\label{fig:FormOriginal}
		\end{minipage}
		\hspace*{0.1cm}
		\begin{minipage}[b]{0.15\linewidth}
			\hspace*{-0.5cm}
			\vspace*{-0.1cm}
			\centering
			\includegraphics[scale=1.1]{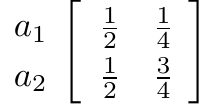}
			\caption{The game in normal form $\formN_{q_0}$% at state $q_0$
				.}
			\label{fig:FormOriginalValued}
		\end{minipage}
		\hspace*{0.1cm}
		\begin{minipage}[b]{0.19\linewidth}
			\hspace*{-0.3cm}
			\vspace*{-0.15cm}
			\centering
			\includegraphics[scale=1.1]{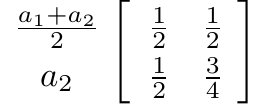}
			\caption{The game $\mathcal{F}^{\mathsf{opt},\mathsf{nf}}_{q_0}$ with only
				optimal strategies.}
			\label{fig:FormOriginalValuedOnlyOpt}
		\end{minipage}
		\hspace*{0.1cm}
		\begin{minipage}[b]{0.19\linewidth}
			\hspace*{-0.2cm}
			\vspace*{-0.15cm}
			\centering
			\includegraphics[scale=1.1]{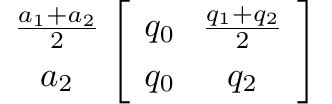}
			\caption{The game form $\mathcal{F}^{\mathsf{opt}}_{q_0}$ with only optimal
				strategies.}
			\label{fig:FormOnlyOpt}
		\end{minipage}
	\end{figure}
	
	The game $\G_u$ has the same objective $W$ as the game %with the same objective
	%$W$ than the game 
	$\G$.
	% restricted to $Q_u$ (that is only the state is $Q_u$ are considered) and all
	%local interactions $\formNF_q$ of $\G$ are replaced by
	%$(\formNF_u)_{\mathsf{opt}}$ in the way described above, where the valuation of
	%the outcomes considered is $\mu_v: \distribSet \rightarrow [0,1]$, the lift of
	%the valuation $v: Q \rightarrow [0,1]$ giving the value of the states. However,
	%since the game $\G_u$ is restricted to $Q_u$, we have to decide what happens in
	%$\G_u$ if, as a result of an interaction of the players, a state not in $Q_u$
	%would be reached. %The question now is to decide what happens if a state not in
	%$Q_u$ would be reached. 
	Since we want all the states to have value 1 in $\G_u$ (recall issue
	{\bfseries\sffamily
		\color{darkgray} 1}),
	% on almost-sure strategies, we want the game to have value 1. Hence
	we will build the game $\G_u$ such that any edge leading to a state not in
	$Q_u$ in $\G$ now leads to a PI concurrent game $\G_W$ (with the same objective
	$W$) where all states have value~1. %\footnote{Note that, since $W \neq
	%\emptyset$, such a game can be built, it amounts to a deterministic game where
	%only Player $\A$ plays, with $|\colSet|$ states that are all reachable from one
	%another. That is, Player $\A$ can decide exactly the infinite sequence of
	%colors seen in the game.}. 
	The game $\G_W$ is (for instance) a clique with all colors in $\colSet$ where
	Player $\A$ plays alone. The formal definitions of the game $\G_W =
	\Games{\Aconc_W}{W}$ and of the game $\G_u$ can be found in
	Appendix~\ref{subsec:game_value_one_everywhere}. %We can now formally define the
	%game :
	
	An illustration of this construction can be found in
	Figures~\ref{fig:decompositionGame1} and~\ref{fig:decompositionGame2}. The blue
	dotted arrows are the ones that need to be redirected when the game is changed.
	\begin{figure}
		\begin{minipage}[b]{0.48\linewidth}
			\hspace*{-1cm}
			\centering
			\includegraphics[scale=0.45]{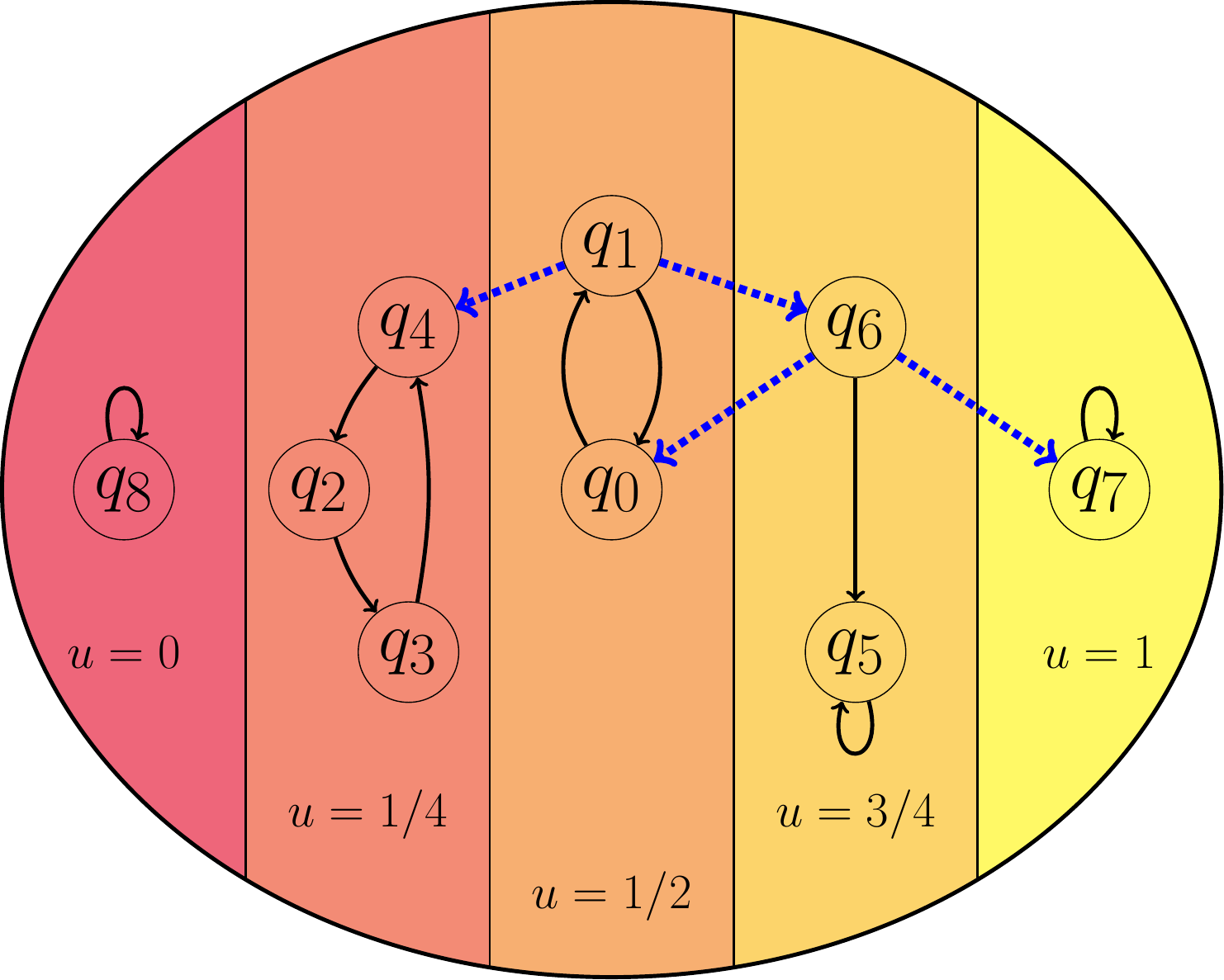}
			\caption{The depiction of a PI concurrent game with its value areas.}
			\label{fig:decompositionGame1}
		\end{minipage}
		\begin{minipage}[b]{0.48\linewidth}
			%\hspace*{-1cm}
			\centering
			\includegraphics[scale=0.45]{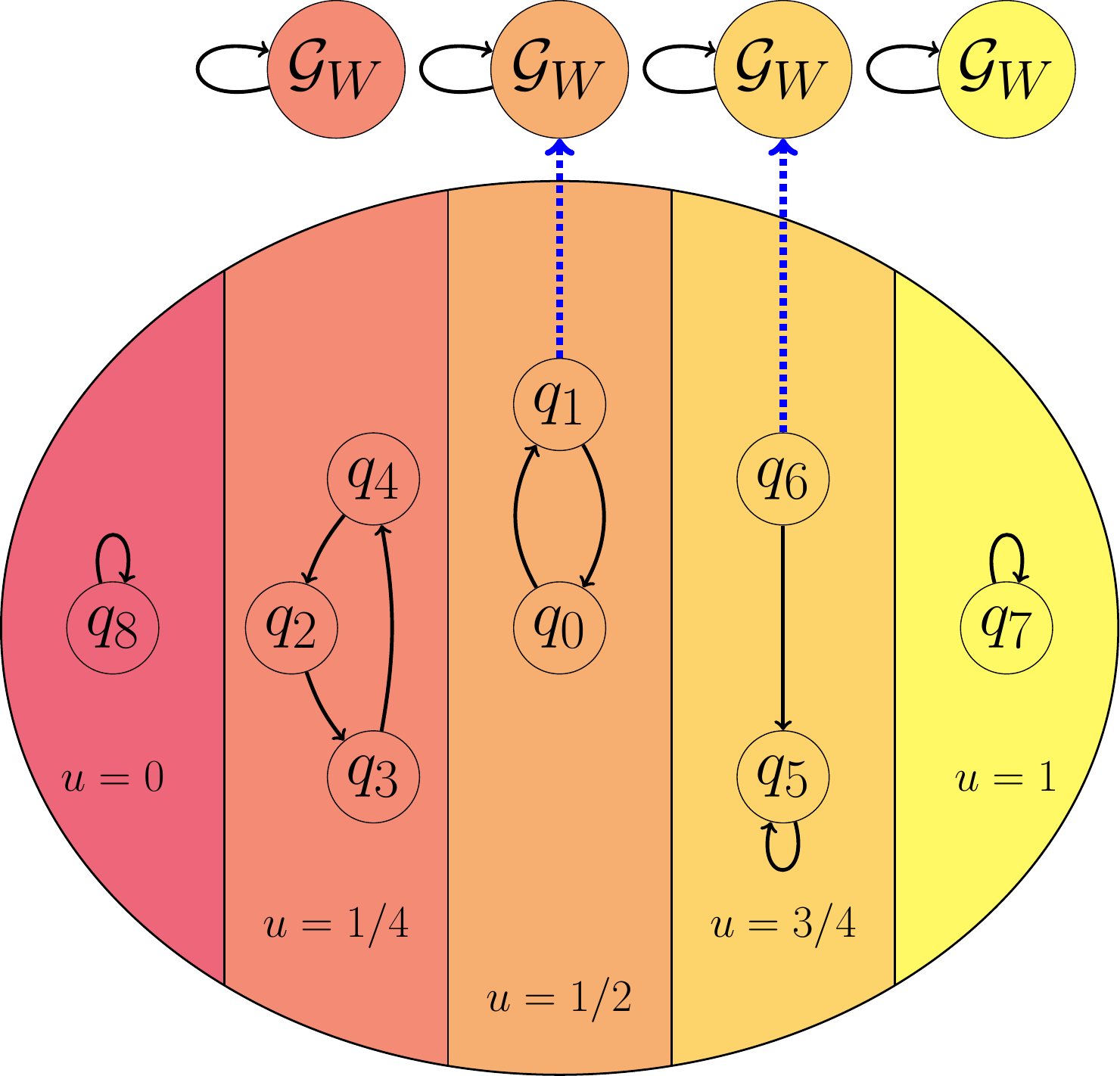}
			\caption{The PI concurrent game after the modifications described above.}
			\label{fig:decompositionGame2}
		\end{minipage}
	\end{figure}
	%The game $\G_u$ is formally defined in... %(note that it is always to exhibit
	%such a game $\G_W$ as long as $W \neq \emptyset$). 
	%We can then consider the finite concurrent game $\Aconc_u$ whose set of states
	%is $Q_u \cup \{ q_1 \}$ where $q_1$ is a fresh state of value 1 and such that,
	%for all $q \in Q_u$, the local interaction $\formNF'_q$ in $\Aconc_u$
	%corresponds to %the local interaction 
	%$(\formNF_u)_{\mathsf{opt}}$ where $\formNF_u$ is the local interaction in the
	%initial game $\Aconc$\footnote{Note that this is with abuse of notations since
	%$\formNF_u$ is a game form, and not a game in normal form. However, it can be
	%reduced to one by considering the valuation of the outcomes given by the
	%valuation $v$ of the states of the game.}. 
	%$= \langle A',B,Q_u \cup \{ q_1 \},\distribSet,\delta',\distribSet'$ such that
	%$A'$ is large enough (model with different $A$ in each state?), $q_1$ is a
	%fresh state (that will yield value 1), $\delta'$ is defined so that, for all $q
	%\in Q_u$, the new local interaction at state $q$ is 
	%We then consider the finite concurrent game $\Aconc_u = \langle A',B,Q_u \cup
	%\{ q_1 \},\distribSet,\delta',\distribSet'$ such that $A'$ is large enough
	%(model with different $A$ in each state?), $q_1$ is a fresh state (that will
	%yield value 1), $\delta'$ is defined so that, for all $q \in Q_u$, the new
	%local interaction at state $q$ is 
	With such a definition, we have made some progress w.r.t. the issue
	{\bfseries\sffamily
		\color{darkgray} 1} cited previously (regarding the values being equal to 1):
	%the first issue is dealt with: 
	the values of all states of the game $\G_u$ are positive (for positive $u$).
	\begin{lemma}[Proof in Appendix~\ref{proof:lem_value_one_Gu}]
		Consider the game $\G_u$ for some positive $u \in V_{\G} \setminus \{ 0 \}$
		and assume that, in $\G$, there exists a %uniformly 
		positively-optimal strategy that is locally optimal. %, from all states $q \in
		%Q_u$, there exists a locally optimal strategy in the game $\G$ whose value is
		%positive from $q$, i.e. $\MarVal{\G}(\s_\A)(q) > 0$%, there is a locally
		%optimal strategy that is locally optimal from all states in $Q_u$
		Then, for all states $q$ in $\G_u$, the value of the state $q$ in $\G_u$ is
		positive: $\MarVal{\G_u}(q) > 0$.
		\label{lem:value_one_Gu}
	\end{lemma}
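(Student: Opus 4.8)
The plan is to produce, for the fixed $u \in V_\G \setminus \{ 0 \}$, a single Player $\A$ strategy in $\G_u$ whose value is positive from every state, built from the hypothesized positively-optimal locally optimal strategy of $\G$. Since every gadget state (i.e.\ every state of the winning clique $\G_W$) has value $1$ in $\G_u$ by construction, it suffices to establish $\MarVal{\G_u}(q_0) > 0$ for the states $q_0 \in Q_u$.

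Let $\s_\A^\star$ be a positively-optimal locally optimal strategy in $\G$, and fix $q_0 \in Q_u$. Because $u > 0$ we have $\MarVal{\G}(q_0) = u > 0$, so positive optimality gives $p := \MarVal{\G}[\s_\A^\star](q_0) > 0$. I would transfer $\s_\A^\star$ to a strategy $\s_\A'$ in $\G_u$ as follows: as long as the play stays inside $Q_u$, a $\G_u$-history canonically identifies with a $\G$-history ending in the same state, and $\s_\A'$ plays there the $\GF$-strategy that $\s_\A^\star$ plays. This is legal in $\G_u$ \emph{precisely because} $\s_\A^\star$ is locally optimal: for every history $\rho$ ending in $q \in Q_u$ we have $\s_\A^\star(\rho) \in \Opt_\A(\formN_q) = \Dist(A_{\formN_q})$, so it is an admissible (mixed) action of $\G_u$. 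Once the play has been redirected into $\G_W$, $\s_\A'$ follows an almost-surely winning strategy there, which exists because all states of $\G_W$ have value $1$ and Player $\A$ plays alone on a clique carrying all colors.

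The heart of the argument is a coupling showing $\MarVal{\G_u}[\s_\A'](q_0) \geq p$. Fix any Player $\B$ strategy $\s_\B'$ in $\G_u$ and let $\s_\B$ be any extension of it to $\G$; this makes sense because the action sets $B_q$ are unchanged at states $q \in Q_u$, so $\s_\B'$ prescribes legal Player $\B$ moves in $\G$ as long as the play stays in $Q_u$. By the definition of $\G_u$, at each $q \in Q_u$ the probability of moving to a given state of $Q_u$ is the same as in $\G$, while the mass that in $\G$ would leave $Q_u$ is exactly the mass that in $\G_u$ enters $\G_W$. Running both processes with shared randomness, the two plays coincide up to the first exit from $Q_u$; hence the coupled events ``the play leaves $Q_u$ in $\G$'' and ``the play reaches $\G_W$ in $\G_u$'' coincide and have equal probability, and on the complementary event ``the play never leaves $Q_u$'' the two infinite plays are identical, so ``staying in $Q_u$ and satisfying $W$'' has equal probability too. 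Since reaching $\G_W$ yields $W$ with probability $1$ under $\s_\A'$, and since in $\G$ we have $\prob{\Aconc,q_0}{\s_\A^\star,\s_\B}[W] = \mathbb{P}[\,\text{leave } Q_u \text{ and } W\,] + \mathbb{P}[\,\text{stay in } Q_u \text{ and } W\,]$ with $\mathbb{P}[\,\text{leave } Q_u \text{ and } W\,] \leq \mathbb{P}[\,\text{leave } Q_u\,] = \mathbb{P}[\,\text{reach } \G_W\,]$, I obtain $\MarVal{\G_u}[\s_\A'](q_0) \geq \mathbb{P}[\,\text{reach } \G_W\,] + \mathbb{P}[\,\text{stay in } Q_u \text{ and } W\,] \geq \prob{\Aconc,q_0}{\s_\A^\star,\s_\B}[W] \geq \MarVal{\G}[\s_\A^\star](q_0) = p > 0$, where the last inequality uses that $\MarVal{\G}[\s_\A^\star](q_0)$ is an infimum over all Player $\B$ strategies. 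As $\s_\B'$ was arbitrary, $\MarVal{\G_u}[\s_\A'](q_0) \geq p > 0$, and the same $\s_\A'$ witnesses positivity from every $q_0 \in Q_u$.

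The main obstacle I anticipate is the bookkeeping of the coupling: making rigorous the identification of $\G$-histories with $\G_u$-histories that remain in $Q_u$, and checking that the redirection in the definition of $\G_u$ preserves the within-$Q_u$ transition probabilities while routing exactly the ``exit'' mass to $\G_W$. Once this correspondence is pinned down, the domination of the winning probabilities is immediate, since leaving $Q_u$ can only help Player $\A$ in $\G_u$ (a sure win) relative to $\G$ (a win of probability at most $1$).
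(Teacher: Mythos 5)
Your proposal is correct and follows essentially the same route as the paper: transfer the locally optimal, positively-optimal strategy into $\G_u$ (legal precisely because local optimality puts its $\GF$-strategies in $\Dist(A_{\formN_q})$), couple the two processes so that within-$Q_u$ dynamics coincide and the exit mass is routed to the value-$1$ gadget $\G_W$, and conclude $\MarVal{\G_u}[\s_\A'](q) \geq \MarVal{\G}[\s_\A^\star](q) > 0$. The paper's appendix phrases this as a proof by contradiction against a near-optimal Player $\B$ strategy and works with the auxiliary objective $W_u = W \cup Q^*\cdot(Q\setminus Q_u)\cdot Q^\omega$, but the underlying coupling and inequality are the same as yours.
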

	\begin{proof}[Sketch]
		Consider a state $q \in Q_u$ and a Player $\A$ locally optimal strategy
		$\s_\A$ in $\G$ that is positively-optimal from $q$%: its value is positive $x
		%:= \MarVal{\G}[\s_\A](q) > 0$
		. Then, the strategy $\s_\A$ (restricted to $Q_u^+$) can be seen as a strategy
		in $\G_u$ (it has to be defined in $\G_W$, but this can done straightforwardly).
		Note that this is only possible because the strategy $\s_\A$ is locally optimal
		(due to the definition of $\G_u$). %It can be translated into a Player $\A$
		%strategy $\s_\A'$ in the game $\G_u$ that coincides with $\s_\A$ whenever it is
		%locally optimal. 
		For a Player $\B$ strategy $\s_\B$ in $\G_u$, consider what happens with
		strategies $\s_\A$ and $\s_\B$ in both games $\G_u$ and $\G$. Either the game
		stays indefinitely in $Q_u$, and what happens in $\G_u$ and $\G$ is identical.
		Or it eventually leaves $Q_u$, leading to states of value 1 in $\G_u$. Hence,
		the value of the game $\G_u$ from $q$ with strategies $\s_\A$ and $\s_\B$ is at
		least the value of the game $\G$ from $q$ with the same strategies. Thus, the
		value of the state $q$ is positive in $\G_u$.
		%Assume towards a contradiction that it is not the case, i.e. that there is a
		%state $q \in Q_u$ whose value, in the game $\G_u$ is 0% by
		%Theorem~\ref{lem:chaterjee_value_0_1}
		%. Consider now a Player $\A$ locally optimal strategy $\s_\A$ in $\G$ that is
		%positively-optimal from $q$: its value is positive $x := \MarVal{\G}[\s_\A](q)
		%> 0$. Then, the strategy $\s_\A$ (restricted to $Q_u^+$) can be seen as a
		%strategy in $\G_u$ (it has to be defined in $\G_W$, but this can done
		%straightforwardly). %It can be translated into a Player $\A$ strategy $\s_\A'$
		%in the game $\G_u$ that coincides with $\s_\A$ whenever it is locally optimal. 
		%Then, there is a Player $\B$ strategy $\s_\B$ that can ensure that the value
		%of the game $\G_u$ from $q$ is at most $0 < \varepsilon < x$. In particular,
		%this holds with the strategy $\s_\A$. Then, it follows that the value of the
		%game $\G$ from $q$ with strategies $\s_\A,\s_\B$ is also at most $\varepsilon$%
		%(by considering that the strategy $\s_\B$ plays arbitrarily on sequences
		%outside of $Q_u$)
		%. This is a contradiction with the definition of $\s_\A$.
	\end{proof}
	%Recall that in $\G_u$, the only possible strategies are the ones that are
	%locally optimal in $\G$. This is why 
	
	As it turns out, Lemma~\ref{lem:value_one_Gu} suffices % is enough
	to deal with both %both the issue 1. and
	issues {\bfseries\sffamily \color{darkgray} 1} and
	{\bfseries\sffamily \color{darkgray} 2} at the same
	time. Indeed, as stated in
	Theorem~\ref{thm:all_not_zero_implies_almost_sure} below, it
	is a general result that in a PI concurrent game, if all
	states have positive values, then all states have value 1 and
	there is a subgame almost-surely winning strategy.
	%Finally, we have to deal with the second issue: why would there be an subgame
	%almost-sure strategy from all states in $\G_u$, whereas we only know that all
	%states have positive value? In fact, this is a general result on concurrent
	%games
	\begin{theorem}[Proof in
		Appendix~\ref{proof:thm_all_not_zero_implies_almost_sure}]
		\label{thm:all_not_zero_implies_almost_sure}
		Consider a PI concurrent game $\G$ and assume that all state values are
		greater than or equal to %the value of the game from all states $q \in Q$ is 
		$c > 0$, i.e. for all $q \in Q$, $\MarVal{\G}(q) \geq c$. Then, there is a
		subgame almost-surely winning strategy in $\G$.
	\end{theorem}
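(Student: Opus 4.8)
The plan is to isolate, as the engine of the whole argument, a bridging principle based on Lemma~\ref{lem:chaterjee_value_0_1} (the consequence of Levy's 0--1 law): it suffices to produce a \emph{single} Player~$\A$ strategy $\s_\A$ whose residual values never drop below $c$, i.e. with $\MarVal{\G}[\s_\A^\rho](q) \ge c$ for every history $\rho = \rho' \cdot q \in Q^+$. I claim that such a strategy is automatically subgame almost-surely winning, which in particular forces every state value to equal $1$. Thus the theorem splits into a clean reduction and a harder construction, and I expect the construction to be the real obstacle.

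For the reduction, fix a history $\rho_0$ ending in a state $q_0$ and an arbitrary Player~$\B$ \emph{deterministic} strategy $\s_\B$. The pair $(\s_\A^{\rho_0},\s_\B)$ turns the game into a countable Markov chain $\mathcal{M}$ whose states are the finite continuations $\pi$ of $q_0$ and whose objective is $W$, still prefix-independent. For a state $\pi$ of $\mathcal{M}$ ending in $q$, its value is $\MarVal{\mathcal{M}}(\pi) = \mathbb{P}^{(\s_\A^{\rho_0})^{\pi},\s_\B^{\pi}}_{q}[W]$, which is at least $\inf_{\s_\B'} \mathbb{P}^{\s_\A^{\rho_0 \cdot \pi},\s_\B'}_{q}[W] = \MarVal{\G}[\s_\A^{\rho_0 \cdot \pi}](q) \ge c$, using $(\s_\A^{\rho_0})^{\pi} = \s_\A^{\rho_0 \cdot \pi}$ and the hypothesis on $\s_\A$. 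Hence every state of $\mathcal{M}$ has value $\ge c > 0$, so none has value $<1$; by the contrapositive of Lemma~\ref{lem:chaterjee_value_0_1}, every state of $\mathcal{M}$ has value $1$, in particular $\mathbb{P}^{\s_\A^{\rho_0},\s_\B}_{q_0}[W] = 1$. Since fixing $\s_\A^{\rho_0}$ yields an MDP in which deterministic strategies suffice to approach the infimum value, this gives $\MarVal{\G}[\s_\A^{\rho_0}](q_0) = 1$. As $\rho_0$ was arbitrary, $\s_\A$ is subgame almost-surely winning, and $\MarVal{\G}(q) \ge \MarVal{\G}[\s_\A](q) = 1$ for every $q$.

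It remains to construct a strategy $\s_\A$ with residual values uniformly bounded below by $c$; this is where the adaptation of \cite{GIH10} and prefix-independence enter, and it is the main difficulty. The starting material is that, since $\MarVal{\G}(q) \ge c$ everywhere, from each state there is a strategy securing probability at least $c/2$ of $W$, and more generally near-optimal strategies. The obstacle is that a \emph{fixed} strategy degrades along histories, so one cannot commit to it forever; yet restarting a fresh high-value strategy at every step collapses the behaviour into a positional strategy whose value may fall far below $c$. The resolution I would pursue is a patient concatenation: play near-optimal strategies in successive blocks, restarting a fresh value-$\ge c/2$ strategy from the current state at a slowly growing sequence of stopping times, chosen so that prefix-independence of $W$ makes each restart harmless for the asymptotic winning probability. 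Local optimality, through Lemmas~\ref{lem:locally_opt_implies_same_value} and~\ref{lem:locally_opt_implies_convex_comb}, is used to certify that the expected current value does not drift downwards between restarts, which is what keeps every residual value above the threshold $c$; finiteness of the relevant action sets, guaranteed by Proposition~\ref{prop:finite_set_optimal_start_gnf}, lets one carry out the selection uniformly.

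I would finally double-check the two places where prefix-independence is essential: in the reduction it guarantees that $\mathcal{M}$ carries a PI objective, so that Lemma~\ref{lem:chaterjee_value_0_1} applies, and in the construction it guarantees that discarding finite prefixes at each restart cannot decrease the winning probability. I expect the concatenation/restart construction, which must control residual values simultaneously against all deterministic opponents, to be the delicate part; the Levy-based reduction, by contrast, is short and robust.
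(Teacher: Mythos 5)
Your reduction is correct and coincides with how the paper concludes: once one has a single strategy $\s_\A$ with $\MarVal{\G}[\s_\A^{\rho}](q)\geq c'>0$ for every history $\rho=\rho'\cdot q$, fixing any deterministic opponent yields a countable Markov chain all of whose states have value $\geq c'$, and Lemma~\ref{lem:chaterjee_value_0_1} forces all these values to be $1$; since deterministic strategies suffice in the induced MDP, $\s_\A$ is subgame almost-surely winning. The problem is that you have only gestured at the construction of such a strategy, and you yourself identify it as the entire difficulty. The paper's construction is a \emph{value-triggered} reset strategy: start from each state with a strategy of value at least $c(1-c/4)$, and reset to a fresh such strategy exactly at those histories where the residual value of the current strategy drops below $c/2$. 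The quantitative heart of the proof is then the inequality $c(1-c/4)\leq \tfrac{c}{2}\,p_1+p_2(1-p_1)$, where $p_1$ is the probability of ever reaching the next reset set, which forces $p_2(1-p_1)\geq c/2$: the trajectories that \emph{never} reset again already contribute winning probability $c/2$, uniformly over all histories and opponents, and this is what keeps every residual value above $c/2$.

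Your sketched construction --- restarting fresh value-$\geq c/2$ strategies ``at a slowly growing sequence of stopping times'' so that prefix-independence makes each restart ``harmless'' --- is a genuinely different mechanism, and it fails. In the parity game of Figure~\ref{fig:parity_unfair}, every state has value $1$, but every near-value strategy from $q_0$ must play the risky row with probabilities $\varepsilon_k$ decaying along the play; restarting at predetermined (even sparsely spaced) times resets this schedule to $\varepsilon_0>0$, and Player $\B$ can punish immediately after each restart, so that $q_2$ is visited infinitely often almost surely and the concatenation has value $0$. What saves the paper's version is precisely that the reset trigger is the collapse of the residual value, which lets one bound the probability of resetting at all, rather than a clock. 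Two further points are off: local optimality (Lemmas~\ref{lem:locally_opt_implies_same_value} and~\ref{lem:locally_opt_implies_convex_comb}) controls the drift of the \emph{game's} value function $\MarVal{\G}$ along the play, which under the hypothesis is trivially $\geq c$ everywhere; it says nothing about the value of the \emph{residual strategy} $\MarVal{\G}[\s_\A^\rho]$, which is the quantity that must be kept from collapsing. And Proposition~\ref{prop:finite_set_optimal_start_gnf} plays no role here; it belongs to the $\G_u$ construction of Section~\ref{sec:transfer}. So the proposal establishes the easy half of the argument and leaves the essential half unproved, with a plan that, as stated, has a counterexample.
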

	\begin{remark}
		This theorem can be seen as a strengthening of Theorem
		1 from \cite{CH07}. Indeed, this Theorem 1 states that
		if all states have positive values, then they all have
		value 1 (this is then generalized to games with
		countably-many states). Theorem~\ref{thm:all_not_zero_implies_almost_sure}
		is stronger since it ensures the existence of
		(subgame) almost-surely winning strategies. Although a
		detailed proof is provided in
		Appendix~\ref{proof:thm_all_not_zero_implies_almost_sure},
		note that this theorem was already stated and proven
		in \cite{GIH10} in the context of PI turn-based
		games. Nevertheless their arguments could have been
		used \textit{verbatim} for concurrent games as well.
		% In that paper\footnote{We already mentioned this paper at the beginning of
		%this section: it is the paper in which is stated the theorem that is
		%generalized to concurrent games in this section.}, the authors prove that
		%Theorem~\ref{thm:all_not_zero_implies_almost_sure} holds in PI turn-based
		%games. However, note that the arguments used in \cite{GIH10} could be used
		%(almost) verbatim to prove Theorem~\ref{thm:all_not_zero_implies_almost_sure}
		%in the context of PI concurrent games. 
		In the Appendix, we give a proof using the same
		construction (namely, reset strategies) but we argue
		differently % (and in a more tedious fashion) 
		why the construction proves the theorem.
	\end{remark}
	
	%We believe that this theorem is itself interesting and we give some
	%explanations on how to prove in the next section.
	%We give additional details regarding this theorem in the next subsection.
	%The proof of this lemma is quite technical and is detailed in appendix,
	%however we give a short proof in the special (and easy) case of the Büchi %and
	%co-Büchi
	%objective.
	%\begin{proof}
	%Consider a Büchi objective $\Bu(T)$ played on a state space $Q$. From all
	%states $q \in Q$ that are not in the target $T$, the value of the reachability
	%game with the same target $T$ is 1. Hence, there is a positional Player $\A$
	%strategy $\s_\A$ that is $1/2$-optimal from all states. Then, from all states
	%of the game, there is a probability $1/2$ to reach the target regardless of
	%Player $\B$'s strategy. Hence, the probability to eventually never visit the
	%target $T$ again is 0. %in all end components $H$ in the MDP induced by that
	%strategy $\s_\A$, there is the target $T$ (otherwise the probability the reach
	%it would be 0).
	%Hence, with that strategy, the probability to win the Büchi objective is $1$
	%from all states. That is, the strategy $\s_\A$ is uniformly almost-sure. 
	%\end{proof}
	%
	%\begin{figure}
	%	\hspace*{-1cm}
	%	\centering
	%	\includegraphics[scale=0.5]{Figure/DrawingZones.pdf}
	%	\caption{The decomposition of the game $\G$ into smaller games $\G_u$.}
	%	\label{fig:decompositionGame}
	%\end{figure}
	%
	We can now glue together pieces of strategies $\s_\A^u$ defined in all games
	$\G_u$ into a single strategy $\s_\A[(\s_\A^u)_{u \in V_{\G} \setminus \{ 0
		\}}]$. Informally, the glued strategy mimics the strategy on $Q_u^+$ and
	switches strategy when a value area is left and another one is reached.
	\begin{definition}[Gluing strategies]
		Consider a PI concurrent game $\G$ % with a prefix-independent objective $W$
		and for all values $u \in V_{\G} \setminus \{ 0 \}$, a strategy $\s_\A^u$ in
		the game $\G_u$. Then, we glue these strategies into the strategy
		$\s_\A[(\s_\A^u)_{u \in V_{\G} \setminus \{ 0 \}}]: Q^+ \rightarrow \Dist(A)$
		simply written $\s_\A$ such that, for all $\rho$ ending at state $q \in Q$:
		\begin{displaymath}
		\s_\A(\rho) := 
		\begin{cases}
		\s_\A^u(\pi) & \text{ if } u = \MarVal{\G}(q) > 0 \text{ for } \pi \text{ the
			longest suffix of } \rho \text{ in } Q_u^+ \\ %\rho = \pi' \cdot q' \cdot \pi
		%\text{ if } \pi \in Q_u^+ \text{, } q' \notin Q_u \text{ and }u > 0 \\
		\text{is arbitrary } & \text{ if } \MarVal{\G}(q) = 0\\
		\end{cases}
		\end{displaymath}
		\label{def:glued_strat}
	\end{definition}
	As stated in Lemma~\ref{prop:glued_strategy} below, the construction described
	in Definition~\ref{def:glued_strat} transfers almost-surely winning strategies
	in $\G_u$ into a subgame optimal strategy in $\G$.
	%In fact, if all strategies $\s_\A^u$ are subgame almost-surely winning in
	%$\G_u$, then the glued strategy $s_\A%[(\s_\A^u)_{u \in V_{\G} \setminus \{ 0
	%\}}]
	%$ is subgame optimal in $\G$. %An subgame optimal strategy in $\G$ can be
	%obtained by gluing together subgame almost-sure strategies in each $\G_u$.
	\begin{lemma}[Proof in Appendix~\ref{proof:prop_glued_strategy}]
		\label{prop:glued_strategy}
		%Consider a concurrent game $\Aconc$ with a prefix-independent objective $W$. 
		For all $u \in V_{\G} \setminus \{ 0 \}$, let $\s_\A^u$ be a subgame
		almost-surely winning strategy in $\G_u$. %Then, %if, from all states in $Q_u$
		%there is an optimal strategy that is locally optimal then, all states in
		%$\Aconc_u$ have a positive value.
		The glued strategy $\s_\A[(\s_\A^u)_{u \in V_\G \setminus \{ 0 \}}]$, denoted
		$\s_\A$, is subgame optimal in $\G$.
	\end{lemma}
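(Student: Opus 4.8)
The plan is to prove that the glued strategy $\s_\A$ is subgame optimal by verifying the two conditions of Theorem~\ref{thm:subgame_optimal_arbitrary_strategy}: local optimality, and the property that for every history, every Player $\B$ deterministic strategy, and every positive value $u$, the probability of winning while settling in $Q_u$ equals the probability of settling in $Q_u$. The whole construction of the games $\G_u$ was engineered precisely so that these two conditions follow, so the proof is a matter of carefully translating between plays in $\G$ and plays in the derived games $\G_u$.

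First I would establish \emph{local optimality} of $\s_\A$. Fix a history $\rho \in Q^+$ ending at a state $q$ with $u := \MarVal{\G}(q)$. If $u = 0$ the condition is vacuous (any $\GF$-strategy is optimal in a value-$0$ normal-form game). If $u > 0$, then by Definition~\ref{def:glued_strat} we have $\s_\A(\rho) = \s_\A^u(\pi)$ for the longest suffix $\pi$ of $\rho$ lying in $Q_u^+$. Here is where the design of $\G_u$ pays off: by construction the action set $A_{\formN_q}$ available to Player $\A$ at $q$ in $\G_u$ is such that $\Dist(A_{\formN_q})$ is exactly $\Opt_\A(\formN_q)$ (this is the content of Proposition~\ref{prop:finite_set_optimal_start_gnf} together with the definition of $\G_u$). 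Hence \emph{any} $\GF$-strategy played by Player $\A$ at $q$ in $\G_u$ -- in particular $\s_\A^u(\pi)$ -- is, when read back in $\G$, an optimal $\GF$-strategy in $\formN_q$. This is exactly local optimality of $\s_\A$ at $\rho$.

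Next I would verify the \emph{second condition}. Fix $\rho \in Q^+$, a Player $\B$ deterministic strategy $\s_\B$, and a positive value $u \in V_\G \setminus \{0\}$. The key observation is that conditioned on the event that the play eventually settles in $Q_u$, the residual behaviour of $\s_\A$ from the moment the play enters (and stays in) $Q_u$ coincides with the strategy $\s_\A^u$ in $\G_u$; moreover, as long as the play remains in $Q_u$, the dynamics of $\G$ and $\G_u$ are identical (the two games differ only on edges leaving $Q_u$, which are redirected into the all-value-$1$ game $\G_W$). Since $\s_\A^u$ is subgame almost-surely winning in $\G_u$ and $W$ is prefix-independent, along any play that settles in $Q_u$ the objective $W$ is satisfied with conditional probability $1$. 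Translating this back, the probability of $W \cap Q^* \cdot (Q_u)^\omega$ equals the probability of $Q^* \cdot (Q_u)^\omega$, as required. The prefix-independence of $W$ is what lets us ignore the finite prefix spent outside $Q_u$ and invoke the almost-sure winning guarantee of $\s_\A^u$ on the tail.

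The main obstacle I anticipate is the careful bookkeeping in this last step: one must make precise the coupling between plays of $\G$ under $(\s_\A,\s_\B)$ and plays of $\G_u$ under $(\s_\A^u, \s_\B^u)$ on the sub-event where the play settles in $Q_u$, and check that switching strategies upon changing value areas does not interfere with the residual analysis. In particular, because $\s_\A$ resets to $\s_\A^u(\pi)$ using only the longest $Q_u$-suffix $\pi$ of the history, one must confirm that every finite continuation of $\rho$ that re-enters $Q_u$ is handled consistently, and that the almost-sure winning property of $\s_\A^u$ -- which is a \emph{subgame} property, holding after every history in $\G_u$ -- transfers to every such re-entry point. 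This is precisely why subgame almost-sure winning (rather than merely almost-sure winning) is required of the $\s_\A^u$: it guarantees the winning condition no matter how many times or how late the play enters its final value area.
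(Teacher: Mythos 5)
Your proposal is correct and follows essentially the same route as the paper's proof: both verify the two conditions of Theorem~\ref{thm:subgame_optimal_arbitrary_strategy}, obtaining local optimality from the restriction of Player $\A$'s actions in $\G_u$ to $\Dist(A_{\formN_q}) = \Opt_\A(\formN_q)$ (with the value-$0$ states handled trivially), and obtaining the second condition by conditioning on the event of settling in $Q_u$, noting that from each entry point the glued strategy mimics $\s_\A^u$ while the dynamics of $\G$ and $\G_u$ agree on $Q_u$, and invoking the subgame (not merely initial-state) almost-sure winning property of $\s_\A^u$ together with prefix-independence of $W$. Your closing remark about why the subgame qualifier on $\s_\A^u$ is essential is exactly the point the paper relies on when it quantifies over all finite paths $\pi \in Q^* \cdot Q_u$.
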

	\begin{proof}[Sketch]
		We apply Theorem~\ref{thm:subgame_optimal_arbitrary_strategy}. First, the
		strategy $\s_\A$ is locally optimal in all $Q_u$ for $u > 0$ by the strategy
		restriction done to define the game $\G_u$ (only optimal strategies are
		considered at each game in normal form $\formN_q$ at states $q \in Q_u$% for $u
		%> 0$
		). Furthermore, any strategy is optimal in a game in normal form of value 0
		(which is the case of the game in normal forms of states in $Q_0$). %Let us now
		%show that the second condition of
		%Theorem~\ref{thm:subgame_optimal_arbitrary_strategy} holds. %This is due to the
		%fact that
		%Indeed, 
		Second, if the game eventually settles in a value area $Q_u$ for some $u > 0$,
		from then on the strategy $\s_\A$ mimics the strategy $\s_\A^u$, which is
		subgame almost-surely winning in $\G_u$. Hence, the probability of $W$ given
		that the game eventually settles in $Q_u$ is 1. This holds for all $u \in V_{\G}
		\setminus \{ 0 \}$, so the second condition of
		Theorem~\ref{thm:subgame_optimal_arbitrary_strategy} holds.
	\end{proof}
	
	We now have all the ingredients to prove Theorem~\ref{thm:transfer_memory}. 
	\begin{proof}[Of Theorem~\ref{thm:transfer_memory}]
		We consider the PI concurrent game $\G$ and assume that there is a %uniformly
		positively-optimal strategy that is locally optimal. Then, by
		Lemma~\ref{lem:value_one_Gu}, for all positive values $u \in V_{\G} \setminus \{
		0 \}$, all states in $\G_u$ have positive values. It follows, by
		Theorem~\ref{thm:all_not_zero_implies_almost_sure}, that there exists a subgame
		almost-surely winning strategy in every game $\G_u$ for $u \in V_{\G} \setminus
		\{ 0 \}$. We then obtain a subgame optimal strategy by gluing these strategies
		together, %this is 
		given by Lemma~\ref{prop:glued_strategy}.
		
		The second part of the theorem, dealing with transfer of positionality from
		subgame almost-surely winning to subgame optimal follows from the fact that if
		all strategies $\s_\A^u$ are positional for all $u \in V_{\G} \setminus \{ 0
		\}$, then so is the glued strategy $\s_\A[(\s_\A^u)_{u \in V_{\G} \setminus \{ 0
			\}}]$. 		
	\end{proof}
	%\bbcomment{Parler de l'adaptation à mémoire finie?}
	
	We now apply the result of Theorem~\ref{thm:transfer_memory}
	to two specific classes of objectives: Büchi and co-Büchi
	objectives. Note that this result is already known for
	Büchi objectives, proven in \cite{BBSArXivICALP}.
	\begin{corollary}
		\label{coro:application_buchi_cobuchi}
		Consider a concurrent game with a Büchi (resp. co-Büchi)
		objective and assume that there is a %uniformly
		positively-optimal strategy that is locally optimal. Then there is a subgame
		optimal
		positional strategy.
	\end{corollary}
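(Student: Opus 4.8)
The plan is to derive the statement directly from Theorem~\ref{thm:transfer_memory}. Under the hypothesis of the corollary, Player~$\A$ has a positively-optimal strategy that is locally optimal, which is exactly assertion~\textit{c} of Theorem~\ref{thm:transfer_memory}. Hence that theorem already yields the existence of a subgame optimal strategy, and its ``Furthermore'' clause upgrades this to a subgame optimal \emph{positional} strategy as soon as the objective at hand is \textsf{PSAW}. Therefore the entire content of the corollary reduces to a single fact: \emph{the Büchi and the co-Büchi objectives are \textsf{PSAW}}.

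To establish \textsf{PSAW} for these two objectives, I would unfold its definition: in every concurrent game $\G$ with the given objective that admits a subgame almost-surely winning strategy, one must exhibit a positional one. First note that the existence of a subgame almost-surely winning strategy forces every state to have value $1$: taking the history to be the single state $q$, the corresponding residual strategy is almost-surely winning from $q$, so $\MarVal{\G}(q)=1$ for every $q\in Q$; thus the whole state space coincides with the almost-sure winning region. The key external input is then the classical fact that almost-sure winning in concurrent Büchi and co-Büchi games can be achieved by \emph{positional} strategies on the whole almost-sure winning region; for Büchi this is standard and the resulting corollary was already obtained in \cite{BBSArXivICALP}, while the co-Büchi case is the genuinely new one here. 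Contrast this with the quantitative situation, where optimal co-Büchi strategies may require infinite memory \cite{BBSArXivICALP} and only \emph{limit}-sure winning is known to be positional \cite{AH06}: what we use is positional \emph{almost}-sure (not limit-sure) winning, which is available precisely because all states have value~$1$.

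Finally, I would promote such a positional almost-sure winning strategy $\s_\A$ to a positional \emph{subgame} almost-surely winning one, using prefix-independence. Since $\s_\A$ is positional and almost-surely winning from every state $q$ (the whole state space being the almost-sure region), for any history $\rho\cdot q$ the residual strategy satisfies $\s_\A^\rho=\s_\A$, and, the objective being prefix-independent, $\s_\A$ is still almost-surely winning from $q$; hence $\s_\A$ is subgame almost-surely winning. This shows that Büchi and co-Büchi are \textsf{PSAW}, and Theorem~\ref{thm:transfer_memory} then delivers the desired subgame optimal positional strategy.

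The main obstacle is the co-Büchi half of the \textsf{PSAW} property, i.e. checking that almost-sure winning in concurrent co-Büchi games is positional. The subtlety is that one must not confuse this with limit-sure winning (for which positionality is known \cite{AH06}) nor with optimality (which may require infinite memory \cite{BBSArXivICALP}); the value-$1$-everywhere hypothesis coming from the reduction above is exactly what rules out these pathological behaviours and lets a single memoryless strategy win almost surely after every history.
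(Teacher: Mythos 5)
Your proposal is correct and follows exactly the route the paper intends: the corollary is stated as a direct application of Theorem~\ref{thm:transfer_memory} (hypothesis \textit{c} plus the \textsf{PSAW} clause), and the paper gives no further proof, implicitly relying — as you do — on the known positionality of almost-sure winning for concurrent Büchi and co-Büchi games together with the observation that a positional almost-surely winning strategy on a value-$1$-everywhere game is automatically subgame almost-surely winning. Your explicit reduction of \textsf{PSAW} to that classical fact (and your care in distinguishing almost-sure from limit-sure winning for co-Büchi) is a faithful filling-in of the argument the paper leaves to the reader.
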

	Note that it is also possible to prove a memory transfer from
	subgame almost-surely winning to subgame optimal for an
	arbitrary memory skeleton, instead of only positional
	strategies. This adds only a few minor difficulties. This is
	dealt with in Appendix~\ref{appen:extension_finite_memory}.
	
	%\subsection{More on Theorem~\ref{thm:all_not_zero_implies_almost_sure}}
	%\label{sec:explanations_theorem}

	%\subsection{How to retrieve Gimbert's result in turn-based games}
	
	%\section{Ensuring the existence of subgame optimal strategies}
	%\subsection{Objective for which there always exists subgame optimal
	%strategies}
	
	%\subsection{When the existence of optimal strategies implies the existence of
	%subgame optimal strategies}
	
	\textbf{Application to the turn-based setting.}
	The aim of Section~\ref{sec:transfer} was to extend an already existing result
	on turn-based games in the context of concurrent games. This required an
	adaptation of the assumptions. However, it is in fact possible to retrieve the
	original result on turn-based games from Theorem~\ref{thm:transfer_memory} in a
	fairly straightforward manner. It amounts to show that, in all finite turn-based
	games $\G$, for all values $u \in V_{\G} \setminus \{ 0 \}$, there is a locally
	optimal strategy that is positively-optimal from all states in $Q_u$. This is
	done in Appendix~\ref{appen:retrieve_result_turn_based}. %However, note that the
	%strategies considered, contrary to what has been presented so far, in the
	%original result from \cite{GIH10} %considers strategies that 
	%also depend on the Nature states visited throughout the game. Hence, we
	%retrieve the original result on turn-based games in
	%Appendix~\ref{appen:retrieve_result_turn_based} assuming
	%Theorem~\ref{thm:transfer_memory} still holds for such strategies.
	
	%\section{From optimal finite-choice strategies to subgame optimal strategies}
	\section{Finite-choice strategies}
	\label{sec:fair_strategies}

	In this section, we introduce a new kind of strategies, namely finite-choice
	strategies. Let us first motivate why we consider such strategies. Consider
	again the co-Büchi game of Figure~\ref{fig:co_buchi_unfair}. Recall that the
	optimal strategy we described first plays the top row with increasing
	probability and the middle row with decreasing probability and then, once Player
	$\B$ plays the second column, switches to a positional strategy playing the
	bottom row with positive, yet small enough probability. Note that switching
	strategy is essential. Indeed, if Player $\A$ does not switch%, while the
	%strategy would be locally optimal
	, Player $\B$ could at some point opt for the middle column and see
	indefinitely the state $q_1'$ with very high probability% (in fact, the value of
	%the game if the bottom row is not available for Player $\A$ is 0)
	. In fact, what happens in that case is rather counter-intuitive: once Player
	$\B$ switches, there is infinitely often a positive probability to reach the
	outcome of value 1. However, the probability to ever reaching this outcome can
	be arbitrarily small, if Player $\B$ waits long enough before playing the middle
	row. This happens because the probability $\varepsilon_k$ to visit that outcome
	goes (fast) to 0 when $k$ goes to $\infty$. In fact, such an optimal strategy
	has \textquotedblleft infinite choice\textquotedblright \ in the sense that it
	may prescribe %actions with 
	infinitely many different %probabilities
	probability distribution. 
	
	In this section, we consider \emph{finite-choice strategies}, i.e. strategies
	that can use only finitely many %among a finite number of 
	$\GF$-strategies at each state.
	
	%\bbcomment{Parler du terme 'fair' ? (pourqoi on l'utilise)}
	%The denomination \emph{fair} comes from other context, see for instance
	%\cite{baier2008principles}. In this book, the authors refer to fair strategies
	%(or scheduler) to talk about (deterministic) strategies ensuring that if an
	%action might be taken infinitely often, then it is taken infinitely often
	%(several flavor of fairness is defined). We use this term since there is the
	%idea when considering the probabilistic event to reach a specific
	%state/outcome.
	\begin{definition}[Finite-choice strategy]
		\label{def:finite_choice_strategies}
		Let $\G$ be a concurrent game. A Player $\A$ strategy $\s_\A$ in $\G$ has
		\emph{finite choice} if, for all $q \in Q$, the set $S^{\s_\A}_q := \{ 
		\s_\A(\rho \cdot q) \mid \rho \in Q^+ \} \subseteq \Dist(A_q)$ % the set $\{ 
		%\s_\A(\rho) \mid \rho \in Q^+ \} \subseteq \bigcup_{q \in Q} \Dist(A_q)$ 
		is finite.
		%for all $q \in Q$ there is a finite set $S_q \subseteq \Dist(A_q)$ of
		%$\GF$-strategies such that, % there is an integer $k \in \N$ such that, 
		%for all $\rho \cdot q \in Q^+$, we have $\s_\A(\rho \cdot q) \in S_q$.
	\end{definition}
	Note that positional (even finite-memory) and deterministic strategies are
	examples of finite-choice strategies.
	
	Interestingly, we can link finite-choice strategies with the existence of
	subgame optimal strategies. In general it does not hold that if there are
	%uniformly 
	optimal strategies, then there exists subgame optimal strategies (as
	exemplified in the game of Figure~\ref{fig:co_buchi_unfair}). However, in
	Theorem~\ref{thm:fair_strategy_uniformly_optimal} below, we state that if we
	additionally assume that the %uniformly
	optimal strategy considered has finite choice, then there is a
	subgame optimal strategy (that has also finite choice).
	%if there is a uniformly optimal strategy that has finite choice, then there is
	%a subgame optimal strategy (that also has finite choice).
	\begin{theorem}[Proof in
		Appendix~\ref{proof:thm_fair_strategy_uniformly_optimal}]
		\label{thm:fair_strategy_uniformly_optimal}
		Consider a PI concurrent game $\G$. If there is a finite-choice %uniformly 
		optimal strategy, % that has finite choice
		then there is a finite-choice subgame optimal
		strategy. % that has finite choice
		%, for all states $q \in Q$, there is a Player $\A$ strongly fair strategy
		%that is uniformly optimal. Then, there exists a subgame optimal strategy that
		%is strongly fair.
	\end{theorem}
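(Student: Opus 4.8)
The plan is to transform the given finite-choice optimal strategy $\s_\A$ into a subgame optimal one, exploiting one elementary but decisive observation: any Player $\A$ strategy that, at every history ending in a state $q$, plays a $\GF$-strategy drawn from the finite reservoir $S^{\s_\A}_q$ used by $\s_\A$ (possibly augmented by finitely many fixed optimal $\GF$-strategies of $\formN_q$, of which there are finitely many by Proposition~\ref{prop:finite_set_optimal_start_gnf}) is automatically finite-choice. Thus the whole difficulty is to re-assemble $\s_\A$'s own local moves into a strategy satisfying the two conditions of Theorem~\ref{thm:subgame_optimal_arbitrary_strategy}, namely local optimality together with an almost-sure win of $W$ conditioned on settling in each positive value area; finite choice will then come for free.

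First I would establish that finite choice forces the value process to be well behaved. Fixing Player $\B$ to best-respond locally to $\s_\A$'s current $\GF$-strategy, the process $X_n := \MarVal{\G}(q_n)$ becomes a bounded supermartingale whose one-step decrease at a history $\rho$ ending in $q$ equals the local suboptimality $\MarVal{\G}(q)-\va_{\formN_q}(\s_\A(\rho))$ of $\s_\A$ at $\rho$. Because $\s_\A$ has finite choice, this suboptimality ranges over a finite set and hence, when nonzero, is bounded below by some $\eta_0>0$. Since the total decrease of a bounded supermartingale is summable almost surely, $\s_\A$ can play locally suboptimally only finitely often along almost every such play; equivalently, the histories witnessing local suboptimality are transient. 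This is precisely the phenomenon that fails for the infinite-choice optimal strategy of Figure~\ref{fig:co_buchi_unfair}, where the decreasing $\varepsilon_k$ make the losses telescope to a finite total while occurring infinitely often. Using transience I would replace $\s_\A$ by a locally optimal finite-choice optimal strategy, and then, by optimality, by a positively-optimal and locally optimal one; at this stage Theorem~\ref{thm:transfer_memory} already guarantees that a subgame optimal strategy exists.

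It remains to produce a finite-choice one. Here I would run the value-area decomposition of Section~\ref{sec:transfer} while keeping everything inside $\s_\A$'s finite reservoir. For each positive $u\in V_{\G}\setminus\{0\}$, Lemma~\ref{lem:value_one_Gu} (applied to the positively-optimal locally optimal strategy just obtained) shows all states of $\G_u$ have positive value, whence by Theorem~\ref{thm:all_not_zero_implies_almost_sure} a subgame almost-surely winning strategy exists in $\G_u$. The point is to realise such a strategy by a \emph{reset strategy} that merely replays $\s_\A$'s moves, restricted to $Q_u$ and fresh-started at the current state, so that at every state $q\in Q_u$ it only ever plays $\GF$-strategies from $S^{\s_\A}_q$; the resets are triggered whenever the residual would otherwise degrade, and by the summability established above they fire only finitely often along almost every play that settles in $Q_u$. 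Gluing these per-area reset strategies as in Definition~\ref{def:glued_strat} yields a strategy $\s_\A'$ that is finite-choice by the opening observation. I would finally verify subgame optimality of $\s_\A'$ through Theorem~\ref{thm:subgame_optimal_arbitrary_strategy}: local optimality is inherited from the restriction to optimal $\GF$-strategies, and, by Lemmas~\ref{lem:locally_opt_implies_same_value} and~\ref{lem:locally_opt_implies_convex_comb}, against any deterministic $\s_\B$ the play almost surely settles in a value area; on the event of settling in $Q_u$ the finitely many resets eventually stop and $\s_\A'$ plays a genuine subgame almost-surely winning strategy of $\G_u$, so $W$ holds with conditional probability $1$, from every history.

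The hard part will be the interface in the last paragraph: turning the abstract subgame almost-surely winning strategy of $\G_u$ (produced by the reset argument underlying Theorem~\ref{thm:all_not_zero_implies_almost_sure}) into one that both uses only $\GF$-strategies from the finite reservoir $S^{\s_\A}_q$ and still wins $W$ almost surely after settling, from every history and not merely from the start. Both are delicate precisely because a merely optimal residual may, as in Figure~\ref{fig:co_buchi_unfair}, decay to value arbitrarily close to $0$ by Lemma~\ref{lem:chaterjee_value_0_1}; the essential new ingredient countering this is the finite-choice lower bound $\eta_0$, which makes the value process a supermartingale with summable drops and thereby renders every ``bad'' reset trigger transient, so that resetting to a fresh optimal play is guaranteed to succeed almost surely rather than being defeated by an adversary who merely waits.
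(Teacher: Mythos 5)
Your overall architecture is the right one — reset to a fresh copy of the given strategy, observe that the result stays within the finite reservoir, and verify the two conditions of Theorem~\ref{thm:subgame_optimal_arbitrary_strategy} — and this is indeed how the paper proceeds (directly on $\G$, without the detour through Theorem~\ref{thm:transfer_memory} and the games $\G_u$, which you do not actually need). But the quantitative mechanism you propose for ``bad events are transient'' does not work. Your supermartingale argument is valid only against the one Player $\B$ strategy that best-responds locally: against an arbitrary deterministic $\s_\B$ the process $\MarVal{\G}(q_n)$ is not a supermartingale, and a history at which $\s_\A(\rho)$ is locally suboptimal (i.e.\ $\va_{\formN_q}(\s_\A(\rho))<\MarVal{\G}(q)$ because of \emph{some} column $b'$) produces no expected drop at all if $\B$ plays a different column; such histories can perfectly well recur forever while the play settles in $Q_u$. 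Since the second condition of Theorem~\ref{thm:subgame_optimal_arbitrary_strategy} quantifies over \emph{all} deterministic $\B$ strategies, summability of drops against the exploiting adversary proves nothing about the plays you must control. The paper's mechanism is in a sense the opposite of yours: if the residual $\s_\A^{\rho}$ is optimal but $\s_\A^{\rho\cdot q'}$ is not, then for any action $b$ of $\B$ reaching $q'$ with positive probability one gets $\outM_{\formN_q}(\s_\A(\rho),b)>\MarVal{\G}(q)$ — a forced strict \emph{increase} of the expected value — which forces a probability, bounded below by finite choice, of leaving the current value area. It is this exit bound (a Borel--Cantelli-type argument), not summable supermartingale losses, that makes reset triggers almost surely finite on the event of settling in $Q_u$, uniformly over deterministic adversaries.

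The second gap is the step you yourself flag as ``the hard part'' and then justify only by the same (flawed) transience claim: after the last reset, the strategy being followed is a residual of the original optimal strategy whose subsequent residuals remain optimal in $\G$; it is \emph{not} a subgame almost-surely winning strategy of $\G_u$, and you cannot obtain one confined to the reservoir $S^{\s_\A}_q$ from Theorem~\ref{thm:all_not_zero_implies_almost_sure}, whose construction uses arbitrary near-optimal strategies. What must be shown is that ``every residual along the play remains optimal'' together with ``the play settles in $Q_u$'' implies $W$ holds with conditional probability $1$. This is a genuine argument (the paper's fact (c)): one assumes the conditional probability is $<1$, applies Lemma~\ref{lem:chaterjee_value_0_1} to the induced Markov chain with the auxiliary prefix-independent objective $W\cup(Q^*\cdot(Q\setminus Q_u\cup\mathsf{Deviate}))^{\omega}$, and extracts a non-deviating history ending in $Q_u$ from which the residual has value $<u$, contradicting non-deviation. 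Without this step, and with the adversary-dependent transience claim repaired as above, your proof does not close.
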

	\begin{proof}[Sketch]
		Consider such an %uniformly 
		optimal finite-choice strategy $\s_\A$. In particular, note that there is a
		constant $c > 0$ such that for all $\rho \cdot q \in Q^+$, for all $a \in A_q$
		we have: $\s_\A(\rho \cdot q)(q) > 0 \Rightarrow \s_\A(\rho \cdot q)(q) \geq c$.
		We build a subgame optimal strategy $\s_\A'$ in the following way: for all $\rho
		= \rho' \cdot q \in Q^+$, if the residual strategy $\s_\A^\rho$ is optimal, then
		$\s_\A'(\rho) := \s_\A(\rho)$, otherwise $\s_\A'(\rho) := \s_\A(q)$ (i.e. we
		reset the strategy). Straightforwardly, the strategy $\s_\A'$ has finite choice.
		We want to apply Theorem~\ref{thm:subgame_optimal_arbitrary_strategy} to prove
		that it is subgame optimal. One can see that it is locally optimal (by the
		criterion chosen for resetting the strategy). Consider now some $\rho \in Q^+$
		ending at state $q \in Q$ and another state $q' \in Q$. Assume that the residual
		strategy $\s_\A^\rho$ is optimal but that the residual strategy $\s_\A^{\rho
			\cdot q'}$ is not. Then, similarly to why local optimality is necessary for
		subgame optimality (see
		Proposition~\ref{prop:sub_game_opt_implies_locally_opt}), %with a similar
		%argument than the one used to argue that being locally optimal is necessary for
		%being subgame optimal, it follows than a 
		one can show that any Player $\B$ action $b \in B_q$ leading to $q'$ from $\rho$ with
		positive probability is such that $\MarVal{\G}(q) <
		\outM_{\formN_q}(\s_\A(\rho),b)$. Hence, there is positive probability from
		$\rho$, if Player $\B$ opts for the action $b$, to reach a state of value
		different from $u = \MarVal{\G}(q)$. And if this happens infinitely often, %,
		%since the strategy $\s_\A$ is fair, 
		a state of value different from $u$ will be reached
		almost-surely\footnote{This holds because the strategy $\s_\A$ has finite
			choice: the probability to see a state of different value is bounded below by
			the product of $c$ and the smallest positive probability among all Nature
			states.}. In other words, if a value area is never left, almost-surely, the
		strategy $\s_\A'$ only resets finitely often.
		
		Consider now some $\rho \in Q^+$, a Player $\B$ deterministic strategy $\s_\B$
		and a value $u \in V_\G \setminus \{ 0 \}$. From what we argued above, the
		probability of the event $Q^* \cdot (Q_u)^\omega$ (resp. $W \cap Q^* \cdot
		(Q_u)^\omega$) is the same if we intersect it with the fact that the strategy
		$\s_\A'$ only resets finitely often. Furthermore, if the strategy does not reset
		anymore from some point on, and all states have the same value $u > 0$, then it
		follows that the probability of $W$ is 1 (since $W$ is PI). We can then conclude
		by applying Theorem~\ref{thm:subgame_optimal_arbitrary_strategy}.
	\end{proof}
	
	Finite-choice strategies are interesting for another reason. In the previous
	section, we applied the memory transfer from Theorem~\ref{thm:transfer_memory}
	to the Büchi and co-Büchi objectives. We did not apply it to other objectives --
	in particular to the parity objective. Indeed, in general, contrary to the case
	of turn-based games, infinite-memory is necessary to be almost-surely winning in
	parity games. This happens in Figure~\ref{fig:parity_unfair} (already described
	in \cite{AH00}) where the objective of Player $\A$ is to see $q_1$ infinitely
	often, while seeing $q_2$ only finitely often. Let us describe a Player $\A$
	subgame almost-surely winning strategy. % An almost-surely winning strategy for
	%Player $\A$ could behave as follows: t
	The top row is played with probability $1 - \varepsilon_k$ and the bottom row
	is played with probability $\varepsilon_k > 0$ with $\varepsilon_k$ going to 0
	when $k$ goes to $\infty$ (the ($\varepsilon_k$) used in the game in
	Figure~\ref{fig:co_buchi_unfair} works here as well) where $k$ denotes the
	number of times the state $q_0$ is seen. Such a strategy is subgame
	almost-surely winning and does not have finite choice. In fact, it can be shown
	that all Player $\A$ finite-choice strategies have value 0 in that game. 
	
	Interestingly, the transfer of memory of Theorem~\ref{thm:transfer_memory} is
	adapted in Theorem~\ref{thm:transfer_memory_finite_choice} with the memory that
	is sufficient in turn-based games -- for those PI objectives that have a
	\textquotedblleft neutral color\textquotedblright %a slightly changed objective 
	-- if we additionally assume that the subgame optimal strategy considered has
	finite choice. First, let us define what is meant by %this 
	\textquotedblleft neutral color\textquotedblright% \ assumption
	, then we define the turn-based version of \textsf{PSAW}.%we consider a slight
	%change of objective via a projection over a given set of color.
	\begin{definition}[Objective with a neutral color]
		Consider a set of colors $\colSet$ and a PI objective $W \subseteq
		\colSet^\omega$. It has a \emph{neutral color} if there is some (neutral) color
		$k \in \colSet$ such that, for all $\rho = \rho_0 \cdot \rho_1 \cdots \in
		\colSet^\omega$, we have $\rho \in W \Leftrightarrow \rho_0 \cdot k \cdot \rho_1
		\cdot k \cdots \in W$. %That is,  
		%Let $\colSet' := \colSet \uplus \{ k \}$ for a fresh color $k \notin
		%\colSet$. Then, we set $W_{\mathsf{proj}} := \{ \rho \in (\colSet')^\omega \mid
		%\exists \pi \in W,\; \exists n \in \{ 0,1 \} : \rho = k^{n} \cdot \pi_0 \cdot k
		%\cdot \pi_1 \ldots \}$.
	\end{definition}
	% There is no need for k^+ between the other colors since Player $\A$ and
	%Player $\B$ are strictly alternating
	
	\begin{definition}[\textsf{PASW} objective in turn-based games]
		Consider a PI objective $W \subseteq \colSet^\omega$. It is \emph{positionally
			subgame almost-surely winnable in turn-based games} (\textsf{PSAWT} for short)
		if in all turn-based games $\G = \Games{\Aconc}{W}$ where there is a subgame
		almost-surely winning strategy, there is a positional one. 
	\end{definition}
	%In particular, the parity objective, mean-payoff objective or the 
	%Note that if the objective $W$ is prefix-independent, then so is the objective
	%$W_{\mathsf{proj}}$. We then obtain the theorem below.
	\begin{theorem}[Proof in
		Appendix~\ref{proof:thm_transfer_memory_finite_choice}]
		Consider a \textsf{PSAWT} PI objective $W \subseteq \colSet^\omega$ with a
		neutral color %Assume that positional strategies are sufficient to play
		%almost-surely in turn-based games w.r.t. the objective $W_{\mathsf{proj}}$. 
		and a  concurrent game $\G$ with objective $W$. Assume  there
		is a subgame optimal strategy that has finite choice. Then, there is a
		positional one.
		\label{thm:transfer_memory_finite_choice}
	\end{theorem}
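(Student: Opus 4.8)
The plan is to reuse the slicing-and-gluing architecture of Theorem~\ref{thm:transfer_memory}, replacing its single appeal to the concurrent premise \textsf{PSAW} by a reduction, for each value area, to a \emph{turn-based} game on which the weaker premise \textsf{PSAWT} applies; the neutral color and the finite-choice hypothesis are exactly what make this reduction legitimate. Fix the given finite-choice subgame optimal strategy $\s_\A$ in $\G$. It is in particular optimal and locally optimal (Theorem~\ref{thm:subgame_optimal_arbitrary_strategy}), so by Lemma~\ref{lem:value_one_Gu} and Theorem~\ref{thm:all_not_zero_implies_almost_sure}, for every positive $u \in V_\G \setminus \{0\}$ the derived game $\G_u$ has all its state values equal to $1$. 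Restricting $\s_\A$ to $Q_u^+$, and completing it in the gadget $\G_W$ by any deterministic winning strategy (which exists since $W \neq \emptyset$ and $\G_W$ is a clique carrying all colors), yields a finite-choice \emph{optimal} strategy in $\G_u$; invoking Theorem~\ref{thm:fair_strategy_uniformly_optimal} inside $\G_u$ upgrades it to a finite-choice \emph{subgame} optimal — equivalently, since all values are $1$, a finite-choice subgame almost-surely winning — strategy $\s_\A^u$ in $\G_u$.

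The core new step is to turn $\G_u$ into a finite turn-based game $\G_u^{\mathsf{tb}}$ that preserves almost-sure winnability. Because $\s_\A^u$ has finite choice, at each $q \in Q_u$ it uses only finitely many $\GF$-strategies $\mu_1^q, \dots, \mu_{m_q}^q \in \Dist(A_q)$, all optimal in $\formN_q$. I build $\G_u^{\mathsf{tb}}$ by splitting each such $q$: first Player $\A$ selects, turn-based, an index $j \in \{1,\dots,m_q\}$; this leads to a Player $\B$ state coloured by the neutral color $k$, where Player $\B$ picks $b \in B_q$; the ensuing Nature state draws the next state according to $\sum_a \mu_j^q(a)\,\delta(q,a,b)$, and edges leaving $Q_u$ still lead to $\G_W$. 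The interleaved neutral colors do not change membership in $W$, so $\G_u^{\mathsf{tb}}$ is a finite turn-based game with objective $W$. The strategy $\s_\A^u$ transports to $\G_u^{\mathsf{tb}}$ by selecting, after each history, the single index realising $\s_\A^u$; since this index is a \emph{deterministic} function of the history, Player $\B$ observing it gains nothing beyond the history itself, and the induced play measures coincide with those in $\G_u$ up to neutral colors. Hence $\s_\A^u$ transports to a subgame almost-surely winning strategy in $\G_u^{\mathsf{tb}}$, and \textsf{PSAWT} provides a \emph{positional} subgame almost-surely winning strategy $\tau$ there.

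It remains to pull $\tau$ back to $\G_u$ and glue. Define the positional $\sigma^u$ in $\G_u$ by $\sigma^u(q) := \sum_j \tau(q)(j)\,\mu_j^q$. Each $\mu_j^q$ is optimal in $\formN_q$ and the optimal $\GF$-strategies form a convex set (Proposition~\ref{prop:finite_set_optimal_start_gnf}), so $\sigma^u$ is locally optimal in $\G$, as required for the gluing. For almost-sure winning, observe that in $\G_u^{\mathsf{tb}}$ Player $\B$ is \emph{stronger} than in $\G_u$, since it may react to the realised index; thus any concurrent Player $\B$ strategy in $\G_u$ can be mimicked by the turn-based strategy that simply ignores the index, and the two induce identical play measures up to neutral colors. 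As $\tau$ wins almost surely against every turn-based Player $\B$ strategy, $\sigma^u$ wins almost surely against every concurrent one, i.e. $\sigma^u$ is subgame almost-surely winning in $\G_u$. Gluing the positional strategies $(\sigma^u)_{u \in V_\G \setminus \{0\}}$ via Lemma~\ref{prop:glued_strategy} then yields a positional subgame optimal strategy in $\G$.

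I expect the main obstacle to be the correctness of the turn-based reduction, namely checking that almost-sure winnability is preserved in \emph{both} directions. The delicate asymmetry is that splitting $q$ hands Player $\B$ the extra information of Player $\A$'s realised choice: this is harmless for transporting $\s_\A^u$ \emph{upward} only because finite choice makes that choice a deterministic function of the history, so the information is already contained in the history; and it is harmless for transporting $\tau$ \emph{downward} because giving Player $\B$ more power can only shrink Player $\A$'s almost-sure winning region. Pinning down these two measure-coincidence arguments — together with the bookkeeping of neutral colors and of the $\G_W$ gadget — is where the real work lies.
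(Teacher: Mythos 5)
Your proposal is correct and follows essentially the same route as the paper: the same turn-based splitting of each state of $\G_u$ into an index-selection move for Player $\A$ followed by a neutral-colored Player $\B$ state, the same appeal to \textsf{PSAWT} for positionality, and the same gluing via Lemma~\ref{prop:glued_strategy}. The only (harmless) divergence is that you obtain the subgame almost-surely winning strategy in $\G_u^{\mathsf{tb}}$ by transporting one from $\G_u$ (via Theorem~\ref{thm:fair_strategy_uniformly_optimal} and the determinism-of-the-realised-index argument), whereas the paper shows directly that all states of $\G_u^{\mathsf{tb}}$ have positive value and invokes Theorem~\ref{thm:all_not_zero_implies_almost_sure}; your explicit treatment of the two directions of the measure-coincidence argument is in fact more detailed than the paper's.
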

	\begin{proof}[Sketch]
		A finite-choice strategy $\s_\A$ %is a strategy that 
		plays only among a finite number of $\GF$-strategies at each state. The idea
		is therefore to modify the game $\G_u$ of the previous subsection into a game
		$\G'_u$ by transforming it into a (finite) turn-based game. At each state,
		Player $\A$ chooses first her $\GF$-strategy. She can choose among only a finite
		number of them: she has at her disposal, at a state $q$, only optimal
		$\GF$-strategies in $S^{\s_\A}_q$ (recall
		Definition~\ref{def:finite_choice_strategies})%where all actions are played with
		%a probability equal to $\frac{n}{k}$ for some $n \in \N$
		. We consider the objective $W$ in that new arena where Player $\B$ states are
		colored with a neutral color. The existence, in $\G$, of a subgame optimal
		strategy that has finite choice ensures that all states in $\G_u'$ have positive
		values. We can then conclude as for Theorem~\ref{thm:transfer_memory}: a subgame
		optimal strategy can be obtained by gluing together subgame almost-surely
		winning strategies in the (turn-based) games $\G_u'$ (that can be chosen
		positional by assumption).
		%all $\GF$-strategy per support of optimal $\GF$-strategies. 
		%Player $\B$ can then respond with all actions available. We have to show that
		%all states in that game have positive values. Since it is a finite turn-based
		%parity game, Player $\B$ has a positional uniformly optimal strategy. By
		%playing a modified version of the fair subgame optimal strategy $\s_\A$ against
		%this Player $\B$ strategy, we can then show that all states have positive
		%values\footnote{We consider a Player $\A$ (fair) strategy $\s_\A'$ that plays
		%at each state an optimal strategy with the same support than $\s_\A$. We can
		%show that the set of states seen infinitely often -- w.r.t. ($\s_\A$ and
		%$\s_\B$) or ($\s_\A'$ and $\s_\B$) -- almost-surely consists in an End
		%Component of the MDP induced by the Player $\B$ strategy $\s_\B$: this holds
		%because both strategies $\s_\A$ and $\s_\A'$ are fair. The winner (with
		%probability 1) of the End Component then only depends on the highest colors
		%appearing in that End Component: this is entirely defined by the support of the
		%strategies considered, and it is the same for $\s_\A$ and $\s_\A'$.%Hence, each
		%End Component in which the game can settle with positive probability
		%}. Hence, all states have value 1. We can then use the already existing
		%result on finite turn-based parity games to exhibit a Player $\A$ positional
		%uniformly almost-sure strategy in $\G_u$. We can then conclude with the same
		%arguments than for Theorem~\ref{thm:transfer_memory}.
	\end{proof}
	
	As an application, one can realize that the parity, mean-payoff and generalized
	Büchi objectives have a neutral color and are \textsf{PSAWT}
	(\cite{DBLP:conf/soda/ChatterjeeJH04,liggett1969stochastic,DBLP:conf/qest/ChatterjeeAH04}).
	Hence, for these objectives, if there exists an %uniformly 
	optimal strategy that has finite choice, then there is one that is positional.
	\begin{corollary}[Proof in
		Appendix~\ref{proof:coro_transfer_memory_finite_choice}]
		Consider a concurrent game $\G$ with a parity (resp. mean-payoff, resp.
		generalized Büchi) objective. Assume that there is an %uniformly 
		optimal strategy that has finite choice in $\G$. Then, there is a positional
		one.
		\label{coro:transfer_memory_finite_choice}
	\end{corollary}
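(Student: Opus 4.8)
The plan is to obtain the result by composing the two transfer theorems of this section, namely Theorem~\ref{thm:fair_strategy_uniformly_optimal} and Theorem~\ref{thm:transfer_memory_finite_choice}, once the hypotheses of the latter are checked for each of the three objective classes. Concretely, fix one of the objectives (parity, mean-payoff, or generalized Büchi) and assume that $\G$ has a finite-choice optimal strategy. First I would apply Theorem~\ref{thm:fair_strategy_uniformly_optimal} to upgrade this to a finite-choice \emph{subgame} optimal strategy; this step only uses prefix-independence, which holds for all three objectives (the set of colors seen infinitely often, as well as the limit average, are unaffected by dropping a finite prefix). Then I would feed this finite-choice subgame optimal strategy into Theorem~\ref{thm:transfer_memory_finite_choice} to conclude that there is a positional one. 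The logical composition is immediate; the work lies entirely in verifying that each objective has a neutral color and is \textsf{PSAWT}.

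For the neutral color I would exhibit an explicit $k \in \colSet$ per objective. For parity, I take $k := \min \colSet$: since every infinitely-recurring color is at least $\min \colSet$ and at least one such color exists, inserting $k$ between consecutive positions never raises the maximal color seen infinitely often, so membership in $W$ is preserved. For mean-payoff with threshold $0$, the neutral color is the weight $0$: inserting a $0$ between every two positions merely rescales the running averages by a factor tending to $1/2$, so the sign of the limit, and hence membership in $W$, is unchanged. For generalized Büchi, the neutral color is any color lying outside all target sets: inserting it infinitely often neither creates nor destroys the infinite recurrence of any target set, so the winning condition is again preserved.

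For the \textsf{PSAWT} property I would appeal directly to the cited results \cite{DBLP:conf/soda/ChatterjeeJH04,liggett1969stochastic,DBLP:conf/qest/ChatterjeeAH04}, which guarantee that in \emph{turn-based} games with these objectives, positional strategies suffice to win subgame almost-surely. Combining these verifications with the chain Theorem~\ref{thm:fair_strategy_uniformly_optimal} then Theorem~\ref{thm:transfer_memory_finite_choice} yields a positional subgame optimal strategy, which in particular is optimal, as claimed. The main obstacle is not the composition itself but pinning down the neutral color correctly in each case, most notably for parity, where one must argue that inserting $k$ cannot flip the parity of the maximal infinitely-recurring color; choosing $k = \min \colSet$ is precisely what removes this difficulty.
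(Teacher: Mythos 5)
Your proposal is correct and follows exactly the paper's own route: apply Theorem~\ref{thm:fair_strategy_uniformly_optimal} to upgrade the finite-choice optimal strategy to a finite-choice subgame optimal one, then invoke Theorem~\ref{thm:transfer_memory_finite_choice} after checking that each objective has a neutral color (least color for parity, the threshold value for mean-payoff, a fresh color for generalized Büchi) and is \textsf{PSAWT} via the cited references. The only cosmetic difference is that you specialize mean-payoff to threshold $0$, whereas the paper takes the neutral color to be the general threshold $m$ itself; the argument is identical.
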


	\bibliographystyle{plain}
	\bibliography{biblio}
	
	\newpage
	\appendix	
	\section{Additional preliminaries}
	%for a subset of finite sequences of states $S \subseteq Q^+$ we denote by
	%$\mathsf{X} \; S$ the event that 'the next step is in the set $S$', by
	%$\mathsf{X}_{n} \; S$ the event 'after $n$ steps, the state is in the set $S$',
	%by $\diamondsuit S$ the event 'the set $S$ is eventually reached', for $n \in
	%\N$, by $\diamondsuit_{\leq n} S$ the event 'the set $S$ is reached at most $n$
	%steps', by $\diamondsuit_{> n} S$ the event 'the set $S$ is reached in more
	%than $n$ steps', by $\diamondsuit \square S$ the event 'the set $S$ is
	%eventually reached and never left', by $\diamondsuit_{\leq n} \square S$ the
	%event 'the set $S$ is reached in at most $n$ steps and never left'. 
	For a set $Q$ and a subset $S \subseteq Q^+$ of finite sequences of elements of
	$Q$, we denote by $S^\omega \subseteq Q^\omega$ the set of infinite sequences of
	elements of $Q$ with infinitely many prefixes in $S$: $S^\omega := \{ \pi \in
	Q^\omega \mid \forall i \in \N,\; \exists j \geq i,\; \pi_{\leq j} \in S \}$.
	
	Let us also define the notion of (countable) Markov chain. A Markov chain is a
	pair $(Q,\mathbb{P})$ where $Q$ is the set of states and $\mathbb{P}: Q \times Q
	\rightarrow [0,1]$ is the probability function such that, for all states $q \in
	Q$, we have $\sum_{q' \in Q} \mathbb{P}(q,q') = 1$.
		
	\section{Proofs from
		Section~\ref{sec:conditions_subgame_optimal}}
	\label{app:conditions_subgame_optimal}
	We recall a proposition from \cite{BBSCSLarXiv} (specifically, Proposition 42
	in \cite{BBSCSLarXiv}) that we use in this section.
	\begin{proposition}
		\label{prop:outcome_valuation}
		Consider a PI concurrent game $\G = \Games{\Aconc}{W}$% and an arbitrary
		%valuation of the states $v \in [0,1]^Q$. Consider also 
		, a state $q \in Q$ and strategies $\sigma_\A,\sigma_\B \in \Dist(A) \times
		\Dist(B)$ for both players in the game in normal form $\formNF_q$. We have the
		following relation:
		\[\sum_{q' \in Q} \mathbb{P}^{q,q'}(\sigma_\A,\sigma_\B)
		\cdot \MarVal{\G}(q') = \outM_{\formNF_q}(\sigma_\A,\sigma_\B)\]
	\end{proposition}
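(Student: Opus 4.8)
The plan is to prove this by a direct computation: I will unfold both sides of the claimed identity into the same finite triple sum over the actions of the two players and the successor states, using only the definition of the probability transition (Definition~\ref{def:mu_state}), the definition of the lift $\LiftVal{\G}$, and the definition of the outcome in a game in normal form (Definition~\ref{def:outcome_game_form}).

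First I would expand the left-hand side. By Definition~\ref{def:mu_state} we have $\mathbb{P}^{q,q'}(\sigma_\A,\sigma_\B) = \sum_{a \in \setA_q}\sum_{b \in \setB_q} \sigma_\A(a)\,\sigma_\B(b)\,\distribFunc(\delta(q,a,b))(q')$. Substituting this into $\sum_{q' \in Q}\mathbb{P}^{q,q'}(\sigma_\A,\sigma_\B)\cdot\MarVal{\G}(q')$ produces a finite triple sum indexed by $a \in \setA_q$, $b \in \setB_q$ and $q' \in Q$. Since all three index sets are finite (finitely many states and finitely many actions at $q$), I may freely exchange the order of summation and factor the terms $\sigma_\A(a)\,\sigma_\B(b)$ out of the inner sum over $q'$, leaving $\sum_{a \in \setA_q}\sum_{b \in \setB_q}\sigma_\A(a)\,\sigma_\B(b)\bigl(\sum_{q' \in Q}\distribFunc(\delta(q,a,b))(q')\cdot\MarVal{\G}(q')\bigr)$.

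Next I would identify the inner sum. The expression $\sum_{q' \in Q}\distribFunc(\delta(q,a,b))(q')\cdot\MarVal{\G}(q')$ is, by definition of the lift, exactly $\LiftVal{\G}$ evaluated at the Nature state $\delta(q,a,b)$, that is $\LiftVal{\G}(\delta(q,a,b))$. Substituting back yields $\sum_{a \in \setA_q}\sum_{b \in \setB_q}\sigma_\A(a)\,\sigma_\B(b)\,\LiftVal{\G}(\delta(q,a,b))$, which by Definition~\ref{def:outcome_game_form} is precisely the outcome $\outM_{\formN_q}(\sigma_\A,\sigma_\B)$ of the game in normal form $\formN_q = \Games{\formNF_q}{\LiftVal{\G}}$, the Nature-state outcomes of $\formNF_q$ being composed with the valuation $\LiftVal{\G}$. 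This is the right-hand side, establishing the equality.

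I do not expect a genuine obstacle: this is a bookkeeping identity and the argument is just the reordering of a finite sum together with the defining equation of $\LiftVal{\G}$. The only point deserving a word of care is notational, namely reading $\outM_{\formNF_q}(\sigma_\A,\sigma_\B)$ on the right-hand side as the outcome in the \emph{valued} game in normal form $\formN_q$, so that the Nature-state outcomes are passed through the lift $\LiftVal{\G}$; once this reading is fixed, the three definitions close the argument immediately.
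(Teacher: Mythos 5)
Your computation is correct: expanding $\mathbb{P}^{q,q'}(\sigma_\A,\sigma_\B)$ via Definition~\ref{def:mu_state}, swapping the finite sums, and recognizing the inner sum over $q'$ as $\LiftVal{\G}(\delta(q,a,b))$ gives exactly $\outM_{\formN_q}(\sigma_\A,\sigma_\B)$, and your remark that the right-hand side must be read in the valued game $\formN_q$ rather than the bare game form $\formNF_q$ is the right way to resolve the paper's notational abuse. The paper itself gives no proof here --- it imports the statement as Proposition~42 of \cite{BBSCSLarXiv} --- so there is nothing to compare against, but your argument is the evident and complete one.
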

	
	\subsection{Proof of Lemma~\ref{prop:sub_game_opt_implies_locally_opt}}
	\label{proof:prop_sub_game_opt_implies_locally_opt}
	We show the proposition below.
	\begin{proposition}
		\label{prop:optimal_implies_locally_optimal}
		Consider a PI concurrent game $\G$ and a Player $\A$ optimal strategy $\s_\A$
		from a state $q \in Q$. Then, for all Player $\B$ actions $b \in B_q$, we have
		$\outM_{\formNF_q}(\s_\A(q),b) \geq \MarVal{\G}(q)$.
	\end{proposition}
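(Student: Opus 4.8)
The plan is to bound the value of $\s_\A$ from $q$ from above by the game-form outcome against $b$, and then invoke optimality. The engine is Proposition~\ref{prop:outcome_valuation}, which identifies the one-step expected value with the game-form outcome: for any $\GF$-strategy pair at $q$, $\outM_{\formNF_q}(\sigma_\A,\sigma_\B) = \sum_{q' \in Q} \mathbb{P}^{q,q'}(\sigma_\A,\sigma_\B)\cdot\MarVal{\G}(q')$. Thus it suffices to show $\MarVal{\G}[\s_\A](q) \le \outM_{\formNF_q}(\s_\A(q),b)+\varepsilon$ for every $\varepsilon>0$, since optimality then upgrades the resulting inequality $\MarVal{\G}[\s_\A](q) \le \outM_{\formNF_q}(\s_\A(q),b)$ to $\MarVal{\G}(q) \le \outM_{\formNF_q}(\s_\A(q),b)$.

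First I would fix $b \in B_q$ and $\varepsilon > 0$. By determinacy (Definition~\ref{def:determinacy}), for every state $q' \in Q$ the value $\MarVal{\G}(q')$ is also the Player $\B$ value, so there is a Player $\B$ strategy $\tau_{q'} \in \SetStrat{\Aconc}{\B}$ with $\sup_{\s_\A'}\prob{\Aconc,q'}{\s_\A',\tau_{q'}}[W] \le \MarVal{\G}(q') + \varepsilon$; because $Q$ is finite, selecting one $\tau_{q'}$ per state raises no measurability issue. I would then assemble the Player $\B$ strategy $\s_\B$ that plays $b$ with probability $1$ on the one-state history $q$, and, after the first step lands in $q'$, behaves from $q'$ onward exactly like $\tau_{q'}$.

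The core computation conditions on the first transition. Since $W$ is prefix-independent, the single-state prefix $q$ is irrelevant, so the event $W$ along the full path coincides with $W$ along the suffix starting at $q'$, and the law of total probability gives
\[
\prob{\Aconc,q}{\s_\A,\s_\B}[W] = \sum_{q'\in Q}\mathbb{P}^{q,q'}(\s_\A(q),b)\cdot\prob{\Aconc,q'}{\s_\A^{q\cdot q'},\tau_{q'}}[W],
\]
where $\s_\A^{q\cdot q'}$ is the residual of $\s_\A$ after $q\cdot q'$. Each inner term is at most $\MarVal{\G}(q')+\varepsilon$ by the choice of $\tau_{q'}$, and since $\sum_{q'}\mathbb{P}^{q,q'}(\s_\A(q),b)=1$, Proposition~\ref{prop:outcome_valuation} yields $\prob{\Aconc,q}{\s_\A,\s_\B}[W] \le \outM_{\formNF_q}(\s_\A(q),b)+\varepsilon$. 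Taking the infimum over Player $\B$ strategies and letting $\varepsilon \to 0$ gives $\MarVal{\G}[\s_\A](q) \le \outM_{\formNF_q}(\s_\A(q),b)$, and optimality of $\s_\A$ from $q$ closes the argument.

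The only delicate point—the main obstacle—is justifying the conditioning identity cleanly: I must argue that redirecting Player $\B$'s post-first-step behaviour to the strategies $\tau_{q'}$ (which depend only on the state $q'$ reached, not on how $q'$ was reached) is a legitimate Player $\B$ strategy, and that prefix-independence genuinely lets me equate $W$ on the full path with $W$ on the suffix from $q'$ inside each summand. Everything else is bookkeeping over the finite state set $Q$ together with a single $\varepsilon$-limit.
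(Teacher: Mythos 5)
Your proof is correct and follows essentially the same route as the paper's: build a Player $\B$ strategy that plays $b$ at $q$ and then continues $\varepsilon$-optimally, decompose the winning probability over the first transition using prefix-independence, and conclude via Proposition~\ref{prop:outcome_valuation} and optimality. The only differences are cosmetic (you argue directly rather than by contradiction, and you glue per-state $\varepsilon$-optimal strategies $\tau_{q'}$ where the paper takes a single uniformly $\varepsilon/2$-optimal Player $\B$ strategy).
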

	The proof is identical to the proof of Lemma 17 in \cite{BBSArXivICALP} in the
	case of positional strategies.
	\begin{proof}
		Assume towards a contradiction that %$\s_\A$ 
		it is not the case, i.e. there is a Player $\B$ action $b \in B_q$ such that
		$\outM_{\formNF_q}(\s_\A(q),b) \leq \MarVal{\G}(q) - \varepsilon$ for some
		$\varepsilon > 0$. Let us show that the strategy $\s_\A$ is then not optimal
		from $q$. Consider a Player $\B$ strategy $\s_\B'$ such that for all $q' \in Q$,
		the value of the strategy at state $q'$ is at most $\MarVal{\G}(q') +
		\varepsilon/2$: $\MarVal{\G}[\s_\B](q') \leq \MarVal{\G}(q') + \varepsilon/2$.
		We then define a Player $\B$ strategy $\s_\B$ as follows: $\s_\B(q) := b$ and,
		for all $q' \in Q$, we have $\s_\B(q \cdot q') := \s_\B(q')$. Then, as $W$ is
		prefix-independent and by  Proposition~\ref{prop:outcome_valuation}, we obtain:
		\begin{align*}
		\mathbb{P}^{\s_\A,\s_\B}_q[W] & = \sum_{q' \in Q}
		\mathbb{P}^{\s_\A,\s_\B}(q)[q'] \cdot \mathbb{P}^{\s_\A^{q \cdot q'},\s_\B^{q
				\cdot q'}}_{q'}[W] \\
		& \leq \sum_{q' \in Q} \mathbb{P}^{\s_\A,\s_\B}(q)[q'] \cdot (\MarVal{\G}(q')
		+ \varepsilon/2) \\
		& = \sum_{q' \in Q} \mathbb{P}^{q,q'}(\s_\A(\rho),\s_\B(\rho)) \cdot
		\MarVal{\G}(q') + \varepsilon/2 \\
		& = \outM_{\formNF_q}(\s_\A(q),b) + \varepsilon/2 \\
		& \leq \MarVal{\G}[q] - \varepsilon + \varepsilon/2 = \MarVal{\G}[q] -
		\varepsilon/2
		\end{align*}
		That is, the strategy $\s_\A^\rho$ is not optimal from $q$.	
	\end{proof}
	
	The proof of Lemma~\ref{prop:sub_game_opt_implies_locally_opt} is then a direct
	consequence.
	\begin{proof}
		Assume towards a contradiction that $\s_\A$ is not locally optimal. That is,
		there is some $\rho = \rho' \cdot q \in Q^+$ and a Player $\B$ action $b \in
		B_q$ such that $\outM_{\formNF_q}(\s_\A(\rho),b) \leq \MarVal{\G}(q) -
		\varepsilon$ for some $\varepsilon > 0$. It follows by
		Proposition~\ref{prop:optimal_implies_locally_optimal} that the residual
		strategy $\s_\A^\rho$ is not optimal from $\rho$, hence the contradiction.
	\end{proof}
	
	\subsection{Proof of Lemma~\ref{lem:locally_opt_implies_same_value}}
	\label{proof:lem_locally_opt_implies_same_value}
	First, let us state a straightforward proposition about locally optimal
	strategies.
	\begin{proposition}
		\label{prop:locally_opt_next}
		In a PI concurrent game $\G$, for a Player $\A$ locally optimal strategy
		$\s_\A$ and a deterministic Player $\B$ strategy, for all finite paths $\rho =
		\rho' \cdot q \in Q^+$, we have
		$\MarVal{\G}(q) \leq \sum_{u \in V_{\G}} u \cdot
		\mathbb{P}_\rho^{\s_\A,\s_\B}[Q_u]$.
	\end{proposition}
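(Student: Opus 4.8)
The plan is to read the right-hand side as the expected value of the state reached one step after $\rho$, and then to bound it from below directly by local optimality. First I would regroup the sum by collecting the states of each value together. Since for every $u \in V_{\G}$ the quantity $\mathbb{P}_\rho^{\s_\A,\s_\B}[Q_u]$ is exactly $\sum_{q' \in Q_u} \mathbb{P}^{q,q'}(\s_\A(\rho),\s_\B(\rho))$, namely the probability that the successor of $q$ lies in $Q_u$, and since $\MarVal{\G}(q') = u$ for every $q' \in Q_u$, I obtain
\[
\sum_{u \in V_{\G}} u \cdot \mathbb{P}_\rho^{\s_\A,\s_\B}[Q_u] = \sum_{q' \in Q} \mathbb{P}^{q,q'}(\s_\A(\rho),\s_\B(\rho)) \cdot \MarVal{\G}(q').
\]

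Next I would apply Proposition~\ref{prop:outcome_valuation} to the $\GF$-strategies $\s_\A(\rho) \in \Dist(A_q)$ and $\s_\B(\rho) \in \Dist(B_q)$ at state $q$, which rewrites the right-hand side above as the outcome $\outM_{\formN_q}(\s_\A(\rho),\s_\B(\rho))$ of these $\GF$-strategies in the game in normal form at $q$. Because $\s_\B$ is deterministic, $\s_\B(\rho)$ is the Dirac distribution on a single action $b \in B_q$, so this outcome is simply $\outM_{\formN_q}(\s_\A(\rho),b)$. Finally, local optimality of $\s_\A$ (Definition~\ref{def:locally_optimal}), instantiated at the history $\rho$ (which ends at $q$) and the action $b$, gives exactly $\MarVal{\G}(q) \leq \outM_{\formN_q}(\s_\A(\rho),b)$. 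Chaining this inequality with the two equalities yields the claim.

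There is essentially no obstacle here: the proposition is a one-step reformulation of the definition of local optimality, obtained by pushing the valuation $\MarVal{\G}$ through the transition probabilities via Proposition~\ref{prop:outcome_valuation}. The only point deserving a moment of care is that Definition~\ref{def:locally_optimal} phrases the inequality in terms of pure Player $\B$ actions $b$ rather than mixed $\GF$-strategies; the determinism hypothesis on $\s_\B$ is precisely what lets me invoke it in that pure form, with no convexity argument needed.
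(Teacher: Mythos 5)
Your proof is correct and is essentially the paper's argument run in the opposite direction: the paper starts from the local-optimality inequality and unfolds $\LiftVal{\G}$ by hand to regroup states by value area, whereas you start from the right-hand side, regroup by value areas, and invoke Proposition~\ref{prop:outcome_valuation} to reach the outcome $\outM_{\formN_q}(\s_\A(\rho),b)$. The chain of identities is the same, and your closing remark about determinism being what permits the pure-action form of Definition~\ref{def:locally_optimal} matches the paper's first step of setting $b := \s_\B(\rho)$.
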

	\begin{proof}
		We let $b := \s_\B(\rho) \in B_q$. We have:
		\begin{align*}
		\MarVal{\G}(q) & \leq \sum_{a \in A} \s_\A(\rho)(a) \cdot \LiftVal{\G} \circ
		\delta(q,a,b) = \sum_{a \in A} \s_\A(\rho)(a) \cdot (\sum_{q' \in Q}
		\distribFunc \circ \delta(q,a,b)(q') \cdot \MarVal{\G}(q')) \\
		& = \sum_{u \in V_{\G}} \sum_{a \in A} \s_\A(\rho)(a) \cdot (\sum_{q' \in Q_u}
		\distribFunc \circ \delta(q,a,b)(q') \cdot \MarVal{\G}(q')) \\
		& = \sum_{u \in V_{\G}} u \cdot \left(\sum_{q' \in Q_u} \sum_{a \in A}
		\s_\A(\rho)(a) \cdot \distribFunc \circ \delta(q,a,b) \right) = \sum_{u \in
			V_{\G}} u \cdot \mathbb{P}_\rho^{\s_\A,\s_\B}[Q_u]
		\end{align*}
	\end{proof}
	
	We can now proceed to the proof of
	Lemma~\ref{lem:locally_opt_implies_same_value}.
	\begin{proof}
		We denote by $v$ the valuation $\MarVal{\G}$. We extend the valuation $v$ to
		finite paths: $v^+: Q^+ \rightarrow [0,1]$ such that, for all $\rho \cdot q \in
		Q^+$, we have $v^+(\rho \cdot q) := v(q)$. If all states have value 0 or 1, the
		lemma straightforwardly holds. Let us now assume that there are some states of
		value between 0 and 1. Let us denote by $0 = u_0 < u_1 < \ldots < u_n = 1$ the
		states values in $V_{\G}$% taken by the valuation $v$
		, i.e. $|V_{\G}| = n+1$. For all $i \leq n+1$, we denote by $Q_{\leq i}$ the
		set $Q_{\leq i} := v^{-1}[\{ u_0,\ldots,u_i \}]$% and by $Q_i$ the set $V_{i} :=
		%v^{-1}[\{ u_i \}]$
		. We show by induction on $k \leq n$ the following property $\mathcal{P}(k)$:
		%If $\mathbb{P}^{\s_\A,\s_\B}(Q^* \cdot (v_{\leq k})^\omega ) > 0$ then
		\begin{displaymath}
		\mathbb{P}^{\s_\A,\s_\B}(\bigcup_{0 \leq i \leq k} Q^* \cdot (Q_{u_i})^\omega
		\cap Q^* \cdot (Q_{\leq k})^\omega) = \mathbb{P}^{\s_\A,\s_\B}(Q^* \cdot
		(Q_{\leq k})^\omega)
		\end{displaymath}
		This straightforwardly holds for $k = 0$. Consider now some $k \leq n$ and
		assume that $\mathcal{P}(k)$ holds. %Assume that $\mathbb{P}^{\s_\A,\s_\B}(Q^*
		%\cdot (v[\{ u_0,\ldots,u_{k+1} \}])^\omega ) > 0$. In that case, 
		We have:
		\begin{align*}
		\mathbb{P}^{\s_\A,\s_\B}(Q^* \cdot (Q_{\leq k+1})^\omega ) & =
		\mathbb{P}^{\s_\A,\s_\B}(Q^* \cdot (Q_{k+1})^\omega \cap Q^* \cdot (Q_{\leq
			k+1})^\omega ) \\
		& + \mathbb{P}^{\s_\A,\s_\B}(Q^* \cdot (Q_{\leq k})^\omega \cap Q^* \cdot
		(Q_{\leq k+1})^\omega ) \\
		& + \mathbb{P}^{\s_\A,\s_\B}((Q^* \cdot Q_{\leq k})^\omega \cap (Q^* \cdot
		Q_{k+1})^\omega \cap Q^* \cdot (Q_{\leq k+1})^\omega) 
		\end{align*}
		Let us show that the term $\mathbb{P}^{\s_\A,\s_\B}((Q^* \cdot Q_{\leq
			k})^\omega \cap (Q^* \cdot Q_{k+1})^\omega \cap Q^* \cdot (Q_{\leq k+1})^\omega)
		= 0$. This holds if $k = n$ since then, $u_{k+1} = 1$ and being locally optimal
		means that all states seen with positive probability have value 1. Assume now
		that $k < n$ and $u_{k+1} < 1$. For all $l \in \N$, we denote by $E_{\leq k}^l$
		the event $E_{\leq k}^l := Q^l \cdot (Q_{k+1} \cap Q \cdot Q_{\leq k})$ and by
		$E_{\geq k+2}^l$ the event $E_{\geq k+2}^l := Q^l \cdot (Q_{k+1} \cap Q \cdot
		Q_{\geq k+2})$. Let also $E^l := E_{\leq k}^l \cup E_{\geq k+2}^l$. We also
		denote by $E_{\leq k}^\infty := \cap_{d \in \N} \cup_{l \geq d} E_{\leq k}^l$,
		$E_{\geq k+2}^\infty := \cap_{d \in \N} \cup_{l \geq d} E_{\geq k+2}^l$ and
		$E^\infty := \cap_{d \in \N} \cup_{l \geq d} E^l$. If $
		\mathbb{P}^{\s_\A,\s_\B}(E^\infty) = 0$ then $\mathbb{P}^{\s_\A,\s_\B}((Q^*
		\cdot Q_{\leq k})^\omega \cap (Q^* \cdot Q_{k+1})^\omega \cap Q^* \cdot (Q_{\leq
			k+1})^\omega) = 0$ since $(Q^* \cdot Q_{\leq k})^\omega \cap (Q^* \cdot
		Q_{k+1})^\omega \cap Q^* \cdot (Q_{\leq k+1})^\omega \subseteq E^\infty_{\leq k}
		\subseteq E^\infty$. Let us now assume that $ \mathbb{P}^{\s_\A,\s_\B}(E^\infty)
		> 0$.
		
		Consider some $\rho = \rho' \cdot q \in Q^+ \cdot Q_{k+1}$. Since $\s_\A$ is
		locally optimal and by Proposition~\ref{prop:locally_opt_next}, we have:
		\begin{align*}
		u_{k+1} \leq \sum_{u \in V_{\G}} u \cdot \mathbb{P}_\rho^{\s_\A,\s_\B}[Q_u] &
		= u_{k+1} \cdot \mathbb{P}_\rho^{\s_\A,\s_\B}[Q_{u_{k+1}}] + \sum_{u \leq u_k} u
		\cdot \mathbb{P}_\rho^{\s_\A,\s_\B}[Q_{u}] + \sum_{u \geq u_{k+2}} u \cdot
		\mathbb{P}_\rho^{\s_\A,\s_\B}[Q_{u}] \\
		& \leq u_{k+1} \cdot \mathbb{P}_\rho^{\s_\A,\s_\B}[Q_{u_{k+1}}] + \sum_{u \leq
			u_k} u_k \cdot \mathbb{P}_\rho^{\s_\A,\s_\B}[Q_{u}] + \sum_{u \geq u_{k+2}}
		\mathbb{P}_\rho^{\s_\A,\s_\B}[Q_{u}] \\
		& = u_{k+1} \cdot \mathbb{P}_\rho^{\s_\A,\s_\B}[Q_{u_{k+1}}] + u_k \cdot
		\mathbb{P}_\rho^{\s_\A,\s_\B}[Q_{\leq k}] +
		\mathbb{P}_\rho^{\s_\A,\s_\B}[Q_{\geq k+2}] \\
		\end{align*}
		Denoting $\mathbb{P}_\rho^{\s_\A,\s_\B}[Q_{\leq k}]$ by $p_k$ and
		$\mathbb{P}_\rho^{\s_\A,\s_\B}[Q_{\geq k+2}]$ by $p_{k+2}$, we obtain:
		\begin{align*}
		& u_{k} \cdot p_k + p_{k+2} \geq u_{k+1} \cdot (p_k+p_{k+2}) \\
		\Leftrightarrow \; & p_{k+2} \cdot (1 - u_{k+1}) \geq p_k \cdot (u_{k+1} -
		u_k) \\
		\Leftrightarrow \; & p_{k+2} \geq p_k \cdot \frac{u_{k+1} - u_k}{1 - u_{k+1}}
		= p_{k+2} \cdot x
		\end{align*}
		for $x := \frac{u_{k+1} - u_k}{1 - u_{k+1}} \geq 0$. Hence, setting $c :=
		\frac{x}{1 + x} \in ]0,1[$, we have $\frac{p_{k+2}}{p_k + p_{k+2}} \geq c$. That
		is, for $l = |\rho|$, we have: 
		\begin{displaymath}
		\mathbb{P}_\rho^{\s_\A,\s_\B}(\lnot E^l_{\geq k+2} \cap E^l) \leq (1 - c)
		\cdot \mathbb{P}_\rho^{\s_\A,\s_\B}(E^l) 
		\end{displaymath}
		Then, consider the probability $\mathbb{P}^{\s_\A,\s_\B}(\lnot E^\infty_{\geq
			k+2} \mid E^\infty)$ (recall that $\mathbb{P}^{\s_\A,\s_\B}(E^\infty) > 0$).
		This is equal to:
		\begin{displaymath}
		\lim\limits_{d \rightarrow \infty} \mathbb{P}^{\s_\A,\s_\B}(\bigcap_{l \geq d}
		(\lnot E^l_{\geq k+2}) \mid E^\infty) \leq \lim\limits_{d \rightarrow \infty}
		(\lim\limits_{t \rightarrow \infty} (1-c)^t) = 0
		\end{displaymath}
		
		Furthermore:
		\begin{displaymath}
		(Q^* \cdot Q_{\leq k})^\omega \cap (Q^* \cdot Q_{k+1})^\omega \cap Q^* \cdot
		(Q_{\leq k+1})^\omega \subseteq (Q^\omega \setminus E^\infty_{\geq k+2}) \cap
		E^\infty
		\end{displaymath}
		
		It follows that:
		\begin{displaymath}
		\mathbb{P}^{\s_\A,\s_\B}((Q^* \cdot Q_{\leq k})^\omega \cap (Q^* \cdot
		Q_{k+1})^\omega \cap Q^* \cdot (Q_{\leq k+1})^\omega) \leq
		\mathbb{P}^{\s_\A,\s_\B}(\lnot E^\infty_{\geq k+2} \cap E^\infty) =
		\mathbb{P}^{\s_\A,\s_\B}(\lnot E^\infty_{\geq k+2} \mid E^\infty) \cdot
		\mathbb{P}^{\s_\A,\s_\B}(E^\infty) = 0
		\end{displaymath}
		
		%Note that $(Q^* \cdot V_{\leq k})^\omega \cap (Q^* \cdot V_{k+1})^\omega \cap
		%Q^* \cdot (V_{\leq k+1})^\omega = Q^* \cdot (V_{k+1}^* \cdot V_{\leq k} \cdot
		%V_{\leq k+1}^*)^\omega$. Consider some $\rho = \rho' \cdot q \in Q^+$ such that
		%$q \in V_{k+1}$. Since the strategy $\s_\A$ is locally optimal, for $c :=
		%\frac{1 - u_{k+1}}{u_{k+1} - u_k} \geq 0$, we have: 
		%\begin{displaymath}
		%	c \cdot \mathbb{P}^{\s_\A,\s_\B}_\rho(\mathsf{X} \; V_{\geq k+2}) \geq
		%\mathbb{P}^{\s_\A,\s_\B}_\rho(\mathsf{X} \; V_{\leq k})
		%\end{displaymath}
		%It follows that, for all $\rho \in Q^+$, we have:
		%\begin{displaymath}
		%	c \cdot \mathbb{P}^{\s_\A,\s_\B}_\rho(V_{k+1}^* \cdot V_{\geq k+2}) \geq
		%\mathbb{P}^{\s_\A,\s_\B}_\rho(V_{k+1}^* \cdot V_{\leq k})
		%	\end{displaymath}
		%It follows that (NEEDS TO BE PROVED, TODO): 
		%\begin{displaymath}
		%	c \cdot \mathbb{P}^{\s_\A,\s_\B}(Q^* \cdot (V_{k+1}^* \cdot V_{\geq k+2}
		%\cdot Q^*)^\omega) \geq \mathbb{P}^{\s_\A,\s_\B}(Q^* \cdot (V_{k+1}^* \cdot
		%V_{\leq k} \cdot Q^*)^\omega)
		%\end{displaymath}
		%$\mathbb{P}^{\s_\A,\s_\B}((Q^* \cdot V_{\leq k})^\omega \cap (Q^* \cdot
		%V_{k+1})^\omega \cap Q^* \cdot (V_{\leq k+1})^\omega) = 0$. Hence:
		Overall:
		\begin{align*}
		\mathbb{P}^{\s_\A,\s_\B}(Q^* \cdot (Q_{\leq k+1})^\omega ) & =
		\mathbb{P}^{\s_\A,\s_\B}(Q^* \cdot (Q_{k+1})^\omega \cap Q^* \cdot (Q_{\leq
			k+1})^\omega ) + \mathbb{P}^{\s_\A,\s_\B}(Q^* \cdot (Q_{\leq k})^\omega \cap Q^*
		\cdot (Q_{\leq k+1})^\omega ) \\
		& = \mathbb{P}^{\s_\A,\s_\B}(Q^* \cdot (Q_{k+1})^\omega ) +
		\mathbb{P}^{\s_\A,\s_\B}(Q^* \cdot (Q_{\leq k})^\omega ) \\
		& = \mathbb{P}^{\s_\A,\s_\B}(Q^* \cdot (Q_{k+1})^\omega ) +
		\mathbb{P}^{\s_\A,\s_\B}(\bigcup_{0 \leq i \leq k} Q^* \cdot (Q_{u_i})^\omega
		\cap Q^* \cdot (Q_{\leq k})^\omega) \\
		& = \mathbb{P}^{\s_\A,\s_\B}(\bigcup_{0 \leq i \leq k+1} Q^* \cdot
		Q_{u_i}^\omega \cap Q^* \cdot (Q_{\leq k+1})^\omega )
		\end{align*}
		Hence, $\mathcal{P}(k+1)$ holds. In fact, it does for all $k \leq n$. 	Then,
		the lemma exactly corresponds to $\mathcal{P}(n)$ since $Q_{\leq n} = Q$.
	\end{proof}
	
	\subsection{Proof of Lemma~\ref{lem:locally_opt_implies_convex_comb}}
	\label{proof:lem_locally_opt_implies_convex_comb_ok}
	%We can now proceed to the proof of
	%Lemma~\ref{lem:locally_opt_implies_convex_comb_ok}.
	In fact, we prove the more general lemma below.
	\begin{lemma}[Proof~\ref{proof:lem_locally_opt_implies_convex_comb_ok}]
		\label{lem:locally_opt_implies_convex_comb_ok}
		Consider an PI concurrent game $\G% = \Games{\Aconc}{W}
		$% with a prefix-independent objective% $W$% and the valuation $v :=
		%\MarVal{\G}$ mapping each state to its value
		, a Player $\A$ locally optimal strategy $\s_\A$ and a Player $\B$
		deterministic strategy $\s_\B$. For a finite sequence $\rho = \rho' \cdot q \in
		Q^+$%, we have
		:%the convex combination of the values $u \in V_{\G}$ of the game $\G$
		%weighted by the probability to stay indefinitely in $Q_u$ is at least
		%$\MarVal{\G}(q)$. That is:
		\begin{displaymath}
		\MarVal{\G}(q) \leq \sum_{u \in V_{\G}} u \cdot
		\mathbb{P}^{\s_\A^\rho,\s_\B^\rho}_\rho[Q^* \cdot (Q_u)^\omega]
		\end{displaymath}
	\end{lemma}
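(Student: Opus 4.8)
The plan is to establish a finite-horizon inequality by induction and then let the horizon tend to infinity, using Lemma~\ref{lem:locally_opt_implies_same_value} to justify the limit. Throughout I abbreviate the residual probability measure $\mathbb{P}^{\s_\A^\rho,\s_\B^\rho}_\rho$ by $\mathbb{P}$, viewing plays as starting from $q$; note that local optimality passes to residuals (since $\s_\A^\rho(\pi) = \s_\A(\rho \cdot \pi)$ ends at the same state as $\pi$) and that $\s_\B^\rho$ stays deterministic, so Proposition~\ref{prop:locally_opt_next} may be applied along every continuation. The invariant to prove by induction on $i \geq 1$ is
\[\mathcal{P}(i):\quad \MarVal{\G}(q) \leq \sum_{\pi \cdot q' \in q \cdot Q^i} \MarVal{\G}(q') \cdot \mathbb{P}(\pi \cdot q').\]
The base case $\mathcal{P}(1)$ is exactly Proposition~\ref{prop:locally_opt_next} applied at $q$, after regrouping its right-hand side $\sum_{u \in V_\G} u \cdot \mathbb{P}[Q_u]$ as $\sum_{q' \in Q} \MarVal{\G}(q') \cdot \mathbb{P}(q \cdot q')$.

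For the inductive step, I would fix a length-$(i+1)$ path $\pi \cdot q' \in q \cdot Q^i$ and apply Proposition~\ref{prop:locally_opt_next} to the residual at $q'$, obtaining $\MarVal{\G}(q') \leq \sum_{q'' \in Q} \MarVal{\G}(q'') \cdot \mathbb{P}(\pi \cdot q')[q'']$. Multiplying by $\mathbb{P}(\pi \cdot q')$ and summing over all such paths, the identity $\mathbb{P}(\pi \cdot q') \cdot \mathbb{P}(\pi \cdot q')[q''] = \mathbb{P}(\pi \cdot q' \cdot q'')$ collapses the resulting double sum into $\sum_{\pi \cdot q' \cdot q'' \in q \cdot Q^{i+1}} \MarVal{\G}(q'') \cdot \mathbb{P}(\pi \cdot q' \cdot q'')$. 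Chaining this with $\mathcal{P}(i)$ yields $\mathcal{P}(i+1)$. Regrouping $\mathcal{P}(n)$ by value areas then gives, for every $n$, the inequality $\MarVal{\G}(q) \leq \sum_{u \in V_\G} u \cdot \mathbb{P}[q \cdot Q^{n-1} \cdot Q_u]$.

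It remains to pass to the limit $n \to \infty$. By Lemma~\ref{lem:locally_opt_implies_same_value} the play almost surely settles in a single value area, i.e. $\mathbb{P}[\bigcup_{u \in V_\G} Q^* \cdot (Q_u)^\omega] = 1$, and the settling events are pairwise disjoint. On the event of settling in $Q_u$ the $n$-th state eventually lies in $Q_u$, so the indicator of $q \cdot Q^{n-1} \cdot Q_u$ is eventually $1$; on the event of settling in any $Q_{u'}$ with $u' \neq u$ it is eventually $0$. Hence almost surely $\mathbf{1}_{q \cdot Q^{n-1} \cdot Q_u} \to \mathbf{1}_{Q^* \cdot (Q_u)^\omega}$, and bounded convergence gives $\mathbb{P}[q \cdot Q^{n-1} \cdot Q_u] \to \mathbb{P}[Q^* \cdot (Q_u)^\omega]$. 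Since $V_\G$ is finite, letting $n \to \infty$ in the displayed form of $\mathcal{P}(n)$ produces the claimed inequality.

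I expect the limit passage to be the main obstacle. Local optimality only bounds the one-step --- and hence, by induction, every finite-horizon --- expected value, but by itself it does not prevent the play from oscillating between value areas, in which case the finite-horizon probabilities $\mathbb{P}[q \cdot Q^{n-1} \cdot Q_u]$ need not converge to the settling probabilities. Lemma~\ref{lem:locally_opt_implies_same_value} is precisely what rules out such oscillation and makes the bounded-convergence step legitimate; everything else is routine bookkeeping with path probabilities.
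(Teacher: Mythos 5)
Your proof is correct and shares its skeleton with the paper's: the finite-horizon invariant $\mathcal{P}(i)$ is exactly the induction carried out in Appendix~\ref{proof:lem_locally_opt_implies_convex_comb_ok} (there driven by Proposition~\ref{prop:outcome_valuation} together with local optimality, here by Proposition~\ref{prop:locally_opt_next} applied to residuals --- the same computation), and both arguments lean on Lemma~\ref{lem:locally_opt_implies_same_value} to pass to the limit. Where you genuinely diverge is in how that limit is executed. The paper proceeds quantitatively: it introduces a counter $f$ for the number of value-area changes along a play, fixes $\varepsilon > 0$, truncates at a number of changes $N$ beyond which the remaining probability mass is at most $\varepsilon/2$, and then picks a horizon $n$ large enough that, for each $k \leq N$ and each $u$, the time-$n$ probability of sitting in $Q_u$ after exactly $k$ changes is $\varepsilon$-close to the probability of settling there; summing everything yields $v(\rho) - \varepsilon$ and $\varepsilon \to 0$ concludes. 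You instead observe that on the almost-sure event $\bigcup_{u \in V_\G} Q^* \cdot (Q_u)^\omega$ the indicators $\mathbf{1}_{q \cdot Q^{n-1} \cdot Q_u}$ converge pointwise to $\mathbf{1}_{Q^* \cdot (Q_u)^\omega}$ (using that the settling events are pairwise disjoint), so bounded convergence gives $\mathbb{P}[q \cdot Q^{n-1} \cdot Q_u] \to \mathbb{P}[Q^* \cdot (Q_u)^\omega]$ directly, and finiteness of $V_\G$ lets the limit pass through the sum. This is a legitimate and cleaner route: it replaces the paper's explicit $\varepsilon$-bookkeeping with a short appeal to dominated convergence, at no cost, since the events $q \cdot Q^{n-1} \cdot Q_u$ are cylinders and hence measurable. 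Your closing remark is also on point: Lemma~\ref{lem:locally_opt_implies_same_value} is precisely what rules out oscillation between value areas, without which the finite-horizon probabilities need not converge to the settling probabilities.
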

	\begin{proof}
		We consider the locally optimal strategy $\s_\A$, a deterministic Player $\B$
		strategy $\s_\B$ and the finite path $\rho \in Q^+$.	%Consider %the finite MDP
		%$\Gamma$ induced by the strategy $\s_\A$ and consider 
		%a Player $\B$ deterministic strategy $\s_\B$ (which is sufficient to play
		%$\varepsilon$-optimally by, for instance, Theorem 1 is 'Randomness for free,
		%2014	'). 
		Let us consider the infinite Markov chain $\mathcal{M} = (\rho \cdot
		Q^+,\mathbb{P})$ with $\mathbb{P}: \rho \cdot Q^+ \times \rho \cdot Q^+
		\rightarrow [0,1]$ where, for all $\pi \in Q^+$% with $\mathsf{head}(\rho) := q
		%\in Q$
		, we have $\mathbb{P}(\rho \cdot \pi,\rho \cdot \pi \cdot q) :=
		\mathbb{P}^{\s_\A,\s_\B}(\rho \cdot \pi,q)$. All other probabilities are equal
		to 0. %Consider an arbitrary state $q_0 \in Q$. 
		The probability measure $\mathbb{P}$ is extended to finite paths starting at
		$\rho$, cylinders and arbitrary Borel sets (in particular, to $W$). %(Define
		%$\mathbb{P}_\rho$ for finite paths $\rho \in q_0 \cdot Q^+$). 
		
		%In the following, for a subset of finite sequences of states $S \subseteq
		%Q^+$ we denote by $\mathsf{X} \; S$ the event that 'the next step is in the set
		%$S$', by $\mathsf{X}_{n} \; S$ the event 'after $n$ steps, the state is in the
		%set $S$', by $\diamondsuit S$ the event 'the set $S$ is eventually reached',
		%for $n \in \N$, by $\diamondsuit_{\leq n} S$ the event 'the set $S$ is reached
		%at most $n$ steps', by $\diamondsuit_{> n} S$ the event 'the set $S$ is reached
		%in more than $n$ steps', by $\diamondsuit \square S$ the event 'the set $S$ is
		%eventually reached and never left', by $\diamondsuit_{\leq n} \square S$ the
		%event 'the set $S$ is reached in at most $n$ steps and never left'. 
		
		We additionally define inductively the function $f: \rho \cdot Q^+ \rightarrow
		\N$ in the following way: $f(\rho) := 0$ and for all $\pi \cdot q \cdot q' \in
		Q^+$, we set $f(\rho \cdot \pi \cdot q \cdot q') := f(\rho \cdot \pi \cdot q)$
		if $\MarVal{\G}(q) = \MarVal{\G}(q')$ %there is an MEC $M \in
		%\mathsf{MEC}(\Gamma)$ such that $q,q' \in Q_M$ 
		and $f(\rho \cdot q \cdot q') := f(\rho \cdot q) + 1$ otherwise. %In the
		%following, for all $n \in \N$, we denote by $\diamondsuit (\cdot,n)$ the event
		%"the set of states $f^{-1}[n] \subseteq Q^+$ is reached".
		This function counts the number of changes of values.
		
		Let $\varepsilon > 0$% and let us show that $\mathbb{P}(W) \geq
		%\MarVal{\G}[q_0] - \varepsilon$
		. By Lemma~\ref{lem:locally_opt_implies_same_value}, we have 
		\begin{displaymath}
		\lim\limits_{n \rightarrow \infty}  \mathbb{P}(Q^+ \cdot f^{-1}[\{n\}]) = 0	
		\end{displaymath}
		Consider some $N \in \N$ such that $\mathbb{P}(Q^+ \cdot f^{-1}[N+1]) \leq
		\varepsilon/2$. For all $0 \leq k \leq N$ and values $u \in V_{\G}$, we denote
		by $(Q_u,k) \subseteq \rho \cdot Q^+$ the set of finite paths ending in $Q_u$
		and whose values w.r.t. the function $f$ are $k$: $(Q_u,k) := \{ \rho \cdot \pi
		\cdot q \in Q^+ \mid q \in Q_u,\; f(\rho \cdot \pi \cdot q) = k \}$. For all $n
		\in \N$, we denote by $(Q_u,k)_n := \rho \cdot Q^n \cap (Q_u,k)$.%We also denote
		%by $(M,k)_{\mathsf{init}} \subseteq Q^+$ the set of finite paths with no prefix
		%in $(M,k)$, i.e.  $(M,k)_{\mathsf{init}} := \{ \rho \in (M,k) \mid \forall
		%\rho' \sqsubset \rho,\; f(\rho') < k \}$.
		%For all states $q \in Q_M$ and $k \in \N$, we denote by $(q,M,k)$ the set of
		%finite paths ending in $q$ whose value w.r.t. $f$ is $k$ and such that $q$ is
		%the first state seen in $(M,k)$: $(q,M,k) := \{ \rho \in Q^+ \mid
		%\mathsf{head}(\rho) = q,\; \forall \rho' \sqsubset \rho,\; f(\rho') < k \}$. 
		
		For all $0 \leq k \leq N$, we denote by $n_k \in \N$ an index such that:
		\begin{displaymath}
		\mathbb{P}(Q^+ \cdot f^{-1}[k]) \leq \mathbb{P}(Q^{\leq n_k} \cdot f^{-1}[k])
		+ \frac{\varepsilon}{2 \cdot (N+1) \cdot N_{V_{\G}}}
		\end{displaymath} 
		where $N_{V_{\G}} := |V_{\G}|$ (this exists since $\mathbb{P}(Q^+ \cdot S) =
		\lim_{n \rightarrow \infty} \mathbb{P}(Q^{\leq n} \cdot S)$% since $\diamondsuit
		%S = \cup_{n \in \mathbb{N}} \diamondsuit_{\leq n} S$ for all $S \subseteq Q^+$
		). In particular, this implies $\mathbb{P}(Q^{> n_k} \cdot f^{-1}[k]) =
		\mathbb{P}(Q^+ \cdot f^{-1}[k]) - \mathbb{P}(Q^{\leq n_k} \cdot f^{-1}[k]) \leq
		\varepsilon/(2 \cdot (N+1) \cdot N_V)$. %Furthermore, %this also holds for $N+1$
		%since
		%we have $\mathbb{P}(\diamondsuit f^{-1}[N+1]) \leq \varepsilon/3$. 
		Now, let $n := \max_{0 \leq k \leq N} n_k$. %For all $0 \leq k \leq N-1$, we
		%denote by $\mathsf{InNoMEC}(k) := \{ \rho \in q_0 \cdot Q^n \mid f(\rho) = k,
		%\rho \notin \cup_{M \in \mathsf{MEC}(\Gamma)} (M,k)_n \}$ and $\mathsf{InNoMEC}
		%:= \cup_{0 \leq k \leq N} \mathsf{InNoMEC}(k)$. We have:
		%\begin{displaymath}
		%\mathbb{P}(\mathsf{InNoMEC}(k)) \leq \mathbb{P}_{> n_k}(\diamondsuit
		%f^{-1}[k+1]) \leq \varepsilon/(12 \cdot (N+1) \cdot N_M)
		%\end{displaymath}
		%Hence, denoting , we have:
		%\begin{equation}
		%\label{eqn:close_in_mec}
		%\mathbb{P}(\mathsf{InNoMEC}) = \sum_0^{N} \mathbb{P}(\mathsf{InNoMEC}(k))
		%\leq \sum_0^{N-1} \frac{\varepsilon}{12 \cdot (N+1) \cdot N_M} +
		%\frac{\varepsilon}{4} \leq \frac{\varepsilon}{3}
		%\end{equation}
		Let us show that, for all $0 \leq k \leq N$ and $u \in V_{\G}$, we have:
		\begin{equation}
		\label{eqn:close_always_value}
		%\mathbb{P}(Q^{n-1} \cdot (Q_u,k)_n) 
		\mathbb{P}[(Q_u,k)_n] = \sum_{\pi \in (Q_u,k)_n} \mathbb{P}(\pi) \leq
		\mathbb{P}(Q^+ \cdot (Q_u,k)^\omega) + \frac{\varepsilon}{2 \cdot (N+1) \cdot
			N_V}
		\end{equation}
		
		Indeed, since $n \geq n_k$, we have:
		\begin{align*}
		\mathbb{P}(Q^+ \cdot (Q_u,k)^\omega) & \geq \mathbb{P}(Q^{\leq n} \cdot
		(Q_u,k)^\omega) \\
		& = \sum_{\pi \in (Q_u,k)_n} \mathbb{P}(\pi) \cdot
		\mathbb{P}_\pi((Q_u,k)^\omega) \\
		& = \sum_{\pi \in (Q_u,k)_n} \mathbb{P}(\pi) \cdot (1 - \mathbb{P}_\pi(Q^+
		\cdot f^{-1}[k+1])) \\
		& = \sum_{\pi \in (Q_u,k)_n} \mathbb{P}(\pi) - \sum_{\pi \in (Q_u,k)_n}
		\mathbb{P}(\pi) \cdot \mathbb{P}_\pi(Q^+ \cdot  f^{-1}[k+1])) \\
		& \geq \sum_{\pi \in (Q_u,k)_n} \mathbb{P}(\pi) - \mathbb{P}_{> n}(Q^+ \cdot
		f^{-1}[k+1])) \\
		& \geq \sum_{\pi \in (Q_u,k)_n} \mathbb{P}(\pi) - \frac{\varepsilon}{2 \cdot
			(N+1) \cdot N_V} \\
		%& = \mathbb{P}(Q^+ \cdot (Q_u,k)_n) - \frac{\varepsilon}{2 \cdot (N+1) \cdot
		%N_V} \\
		%	& \geq \mathbb{P}(\diamondsuit (M,k)_n) - \frac{\varepsilon}{3 \cdot (N+1)
		%\cdot N_M}
		\end{align*}
		%Furthermore:
		%\begin{align*}
		%	\mathbb{P}(\diamondsuit \square (M,k)) \leq \mathbb{P}(\diamondsuit (M,k))
		%\leq \mathbb{P}(\diamondsuit (M,k)_n) 
		%\end{align*}

		We obtain Equation~(\ref{eqn:close_always_value}). In the following, we denote
		the valuation $\MarVal{\G}$ by $v: Q \rightarrow [0,1]$. Let us show by
		induction on $i$ the following property $\mathcal{P}(i)$: \textquotedblleft
		$\sum_{\pi \in \rho \cdot Q^i} \mathbb{P}(\pi) \cdot v(\pi) \geq
		v(\rho)$\textquotedblright where $v(\pi) \in [0,1]$ refers to $v(q)$ for $q \in
		Q$ the last state of $\pi$%: $q := \mathsf{head}(\rho)$
		. The property $\mathcal{P}(0)$ straightforwardly holds. Assume now that this
		property holds for some $i \in \N$. We have, by
		Proposition~\ref{prop:outcome_valuation} and since $\s_\A$ is locally optimal:
		\begin{align*}
		\sum_{\pi \in \rho \cdot Q^{i+1}} \mathbb{P}(\pi) \cdot v(\pi) & = \sum_{\pi =
			\rho \cdot q_1 \cdots q_i \cdot q_{i+1} \in \rho \cdot Q^{i+1}} \mathbb{P}(\pi)
		\cdot v(q_{i+1}) \\
		& = \sum_{\pi' = \rho \cdot q_0 \cdots q_i \in \rho \cdot Q^{i}} \sum_{q_{i+1}
			\in Q} \mathbb{P}(\pi' \cdot q_{i+1}) \cdot v(q_{i+1}) \\
		& = \sum_{\pi' = \rho \cdot q_1 \cdots q_i \in \rho \cdot Q^{i}} \sum_{q_{i+1}
			\in Q} \mathbb{P}(\pi') \cdot \mathbb{P}(\pi',\pi' \cdot q_{i+1}) \cdot
		v(q_{i+1}) \\
		& = \sum_{\pi' = \rho \cdot q_0 \cdots q_i \in \rho \cdot Q^{i}}
		\mathbb{P}(\pi') \cdot \sum_{q_{i+1} \in Q} \mathbb{P}(\pi',\pi' \cdot q_{i+1})
		\cdot v(q_{i+1}) \\
		& = \sum_{\pi' = \rho \cdot q_1 \cdots q_i \in \rho \cdot Q^{i}}
		\mathbb{P}(\pi') \cdot \sum_{q_{i+1} \in Q}
		\mathbb{P}^{\s_\A,\s_\B}(\pi')[q_{i+1}] \cdot v(q_{i+1}) \\
		& = \sum_{\pi' = \rho \cdot q_1 \cdots q_i \in \rho \cdot Q^{i}}
		\mathbb{P}(\pi') \cdot \sum_{q_{i+1} \in Q}
		\mathbb{P}^{q_i,q_{i+1}}(\s_\A(\pi'),\s_\B(\pi')) \cdot v(q_{i+1}) \\
		& = \sum_{\pi' = \rho \cdot q_1 \cdots q_i \in \rho \cdot Q^{i}}
		\mathbb{P}(\pi') \cdot \outM_{\formNF_{q_i}}(\s_\A(\pi'),\s_\B(\pi')) \\
		%& \geq \sum_{\rho' = q_0 \cdots q_i \in q_0 \cdot Q^{i}} \mathbb{P}(\rho')
		%\cdot \va_{\gameNF{\formNF_{q_i}}{\mu_v}}(\s_\A(q_i)) \\
		& \geq \sum_{\pi' = \rho \cdot q_1 \cdots q_i \in \rho \cdot Q^{i}}
		\mathbb{P}(\pi') \cdot v(q_i) \\
		& \geq v(\rho)
		\end{align*}
		Hence, the property $\mathcal{P}(i)$ is ensured for all $i \in \mathbb{N}$, in
		particular $\mathcal{P}(n)$ holds. It follows that:
		
		\begin{align*}
		\hspace*{-1.3cm}
		\sum_{u \in V_{\G}} u \cdot \mathbb{P}(Q^+ \cdot (Q_u)^\omega) & \geq \sum_{u
			\in V_{\G}} u \cdot \mathbb{P}(\bigcup_{k = 0}^N Q^+ \cdot (Q_u,k)^\omega) &
		\text{ since }\bigcup_{k = 0}^N Q^+ \cdot (Q_u,k)^\omega \subseteq Q^+ \cdot
		Q_u^\omega\\
		& \geq \sum_{k = 0}^N \sum_{u \in V_{\G}} u \cdot \mathbb{P}(Q^+ \cdot
		(Q_u,k)^\omega) & \text{ since } (Q_u,k)^\omega \cap (Q_u,j)^\omega = \emptyset
		\text{ if }k \neq j\\
		& \geq \sum_{k = 0}^N \sum_{u \in V_{\G}} u \cdot (\sum_{\pi \in (Q_u,k)_n}
		\mathbb{P}(\pi) - \frac{\varepsilon}{2 \cdot (N+1) \cdot N_{V_\G}}) & \text{ by
			Equation~(\ref{eqn:close_always_value}) }\\
		& \geq \sum_{k = 0}^N \sum_{u \in V} \sum_{\pi \in (Q_u,k)_n} \mathbb{P}(\pi)
		\cdot v(\pi) - \frac{\varepsilon}{2} & \text{ since }u = v(\pi), \forall \pi \in
		(Q_u,k)_n \\
		& = \sum_{k = 0}^N \sum_{\pi \in \rho \cdot Q^n \cap f^{-1}[k]}
		\mathbb{P}(\pi) \cdot v(\pi) - \frac{\varepsilon}{2} & \text{ since }\rho \cdot
		Q^n \cap f^{-1}[k] = \uplus_{u \in V_{\G}} (Q_u,k)_n \\
		& \geq \sum_{\pi \in \rho \cdot Q^n} \mathbb{P}(\pi) \cdot v(\pi) -
		\varepsilon & \text{ since }\mathbb{P}(Q^+ \cdot f^{-1}[N+1]) \leq
		\frac{\varepsilon}{2}\\
		& \geq v(\rho) - \varepsilon & \text{ by }\mathcal{P}(n)
		\end{align*}
		As this holds for all $\varepsilon > 0$, we obtain: $\sum_{u \in V_{\G}} u
		\cdot \mathbb{P}(Q^+ \cdot (Q_u)^\omega) \geq v(\rho) = \MarVal{\G}(q)$ (for $q$
		the last state of $\rho$).
	\end{proof}
	Lemma~\ref{lem:locally_opt_implies_convex_comb} is then a direct consequence.
	
	\subsection{Levy's 0-1 Law on Markov chains}
	\label{appen:levy}
	Let us state an adaptation of Levy's 0-1 Law to the context of infinite Markov
	chains to prefix-independent objectives. This comes from Theorem 5  in
	\cite{DBLP:conf/icalp/KieferMSTW20}:
	\begin{theorem}[Levy's 0-1 Law for prefix-independent objectives]
		Consider a countable Markov chain $\mathcal{M}$ on a set of states $Q$ with a
		probability $\mathbb{P}$. For all prefix-independent objectives $W \subseteq
		Q^\omega$, the sets $W$ and $\{ \rho \in Q^\omega \mid \lim_{n \rightarrow
			\infty} \mathbb{P}_{\rho_n}(W) = 1 \}$ are equal up to a null set. This also
		holds for $Q^\omega \setminus W$: the sets $Q^\omega \setminus W$ and $\{ \rho
		\in Q^\omega \mid \lim_{n \rightarrow \infty} \mathbb{P}_{\rho_n}(W) = 0 \}$ are
		equal up to a null set.
		\label{thm:levy}
	\end{theorem}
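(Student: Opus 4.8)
The plan is to derive this statement from the classical Levy 0-1 law (equivalently, Doob's convergence theorem applied to the closed, bounded martingale $(\mathbb{E}[\mathbf{1}_W \mid \mathcal{F}_n])_n$) combined with the Markov property, where $(\mathcal{F}_n)_n$ is the filtration with $\mathcal{F}_n$ generated by the first $n+1$ coordinates $\rho \mapsto (\rho_0,\ldots,\rho_n)$. The crucial identity, and the only place where prefix-independence is used, is that for $\mathbb{P}$-almost every $\rho$ one has $\mathbb{E}[\mathbf{1}_W \mid \mathcal{F}_n](\rho) = \mathbb{P}_{\rho_n}(W)$, i.e. the conditional probability of $W$ given the history coincides with the value $\MarVal{\mathcal{M}}(\rho_n)$ of the current state.

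First I would observe that, since $W$ is prefix-independent, for every $n$ we have $\mathbf{1}_W(\rho) = \mathbf{1}_W(\rho_{\geq n})$, so that $\mathbf{1}_W$ is a measurable functional of the suffix $(\rho_n,\rho_{n+1},\ldots)$. The Markov property then lets me rewrite the conditional expectation given $\mathcal{F}_n$ as the expectation of that same suffix-functional computed in the chain freshly restarted from the current state $\rho_n$, which is exactly $\mathbb{P}_{\rho_n}(W)$. The measurability of $\rho \mapsto \mathbb{P}_{\rho_n}(W)$ (automatic, since the state space $Q$ is countable) and of the events $\{\rho \mid \lim_n \mathbb{P}_{\rho_n}(W) = 1\}$ and $\{\rho \mid \lim_n \mathbb{P}_{\rho_n}(W) = 0\}$ is routine and I would dispatch it quickly.

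Next, since $W \in \mathcal{F}_\infty := \sigma(\bigcup_n \mathcal{F}_n)$ and the conditional expectations form a bounded (hence uniformly integrable) martingale, Levy's 0-1 law gives $\mathbb{E}[\mathbf{1}_W \mid \mathcal{F}_n] \to \mathbb{E}[\mathbf{1}_W \mid \mathcal{F}_\infty] = \mathbf{1}_W$ $\mathbb{P}$-almost surely. Combined with the identity above, this yields $\mathbb{P}_{\rho_n}(W) \to \mathbf{1}_W(\rho)$ for $\mathbb{P}$-almost every $\rho$. From this single almost-sure convergence both claims follow at once: for almost every $\rho \in W$ the limit equals $1$ and for almost every $\rho \notin W$ the limit equals $0$, so $\mathbf{1}_W = \mathbf{1}_{\{\lim_n \mathbb{P}_{\rho_n}(W) = 1\}}$ almost surely, meaning $W$ and $\{\rho \mid \lim_n \mathbb{P}_{\rho_n}(W) = 1\}$ agree up to a null set; symmetrically, $\mathbf{1}_{Q^\omega \setminus W} = \mathbf{1}_{\{\lim_n \mathbb{P}_{\rho_n}(W) = 0\}}$ almost surely, giving the second equality up to a null set.

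The hard part will be the careful justification of the Markov-property identity $\mathbb{E}[\mathbf{1}_W \mid \mathcal{F}_n] = \mathbb{P}_{\rho_n}(W)$: one must verify that prefix-independence genuinely rewrites $\mathbf{1}_W$ as a functional of the time-$n$ suffix, and that conditioning on the entire history $\mathcal{F}_n$ collapses to conditioning on the present state $\rho_n$. Everything else — the martingale convergence input, the measurability bookkeeping, and the extraction of the two set-equalities from one almost-sure limit — is then standard.
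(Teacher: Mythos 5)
Your argument is correct, but it is worth noting that the paper does not actually prove this statement at all: Theorem~\ref{thm:levy} is imported verbatim, with the text attributing it to Theorem 5 of the cited work of Kiefer et al.\ (and noting that the downstream Lemma~\ref{lem:chaterjee_value_0_1} also follows from Lemma 2 of \cite{CH07}). You instead supply the standard self-contained derivation: the bounded martingale $M_n = \mathbb{E}[\mathbf{1}_W \mid \mathcal{F}_n]$ converges almost surely to $\mathbb{E}[\mathbf{1}_W \mid \mathcal{F}_\infty] = \mathbf{1}_W$ by Levy's upward theorem, and prefix-independence plus the Markov property identify $M_n(\rho)$ with $\mathbb{P}_{\rho_n}(W)$, so that $\mathbb{P}_{\rho_n}(W) \to \mathbf{1}_W(\rho)$ almost surely, which yields both set equalities up to a null set simultaneously. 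You correctly isolate the only step requiring care, namely the identity $\mathbb{E}[\mathbf{1}_W \mid \mathcal{F}_n](\rho) = \mathbb{P}_{\rho_n}(W)$; for a countable state space this is unproblematic, since prefix-independence makes $\mathbf{1}_W$ a measurable functional of the shifted path $\rho_{\geq n}$ and the conditional law of that suffix given the (discrete) history is exactly the law of the chain restarted at $\rho_n$, with no regular-conditional-probability subtleties. The trade-off is the usual one: the paper's citation keeps the appendix short and defers the measure theory to the literature, while your proof makes the result self-contained and makes explicit where prefix-independence enters, at the cost of invoking martingale convergence as a black box.
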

	
	Lemma~\ref{lem:chaterjee_value_0_1} is then a direct consequence of this
	theorem. Note that Lemma~\ref{lem:chaterjee_value_0_1} also comes from Lemma 2
	in \cite{CH07}.
	
	\subsection{Proof of Theorem~\ref{thm:subgame_optimal_arbitrary_strategy}}
	\label{proof:lem_optimal_arbitrary_strategy}
	%\bbcomment{Faut-il plutot exprimer le deuxieme condition avec $W_u$ ?}
	\begin{proof}
		The first property is necessary by
		Lemma~\ref{prop:sub_game_opt_implies_locally_opt}. Let us show that the second
		one is also necessary. Consider a Player $\A$ strategy $\s_\A$ such that there
		is a finite paths $\rho \in Q^+$, a deterministic Player $\B$ strategy $\s_\B$
		and a positive value $u \in V_{\G} \setminus \{ 0 \}$ such that
		$\mathbb{P}_\rho^{\s_\A^\rho,\s_\B^\rho}[W \cap Q^* \cdot (Q_u)^\omega] <
		\mathbb{P}_\rho^{\s_\A^\rho,\s_\B^\rho}[Q^* \cdot (Q_u)^\omega]$. Consider the
		countable Markov chain induced by both strategies $\s_\A^\rho$ and $\s_\B^\rho$
		whose initial state in $\rho$. We denote by $\mathbb{P}$ the probability
		function in that Markov chain. Consider the prefix-independent objective $W_u :=
		W \cup (Q^* \cdot (Q \setminus Q_u))^\omega$. We have:
		\begin{align*}
		\mathbb{P}[W_u] & = \mathbb{P}[W_u \cap Q^* \cdot (Q_u)^\omega] +
		\mathbb{P}[W_u \cap (Q^* \cdot (Q \setminus Q_u))^\omega] \\
		& = \mathbb{P}[W \cap Q^* \cdot (Q_u)^\omega] + \mathbb{P}[(Q^* \cdot (Q
		\setminus Q_u))^\omega] \\
		& < \mathbb{P}[Q^* \cdot (Q_u)^\omega] + \mathbb{P}[(Q^* \cdot (Q \setminus
		Q_u))^\omega] \\
		& = 1
		\end{align*}
		Let us now show that there is a finite paths whose last state is in $Q_u$ and
		from which the probability of $W_u$ is less than $u/2$. By
		Lemma~\ref{lem:chaterjee_value_0_1}, there is state $\pi \in Q^*$ (which
		corresponds to a finite path) in the Markov chain such that $\mathbb{P}_\pi[W_u]
		< u/2$. If $\pi \in Q^* \cdot Q_u$, we have shown what we want.	Assume now
		towards a contradiction that it is not the case and that for all paths $\pi' \in
		Q^* \cdot Q_u$, we have $\mathbb{P}_{\pi \cdot \pi'}[W_u] \geq u/2$. Then:
		\begin{align*}
		\mathbb{P}_\pi[W_u] & = \mathbb{P}_\pi[W_u \cap (Q \setminus Q_u)^* \cdot Q_u
		\cdot Q^\omega] + \mathbb{P}_\pi[W_u \cap (Q \setminus Q_u)^\omega] \\
		& = \sum_{\pi' \in (Q \setminus Q_u)^* \cdot Q_u} \mathbb{P}_{\pi}(\pi') \cdot
		\mathbb{P}_{\pi \cdot \pi'}[W_u] + \mathbb{P}_\pi[(Q \setminus Q_u)^\omega] \\
		& \geq \sum_{\pi' \in (Q \setminus Q_u)^* \cdot Q_u} \mathbb{P}_{\pi}(\pi')
		\cdot \frac{u}{2} + \mathbb{P}_\pi[(Q \setminus Q_u)^\omega] \\
		& = \mathbb{P}_\pi[(Q \setminus Q_u)^* \cdot Q_u \cdot Q^\omega] \cdot
		\frac{u}{2} + \mathbb{P}_\pi[(Q \setminus Q_u)^\omega] \\
		& \geq \frac{u}{2}
		\end{align*}
		This is a contradiction since $\mathbb{P}_\pi[W_u] < u/2$. Hence, there exists
		some path $\pi \in Q^* \cdot Q_u$ such that $\mathbb{P}_\pi[W_u] < u/2$. Since
		$W \subseteq W_u$, it follows that $\mathbb{P}_\pi[W] \leq \mathbb{P}_\pi[W_u] <
		u/2$. That is, the residual strategy $\s_\A^{\rho \cdot \pi}$ is not optimal
		from the last state of $\rho \cdot \pi$, that is the Player $\A$ strategy
		$\s_\A$ is not subgame optimal.
		
		Let us now show that these conditions are sufficient. Let $\rho \in Q^+$. Let
		us show that the residual strategy $\s_\A^\rho$ is optimal from $\rho$. Consider
		a Player $\B$ deterministic strategy $\s_\B$. We have, by
		Lemma~\ref{lem:locally_opt_implies_same_value}:
		\begin{align*}
		\mathbb{P}^{\s_\A^\rho,\s_\B^\rho}_\rho[W] & =
		\mathbb{P}^{\s_\A^\rho,\s_\B^\rho}_\rho[W \cap ( \bigcup_{u \in V_{\G}} Q^+
		\cdot (Q_u)^\omega)] = \sum_{u \in V_{\G}}
		\mathbb{P}^{\s_\A^\rho,\s_\B^\rho}_\rho[W \cap Q^+ \cdot (Q_u)^\omega] \\
		& \geq \sum_{u \in V_{\G} \setminus \{ 0 \}} u \cdot
		\mathbb{P}^{\s_\A^\rho,\s_\B^\rho}_\rho[W \cap Q^+ \cdot (Q_u)^\omega] %+ 0
		%\cdot \mathbb{P}^{\s_\A^\rho,\s_\B^\rho}_\rho[W \cap Q^+ \cdot (Q_0)^\omega]
		\\
		& \geq \sum_{u \in V_{\G} \setminus \{ 0 \}} u \cdot
		\mathbb{P}^{\s_\A^\rho,\s_\B^\rho}_\rho[Q^+ \cdot (Q_u)^\omega] + 0 \cdot
		\mathbb{P}^{\s_\A^\rho,\s_\B^\rho}_\rho[Q^+ \cdot (Q_0)^\omega]\\
		& = \sum_{u \in V_{\G}} u \cdot \mathbb{P}^{\s_\A^\rho,\s_\B^\rho}_\rho[Q^+
		\cdot (Q_u)^\omega]  \geq \MarVal{\G}(q)					
		\end{align*}
		Where the last inequality comes from
		Lemma~\ref{lem:locally_opt_implies_convex_comb_ok}.	It follows that the Player
		$\A$ residual strategy $\s_\A^\rho$ is optimal from $\rho$%, since deterministic
		%strategies suffice to play $\varepsilon$-optimally in MDPs
		.%, it follows that the value of the Player $\A$ strategy $\s_\A$ is at least
		%$v(q_0)$ from $q_0$. That is, it is optimal from $q_0$.
	\end{proof}
	
	\section{Proof from Section~\ref{sec:transfer}}
	\label{proof:thm_transfer_almost_sure_optimal}
	
	\subsection{(Positively) optimal strategy that is locally optimal but not
		subgame optimal}
	\label{appen:exemple_opt_loc_opt_not_sub_game_opt}
	%\subsection{Proof of Proposition~\ref{prop:finite_set_optimal_start_gnf}}
	%\label{proof:prop_finite_set_optimal_start_gnf}
	\begin{figure}[t]
		\centering
		\includegraphics[scale=1]{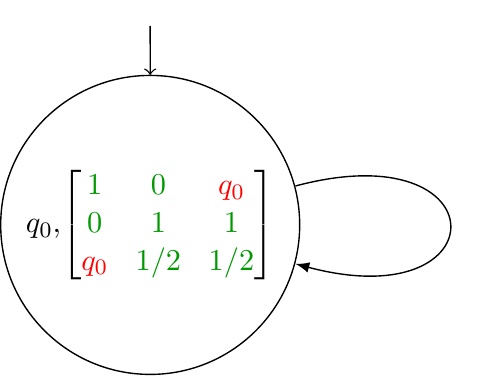}
		\caption{A reachability game.}
		\label{fig:reach_opt_not_sub_game_opt}
	\end{figure}
	
	Consider the game of Figure~\ref{fig:reach_opt_not_sub_game_opt}: it is a
	reachability game, that is if it loops indefinitely on $q_0$, the value is 0.
	The value of the state $q_0$ is $1/2$, it is achieved by a Player $\A$
	positional strategy playing the two top rows with probability $1/2$ and by a
	Player $\B$ positional strategy playing the two columns with probability $1/2$. 
	
	However, denoting $a_1,a_2$ and $a_3$ the three actions available to Player
	$\A$ at state $q_0$ from top to bottom, consider the following Player $\A$
	strategy $\s_\A$ such that $\s_\A(q_0)(a_1) = \s_\A(q_0)(a_2) := 1/2$ and
	$\s_\A(q_0^{n+1})(a_3) := 1$ for all $n \geq 1$. Then, this strategy is locally
	optimal and it is optimal. Indeed, if the game loops at least once on $q_0$,
	then there was the same probability to loop on $q_0$ and to reach outcome 1.
	Hence, the mean of the values is at least $1/2$ which is the value of the state
	$q_0$. However, it is not subgame optimal since after the game loops once on
	$q_0$, then Player $\B$ can ensure value 0 by playing indefinitely the left
	column with probability 1. 
	
	Similarly, consider a Player $\A$ strategy $\s_\A$ such that $\s_\A(q_0)(a_1) =
	\s_\A(q_0)(a_2) := 1/4$, $\s_\A(q_0)(a_3) := 1/2$ and $\s_\A(q_0^{n+1})(a_3) :=
	1$ for all $n \geq 1$. Then, this strategy is locally optimal and
	positively-optimal, however it is not optimal since the value of this strategy
	is $1/4$, which can be enforced by Player $\B$ by playing indefinitely the left
	column with probability 1.
	
	\subsection{Formal definition of the new games}
	\label{subsec:game_value_one_everywhere}
	\subsubsection{Game where all states have value 1}
	\begin{proposition}
		\label{prop:G_W}
		Consider a PI objective $\emptyset \subsetneq W \subseteq \colSet^\omega$%
		%that has a finite representative
		. Then, there is a concurrent game $\G_W = \langle \Aconc_W,W \rangle$ where
		all states have value $1$.
	\end{proposition}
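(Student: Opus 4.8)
The plan is to realize $\G_W$ as a game in which Player $\A$ moves alone on a clique carrying every color, so that she can force the sequence of visited colors to be any word she likes. Concretely, I would take the state space to be $Q := \colSet$ with $\colFunc := \mathrm{id}_\colSet$ (so every color labels exactly one state and all colors occur), give Player $\B$ a single dummy action at each state ($|\setB_q| = 1$), and let Player $\A$'s actions at a state be $\setA_q := Q$, each action $q'$ leading deterministically to state $q'$ (formally, one Nature state $d_{q'} \in \distribSet$ per target with $\distribFunc(d_{q'})$ the Dirac distribution on $q'$, and $\delta(q, q', \ast) := d_{q'}$). Thus from any state Player $\A$ alone chooses the next state, the play is deterministic once her actions are fixed, and she can generate an arbitrary element of $Q^\omega$, hence an arbitrary color word in $\colSet^\omega$.

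Since $W \neq \emptyset$, fix a winning word $w = w_0 w_1 w_2 \cdots \in W \subseteq \colSet^\omega$. From an arbitrary initial state $q_0$, I would let Player $\A$ play the (history-dependent) strategy $\s_\A$ that, after a history of length $m+1$, moves to the state $w_m$; this is legal because $\setA_{q} = Q$ for every $q$. As Player $\B$ has no real choice and transitions are deterministic, for every $\s_\B$ the unique resulting play from $q_0$ is $q_0\, w_0\, w_1\, w_2 \cdots$, whose color sequence is $\colFunc(q_0)\, w_0\, w_1 \cdots = \colFunc(q_0)\cdot w$. The key point is that this infinite word lies in $W$: by prefix-independence, $\colFunc(q_0)\cdot w \in W \Leftrightarrow (\colFunc(q_0)\cdot w)_{\geq 1} = w \in W$, and $w \in W$ by choice. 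Hence the play is surely winning, i.e. $\prob{\Aconc_W,q_0}{\s_\A,\s_\B}[W] = 1$ for every (forced) $\s_\B$, so $\MarVal{\G_W}[\s_\A](q_0) = 1$ and therefore $\MarVal{\G_W}(q_0) = 1$. As $q_0$ was arbitrary, every state of $\G_W$ has value $1$.

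I do not expect a genuine obstacle here: the construction is elementary and everything is deterministic, so the only thing to be careful about is that the color of the initial state is prepended to $w$; this single extra symbol is absorbed precisely by the prefix-independence of $W$ (and more generally any finite prefix before Player $\A$ starts generating $w$ is harmless). Note also that $\s_\A$ genuinely uses memory to enumerate $w$, but this is irrelevant, since the statement only asserts that the value is $1$ and not that a positional witness exists.
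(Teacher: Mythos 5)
Your construction and argument coincide with the paper's own proof: the same clique on $Q := \colSet$ with identity coloring, a dummy Player $\B$, Player $\A$ choosing the successor deterministically, and the same use of prefix-independence to absorb the initial state's color before following a fixed word of $W$. The proposal is correct and essentially identical to the paper's proof.
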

	\begin{proof}
		We consider the concurrent arena $\Aconc_W = \AparamConc$ where Player $\A$
		can choose alone the sequence of states seen among states of colors in
		$\colSet$, that is:
		\begin{itemize}
			\item $Q := \colSet$, for all $q \in Q$, we have $A_q := \colSet$ and $B_q :=
			\{ b \}$;
			\item $\distribSet := \{ d_q \mid q \in Q \}$ and for all $q \in Q$, we have
			$\distribFunc(d_q)(q) = 1$;
			\item For all $q \in Q$ and $q' \in A_q = \colSet$, we have $\delta(q,q',b)
			:= d_{q'}$;
			\item for all $q \in Q$, we have $\colFunc(q) := q$.
		\end{itemize}
		This game is in fact turn-based since only Player $\A$'s actions affect the
		sequence of states. She can decide exactly the sequence of states -- and colors
		-- that is seen. Hence, she can follow exactly a path $\rho \in W$ (except for
		the first color, which does not matter since $W$ is prefix-independent), hence
		the values of states are 1.
	\end{proof}
	
	\subsubsection{The definition of the game $\G_u$}
	\begin{definition}
		\label{def:game_g_u}
		%Consider a concurrent game $\G = \Games{\Aconc}{W}$ and a 
		For a positive value $u \in V_{\G} \setminus \{ 0 \}$, we define the game
		$\G_u = \Games{\Aconc_u}{W}$ with $\Aconc_u = \langle Q'_u,(A'_q)_{q \in
			Q'_u},(B'_q)_{q \in Q'_u},\distribSet,\delta',\distribFunc',\colSet,\colFunc
		\rangle$ where we refer to the arena $\Aconc_W$ as a sink state to which there
		are some transitions:
		\begin{itemize}
			\item $Q := Q_u \cup \Aconc_W$;
			\item For all $q \in Q_u$, we have $A'_q := A_{\formN_q}$ and $B'_q := B_q$;
			\item $\distribSet' := \{ d_{q,\sigma_q,b} \mid q \in Q_u,\; \sigma_q \in
			A_q, b \in B \}$;
			\item For all $q \in Q_u$ and $\sigma_q \in A_q = S_{\formN_q}$ we have:
			$\delta'(q,\sigma_q,b) := d_{q,\sigma_q,b}$;
			\item For all $q \in Q_u,\; \sigma_q = \sum_{a \in A} \sigma_q(a) \cdot
			\mathds{1}_{\{ a \}} \in A'_q$, $b \in B'_q$ and $q' \in Q$, we have:
			\begin{displaymath}
			\distribFunc'(d_{q,\sigma_q,b})(q') := 
			\begin{cases}
			\sum_{a \in A} \sigma_q(a) \cdot \distribFunc(\delta(q,a,b))(q') & \text{ if
			} q' \in Q_u\\
			\sum_{a \in A} \sigma_q(a) \cdot \distribFunc(\delta(q,a,b))[Q \setminus Q_u]
			& \text{ if } q' = \Aconc_W\\
			\end{cases}
			\end{displaymath}
		\end{itemize}
		%The game $\Aconc_W$ is defined in the proof of
		%Proposition~\ref{prop:G_W}.%Definition~\ref{def:game_g_u}.
	\end{definition}
	%\subsection{Definition of the game $\G_u$}
	%\label{subsec:G_u}

	%This section of the appendix is devoted to the proof of
	%Theorem~\ref{thm:transfer_almost_sure_optimal}. Let us first introduce the
	%notion of finite of representative $W$.
	%\begin{definition}
	%	We say that $W$ has a finite representative if there exists a finite subset
	%$\colSet_{\mathsf{fin}} \subseteq \colSet$ such that $W \cap
	%(\colSet_{\mathsf{fin}})^\omega \neq \emptyset$.
	%\end{definition}
	%Straightforwardly, if $W$ has no finite representative, then in all finite
	%concurrent game, all paths are not in $W$. Hence, the value of all such games
	%is 0 and the theorem straightforwardly holds. In the following, we consider the
	%case where $W$ has a finite representative. 
	
	%Let us first state a proposition stating that there is a finite concurrent
	%game with objective $W$ where all states have value $1$.

	%For all $\sigma_q \in A_q$, we denote $\sigma_q := \sum_{a \in A} \sigma_q(a)
	%\cdot \mathds{1}_{\{ a \}}$ its decomposition into elements .
	
	\subsection{Proof of Lemma~\ref{lem:value_one_Gu}}
	\label{proof:lem_value_one_Gu}
	%This definition ensures the following proposition.
	%\begin{lemma}
	%	Consider a concurrent game $\Aconc$ with a prefix-independent objective and a
	%value $u \in v[Q] \setminus \{ 0 \}$. Then, if, from all states in $Q_u$ there
	%is an optimal strategy that is locally optimal in $\G$ then, all states in the
	%game $\G_u$ have a positive value.
	%\end{lemma}
	In fact we show a stronger result.
	\begin{lemma}
		Consider some $u \in V_{\G} \setminus \{ 0 \}$ and the game $\G_u$. Consider
		also the objective (that is not prefix-independent) $W_u := W \cup Q^* \cdot (Q
		\setminus Q_u) \cdot Q^\omega \subseteq Q^\omega$ and assume that there exists,
		in $\G$ a locally optimal strategy that is positively-optimal w.r.t. the
		objective $W_u$ from all states in $Q_u$. Then, the values of all states in
		$\G_u$ is positive: $\MarVal{\G_u}(q) > 0$, for all $q$ in $\G_u$.
		\label{lem:value_one_qualitatively_Gu}
	\end{lemma}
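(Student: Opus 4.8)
The plan is to transfer the hypothesized strategy $\s_\A$ of $\G$ into a strategy $\s_\A'$ of $\G_u$ and to show, via an exact coupling between plays of $\G_u$ and plays of $\G$, that $\s_\A'$ guarantees a positive probability of $W$ from every state. First I would build $\s_\A'$. For a history $\rho \in Q_u^+$ ending at $q$, local optimality of $\s_\A$ guarantees $\s_\A(\rho) \in \Opt_\A(\formN_q)$; by Proposition~\ref{prop:finite_set_optimal_start_gnf} we have $\Opt_\A(\formN_q) = \Dist(A_{\formN_q})$ with $A_{\formN_q} = A'_q$, so I can pick a distribution $\s_\A'(\rho) \in \Dist(A'_q)$ whose induced $\GF$-strategy over the original actions is exactly $\s_\A(\rho)$. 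This is precisely where local optimality is used: only optimal $\GF$-strategies are available as actions in $\G_u$. On any history that has reached the sink region $\Aconc_W$, I let $\s_\A'$ follow a fixed Player-$\A$ strategy that surely produces a color sequence in $W$; such a strategy exists because in $\G_W$ Player $\A$ alone controls the colors and $W \neq \emptyset$, and by prefix-independence the $Q_u$-prefix is irrelevant. Hence $\Aconc_W$ is an absorbing region from which $W$ holds with probability $1$, and every state of $\G_u$ lying in $\Aconc_W$ has value $1$.

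The heart of the argument is the coupling. Fix a Player $\B$ strategy $\s_\B'$ in $\G_u$ and a state $q \in Q_u$. Since $B'_{q'} = B_{q'}$ for all $q' \in Q_u$, the restriction of $\s_\B'$ to $Q_u$-histories is also a Player $\B$ strategy of $\G$, which I extend arbitrarily once $Q_u$ is left, obtaining $\s_\B$. By the choice of $\s_\A'$ and the definition of $\distribFunc'$, along every $Q_u$-history the transition probabilities to states of $Q_u$ coincide in $\G_u$ and in $\G$, while the probability of moving to $\Aconc_W$ in $\G_u$ equals the probability of leaving $Q_u$ in $\G$. The two plays therefore couple: the event ``stay in $Q_u$ forever and satisfy $W$'' has the same probability in both games (the colors along a $Q_u$-path are identical), and ``reach $\Aconc_W$'' in $\G_u$ matches ``leave $Q_u$'' in $\G$. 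Using that $W$ is won almost surely after reaching $\Aconc_W$, I obtain
\[
\mathbb{P}^{\s_\A',\s_\B'}_q[W] = \mathbb{P}^{\s_\A,\s_\B}_q[W \cap (Q_u)^\omega] + \mathbb{P}^{\s_\A,\s_\B}_q[Q^* \cdot (Q \setminus Q_u) \cdot Q^\omega] = \mathbb{P}^{\s_\A,\s_\B}_q[W_u].
\]

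Finally I would close the argument. Since $q \in Q_u$ has value $u > 0$ for $W$ and $W \subseteq W_u$, the value of $q$ for $W_u$ is at least $u > 0$; positive-optimality of $\s_\A$ with respect to $W_u$ then yields $\inf_{\s_\B''} \mathbb{P}^{\s_\A,\s_\B''}_q[W_u] > 0$. Combining this with the displayed equality gives $\mathbb{P}^{\s_\A',\s_\B'}_q[W] \geq \inf_{\s_\B''} \mathbb{P}^{\s_\A,\s_\B''}_q[W_u] > 0$ for every Player $\B$ strategy $\s_\B'$ in $\G_u$, so $\MarVal{\G_u}[\s_\A'](q) > 0$ and hence $\MarVal{\G_u}(q) > 0$; together with the value-$1$ states of $\Aconc_W$ this covers all of $\G_u$. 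I expect the main obstacle to be the careful bookkeeping of the coupling, and in particular verifying that playing the chosen distribution $\s_\A'(\rho) \in \Dist(A'_q)$ in $\G_u$ reproduces exactly the $\GF$-strategy $\s_\A(\rho)$ over the original actions, and that the almost-sure win after entering $\Aconc_W$ accounts precisely for the ``leave $Q_u$'' disjunct of $W_u$.
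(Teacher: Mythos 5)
Your proposal is correct and follows essentially the same route as the paper's proof: embed the locally optimal, positively-$W_u$-optimal strategy of $\G$ into $\G_u$ by decomposing each $\s_\A(\rho)$ as a convex combination of the actions $A'_q$, couple the two probability spaces so that $\mathbb{P}^{\s_\A',\s_\B'}_q[W] = \mathbb{P}^{\s_\A,\s_\B}_q[W_u]$, and invoke positive optimality w.r.t.\ $W_u$. The only difference is presentational — you argue directly that the value is positive, whereas the paper argues by contradiction against the minimum $r$ of the strategy's $W_u$-values over $Q_u$ — which changes nothing of substance.
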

	The proof of this lemma is quite straightforward but it is quite long. It is
	due to how the game $\G_u$ is formally defined.
	\begin{proof}
		First, note that all the values of all states in $\G_u$ w.r.t. the objective
		$W_u$ are at least $u$ since $W \subseteq W_u$. Let us now consider such a
		locally optimal strategy $\s_\A$ that is positively-optimal w.r.t. the objective
		$W_u$ from all states in $Q_u$ and let $r = \min_{q \in Q_u}
		\MarVal{\Games{\Aconc}{W_u}}[\s_\A](q) > 0$ be the minimum of the values of the
		strategy $\s_\A$ in $Q_u$ w.r.t. the objective $Q_u$. %By contraposition, 
		Let us assume that one state $q \in Q_u$ has value 0 in the game $\G_u$
		(w.r.t. to the objective $W$) and consider a Player $\B$ strategy $\s_\B$ whose
		value in the game $\G_u$ is at most $r/2 > 0$ from the state $q$: i.e.
		$\MarVal{\G_u}[\s_\B](q) \leq r/2$ w.r.t. $W$. Note that the Player $\B$
		strategy $\s_\B$ is a strategy in $\Aconc$ and in $\Aconc_u$. %Consider now a
		%Player $\A$ locally optimal strategy $\s_\A$ from the state $q$ in the game
		%$\G$. 
		For all $\rho = \rho' \cdot q \in (Q_u)^+$, we can write $\s_\A(\rho)$ as a
		convex combination of elements in $A'_q$: $\s_\A(\rho) = \sum_{\sigma \in A'_q}
		\mu_{\sigma}(\rho) \cdot \sigma$. Let us now define the Player $\A$ strategy
		$\s_\A'$ in the game $\G_u$ that mimics the strategy $\s_\A$, i.e. for all $\rho
		= \rho' \cdot q \in (Q_u)^+$ and $\sigma \in A'_q$, we set:
		$\s_\A'(\rho)(\sigma) := \mu_{\sigma}(\rho)$. When the game enters $\Aconc_W$,
		the strategy $\s_\A'$ switches to a sure strategy. Hence, since the objective
		$W$ is prefix-independent, for all $\rho \in (Q_u)^* \cdot \Aconc_W$, we have
		$\MarVal{\G_u}(\s_\A')(\rho) = 1$. We denote by $\mathbb{P}^{\s_\A,\s_\B}$ the
		probability measure induced by strategies $\s_\A,\s_\B$ in $\G$ from state $q$
		and by $\mathbb{P}^{\s_\A',\s_\B,u}$ the probability measure induced by
		$\s_\A',\s_\B$ in $\G_u$ from state $q$. 
		
		Let us now show by induction on $n$ the property $\mathcal{P}(n)$:
		%\textquotedblleft
		for all $\rho \in (Q_u)^n$, we have $\mathbb{P}^{\s_\A,\s_\B}(\rho) =
		\mathbb{P}^{\s_\A',\s_\B,u}(\rho)$ and $\mathbb{P}^{\s_\A,\s_\B}(Q_u^{\leq n}
		\cdot (Q \setminus Q_u)) = \mathbb{P}^{\s_\A',\s_\B,u}(Q_u^{\leq n} \cdot
		\Aconc_W)$. This straightforwardly holds for $n = 1$. Assume now that this holds
		for some $n-1 \geq 0$. Consider some $\rho = \rho' \cdot q = \rho'' \cdot q'
		\cdot q \in (Q_u)^n \cdot Q$. We have:
		\begin{align*}
		\mathbb{P}^{\s_\A,\s_\B}(\rho) & = \mathbb{P}^{\s_\A,\s_\B}(\rho') \cdot
		\mathbb{P}^{\s_\A,\s_\B}_{\rho'}(q) \\
		& = \mathbb{P}^{\s_\A,\s_\B,u}(\rho') \cdot \left( \sum_{a \in A} \sum_{b \in
			B} \s_\A(\rho')(a) \cdot \s_\B(\rho')(b) \cdot \distribFunc(\delta(q',a,b))(q)
		\right) \\
		& = \mathbb{P}^{\s_\A,\s_\B,u}(\rho') \cdot \sum_{b \in B}  \s_\B(\rho')(b)
		\cdot \left( \sum_{a \in A} \s_\A(\rho')(a) \cdot 
		\distribFunc(\delta(q',a,b))(q) \right) \\
		& = \mathbb{P}^{\s_\A,\s_\B,u}(\rho') \cdot \sum_{b \in B}  \s_\B(\rho')(b)
		\cdot \left( \sum_{a \in A} \sum_{\sigma \in A'_q} \mu_{\sigma}(\rho') \cdot
		\sigma(a) \cdot  \distribFunc(\delta(q',a,b))(q) \right) \\
		& = \mathbb{P}^{\s_\A,\s_\B,u}(\rho') \cdot \sum_{b \in B}  \s_\B(\rho')(b)
		\cdot \left( \sum_{\sigma \in A'_q} \mu_{\sigma}(\rho') \cdot \sum_{a \in A}
		\sigma(a) \cdot  \distribFunc(\delta(q',a,b))(q) \right) 
		\end{align*}
		If $q \in Q_u$, we obtain:
		\begin{align*}
		\mathbb{P}^{\s_\A,\s_\B}(\rho) & = \mathbb{P}^{\s_\A,\s_\B,u}(\rho') \cdot
		\sum_{b \in B}  \s_\B(\rho')(b) \cdot \sum_{\sigma \in A'_q} \mu_{\sigma}(\rho')
		\cdot \distribFunc'(d_{q'},\sigma_q,b)(q) \\
		& = \mathbb{P}^{\s_\A,\s_\B,u}(\rho') \cdot \sum_{b \in B}  \s_\B(\rho')(b)
		\cdot \sum_{\sigma \in A'_q} \s_\A'(\rho')(\sigma) \cdot
		\distribFunc'(d_{q'},\sigma,b)(q) \\
		& = \mathbb{P}^{\s_\A,\s_\B,u}(\rho') \cdot \sum_{\sigma \in A'_q}  \sum_{b
			\in B} \s_\A'(\rho')(\sigma) \cdot \s_\B(\rho')(b) \cdot
		\distribFunc'(\delta'(q',\sigma,b))(q) \\
		& = \mathbb{P}^{\s_\A,\s_\B,u}(\rho') \cdot
		\mathbb{P}^{\s_\A',\s_\B,u}_{\rho'}(q) \\
		& = \mathbb{P}^{\s_\A',\s_\B,u}(\rho)
		\end{align*}
		Furthermore, we have:
		\begin{align*}
		\mathbb{P}^{\s_\A,\s_\B}[Q_u^n \cdot (Q \setminus Q_u)] & = \sum_{\rho' \in
			(Q_u)^n} \mathbb{P}^{\s_\A,\s_\B}(\rho') \cdot
		\mathbb{P}^{\s_\A,\s_\B}_{\rho'}[Q \setminus Q_u] \\
		& = \sum_{\rho' \in (Q_u)^n} \mathbb{P}^{\s_\A,\s_\B}(\rho') \cdot \sum_{q \in
			Q \setminus Q_u} \mathbb{P}^{\s_\A,\s_\B}_{\rho'}(q) \\
		& = \sum_{\rho' \in (Q_u)^n} \mathbb{P}^{\s_\A,\s_\B}(\rho') \cdot \sum_{q \in
			Q \setminus Q_u} \sum_{b \in B}  \s_\B(\rho')(b) \cdot \left( \sum_{\sigma \in
			A_q} \mu_{\sigma}(\rho') \cdot \sum_{a \in A} \sigma(a) \cdot 
		\distribFunc(\delta(q',a,b))(q) \right) \\
		& = \sum_{\rho' \in (Q_u)^n} \mathbb{P}^{\s_\A,\s_\B}(\rho') \cdot \sum_{b \in
			B}  \s_\B(\rho')(b) \cdot \left( \sum_{\sigma \in A_q} \mu_{\sigma}(\rho') \cdot
		\sum_{a \in A} \sigma(a) \cdot  \sum_{q \in Q \setminus Q_u} 
		\distribFunc(\delta(q',a,b))(q) \right) \\
		& = \sum_{\rho' \in (Q_u)^n} \mathbb{P}^{\s_\A,\s_\B}(\rho') \cdot \sum_{b \in
			B}  \s_\B(\rho')(b) \cdot \left( \sum_{\sigma \in A_q} \mu_{\sigma}(\rho') \cdot
		\sum_{a \in A} \sigma(a) \cdot \distribFunc(\delta(q',a,b))[Q \setminus Q_u]
		\right) \\
		& = \sum_{\rho' \in (Q_u)^n} \mathbb{P}^{\s_\A,\s_\B}(\rho') \cdot \sum_{b \in
			B}  \s_\B(\rho')(b) \cdot \left( \sum_{\sigma \in A_q} \mu_{\sigma}(\rho') \cdot
		\distribFunc'(d_{q',\sigma,b})[\Aconc_W] \right) \\
		& = \sum_{\rho' \in (Q_u)^n} \mathbb{P}^{\s_\A,\s_\B}(\rho') \cdot
		\sum_{\sigma \in A'_q} \sum_{b \in B} \s_\A'(\rho')(\sigma) \cdot
		\s_\B(\rho')(b) \cdot \distribFunc'(\delta(q',\sigma,b))[\Aconc_W]  \\
		& = \sum_{\rho' \in (Q_u)^n} \mathbb{P}^{\s_\A,\s_\B}(\rho') \cdot
		\mathbb{P}^{\s_\A',\s_\B,u}_{\rho'}[\Aconc_W]  \\
		& = \mathbb{P}^{\s_\A',\s_\B,u}[Q_u^n \cdot \Aconc_W]
		\end{align*}
		
		It follows that:
		\begin{align*}
		\mathbb{P}^{\s_\A,\s_\B}[(Q_u)^{\leq n} \cdot (Q \setminus Q_u)] & =
		\mathbb{P}^{\s_\A,\s_\B}[(Q_u)^{\leq n-1} \cdot (Q \setminus Q_u)] +
		\mathbb{P}^{\s_\A,\s_\B}[(Q_u)^n \cdot (Q \setminus Q_u)] \\
		& = \mathbb{P}^{\s_\A',\s_\B,u}[(Q_u)^{\leq n-1} \Aconc_W] +
		\mathbb{P}^{\s_\A',\s_\B,u}[(Q_u)^n \cdot \Aconc_W] \\
		& = \mathbb{P}^{\s_\A',\s_\B,u}[(Q_u)^{\leq n} \cdot \Aconc_W]
		\end{align*}	
		Hence, $\mathcal{P}(n)$ holds. In fact, this property holds for all $n \in
		\N$. From this, we obtain:
		\begin{itemize}
			\item $\mathbb{P}^{\s_\A,\s_\B}[(Q_u)^* \cdot (Q \setminus Q_u)] =
			\mathbb{P}^{\s_\A',\s_\B,u}[(Q_u)^* \cdot \Aconc_W]$;
			\item $\mathbb{P}^{\s_\A,\s_\B}[W \cap (Q_u)^\omega] =
			\mathbb{P}^{\s_\A',\s_\B,u}[W \cap (Q_u)^\omega]$.
		\end{itemize}
		That is:
		\begin{align*}
		\mathbb{P}^{\s_\A,\s_\B}[W_u] & = \mathbb{P}^{\s_\A,\s_\B}[W_u \cap
		(Q_u)^\omega] + \mathbb{P}^{\s_\A,\s_\B}[W_u \cap (Q_u)^* \cdot (Q \setminus
		Q_u)] \\ 
		& = \mathbb{P}^{\s_\A,\s_\B}[W \cap (Q_u)^\omega] +
		\mathbb{P}^{\s_\A,\s_\B}[(Q_u)^* \cdot (Q \setminus Q_u)] \\ 
		%& = \mathbb{P}^{\s_\A',\s_\B,u}(W \cap \square Q_u) +
		%\mathbb{P}^{\s_\A,\s_\B}[W \mid \diamondsuit (Q \setminus Q_u)] \cdot
		%\mathbb{P}^{\s_\A,\s_\B}[\diamondsuit (Q \setminus Q_u)] \\
		%& \leq \mathbb{P}^{\s_\A',\s_\B,u}(W \cap \square Q_u) +
		%\mathbb{P}^{\s_\A,\s_\B}[\diamondsuit (Q \setminus Q_u)] \\
		& = \mathbb{P}^{\s_\A',\s_\B,u}[W \cap (Q_u)^\omega] +
		\mathbb{P}^{\s_\A',\s_\B,u}[(Q_u)^* \cdot \Aconc_W] \\
		& = \mathbb{P}^{\s_\A',\s_\B,u}[W \cap (Q_u)^\omega] +
		\mathbb{P}^{\s_\A',\s_\B,u}[W \cap (Q_u)^* \cdot \Aconc_W] \\
		& = \mathbb{P}^{\s_\A',\s_\B,u}[W] \leq r/2 < r
		\end{align*}
		The inequality follow by the choice of the Player $\B$ strategy $\s_\B$. This
		result is in contradiction with the definition of $r$: the minimum of the values
		of the states in $Q_u$ w.r.t. the strategy $\s_\A$ and the objective $W_u$. In
		fact, there is no such state $q \in Q_u$ whose value in $\G_u$ is 0. %It follows
		%that no Player $\A$ locally optimal strategy can be optimal from $q$. 
		The lemma follows.
	\end{proof}
	
	\subsection{Proof of Lemma~\ref{prop:glued_strategy}}
	\label{proof:prop_glued_strategy}
	\begin{proof}
		We denote the strategy $\s_\A[(\s_\A^u)_{u \in V_{\G} \setminus \{ 0 \}}]$ by
		$\s_\A$ and we apply Theorem~\ref{thm:subgame_optimal_arbitrary_strategy} to
		show that it is subgame optimal. The strategy $\s_\A$ is locally optimal at all
		states in $Q_u$ for all positive $u \in V_{\G} \setminus \{ 0 \}$, by definition
		of the games $\G_u$. Furthermore, since the values of all states $q \in Q_0$ is
		0, the values of the game in normal form in $Q_0$ are 0. Hence, for all state $q
		\in Q_0$, any strategy in the game in normal form $\formNF_q$ is optimal in game
		in normal form $\formNF_q$. That is, the strategy $\s_\A$ is locally optimal.
		
		Let us now show that it ensures the second property of
		Theorem~\ref{thm:subgame_optimal_arbitrary_strategy}. Let $u \in V_{\G}
		\setminus \{ 0 \}$. Consider a deterministic Player $\B$ strategy $\s_\B$.
		Consider a finite path $\rho \in Q^+$ and $\pi = \pi' \cdot q \in Q^* \cdot
		Q_u$. Assume that $\mathbb{P}_{\rho \cdot \pi}^{\s_\A,\s_\B}((Q_u)^\omega) > 0$.
		Then, we have $\mathbb{P}_{\rho \cdot \pi}^{\s_\A,\s_\B}(W \mid (Q_u)^\omega) =
		1$ since the strategy $\s_\A$ then behaves like $\s_\A^u$ and this holds for
		$\s_\A^u$ (as it is subgame almost-surely winning in $\G_u$ and the objective
		$W$ is prefix-independent). That is, $\mathbb{P}_{\rho \cdot
			\pi}^{\s_\A,\s_\B}(W \cap (Q_u)^\omega) = \mathbb{P}_{\rho \cdot
			\pi}^{\s_\A,\s_\B}((Q_u)^\omega)$. As this holds for all $\pi \in Q^* \cdot
		Q_u$, it follows that $\mathbb{P}_{\rho}^{\s_\A,\s_\B}(W \cap Q^* \cdot
		(Q_u)^\omega) = \mathbb{P}_{\rho}^{\s_\A,\s_\B}(Q^* \cdot (Q_u)^\omega)$. 
		
		We can then apply Theorem~\ref{thm:subgame_optimal_arbitrary_strategy} to
		obtain the theorem.
	\end{proof}

	\subsection{Extension to finite memory}
	\label{appen:extension_finite_memory}
	We first recall what is a memory skeleton (see, for
	instance,~\cite{DBLP:conf/concur/Bouyer0ORV20}) and we see how it
	can implement finite-memory strategies. For a set of colors $\colSet$
	and a set of states $Q$, a memory skeleton on $\colSet$ is a triple
	$\mathcal{M} = \langle M,\minit,\mu \rangle$, where $M$ is a non-empty
	set called the memory, $\minit \in M$ is the initial state of the
	memory and $\mu: M \times \colSet \rightarrow M$ is the update
	function. Note that the update function $\mu$ can be extended inductively into
	a function $\mu^*: M \times \colSet^* \rightarrow M$ in the following way: for
	all $m \in M$, $\mu^*(m,\epsilon) := m$ and for all $\rho \cdot k \in
	\colSet^+$, $\mu^*(m,\rho \cdot k) := \mu(\mu^*(m,\rho),k)$. Given some set of
	actions $A_q$ for each $q \in Q$, an action map with memory $M$ is a function
	$\lambda: M \times Q \rightarrow \Dist(\cup_{q \in Q} A_q)$  such that for all
	$q \in Q$ and $m \in M$ we have $\lambda(m,q) \in \Dist(A_q)$. In fact, a memory
	skeleton and an action map implement a strategy.
	
	\begin{definition}[Implementation of finite memory strategies]
		Consider a concurrent arena $\Aconc$ on a set of colors $\colSet$. A memory
		skeleton	$\mathcal{M} = \langle M,\minit,\mu \rangle$ on $\colSet$ and an action
		map $\lambda: M \times Q \rightarrow \Dist(\cup_{q \in Q} A_q)$ implement the
		strategy $\s_\A: Q^+ \rightarrow \Dist(\cup_{q \in A_q} A_q)$ that is
		defined by, for all $\rho = \rho' \cdot q \in Q^+$, $\s_\A(\rho) :=
		\lambda(\mu^*(\minit,\colFunc(\rho)),q) \in \Dist(A_q)$.
		
		A strategy $\s_\A$ is \emph{finite memory} if there exists a memory
		skeleton $\mathcal{M} = \langle M,\minit,\mu \rangle$, with $M$
		finite, and an action map $\lambda$ implementing $\s_\A$. If $M$ is a
		singleton, we retrieve the definition of positional strategies.
	\end{definition}
	
	We can extend the definition positionaly almost-surely winnable objective to
	objective winnable with a fixed memory skeleton.
	\begin{definition}[$\mathcal{M}$-almost-surely winnable objective]
		Consider an Borel prefix-independent objective $W \subseteq \colSet^\omega$ on
		a set of colors $\colSet$ and a memory skeleton $\mathcal{M}$. It is
		$\mathcal{M}$-\emph{almost-surely winnable} ($\mathcal{M}$-\textsf{ASW} for
		short) if in all finite games $\G = \Games{\Aconc}{W}$ on the set of colors
		$\colSet$, if there is a subgame almost-surely winning strategy, then there is
		one that can be implemented with $\mathcal{M}$.
	\end{definition}
	
	We now adapt Theorem~\ref{thm:transfer_memory} to the transfer of
	finite-memory.
	\begin{theorem}
		Consider a non-empty finite set of colors $K$ and a PI objective $\emptyset
		\subsetneq W \subseteq \colSet^\omega$. Consider a PI concurrent game $\G$ with
		objective $W$. For some memory skeleton $\mathcal{M}$, if the objective $W$ is
		$\mathcal{M}$-\textsf{ASW} and if there is a subgame optimal strategy in $\G$
		then there is one that can be implemented by $\mathcal{M}$.
		\label{thm:transfer_finite_memory}
	\end{theorem}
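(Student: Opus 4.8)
The plan is to follow the proof of Theorem~\ref{thm:transfer_memory} almost verbatim, changing only how the per-slice strategies are produced and glued. Since a subgame optimal strategy exists, Theorem~\ref{thm:transfer_memory} yields a positively-optimal locally optimal strategy, so by Lemma~\ref{lem:value_one_Gu} every state of each derived game $\G_u$ (for $u \in V_{\G} \setminus \{0\}$) has positive value, and by Theorem~\ref{thm:all_not_zero_implies_almost_sure} each $\G_u$ has a subgame almost-surely winning strategy. In the positional case one merely invokes \textsf{PSAW} on each $\G_u$ and glues; the one genuinely new point is that an $\mathcal{M}$-implementation reads the \emph{whole} color history $\colFunc(\rho)$ through $\mu^*$, whereas the glued strategy ought to behave inside a slice like a strategy of $\G_u$, which only ever sees the colors produced \emph{since entering that slice}.

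First I would refuse to reset the memory at slice boundaries (a reset is anyway incompatible with a color-driven skeleton, since $\mu$ cannot even detect such boundaries) and instead make each slice strategy robust to every memory state that can arise at entry. Concretely, I augment $\G_u$ into a game $\G_u^+$ by prepending an $\Aconc_W$-style gadget in which Player $\A$ alone generates an arbitrary finite color word and may then jump to any $q \in Q_u$. As $W$ is prefix-independent and $\A$ controls the gadget, all gadget states still have value $1$, so $\G_u^+$ has value $1$ everywhere and, by Theorem~\ref{thm:all_not_zero_implies_almost_sure}, a subgame almost-surely winning strategy; since $W$ is $\mathcal{M}$-\textsf{ASW}, one such strategy is implemented by $\mathcal{M}$ via an action map $\lambda_u$. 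The purpose of the gadget is that every memory state reachable from $\minit$ by \emph{some} word of $\colSet^*$ now occurs at every $q \in Q_u$, so by subgame almost-sure winning $\lambda_u$ started from any such $(q,m)$ is almost-surely winning in $\G_u^+$ — and these are exactly the states $\mu^*(\minit,\colFunc(\rho))$ that can appear at $q$ along a history $\rho$ of $\G$.

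Then I would glue without touching the memory: set $\lambda(m,q) := \lambda_u(m,q)$ when $u = \MarVal{\G}(q) > 0$ (reading the selected optimal $\GF$-strategy of $\formN_q$ as in the construction of $\G_u$) and $\lambda$ arbitrary on $Q_0$, so that $\s_\A(\rho) = \lambda(\mu^*(\minit,\colFunc(\rho)),q)$ is $\mathcal{M}$-implemented by construction. To check subgame optimality I apply Theorem~\ref{thm:subgame_optimal_arbitrary_strategy}: local optimality holds for every memory state because $\lambda_u$ only ever selects optimal $\GF$-strategies of the $\formN_q$ (and every $\GF$-strategy is optimal in the value-$0$ games of $Q_0$); for the second condition, fix $\rho$, a deterministic $\s_\B$ and $u>0$, and condition on settling in $Q_u$. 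On the event $(Q_u)^\omega$ the dynamics of $\G$ and of $\G_u^+$ coincide, and the entry memory $m_0$ is color-reachable, so $\lambda_u$ is almost-surely winning there; splitting the almost-sure value of $\lambda_u$ in $\G_u^+$ into the $(Q_u)^\omega$-part and the always-winning $\Aconc_W$-part gives $\mathbb{P}[W \mid (Q_u)^\omega] = 1$, which is precisely condition two.

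The hard part is exactly the gap flagged at the end of the first paragraph: a color-driven skeleton cannot be reinitialised inside a run, so the slice strategies must win from \emph{every} admissible entry memory rather than only from $\minit$. The gadget-prepending trick converts this into a uniform almost-sure winning statement that $\mathcal{M}$-\textsf{ASW} delivers; verifying that $\G_u^+$ exposes precisely the memory states $\mu(m'',\colFunc(q))$ with $m'' \in \{\mu^*(\minit,w) : w \in \colSet^*\}$ occurring at $q$ in $\G$ is the single new — but routine — bookkeeping step over the positional argument, confirming the claim that this adds only minor difficulties.
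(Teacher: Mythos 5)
Your proposal is correct and matches the paper's proof in all essentials: the paper likewise handles the impossibility of resetting a color-driven memory at slice boundaries by prepending a finite color-generating gadget to each $\G_u$, so that $\mathcal{M}$-\textsf{ASW} applied to the augmented game yields a single action map $\lambda_u$ that is subgame almost-surely winning from every relevant memory state $m \in \{\mu^*(\minit,w) \mid w \in \colSet^*\}$, and then glues the action maps by value area via $\lambda(m,q) := \lambda_{\MarVal{\G}(q)}(m,q)$ exactly as you do. The only cosmetic difference is that the paper's gadget is a bounded-length prefix controlled by Player $\B$ rather than a clique controlled by Player $\A$; both variants work because subgame almost-sure winning quantifies over all histories (not only those consistent with, or of positive probability under, the strategy), so every color-reachable entry memory is covered either way.
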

	
	Let us define the set of relevant memory states of a given memory skeleton
	$\mathcal{M}$, that is the subset of memory states that could occur from
	$\minit$ given any finite sequence of colors. That is:
	\begin{definition}[Relevant memory states]
		Consider a memory skeleton $\mathcal{M} = \langle M,\minit,\mu \rangle$ on a
		set of colors $\colSet$. The set of \emph{relevant memory states} $M_r \subseteq
		M$ is defined as $M_r := \{ m \in M \mid \exists \rho \in \colSet^*,\; m =
		\mu^*(\minit,\rho) \}$. For all such relevant states $m \in M_r$, we denote by
		$\mathcal{M}^m$ the memory skeleton $\mathcal{M}^m := \langle M,m,\mu \rangle$.
	\end{definition}
	
	Then, we have the following proposition:
	\begin{proposition}
		Consider a set of colors $\colSet$, a memory skeleton $\mathcal{M} = \langle
		M,\minit,\mu \rangle$ on $\colSet$ and the corresponding set of relevant memory
		states $M_r$. Consider also a $\mathcal{M}$-\textsf{ASW} objective $W \subseteq
		K^\omega$. Then, for all finite games $\G = \Games{\Aconc}{W}$ on the set of
		colors $\colSet$ where there is a subgame almost-surely winning strategy, there
		is an action map $\lambda: M \times Q \rightarrow \Dist(\cup_{q \in Q} A_q)$
		such that, for all relevant states $m \in M_r$, the strategy implemented by
		$\mathcal{M}^m$ and $\lambda$ is subgame almost-surely winning in $\G$.
		\label{prop:sub_game_optimal_from_every_memory_state}
	\end{proposition}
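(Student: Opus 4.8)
The plan is to collapse the ``works for every relevant initial memory state'' requirement into a single use of the $\mathcal{M}$-\textsf{ASW} hypothesis, by building one auxiliary game $\G'$ that contains a single shared copy of $\G$ but can be entered through finite ``prologues'' that drive the memory skeleton into any prescribed relevant state before the real play begins. The crucial enabling fact is that $W$ is prefix-independent, so these prologues are harmless for winning.

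Concretely, I would construct $\G'$ as follows. Keep the states, actions, colors and transitions of $\G$ unchanged (so once a play reaches $Q$ it never leaves it), and for every pair $(m,q)\in M_r\times Q$ fix a color word $\rho_m\in\colSet^*$ with $\mu^*(\minit,\rho_m)=m$ (such a word exists exactly because $m$ is relevant) and append a fresh deterministic chain of states colored by the letters of $\rho_m$, with a single action for each player along the chain and a Dirac transition from its last state into the state $q$ of the embedded $\G$. Because $W$ is prefix-independent, an infinite play of $\G'$ lies in $W$ if and only if its $\G$-tail does.

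Next I would check that $\G'$ admits a subgame almost-surely winning strategy, so that $\mathcal{M}$-\textsf{ASW} applies. Let $\s^*$ be a subgame almost-surely winning strategy of $\G$, which exists by hypothesis. In $\G'$ play arbitrarily on the prologues (there is no genuine choice there) and, once inside $Q$, play the appropriate residual of $\s^*$; since $\s^*$ is almost-surely winning after every history of $\G$ and the prologue colors do not affect membership in $W$, this strategy is almost-surely winning after every history of $\G'$, hence subgame almost-surely winning. Applying $\mathcal{M}$-\textsf{ASW} to $\G'$ yields an action map $\lambda'\colon M\times Q'\to\Dist(\cup A'_q)$ such that the strategy implemented by $\langle M,\minit,\mu\rangle$ and $\lambda'$ is subgame almost-surely winning in $\G'$. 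I then set $\lambda:=\restr{\lambda'}{M\times Q}$, a legitimate action map for $\G$ since the states of $Q$ retain their original action sets.

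Finally I would verify the conclusion, and this matching step is the delicate part. Fix $m\in M_r$ and a history $\eta=q_0\cdots q_n\in Q^+$ of $\G$, and consider in $\G'$ the history obtained by prefixing $\eta$ with the prologue targeting $(m,q_0)$: along it the memory value is $\mu^*(\minit,\rho_m\cdot\colFunc(\eta))=\mu^*(m,\colFunc(\eta))$, and every action taken inside $Q$ was the one prescribed by $\restr{\lambda'}{M\times Q}=\lambda$. Hence the residual of $(\mathcal{M},\lambda')$ after this $\G'$-history and the residual of $(\mathcal{M}^m,\lambda)$ after $\eta$ drive the memory identically and prescribe identical action distributions on all continuations, which all stay in $Q$; against every Player $\B$ strategy they therefore assign the same probability to $W$ (using prefix-independence to discard the prologue prefix). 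Since the former probability equals $1$, so does the latter, i.e. $(\mathcal{M}^m,\lambda)$ is almost-surely winning after $\eta$. As $m$ and $\eta$ were arbitrary, $(\mathcal{M}^m,\lambda)$ is subgame almost-surely winning in $\G$ for every $m\in M_r$. The main obstacle is precisely this bookkeeping: ensuring that a prologue of colors $\rho_m$ leaves the memory at $m$, that feeding $\eta$ afterwards reproduces exactly the value $(\mathcal{M}^m,\lambda)$ would compute, and that a \emph{single} shared copy of $Q$ is reachable with the correct memory for every $(m,q)$ so that $\lambda$ is well defined.
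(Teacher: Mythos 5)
Your proof is correct and follows essentially the same route as the paper: both prepend finite color-prologues (yours are fixed deterministic chains, one per pair $(m,q)$; the paper instead lets Player $\B$ choose the prologue colors and the entry state) leading into a single shared copy of $\G$, use prefix-independence to see that the extended game still admits a subgame almost-surely winning strategy, apply the $\mathcal{M}$-\textsf{ASW} hypothesis once, and read the action map off on $M\times Q$. The bookkeeping you flag as delicate --- that a prologue colored by $\rho_m$ leaves the memory at $m$, and that the residual of the implemented strategy after the prologue-prefixed history coincides with that of $(\mathcal{M}^m,\lambda)$ after $\eta$ --- goes through exactly as you describe, since subgame almost-sure winning quantifies over all histories, reachable or not.
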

	\begin{proof}
		Consider a set of colors $\colSet$, a memory skeleton $\mathcal{M} = \langle
		M,\minit,\mu \rangle$ on $\colSet$ and the corresponding set of relevant memory
		states $M_r$. Consider also a $\mathcal{M}$-\textsf{ASW} objectives $W \subseteq
		K^\omega$. Consider also a finite game $\G = \Games{\Aconc}{W}$ on the set of
		colors $\colSet$ where there is a subgame almost-surely winning strategy. Let us
		build a game $\G^\B$ where Player $\B$ first plays for finite number of rounds
		to 'choose' the memory state of the Player $\A$ strategy, and then the game
		proceeds as in $\G$. More specifically, for $n := |M_r|$ and denoting $Q$ the
		set of states in $\G$, we consider the finite arena $\Aconc^\B = \langle
		Q^\B,(A_q)_{q \in Q^\B},(B_q)_{q \in
			Q^\B},\distribSet,\delta,\distribFunc,\colSet,\colFunc \rangle$ defined in the
		following way:
		\begin{itemize}
			\item $Q^\B := \colSet \times \{ 0,\ldots,n \}$ the set of states where only
			Player $\B$ will play;
			\item For all $q = (k,i) \in Q^\B$, we have $A'_q := \{ a \}$ for an
			arbitrary action $a$ and $B_q := \mathsf{Cont}_q \uplus \mathsf{Exit}_q$ with
			$\mathsf{Cont}_q := \{ b_{k,i+1} \mid k \in \colSet \}$ if $i < n$ and
			$\mathsf{Cont}_q := \emptyset$ otherwise. Furthermore, $\mathsf{Exit}_q := \{
			b_{q'} \mid q' \in Q \}$;
			\item $\distribSet := \{ d_{k,i+1} \mid k \in \colSet,\; 0 \leq i \leq n-1 \}
			\uplus \{ d_{q'} \mid q' \in Q \}$;
			\item For all $q = (k,i) \in Q^\B$ and $k \in \colSet$ we have:
			$\delta(q,a,b_{k,i+1}) := d_{k,i+1}$ and for all $q' \in Q$, we have
			$\delta(q,a,b_{q'}) := d_{q'}$;
			\item For all $k \in \colSet$ and $0 \leq i \leq n-1$, we have
			$\distribFunc(d_{k,i+1})((k,i+1)) := 1$ and for all $q \in Q$
			$\distribFunc(d_{q})(q) := 1$;
			\item For all $k \in \colSet$ and $0 \leq i \leq n$, we have $\colFunc((k,i))
			:= k$.
		\end{itemize}
		The arena $\Aconc^\B$ is plugged in before the arena $\Aconc$ (thus obtaining
		the arena $\Aconc^\B \cdot \Aconc$) and eventually enters this arena $\Aconc$
		via the Nature states $d_{q}$ for $q \in Q$ (which have to be chosen at some
		point: there are the only successors of the states $(k,n)$ for all $k \in
		\colSet$). We claim that there is still a subgame almost-surely winning strategy
		in the game $\G^\B := \langle \Aconc^\B \cdot \Aconc,W \rangle$. This is due to
		the fact that it was the case of the game $\G$,that the objective $W$ is
		prefix-independent and that the gae can only stay in $\Aconc^\B$ for a finite
		number of rounds. Hence, since the objective $W$ is
		$\mathcal{M}$-$\mathsf{ASW}$, there is an action map $\lambda: M \times Q^\B
		\cup Q \rightarrow \Dist(\cup_{q \in Q^\B \cup Q} A_q)$ such that the strategy
		$\s_\A$ implemented by $\mathsf{M}$ and $\lambda$ is subgame almost-surely
		winning in $\G^B$. In particular, Player $\B$ can play in the arena $\Aconc^\B$
		such that, when she leaves it to 'start the game $\G$', it can be done with the
		memory state of the Player $\A$ strategy $\s_\A$ is an arbitrary relevant memory
		state $m \in M_r$. (Note that indeed Player $\B$ can reach any relevant memory
		state because the memory skeleton $\langle M_r,\minit,\mu \rangle$ can be seen
		as a strongly connected graph of size $|M_r| = n$.) That is, for all relevant
		memory states $m \in M_r$ such that there exists a finite sequence $\rho \in
		\colSet^k$ for $k \leq n+1$ of colors such that $m = \mu^*(\minit,\rho)$, the
		strategy implemented by $\mathcal{M}^m$ and $\lambda$ is in fact the residual
		strategy $\s_\A^{(\rho_0,0) \cdots (\rho_k,k)}$, and it is also subgame optimal.
		
	\end{proof}
	
	We can now proceed to the proof of Theorem~\ref{thm:transfer_finite_memory}.
	\begin{proof}
		Consider a non-empty finite set of colors $K$ and a PI objective $\emptyset
		\subsetneq W \subseteq \colSet^\omega$. Consider a concurrent game $\G$ with
		objective $W$. Assume that the objective $W$ is $\mathcal{M}$-\textsf{ASW} for
		some memory skeleton $\mathcal{M}$ and that there is a subgame optimal strategy
		in $\G$.
		We consider the same construction than for the proof of
		Theorem~\ref{thm:transfer_memory}. However, for all $u \in V_{\G} \setminus \{ 0
		\}$, since there is a subgame almost-surely winning strategy in the game $\G_u$,
		by Proposition~\ref{prop:sub_game_optimal_from_every_memory_state}, there is an
		action map $\lambda_u: M \times Q_u \rightarrow \Dist(\cup_{q \in Q_u} A_q)$
		such that, for all relevant states $m \in M_r$, the strategy implemented by
		$\mathcal{M}^m$ and $\lambda$ is subgame almost-surely winning in $\G_u$. We
		then glue these strategies into the an action map $\lambda: M \times Q
		\rightarrow \Dist(\cup_{q \in Q} A_q)$ such that, for all $\rho = \rho' \cdot q
		\in Q^+$, $\lambda(m,q) := \lambda_{\MarVal{\G}(q)}(m,q)$, i.e. the action map
		chosen depends on the value area. We then claim that the strategy $\s_\A$
		implemented by the memory skeleton $\mathcal{M}$ and the action map $\lambda$ is
		subgame optimal. As for the proof of Lemma~\ref{prop:glued_strategy}, this
		strategy is locally optimal. Furthermore, again as for the proof of
		Lemma~\ref{prop:glued_strategy}, let us show that it ensures the second property
		of Theorem~\ref{thm:subgame_optimal_arbitrary_strategy}. Let $u \in V_{\G}
		\setminus \{ 0 \}$. Consider a deterministic Player $\B$ strategy $\s_\B$.
		Consider a finite path $\rho \in Q^+$ and $\pi = \pi' \cdot q \in Q^* \cdot
		Q_u$. Assume that $\mathbb{P}_{\rho \cdot \pi}^{\s_\A,\s_\B}((Q_u)^\omega) > 0$.
		Then, we have $\mathbb{P}_{\rho \cdot \pi}^{\s_\A,\s_\B}(W \mid (Q_u)^\omega) =
		1$ since the strategy $\s_\A$ then behaves like $\s_\A^u$ (implemented by
		$\mathcal{M}$ and $\lambda_u$) and this holds for $\s_\A^u$ (as it is subgame
		almost-surely winning in $\G_u$ regardless of the starting relevant memory state
		and since the objective $W$ is prefix-independent). %That is, $\mathbb{P}_{\rho
		%\cdot \pi}^{\s_\A,\s_\B}(W \cap (Q_u)^\omega) = \mathbb{P}_{\rho \cdot
		%\pi}^{\s_\A,\s_\B}((Q_u)^\omega)$. As this holds for all $\pi \in Q^* \cdot
		%Q_u$, it follows that $\mathbb{P}_{\rho}^{\s_\A,\s_\B}(W \cap Q^* \cdot
		%(Q_u)^\omega) = \mathbb{P}_{\rho}^{\s_\A,\s_\B}(Q^* \cdot (Q_u)^\omega)$. 
	\end{proof}
	
	\subsection{Retrieving the original result from \cite{GIH10}}
	\label{appen:retrieve_result_turn_based}

	Let us first show that %uniformly 
	positively-optimal strategies that are locally optimal always exists (for a
	slightly modified objective) in a turn-based setting.
	\begin{proposition}
		Consider a turn-based game $\G$ (i.e. it is such that the set of states can be
		partitioned into Player $\A$ states $q$ where $B_q$ is a singleton and Player
		$\B$ states $q$ where $A_q$ is a singleton). Consider the valuation $v: Q
		\rightarrow [0,1]$ giving the value of the game from all states w.r.t. action
		strategies. Consider some value $u \in V_\G \setminus \{ 0 \}$ (w.r.t. the
		valuation $v$) and consider the objective (first defined in
		Lemma~\ref{lem:value_one_qualitatively_Gu}) $W_u := W \cup Q^* \cdot (Q
		\setminus Q_u) \cdot Q^\omega \subseteq Q^\omega$ (where $Q_u$ is defined w.r.t.
		the valuation $v$). Then, there is a locally optimal action strategy whose value
		is positive from all states in $Q_u$.
	\end{proposition}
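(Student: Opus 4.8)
The plan is to reduce the statement to the existence of a \emph{deterministic subgame optimal} strategy for the original objective $W$, from which both required properties then follow for free. First I would recall that, since $\G$ is a finite turn-based PI game, it admits a deterministic optimal strategy $\s_\A$ for $W$ by \cite[Thm.~4.3]{GIH10}; note that here $v = \MarVal{\Games{\Aconc}{W}}$, because in a turn-based game the value computed with action strategies coincides with the value of the game, so that local optimality — which is defined with respect to $\MarVal{\Games{\Aconc}{W}}$ — is indeed local optimality with respect to $v$.

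A bare optimal strategy need not be locally optimal, since it may play non value-preserving actions at histories that are never visited under optimal play. Hence the next step is to upgrade $\s_\A$ to a subgame optimal strategy. Being deterministic, $\s_\A$ has finite choice, so Theorem~\ref{thm:fair_strategy_uniformly_optimal} applies and yields a finite-choice subgame optimal strategy $\s_\A'$. Inspecting the reset construction in its proof, $\s_\A'(\rho)$ equals either $\s_\A(\rho)$ or $\s_\A(q)$ (with $q$ the last state of $\rho$), both of which are Dirac; hence $\s_\A'$ is again an action strategy. By Lemma~\ref{prop:sub_game_opt_implies_locally_opt}, $\s_\A'$ is locally optimal.

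Finally I would establish positivity on $Q_u$. As $\s_\A'$ is subgame optimal it is, in particular, optimal from every state, so $\inf_{\s_\B}\prob{\Aconc,q}{\s_\A',\s_\B}[W] = v(q) = u$ for every $q \in Q_u$. From the inclusion $W \subseteq W_u$ we obtain, for every Player $\B$ strategy $\s_\B$, the pointwise bound $\prob{\Aconc,q}{\s_\A',\s_\B}[W_u] \geq \prob{\Aconc,q}{\s_\A',\s_\B}[W]$; taking infima over $\s_\B$ gives $\inf_{\s_\B}\prob{\Aconc,q}{\s_\A',\s_\B}[W_u] \geq u > 0$. Thus $\s_\A'$ is a locally optimal action strategy whose value (for $W_u$) is positive from every state of $Q_u$, as required.

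The only genuinely delicate point is the passage from optimal to locally optimal. The appearance of the objective $W_u$ and the restriction to $Q_u$ might suggest a direct value-preserving construction inside $Q_u$, but such a construction would still have to secure local optimality at histories that leave $Q_u$ and would essentially reprove Theorem~\ref{thm:fair_strategy_uniformly_optimal}; invoking that theorem directly is cleaner. Once a deterministic subgame optimal strategy is in hand, the value bound follows immediately from $W \subseteq W_u$ and local optimality from Lemma~\ref{prop:sub_game_opt_implies_locally_opt}, so no further work is needed.
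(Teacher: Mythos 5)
Your proof is correct, but it takes a genuinely different route from the paper's. The paper argues directly by contradiction: assuming some $q \in Q_u$ has value $0$ w.r.t.\ $W_u$ in the game restricted to locally optimal actions, it fixes a Player $\B$ strategy of value at most $(u-x)/4$ there (where $x<u$ is the largest Nature-state value below $u$) and, against an arbitrary deterministic Player $\A$ strategy, makes Player $\B$ switch to a punishing strategy of value $(x+u)/2$ the first time Player $\A$ plays a non-locally-optimal action; this caps every deterministic Player $\A$ strategy at $u - (u-x)/4 < u$ and contradicts $v(q)=u$. That argument uses nothing beyond the definition of $v$ as a supremum over deterministic strategies. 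You instead import the existence of deterministic optimal strategies from \cite[Thm.~4.3]{GIH10}, upgrade to a deterministic subgame optimal strategy via Theorem~\ref{thm:fair_strategy_uniformly_optimal} (correctly noting that a deterministic strategy is positively bounded/finite-choice and that the reset construction in that proof outputs only values of the original strategy, hence stays deterministic), and then read off local optimality from Lemma~\ref{prop:sub_game_opt_implies_locally_opt} and positivity from $W \subseteq W_u$. This is logically sound --- none of the results you invoke depend on the present proposition --- and it is shorter and more modular. What the paper's self-contained argument buys is independence from \cite{GIH10}'s existence theorem and from the machinery of Section~\ref{sec:fair_strategies}: the appendix uses this proposition precisely to \emph{deduce} that subgame optimal (deterministic) strategies exist in turn-based games, and your two ingredients already yield that existence on their own, so under your proof that part of the ``retrieval'' becomes redundant; the memory-transfer part of the retrieval, however, still genuinely needs the proposition, so your proof does serve its purpose there.
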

	\begin{proof}
		Consider a turn-based game $\G$ and let $Q = Q_\A \uplus Q_\B$ where for all
		$q \in Q_\A$ (resp. $Q_\B$) we have $B_q$ (resp. $A_q$) a singleton. Consider
		some value $u \in V_\G \setminus \{ 0 \}$ and assume towards a contradiction
		that there is a state $q \in Q_u$ from which there is no locally optimal action
		strategy whose value is positive from $q$. Let $\distribSet_{< u} \subseteq
		\distribSet$ be the set of Nature states whose values is less than $u$ and let
		$x < u$ be the highest values of these Nature states. Consider the game $\G' =
		\langle \Aconc',W_u \rangle$ where the arena $\Aconc'$ is a restriction of the
		arena $\Aconc$ where the Player $\A$ actions leading to Nature states in
		$\distribSet_{< u}$ are removed (i.e. Player $\A$ can only play locally optimal
		strategies at states in $Q_u$). By assumption, there is a state $q \in Q_u$
		whose value is 0 in that game $\G'$. Let $\s_\B$ be a Player $\B$ strategy in
		that game $\G'$ whose value is $(u-x)/4 > 0$ from $q$. Now consider a Player
		$\A$ deterministic strategy $\s_\A$ (recall that in turn-based games
		deterministic strategies acheive the same values than arbitrary strategies, see
		for instance the explanations in Theorem 1 from
		\cite{DBLP:journals/iandc/Chatterjee0GH15}). Let us denote by $\mathsf{NLO}$
		(for non-locally-optimal) the set of finite paths ending in $Q_u$ after which
		the Player $\A$ deterministic strategy $\s_\A$ is not locally optimal:
		$\mathsf{NLO} := \{ \rho = \rho' \cdot q' \in Q^* \cdot Q_u \mid
		\delta(q,\s_\A(\rho),b) \in \distribSet_{< u} \}$ for $b$ the only Player $\B$
		action in $B_{q'}$. We can now define the Player $\B$ strategy $\s_\B'$ in the
		following way: it mimics the strategy $\s_\B$ while the play stays in $Q_u$ and
		no finite paths in $\mathsf{NLO}$ has been reached, when such a path is reached,
		%Player $\A$ has not played a action leading to a Nature state in
		%$\distribSet_{< u}$. When this happens, 
		$\s_\B'$ switches to strategy of value $(x + u)/2$ (which is possible by
		definition of $\mathsf{NLO}$). Consider now a Player $\A$ strategy that does
		exactly $\s_\A$ until a finite paths in $\mathsf{NLO} \cup Q^* \cdot (Q
		\setminus Q_u)$ is reached and in that case, it switches to an arbitrary locally
		optimal strategy (it is therefore a strategy in the game $\G'$). With these
		choices, $\s_\A$ and $\s_\A'$ coincide on paths in $(Q_u \setminus
		\mathsf{NLO})^*$. The strategies $\s_\B$ and $\s_\B'$ also coincide on these
		paths. Considering what happens with strategies $\s_\A'$ and $\s_\B$ -- recall
		$\s_\B$ has value at most $u-x/4$ against locally optimal strategies for the
		objective $W_u$:
		\begin{align*}
		\mathbb{P}^{\s_\A',\s_\B}_q[W_u] & = \mathbb{P}^{\s_\A,\s_\B}_q[W_u \cap (Q_u
		\setminus \mathsf{NLO})^\omega] + \mathbb{P}^{\s_\A',\s_\B}_q[W_u \cap (Q_u
		\setminus \mathsf{NLO})^* \cdot (Q \setminus Q_u)] +
		\mathbb{P}^{\s_\A',\s_\B}_q[W_u \cap Q_u^* \cdot \mathsf{NLO}] \\
		& = \mathbb{P}^{\s_\A',\s_\B}_q[W \cap (Q_u \setminus \mathsf{NLO})^\omega] +
		\mathbb{P}^{\s_\A',\s_\B}_q[(Q_u \setminus \mathsf{NLO})^* \cdot (Q \setminus
		Q_u)] + \mathbb{P}^{\s_\A',\s_\B}_q[W_u \cap Q_u^* \cdot \mathsf{NLO}] \\
		& = \mathbb{P}^{\s_\A,\s_\B'}_q[W \cap (Q_u \setminus \mathsf{NLO})^\omega] +
		\mathbb{P}^{\s_\A,\s_\B'}_q[(Q_u \setminus \mathsf{NLO})^* \cdot (Q \setminus
		Q_u)] + \mathbb{P}^{\s_\A',\s_\B}_q[W_u \cap Q_u^* \cdot \mathsf{NLO}] \\
		& \leq \frac{u-x}{4}
		\end{align*}
		It follows that $\mathbb{P}^{\s_\A,\s_\B'}_q[W \cap (Q_u \setminus
		\mathsf{NLO})^\omega] + \mathbb{P}^{\s_\A,\s_\B'}_q[(Q_u \setminus
		\mathsf{NLO})^* \cdot (Q \setminus Q_u)] \leq \frac{u-x}{4}$. Furthermore,by
		choice of the strategy $\s_\B'$, we have $\mathbb{P}^{\s_\A,\s_\B'}_q[W \cap
		Q_u^* \cdot \mathsf{NLO}] \leq \frac{u+x}{2}$. We obtain:
		\begin{align*}
		\mathbb{P}^{\s_\A,\s_\B'}_q[W] & = \mathbb{P}^{\s_\A,\s_\B}_q[W \cap (Q_u
		\setminus \mathsf{NLO})^\omega] + \mathbb{P}^{\s_\A,\s_\B'}_q[W \cap (Q_u
		\setminus \mathsf{NLO})^* \cdot (Q \setminus Q_u)] +
		\mathbb{P}^{\s_\A,\s_\B'}_q[W \cap Q_u^* \cdot \mathsf{NLO}] \\
		& \leq \mathbb{P}^{\s_\A,\s_\B}_q[W \cap (Q_u \setminus \mathsf{NLO})^\omega]
		+ \mathbb{P}^{\s_\A,\s_\B'}_q[(Q_u \setminus \mathsf{NLO})^* \cdot (Q \setminus
		Q_u)] + \frac{u+x}{2} \\
		& \leq \frac{u-x}{4} + \frac{u+x}{2} \\
		& = u - \frac{u-x}{4} < u
		\end{align*}
		Hence, all Player $\A$ deterministic strategies have values at most $u -
		\frac{u-x}{4} < u$. This is a contradiction with the fact that the state $q$ has
		value $u$. 
	\end{proof}
	
	We can now deduce that subgame optimal strategies always exist in turn-based
	games. Indeed, it was shown in Lemma~\ref{lem:value_one_qualitatively_Gu} that
	if, for all values $u \in V_{\G} \setminus \{ 0 \}$, there are locally optimal
	strategies whose value is positive from all states in $Q_u$ w.r.t. the objective
	$W_u$ (which we will call assumption $H_u$), then all the states in the games
	$\G_u$ (from the proof of Theorem~\ref{thm:transfer_memory}) is positive. Hence,
	with a straightforward adaptation of the proof of
	Theorem~\ref{thm:transfer_memory}, one can show that this assumption $H_u$
	implies the existence of subgame optimal strategy. It follows that there always
	are subgame optimal strategies in turn-based games. 
	
	Furthermore, note that in that case there are subgame optimal deterministic
	strategies. This is due to fact that
	Theorem~\ref{thm:all_not_zero_implies_almost_sure}, when stated in turn-based
	games, ensures the existence of subgame almost-surely winning deterministic
	strategies. This is already proved in \cite{GIH10}. Our adaptation of the proof
	(which can be found in
	Appendix~\ref{proof:thm_all_not_zero_implies_almost_sure}) could also show it,
	one has just to realize that $\varepsilon$-optimal strategies can be found among
	deterministic strategies in turn-based games.
	
	\section{Proof of Theorem~\ref{thm:all_not_zero_implies_almost_sure}}
	\label{proof:thm_all_not_zero_implies_almost_sure}
	%This section is devoted to the proof of
	%Theorem~\ref{thm:all_not_zero_implies_almost_sure}. Note that this theorem can
	%be seen as a strengthening of the Theorem~\ref{thm:chaterjee_value_0_1}
	%(has to be stated).
	
	%Indeed, it states that if all states have positive values, then there are all
	%of value 1 (for prefix-independent objectives with finite state space), whereas
	%the theorem proved in this section additionally proves that there is an
	%almost-surely winning strategy from all states in that case. However, note that
	%Theorem~\ref{thm:chaterjee_value_0_1} is extended in the same paper to
	%countable state space, which is not the case of our Theorem. 
	
	%When considering the proof of Theorem~\ref{thm:chaterjee_value_0_1}, the idea
	%of the proof of somewhat similar in the sense that an original strategy is
	%considered and later on improved when the values of the considered states was
	%too low (note that, here, the values of the states considered is w.r.t. to the
	%strategies of both players as the Player $\B$ (called Player 2 in \cite{CH07})
	%strategy is constructed after considering a Player $\A$ strategy). The main
	%difference is that, with the intention of exhibiting an almost-surely winning
	%strategy, we changed the strategy infinitely often (but, almost-surely, only
	%finitely many times in any given path).
	
	Before proceeding to the proof of
	Theorem~\ref{thm:all_not_zero_implies_almost_sure}, we need some additional
	notations and a very useful theorem. 
	
	\paragraph{Additional Notations}
	Consider a finite set $Q$. For all $\rho \in Q^*$, we denote by $\cyl(\rho)
	\subseteq Q$ the set $\cyl(\rho) := \{ \rho \cdot \pi \mid \pi \in Q^* \}$.
	Then, for a subset $S$ of $Q^*$, we denote by $\mathsf{Suff}(S) \subseteq Q^*$
	the set of suffixes of element of $S$,
	that is $\mathsf{Suff}(S) := \cup_{\rho \in S} \cyl(\rho)$. %a path $\rho \in
	%Q^*$ ensures $\rho \in \mathsf{Suff}(S)$ if and only if there is a path $\rho'
	%\in S$ such that $\rho \in \cyl(\rho')$.
	We say that a set $S \in Q^*$ is \emph{suffix-closed} if $S =
	\mathsf{Suff}(S)$. Furthermore, for all suffix-closed sets $S$ of $Q^+$, there
	is a unique set $\prefUniq{S} \subseteq S$ such that $S$ can be written as the
	following disjoint union: $S = \uplus_{\rho \in \prefUniq{S}} \cyl(\rho)$ (the
	set $\prefUniq{S}$ can be defined as $\prefUniq{S} := \{ \rho \in S \mid \rho
	\text{ has no prefix in } S\}$). %$\rho \in S$, there is $\rho' \in
	%\prefUniq{S}$ such that $\rho' \sqsubseteq \rho$.
	%Finally, for a set $S \subseteq Q^*$, we denote by $\mathsf{MinLen}(S) \in \N$
	%the minimum length of a paths in $S$: $\mathsf{MinLen}(S) := \min \{ |\rho|
	%\mid \rho \in S \}$ (it equals $\infty$ if $S$ is empty).
	Finally, for all finite paths $\rho = q_0 \cdots q_n \in Q^+$, we denote by
	$\Head(\rho) \in Q$ the last state of $\rho$, i.e. $\Head(\rho) := q_n$.
	
	Consider an PI concurrent game $\G = \Games{\Aconc}{W}$, and a Player $\A$
	strategy $\s_\A$. %For any finite path $\rho = q_0 \cdots q_{n-1} \cdot q_n \in
	%Q^+$, we denote by $\s_\A^\rho$ the residual strategy of $\s_\A$ such that, for
	%all $\pi \in Q^+$, we have $\s_\A^\rho(\pi) = \s_\A(q_0 \cdots q_{n-1} \cdot
	%\pi)$ (note that for convenience of notation later on, the last state of $\rho$
	%is ignored in the definition of the residual strategy $\s_\A^\rho$). 
	Let us now define, for $u \in [0,1]$, an operator $\mathsf{ValInf}_{u}:
	\SetStrat{\Aconc}{\A} \times \mathcal{P}(Q^+) \rightarrow \mathcal{P}(Q^+)$  on
	strategies that, given a strategy $\s \in \SetStrat{\Aconc}{\A}$ and a set of
	finite paths of interest that is prefix-closed $S \in \mathcal{P}(Q^+)$,
	%(equivalently, $S \subseteq Q^+$) 
	associates the set of finite paths $\rho$ whose value, w.r.t. the residual
	strategy $\s^{\rho}$ is less than $\frac{1}{2}$. That is,
	$\mathsf{ValInf}_{u}(\s,S) := \{ \rho \in S \mid \MarVal{\G}[\s^\rho](\rho) < u
	\}$. 
	
	%Finally, consider the lemma below that is a direct consequence %(even for
	%countable state space?) 
	%of Lemma 3 from~\cite{CH07}:
	%\begin{lemma}
	%	Consider a concurrent game $\G = \Games{\Aconc}{W}$ with countable state
	%space $Q$ and prefix-independent objective $W$ with two strategies
	%$\s_\A,\s_\B$ for both players. Then, if there a state $q$ with value below
	%$1$, i.e. $\prob{\Aconc,q}{\s_\A,\s_\B}[W] < 1$, then the infimum of the values
	%is 0: $\inf_{q \in Q} \prob{\Aconc,q}{\s_\A,\s_\B}[W] = 0$.
	%	\label{thm:one_or_inf_zero}	
	%\end{lemma}
	
	\begin{proof}
		For all states $q \in Q$, we denote by $\s_q: Q^+ \rightarrow \Dist(A) \in
		\SetStrat{\Aconc}{\A}$ a Player $\A$ strategy whose value from the state $q$ is
		at least $c \cdot (1 - c/4)$ (i.e. %it is a $c \cdot 1/4$-optimal strategy, or 
		$\MarVal{\G}[\s_q](q) \geq c \cdot (1 - c/4)$). Now, let $q_0 \in Q$ be a
		state and let us exhibit an almost-surely winning strategy from $q_0$. We build
		inductively a sequence of strategies $(\s_n)_{n \in \N} \in
		(\SetStrat{\Aconc}{\A})^\N$ and a sequence of sets of finite paths $(S_n)_{n \in
			\N} \in (\mathcal{P}(Q^+))^\N$ ensuring the following properties, for all $n \in
		\N$:
		\begin{itemize}
			\item[(a)] $S_{n}$ is suffix-closed;
			\item[(b)] $S_{n} \subseteq S_{n-1}$ and $S_n \cap \prefUniq{S_{n-1}} =
			\emptyset$;
			%\item[(c)] if $S_{n-1} \neq \emptyset$, $\mathsf{MinLen}(S_{n}) >
			%\mathsf{MinLen}(S_{n-1})$;
			\item[(c)] for all $k < n$, $\s_n$ coincides with $\s_k$ on $Q^+ \setminus
			S_{k+1}$;
			\item[(d)] for all $\rho \in \prefUniq{S_n}$, the value of the residual
			strategy $\s_n$ in $\rho$ is at least $c \cdot (1 - c/4)$, i.e.
			$\MarVal{\G}[\s_n^\rho](\rho) \geq c \cdot (1 - c/4)$% (note that this
			%well-defined since the objective $W$ considered is prefix-independent)
		\end{itemize}
		
		%We will then consider the limit $\s$ of this sequence of strategy (which is
		%well-defined by property $(c)$) and prove that it is an almost-surely winning
		%strategy from $q_0$.
		
		Initially, we set $\s_0 := \s_{q_0}$ and $S_0 := q_0 \cdot Q^+$. Note that it
		is indeed suffix-closed, thus satisfying property $(a)$ and, since
		$\prefUniq{S_0} = \{ q_0 \}$, property $(d)$ is also ensured. Then, assume that
		for some $n \geq 1$, for all $k \leq n-1$, $s_k \in \SetStrat{\Aconc}{\A}$ and
		$S_k \subseteq Q^+$ are defined and properties $(a)-(d)$ hold. Let us define
		$s_{n}$ and $S_{n}$. If $S_{n-1} = \emptyset$, then $S_{n} := S_{n-1}$ and
		$\s_{n} := \s_{n-1}$ (and properties $(a)-(d)$ are ensured). 
		
		Now assume that $S_{n-1} \neq \emptyset$. Let us first define $S_n \subseteq
		Q^*$ and show that it satisfies properties $(a)-(b)$. Let $V_{n-1} =
		\mathsf{ValInf}_{c \cdot \frac{1}{2}}(\s_{n-1},S_{n-1}) \subseteq S_{n-1}$ be
		the set of finite paths of $S_{n-1}$ where the value of the residual strategy of
		$s_{n-1}$ is less than $c \cdot \frac{1}{2}$. Then, we define $S_{n} \subseteq
		Q^+$ as the set of suffixes of $V_{n-1}$: $S_{n} := \mathsf{Suff}(V_{n-1})$. In
		particular, note that $\prefUniq{S_{n}} = \prefUniq{V_{n-1}}$. %the disjoint (by
		%definition of $\prefUniq{V_{n-1}}$) union of suffixes of finite paths in
		%$\prefUniq{V_{n-1}}$, i.e. $S_{n} := \uplus_{\rho \in \prefUniq{V_{n-1}}}
		%\cyl(\rho)$ (note that $\prefUniq{S_{n}} = \prefUniq{V_{n-1}}$). 
		Clearly, $S_n$ is suffix-closed, hence property $(a)$ is ensured. 
		
		Consider now property $(b)$. We have $S_n = \mathsf{Suff}(V_{n-1}) \subseteq
		\mathsf{Suff}(S_{n-1}) = S_{n-1}$ since $V_{n-1} \subseteq S_{n-1}$ and
		$S_{n-1}$ is suffix-closed. %Since , we have $S_n \subseteq S_{n-1}$. %This way
		%at index $n$, property $(a)$ is ensured. , and so is property $(b)$ since
		%$\prefUniq{V_{n-1}} \subseteq V_{n-1} \subseteq S_{n-1}$ and, by property $(a)$
		%at index $n-1$, $S_{n-1}$ is suffix-closed. As for property $(c)$, 
		In addition, consider a path $\rho \in \prefUniq{S_{n-1}} \neq \emptyset$%
		%minimizing its length, that is such that $|\rho| = \mathsf{MinLen}(S_{n-1})$.
		%It follows that $\rho \in \prefUniq{S_{n-1}}$. Thus, 
		. Assume towards a contradiction that $\rho \in S_{n} =
		\mathsf{Suff}(V_{n-1})$. Then, there is some $\rho' \in V_{n-1} \subseteq
		S_{n-1}$ such that $\rho \in \cyl(\rho')$. Since $\rho' \in S_{n-1}$, there
		would be some $\rho'' \in \prefUniq{S_{n-1}}$ such that $\rho' \in
		\cyl(\rho'')$, and hence $\rho \in \cyl(\rho'')$. We obtain that $\rho \in
		\cyl(\rho) \cap \cyl(\rho'')$ with $\rho,\rho'' \in \prefUniq{S_{n-1}}$. By
		definition, this implies $\rho = \rho'' $ and $\rho' \in \cyl(\rho)$. That is,
		$\rho = \rho' \in V_{n-1}$. However, by property $(d)$ at index $n-1$, we have
		$\MarVal{\G}[\s_{n-1}^\rho](\Head(\rho)) \geq c \cdot (1 - c/4) > c/2$ and hence
		$\rho \notin V_{n-1}$. Hence the contradiction. In fact, $\rho \notin S_{n}$ and
		property $(b)$ is ensured.
		
		%It follows then that $\rho \notin S_n$. Indeed, if $\rho \in S_n$, there
		%would be some $\rho' \in V_{n-1} \subseteq S_{n-1}$ such that $\rho \in
		%\cyl(\rho')$. Since $\rho' \in S_{n-1}$, there would be some $\rho'' \in
		%\prefUniq{S_{n-1}}$ such that $\rho' \in \cyl(\rho'')$, and hence $\rho \in
		%\cyl(\rho'')$. We obtain that $\rho \in \cyl(\rho) \cap \cyl(\rho'')$ with
		%$\rho,\rho'' \in \prefUniq{S_{n-1}}$. By definition, this implies $\rho =
		%\rho'' $. Thus, $\rho = \rho' \in V_{n-1}$ and the contradiction. In fact,
		%$\prefUniq{S_{n-1}} \cap S_n = \emptyset$.
		
		%But since $v_{n-1} \subseteq S_{n-1}$, we have $|\rho'| \geq
		%\mathsf{MinLen}(S_{n-1}) = |\rho|$. That is $\rho = \rho' \in V_{n-1}$ which is
		%not true. Overall, we have $\rho \notin S_n$. As this holds for all $\rho \in
		%S_{n-1}$ such that $|\rho| = \mathsf{MinLen}(S_{n-1})$, and since $S_n
		%\subseteq S_{n-1}$, we have $\mathsf{MinLen}(S_{n}) >
		%\mathsf{MinLen}(S_{n-1})$. That is, property $(c)$ is ensured at index $n$. 
		
		Let us now define the strategy $\s_n: Q^+ \rightarrow \Dist(A)$. We set:
		\begin{itemize}
			\item $\restr{\s_n}{Q^+ \setminus S_n} := \s_{n-1}$;
			\item for all $\rho \in \prefUniq{V_{n-1}} = \prefUniq{S_n}$, we have
			$\s_n^\rho := \s_\mathsf{head(\rho)}$.
		\end{itemize} 
		Let us show that this definition ensures properties $(c)-(d)$. Let $k \leq
		n-1$. Then, $\s_k$ coincides with $\s_{n-1}$ on $Q^+ \setminus S_{k+1} \subseteq
		Q^+ \setminus S_n$. Hence, $\s_k$ also coincides with $\s_n$ on $Q^+ \setminus
		S_{k+1}$. Furthermore, $\s_{n-1}$ coincides with $\s_{n}$ on $Q^+ \setminus
		S_{n}$. Hence, property $(c)$ holds. Finally, since $W$ is a prefix-independent
		objective, for all $\rho \in \prefUniq{S_n}$ and $q := \Head(\rho) \in Q$, we
		have  $\MarVal{\G}[\s_n^\rho](\Head(\rho)) = \MarVal{\G}[\s_q](q) \geq c \cdot
		(1 - c/4)$. That is, property $(d)$ holds. 
		
		%For all $\rho \in \Q^+$:
		%\begin{displaymath}
		%	\s_{n}(\rho) :=
		%	\left\{
		%	\begin{array}{ll}
		%		\s_i(\rho)  & \mbox{if } \rho \notin S_{n} \\
		%		\s_q(\rho') & \mbox{if } \rho = \rho'' \cdot \rho' \text{ for some }\rho''
		%\in \prefUniq{V_{n-1}} \text{ and }q = \mathsf{head(\rho)}
		%	\end{array}
		%	\right.
		%\end{displaymath}
		
		%We want now to define a strategy $\s$ as the limit of the sequences of
		%strategies $\s_n$. However, we have to justify that it is well-defined, in the
		%sense that, on any finite path $\rho \in Q^+$, ultimately all strategies $\s_n$
		%play the same distribution on $\rho$. To do so, let us consider the sequence
		%$(\mathsf{MinLen}(S_n))_{n \in \N} \in \N^\N$ of minimal length of finite paths
		%of $S_n$. Let us show that this sequence if strictly increasing. Let $n \in
		%\N$. 
		%Prouver ce résultat par induction en même temps que la définition (avec $S_i
		%\subseteq S_{i+1}$ et $S_i$ par suffixe).
		
		This concludes the inductive definitions of the sequences $(\s_n)_{n \in \N}
		\in (\SetStrat{\Aconc}{\A})^\N$ and $(S_n)_{n \in \N} \in
		(\mathcal{P}(Q^+))^\N$. We can then define the strategy $\s_\A$ as the limit of
		the strategies $\s_n$. That is, if at some point $S_n = \emptyset$  and $\s_n =
		\s_{n + k}$ for all $k \geq 0$, then we set $\s_\A := \s_n$. Otherwise, note
		that by property $(b)$ we have $\cap_{n \in \N} S_n = \emptyset$. Indeed,
		consider a finite path $\rho \in \cap_{n \in \N} S_n \subseteq Q^*$. For all $n
		\in \N$, there a unique finite path $\rho_n \in \prefUniq{S_n}$ such that $\rho
		\in \cyl(\rho_n)$. Furthermore, for all $k < l \in \N$, we have $\rho_k \neq
		\rho_{l}$, otherwise we would have $\prefUniq{S_k} \ni \rho_k \in \prefUniq{S_l}
		\subseteq S_l \subseteq S_{k+1}$ (since $l \geq k+1$) which does not hold.
		Hence, $\rho$ has infinitely many different finite prefixes, which is not
		possible. %That is $\prefUniq{S_n} \cap \prefUniq{S_{n-1}} = \emptyset$. 
		It follows that, for all $\rho \in Q^+$, there is exactly one index $n \in \N$
		such that $\rho \in S_n \setminus S_{n-1}$.% since $\mathsf{MinLen}(S_{0}) = 1$,
		%for all $n \geq 1$, we have $Q^{n} \cap S_{n} = \emptyset$. Hence, with
		%property $(d)$, for all $\rho \in Q^{n}$, we have $\s_n(\rho) = \s_{n+k}(\rho)$
		%for all $k \in \N$. Hence, for all $\rho \in Q^+$, we have $\s_\A(\rho) :=
		%\s_{|\rho|-1}(\rho)$. In particular,
		We then define the strategy $\s_\A$ in the following way:
		\begin{displaymath}
		\forall n \in \N,\; \restr{\s_\A}{S_n \setminus S_{n-1}} := \s_n
		\end{displaymath}
		
		With property $(c)$, this definition ensures that:
		\begin{equation}
		\forall n \in \N,\; \restr{\s_\A}{Q^+ \setminus S_{n+1}} = \s_n
		\label{eqn:s_as_sn}
		\end{equation}
		Note that this also holds in the first case where $S_n = \emptyset$ for some
		$n \in \N$.
		
		We claim that this Player $\A$ strategy is almost-sure. Consider a Player $\B$
		strategy $\s_\B$. Define the concurrent game $\G'$ that is obtained from $\G$ by
		unfolding the state space, thus obtaining the countable state space $q_0 \cdot
		Q^*$ (the set of strategies is unchanged). %Furthermore, we fix the strategies
		%$\s_\A$ and $\s_\B$ for both players. %and the infinite Markov chain
		%$\mathcal{M} = \langle q_0 \cdot Q^*,\mathbb{P} \rangle$ that is the unfolding
		%the concurrent game $\Aconc$ with the strategies of both players fixed. This
		%Markov chain can be seen as a concurrent game $\Aconc_\mathcal{M}$ with a
		%countable state space where the set of actions of both players is a singleton. 
		Let us show that the value of the game $\G'$ with strategies $\s_\A$ and
		$\s_\B$ from all states (which corresponds to finite paths in $\G$) is at least
		$c \cdot \frac{1}{2} > 0$. This would imply that it is in fact $1$ from all
		states, by Lemma~\ref{lem:chaterjee_value_0_1}.
		
		For all sets $S \subseteq q_0 \cdot Q^*$, we denote by $\diamondsuit S$ the
		event specifying that the set $S$ is reached.
		
		Let $n \in \N$ and $\rho \in S_n \setminus S_{n+1}$. %Proving this result will
		%be different for the cases where $\rho \in \prefUniq{S_{n}}$ and $\rho \notin
		%\prefUniq{S_{n}}$. 
		%Let us first consider $\mathbb{P}^\rho_{\s_n^\rho,\tilde{\s}_\B'}[W]$ the
		%probability to ensure 
		First, let us consider a Player $\B$ strategy $\tilde{\s}_\B$ that coincides
		with $\s_\B$ on $Q^+ \setminus S_{n+1}$ and such that, for all paths $\rho' \in
		\prefUniq{S_{n+1}} \cap \cyl(\rho)$% such that $\rho' = \rho \cdot \pi$
		, we have $\prob{\rho'}{\s_n^{\rho'},\s_\B^{\rho'}}[W] \leq c \cdot
		\frac{1}{2}$. Note that this is possible since $\rho' \in \prefUniq{S_{n+1}} =
		\prefUniq{V_n} \subseteq V_n = \mathsf{ValInf}_{c \cdot
			\frac{1}{2}}(\s_{n},S_{n})$, which means that $\MarVal{\G}[\s_n^\rho](\rho') < c
		\cdot \frac{1}{2}$. With this definition, if
		$\mathbb{P}^\rho_{\s_n^\rho,\tilde{\s}_\B^\rho}[\diamondsuit S_{n+1}] > 0$, we
		have:
		\begin{align*}
		\mathbb{P}^\rho_{\s_n^\rho,\tilde{\s}_\B^\rho}[W \mid \diamondsuit S_{n+1}] &
		= \frac{\sum_{\pi \in \cyl(\rho) \cap \prefUniq{S_{n+1}}}
			\mathbb{P}^{\pi}_{\s_n^{\pi},\tilde{\s}_\B^{\pi}}[W] \cdot
			\mathbb{P}^{\rho}_{\s_n^{\rho},\tilde{\s}_\B^{\rho}}[
			\pi]}{\mathbb{P}^\rho_{\s_n^\rho,\tilde{\s}_\B^\rho}[\diamondsuit S_{n+1}]} \\ &
		\leq c \cdot \frac{1}{2} \frac{\sum_{\pi \in \cyl(\rho) \cap \prefUniq{S_{n+1}}}			
			\mathbb{P}^{\rho}_{\s_n^{\rho},\tilde{\s}_\B^{\rho}}[\pi]}{\mathbb{P}^\rho_{\s_n^\rho,\tilde{\s}_\B^\rho}[\diamondsuit
			S_{n+1}]} \\
		& = c \cdot \frac{1}{2} \cdot
		\frac{\mathbb{P}^\rho_{\s_n^\rho,\tilde{\s}_\B^\rho}[\diamondsuit
			S_{n+1}]}{\mathbb{P}^\rho_{\s_n^\rho,\tilde{\s}_\B^\rho}[\diamondsuit S_{n+1}]}
		= c \cdot \frac{1}{2}
		\end{align*}
		%$\mathbb{P}^\rho_{\s_n^\rho,\tilde{\s}_\B'}[W \mid \diamondsuit S_{n+1}] \leq
		%\frac{1}{2}$. 
		%Furthermore, by Equation~\ref{eqn:s_as_sn}, $\s_\A$ coincides with $\s_n$
		%from $\rho$ as long as $S_{n+1}$ is not reached. Hence:
		Furthermore, since $\tilde{\s}_\B$ and $\s_\B$ coincide on $Q^+ \setminus
		S_{n+1}$:
		\begin{align*}
		\mathbb{P}^\rho_{\s_n^\rho,\tilde{\s}_\B^\rho}[W] & =
		\mathbb{P}^\rho_{\s_n^\rho,\tilde{\s}_\B^\rho}[W \mid \diamondsuit S_{n+1}]
		\cdot \mathbb{P}^\rho_{\s_n^\rho,\tilde{\s}_\B^\rho}[\diamondsuit S_{n+1}] +
		\mathbb{P}^\rho_{\s_n^\rho,\tilde{\s}_\B^\rho}[W \mid \lnot \diamondsuit
		S_{n+1}] \cdot \mathbb{P}^q_{\s_n^\rho,\tilde{\s}_\B^\rho}[\lnot \diamondsuit
		S_{n+1}] \\
		& \leq c \cdot \frac{1}{2} \cdot
		\mathbb{P}^\rho_{\s_n^\rho,\s_\B^\rho}[\diamondsuit S_{n+1}] +
		\mathbb{P}^\rho_{\s_n^\rho,\s_\B^\rho}[W \mid \lnot \diamondsuit S_{n+1}] \cdot
		\mathbb{P}^\rho_{\s_n^\rho,\s_\B^\rho}[\lnot \diamondsuit S_{n+1}] \\
		& = c \cdot \frac{1}{2} \cdot p_1 + p_2 \cdot (1-p_1)
		\end{align*}
		for $p_1 := \mathbb{P}^\rho_{\s_n^\rho,\s_\B^\rho}[\diamondsuit S_{n+1}]$ and
		$p_2 := \mathbb{P}^\rho_{\s_n^\rho,\s_\B^\rho}[W \mid \lnot \diamondsuit
		S_{n+1}]$. We obtain:
		\begin{equation}
		\mathbb{P}^\rho_{\s_n^\rho,\tilde{\s}_\B^\rho}[W] \leq \frac{c}{2} \cdot p_1 +
		p_2 \cdot (1-p_1)
		\label{eqn:ineq}
		\end{equation}
		
		This inequality holds for all $\rho \in S_n \setminus S_{n+1}$ and $n \in \N$.

		Now, in the case where $\rho \in \prefUniq{S_n}$, let us show that
		$\mathbb{P}^\rho_{\s_\A^\rho,\s_\B^\rho}[W] \geq c \cdot \frac{1}{2}$. Indeed,
		%However, in the case where $\rho \in \prefUniq{S_n}$, 
		we have by property $(d)$: $\MarVal{\G}[\s_n^\rho](\rho) \geq c \cdot (1 -
		c/4)$. This implies $\mathbb{P}^\rho_{\s_n^\rho,\tilde{\s}_\B^\rho}[W] \geq c
		\cdot (1 - c/4)$. Therefore:
		\begin{align*}
		c \cdot (1 - \frac{c}{4}) \leq \frac{c}{2} \cdot p_1 + p_2 \cdot (1 - p_1)
		%\Leftrightarrow p_1 \cdot (c \cdot \frac{1}{2} - p_2) \geq c \cdot \frac{3}{4}
		%- p_2
		\end{align*}
		Hence, $p_2 > \frac{c}{2}$ (since $1/2 < 1 - c/4$) and:
		\begin{align*}
		p_1 \leq \frac{p_2 - c \cdot (1 - \frac{c}{4})}{p_2 - \frac{c}{2}} %\leq
		%\frac{1 - c \cdot (1 - \frac{c}{4})}{1 - \frac{c}{2}}
		\end{align*}
		That is:
		\begin{align*}
		p_2 \cdot (1 - p_1) \geq p_2 \cdot \frac{p_2 - \frac{c}{2} - p_2 + c \cdot (1
			- \frac{c}{4})}{p_2 - \frac{c}{2}} = p_2 \cdot \frac{c \cdot (\frac{1}{2} -
			\frac{c}{4})}{p_2 - \frac{c}{2}} = \frac{c}{2} \cdot \frac{p_2 - p_2 \cdot
			\frac{c}{2}}{p_2 - \frac{c}{2}} \geq \frac{c}{2}
		\end{align*}
		
		%Hence, since $p_1,p_2 \in [0,1]$, it follows that $p_1 \leq \frac{1}{2}$.
		%Then, it follows that:
		%\begin{displaymath}
		%\frac{1}{2} \leq \frac{3}{4} - \frac{1}{2} \cdot p_1 \leq p_2 \cdot (1-p_1)
		%\end{displaymath}
		
		We can now consider the probability
		$\mathbb{P}^\rho_{\s_\A^\rho,\s_\B^\rho}[W]$ of satisfying $W$ given strategies
		$\s_\A$ and $\tilde{\s}_\B$. Note that $\s_\A$ coincides with $\s_n$ on $S_n
		\setminus S_{n+1}$ and in particular on $\cyl(\rho) \setminus S_{n+1}$. Hence,
		we have:
		\begin{align*}
		\mathbb{P}^\rho_{\s_\A^\rho,\s_\B^\rho}[W] & =
		\mathbb{P}^\rho_{\s_\A^\rho,\s_\B^\rho}[W \mid \diamondsuit S_{n+1}] \cdot
		\mathbb{P}^\rho_{\s_\A^\rho,\s_\B^\rho}[\diamondsuit S_{n+1}] +
		\mathbb{P}^\rho_{\s_\A^\rho,\s_\B^\rho}[W \mid \lnot \diamondsuit S_{n+1}] \cdot
		\mathbb{P}^\rho_{\s_\A^\rho,\s_\B^\rho}[\lnot \diamondsuit S_{n+1}] \\
		& \geq \mathbb{P}^\rho_{\s_\A^\rho,\s_\B^\rho}[W \mid \lnot \diamondsuit
		S_{n+1}] \cdot \mathbb{P}^\rho_{\s_\A^\rho,\s_\B^\rho}[\lnot \diamondsuit
		S_{n+1}] \\
		& = \mathbb{P}^\rho_{\s_n^\rho,\s_\B^\rho}[W \mid \lnot \diamondsuit S_{n+1}]
		\cdot \mathbb{P}^\rho_{\s_n^\rho,\s_\B^\rho}[\lnot \diamondsuit S_{n+1}] \\ & =
		p_2 \cdot (1 - p_1) \geq \frac{c}{2}
		\end{align*}
		This holds for all $\rho \in \prefUniq{S_n}$ and for all $n \in \N$. 
		
		Consider now some arbitrary $\rho \in S_n \setminus S_{n+1}$. In this case,
		since $\rho \notin \mathsf{ValInf}_{c \cdot \frac{1}{2}}(\s_{n},S_{n}) = V_{n}
		\subseteq S_{n+1}$, we have $\MarVal{\G}[\s_n^\rho](\rho) \geq \frac{c}{2}$ and
		$\mathbb{P}^\rho_{\s_n^\rho,\tilde{\s}_\B^\rho}[W] \geq \frac{c}{2}$. Hence,
		with Equation~\ref{eqn:ineq} we have:
		\begin{displaymath}
		\frac{c}{2} \leq \mathbb{P}^\rho_{\s_n^\rho,\tilde{\s}_\B^\rho}[W] \leq
		\frac{c}{2} \cdot p_1 + p_2 \cdot (1-p_1)
		\end{displaymath}
		Hence, assuming $1 - p_1 = \mathbb{P}^\rho_{\s_n^\rho,\s_\B^\rho}[\lnot
		\diamondsuit S_{n+1}] > 0$, we obtain that $p_2  =
		\mathbb{P}^\rho_{\s_n^\rho,\s_\B^\rho}[W \mid \lnot \diamondsuit S_{n+1}] \geq
		\frac{c}{2}$. Furthermore, we have shown that for all $\rho' \in
		\prefUniq{S_{n+1}}$, we have 
		$\mathbb{P}^{\rho'}_{\s_\A^{\rho'},\s_\B^{\rho'}}[W] \geq \frac{c}{2}$. Hence,
		assuming that $\mathbb{P}^\rho_{\s_\A^\rho,\s_\B^\rho}[\diamondsuit S_{n+1}]
		\neq 0$, we have: 
		\begin{align*}
		\mathbb{P}^\rho_{\s_\A^\rho,\s_\B^\rho}[W \mid \diamondsuit S_{n+1}] & =
		\frac{1}{\mathbb{P}^\rho_{\s_\A^\rho,\s_\B^\rho}[\diamondsuit S_{n+1}]} \cdot
		\sum_{\rho' \in \prefUniq{S_{n+1}}}
		\mathbb{P}^{\rho'}_{\s_\A^{\rho'},\s_\B^{\rho'}}[W] \cdot
		\mathbb{P}^\rho_{\s_\A^\rho,\s_\B^\rho}[\rho'] \\
		& \geq \frac{1}{\mathbb{P}^\rho_{\s_\A^\rho,\s_\B^\rho}[\diamondsuit S_{n+1}]}
		\cdot \sum_{\rho' \in \prefUniq{S_{n+1}}} \frac{c}{2} \cdot
		\mathbb{P}^\rho_{\s_\A^\rho,\s_\B^\rho}[\rho'] \\
		& = \frac{c}{2} \cdot
		\frac{1}{\mathbb{P}^\rho_{\s_\A^\rho,\s_\B^\rho}[\diamondsuit S_{n+1}]} \cdot
		\sum_{\rho' \in \prefUniq{S_{n+1}}} \cdot
		\mathbb{P}^\rho_{\s_\A^\rho,\s_\B^\rho}[\rho'] \\
		& = \frac{c}{2} \cdot
		\frac{1}{\mathbb{P}^\rho_{\s_\A^\rho,\s_\B^\rho}[\diamondsuit S_{n+1}]} \cdot
		\mathbb{P}^\rho_{\s_\A^\rho,\s_\B^\rho}[\diamondsuit S_{n+1}] \\
		& = \frac{c}{2}
		\end{align*}
		
		Then, assuming that $0<  \mathbb{P}^\rho_{\s_\A^\rho,\s_\B^\rho}[\diamondsuit
		S_{n+1}] < 1$, we have:
		\begin{align*}
		\mathbb{P}^\rho_{\s_\A^\rho,\s_\B^\rho}[W] & =
		\mathbb{P}^\rho_{\s_\A^\rho,\s_\B^\rho}[W \mid \diamondsuit S_{n+1}] \cdot
		\mathbb{P}^\rho_{\s_\A^\rho,\s_\B^\rho}[\diamondsuit S_{n+1}] +
		\mathbb{P}^\rho_{\s_\A^\rho,\s_\B^\rho}[W \mid \lnot \diamondsuit S_{n+1}] \cdot
		\mathbb{P}^\rho_{\s_\A^\rho,\s_\B^\rho}[\lnot \diamondsuit S_{n+1}] \\
		& \geq \frac{c}{2} \cdot \mathbb{P}^\rho_{\s_\A^\rho,\s_\B^\rho}[\diamondsuit
		S_{n+1}] + \mathbb{P}^\rho_{\s_\A^\rho,\s_\B^\rho}[W \mid \lnot \diamondsuit
		S_{n+1}] \cdot \mathbb{P}^\rho_{\s_\A^\rho,\s_\B^\rho}[\lnot \diamondsuit
		S_{n+1}] \\
		& = \frac{c}{2} \cdot \mathbb{P}^\rho_{\s_n^\rho,\s_\B^\rho}[\diamondsuit
		S_{n+1}] + \mathbb{P}^\rho_{\s_n^\rho,\s_\B^\rho}[W \mid \lnot \diamondsuit
		S_{n+1}] \cdot \mathbb{P}^\rho_{\s_n^\rho,\s_\B^\rho}[\lnot \diamondsuit
		S_{n+1}] \\
		& = \frac{c}{2} \cdot \mathbb{P}^\rho_{\s_n^\rho,\s_\B^\rho}[\diamondsuit
		S_{n+1}] + p_2 \cdot \mathbb{P}^\rho_{\s_n^\rho,\s_\B^\rho}[\lnot \diamondsuit
		S_{n+1}] \\
		& \geq \frac{c}{2} \cdot \mathbb{P}^\rho_{\s_n^\rho,\s_\B^\rho}[\diamondsuit
		S_{n+1}] + \frac{c}{2} \cdot \mathbb{P}^\rho_{\s_n^\rho,\s_\B^\rho}[\lnot
		\diamondsuit S_{n+1}] \\
		& = \frac{c}{2}
		\end{align*}
		Note that this also holds if the probability
		$\mathbb{P}^\rho_{\s_\A^\rho,\s_\B^\rho}[\diamondsuit S_{n+1}]
		=\mathbb{P}^\rho_{\s_n^\rho,\s_\B^\rho}[\diamondsuit S_{n+1}]$ is either equal
		to 0 or to 1.
		
		Overall, we obtain that for all $\rho \in Q^+$, we have $\MarVal{\G'}(\rho) =
		\mathbb{P}^{\rho}_{\s_\A^\rho,\s_\B^\rho}[W] \geq \frac{1}{2}$. Hence, $\inf_{q
			\in Q} \MarVal{\G'}(q) \geq \frac{1}{2}$, which implies, by
		Lemma~\ref{lem:chaterjee_value_0_1}, that, for all $\rho \in q_0 \cdot Q^*$, we
		have $1 = \MarVal{\G'}(q) = \mathbb{P}^{\G,\rho}_{\s_\A^\rho,\s_\B^\rho}[W]$. in
		particular, $\mathbb{P}^{\G}_{\s_\A,\s_\B}[W] = 1$. As this holds for all Player
		$\B$ strategy $\s_\B$, it follows that the strategy $\s_\A$ is almost-sure. We
		can then do the same from all states $q \in Q$ to obtain a strategy %uniformly 
		almost-sure. Furthermore, we have shown that the value of the strategy from
		all finite paths is at least $c/2 > 0$. In fact, this implies that the strategy
		is subgame almost-sure. %Such a strategy would also be subgame almost-surely
		%winning since the construction is independent of the starting point (since the
		%objective is prefix-independent). 
	\end{proof}
	
	\section{Finite-choice strategies}
	%\begin{definition}[Strong fairness]
	%A Player $\A$ strategy $\s_\A$ is \emph{strongly fair} if there is a constant
	%$c > 0$ such that, for all $\rho \cdot q \in Q^+$, for all $a \in A_q$ we have:
	%$\s_\A(\rho \cdot q)(q) > 0 \Rightarrow \s_\A(\rho \cdot q)(q) \geq c$.
	%\end{definition}
	
	%\begin{lemma}
	%Consider a concurrent game $\G$. Assume that, for all states $q \in Q$, there
	%is a Player $\A$ %strongly 
	%fair strategy that is uniformly optimal. Then, there exists a subgame optimal
	%strategy that is strongly fair.
	%\end{lemma}
	%This characterizes unfair strategies.
	% is twofold: first, if there exists a uniformly optimal strategy that is fair,
	%then there exists a subgame optimal strategies. Second, when there is a subgame
	%optimal strategy that is fair, for the parity objective, 
	%In fact, fair strategies are strategies ensuring that
	
	\subsection{Proof of Theorem~\ref{thm:fair_strategy_uniformly_optimal}}
	\label{proof:thm_fair_strategy_uniformly_optimal}	
	We show Theorem~\ref{thm:fair_strategy_uniformly_optimal} for more general
	strategies, namely, positively bounded strategies. 
	\begin{definition}[Positively bounded strategy]
		A Player $\A$ strategy $\s_\A$ is \emph{positively bounded} (p.b. for short)
		if there is a constant $c > 0$ such that, for all $\rho \cdot q \in Q^+$, for
		all $a \in A_q$ we have: $\s_\A(\rho \cdot q)(q) > 0 \Rightarrow \s_\A(\rho
		\cdot q)(q) \geq c$.
	\end{definition}
	
	We show the theorem below.
	\begin{theorem}
		\label{thm:positively_bounded_strategy_uniformly_optimal}
		Consider a PI concurrent game $\G$. Assume that there is an %uniformly 
		optimal strategy that is positively bounded. Then, there is a subgame optimal
		strategy that is positively bounded.
	\end{theorem}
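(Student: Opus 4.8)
The plan is to produce the required strategy by a \emph{reset} construction and then check both conditions of the characterization in Theorem~\ref{thm:subgame_optimal_arbitrary_strategy}. Let $\s_\A$ be the given optimal positively bounded strategy, with witnessing constant $c > 0$. I define $\s_\A'$ by running $\s_\A$ from a restart history that is updated along the play: at each point I keep the suffix $\pi$ of the current history since the last reset and play $\s_\A(\pi)$; as soon as the next state $q'$ makes the residual $\s_\A^{\pi \cdot q'}$ non-optimal from $q'$, I reset, taking $q'$ as the new restart history. Since $\s_\A'$ only ever plays $\GF$-strategies of the form $\s_\A(\cdot)$, it is again positively bounded with the same constant $c$ (and has finite choice if $\s_\A$ does). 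It then remains to prove that $\s_\A'$ is subgame optimal.

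Local optimality is the easy half. At any history ending in $q$, the action played by $\s_\A'$ is $\s_\A(\pi)$ for a suffix $\pi$ along which the residual of $\s_\A$ is optimal from $q$; Proposition~\ref{prop:optimal_implies_locally_optimal} applied to that optimal residual gives $\outM_{\formN_q}(\s_\A(\pi), b) \geq \MarVal{\G}(q)$ for every $b$, which is exactly local optimality. The same reasoning yields the quantitative fact driving the argument: if, at a step where the (deterministic) Player $\B$ plays $b$ from $q \in Q_u$, the transition to a positive-probability successor makes the residual non-optimal, so that a reset is triggered, then necessarily $\outM_{\formN_q}(\s_\A(\pi), b) > u$ strictly. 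Indeed, decomposing $\MarVal{\G}[\s_\A^\pi](q)$ over the successors via Proposition~\ref{prop:outcome_valuation} and using $\MarVal{\G}[\s_\A^{\pi \cdot q''}](q'') \leq \MarVal{\G}(q'')$, with equality exactly when the residual is optimal, the equality $\outM_{\formN_q}(\s_\A(\pi),b) = u$ would force every positive-probability successor to carry an optimal residual, contradicting the reset.

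The main obstacle is to globalize this local strictness into the statement that, \emph{on the event that the game settles in a value area $Q_u$ for $u \in V_{\G} \setminus \{0\}$, almost surely only finitely many resets occur}. Positive boundedness is essential here. Since $\outM_{\formN_q}(\s_\A(\pi),b) = \sum_{q''} \mathbb{P}^{q,q''}(\s_\A(\pi),b)\,\MarVal{\G}(q'') > u$, some successor $q''$ with $\MarVal{\G}(q'') > u$, hence $q'' \notin Q_u$, is reached with positive probability; writing $\mathbb{P}^{q,q''}(\s_\A(\pi),b) = \sum_a \s_\A(\pi)(a)\,\distribFunc(\delta(q,a,b))(q'')$ and bounding a witnessing positive weight $\s_\A(\pi)(a) \geq c$ together with the smallest positive Nature probability $p_{\min} > 0$ of the finite arena, this departure probability is at least $\alpha := c \cdot p_{\min}$, uniformly over histories. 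Thus every reset step is a step at which, conditionally on the past, the play leaves $Q_u$ with probability at least $\alpha$. If infinitely many resets occurred along a settling path, the conditional second Borel--Cantelli lemma (Levy's extension) would force infinitely many departures from $Q_u$ almost surely, contradicting settling; this gives the claimed finiteness.

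With the claim in hand, the verification of the second condition is routine. Fix $\rho \in Q^+$, a deterministic $\s_\B$ and $u \in V_{\G} \setminus \{0\}$, and abbreviate the induced measure by $\mathbb{P}$. By the finiteness claim, $\mathbb{P}[Q^* \cdot (Q_u)^\omega]$ and $\mathbb{P}[W \cap Q^* \cdot (Q_u)^\omega]$ are unchanged upon intersecting with the event $G$ that $\s_\A'$ resets only finitely often. On $G$, after the last reset the play stays in $Q_u$ and $\s_\A'$ follows a residual of $\s_\A$ that is optimal at every subsequent history, so the value of that residual equals $u > 0$ throughout; against the fixed $\s_\B$ this forces $\mathbb{P}_{\rho_n}(W) \geq u > 0$ for all large $n$, and Levy's $0$--$1$ law (Theorem~\ref{thm:levy}, via Lemma~\ref{lem:chaterjee_value_0_1}) then yields $\mathbb{P}[W \mid G] = 1$. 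Hence $\mathbb{P}[W \cap Q^* \cdot (Q_u)^\omega] = \mathbb{P}[Q^* \cdot (Q_u)^\omega]$, which is the second condition of Theorem~\ref{thm:subgame_optimal_arbitrary_strategy}; combined with local optimality, this shows that $\s_\A'$ is subgame optimal and positively bounded, as required.
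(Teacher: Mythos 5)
Your overall architecture is the same as the paper's: define a reset variant of the given positively bounded optimal strategy, check local optimality via Proposition~\ref{prop:optimal_implies_locally_optimal}, use positive boundedness together with the smallest positive Nature probability to get a uniform lower bound $\alpha=c\cdot p_{\min}$ on the probability of leaving $Q_u$ at any history from which a reset can be triggered, deduce that on the event of settling in $Q_u$ only finitely many resets occur almost surely, and conclude through Theorem~\ref{thm:subgame_optimal_arbitrary_strategy}. Those parts (the strict inequality $\outM_{\formN_q}(\s_\A(\pi),b)>u$ at a reset-triggering step, the $\alpha$ bound, the Borel--Cantelli step) are correct and correspond to facts (a) and (b) of the paper's proof.

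The gap is in the last paragraph, which you dismiss as routine but which is actually the crux (and occupies most of the paper's proof, as its fact (c)). You claim that after the last reset, ``against the fixed $\s_\B$ this forces $\mathbb{P}_{\rho_n}(W)\geq u>0$ for all large $n$.'' But $\mathbb{P}_{\rho_n}(W)$ in the Markov chain to which you apply Levy's $0$--$1$ law is the probability of $W$ under the residual of the \emph{modified} strategy $\s_\A'$, not under the optimal residual of $\s_\A$. These two residuals agree only on continuations along which no further reset occurs; on other continuations $\s_\A'$ departs from $\s_\A$, and nothing proved so far bounds its value there --- indeed, $\mathbb{P}_{\rho_n}(W)\geq u$ for the residuals of $\s_\A'$ is essentially the subgame optimality you are trying to establish, so asserting it here is circular. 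The finiteness-of-resets claim tells you that, pathwise on the settling event, resets stop eventually; it does not by itself convert the value bound for $\s_\A^{\pi}$ into a bound on $\mathbb{P}_{\rho_n}(W)$ for $\s_\A'$, because the latter integrates over continuations where resets do occur. To close this you need an extra quantitative step: either show that the conditional probability of a future reset tends to $0$ along almost every settling path (a martingale/Levy argument on the event ``some future visit to a deviation history''), or argue as the paper does --- work with the auxiliary prefix-independent objective $W_u:=W\cup(Q^*\cdot(Q\setminus Q_u\cup\mathsf{Deviate}))^\omega$, use Lemma~\ref{lem:chaterjee_value_0_1} to extract a non-deviation history from which both $\mathbb{P}[W_u]$ and the probability of ever reaching a deviation or exit point are at most $u/2$, and only then transfer the value $u$ of the original residual to the modified strategy, which is legitimate precisely because the two coincide outside $\mathsf{Deviate}$ and the escape probability is controlled. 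As written, your final step does not go through.
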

	\begin{proof}
		Let us denote by $\s_\mathsf{pb}$ an %uniformly 
		optimal %strongly 
		positively bounded Player $\A$ strategy. Let us build inductively a Player
		$\A$ subgame optimal strategy $\s_\A$. It is defined as follows: for all finite
		paths $\rho = \rho' \cdot q \in Q^+$, we set $\s_\A(\rho)$ to:
		\begin{displaymath}
		\s_\A(\rho) := \begin{cases}
		\s_{\mathsf{pb}}(\rho) & \text{ if }\s_{\mathsf{pb}} \text{ is optimal from
		}\rho\text{, i.e. } \MarVal{\G}(\s_{\mathsf{pb}}^\rho)[q] = \MarVal{\G}[q]\\
		\s_{\mathsf{pb}}(q) & \text{ otherwise }\\
		\end{cases}
		\end{displaymath}
		Since the strategy $\s_{\mathsf{pb}}$ is p.b.%strongly optimal
		, it follows that the strategy $\s_\A$ also is.	Let us show that it is subgame
		optimal by applying Theorem~\ref{thm:subgame_optimal_arbitrary_strategy}. 
		
		Let $\rho = \rho' \cdot q \in Q^+$. In all cases, the strategy $\s_\A$
		coincides with the strategy $\s_{\mathsf{pb}}$ (either at $\rho$ or $q$) which
		is optimal from $q$, hence, by
		Proposition~\ref{prop:optimal_implies_locally_optimal}, we have that, for all
		Player $\B$ action $b \in B_q$: $\outM_{\formNF_q}(\s_\A(q),b) \geq
		\MarVal{\G}(q)$. It follows that the strategy $\s_\A$ is locally optimal. 
		
		Let us now show that it ensures the second property. Let $\rho \in Q^+$ and
		let us denote by $\s_\rho$ the residual strategy $\s_\A^\rho$. Consider a Player
		$\B$ deterministic strategy $\s_\B$ and some value $u \in V_{\G} \setminus \{ 0
		\}$. We introduce two notations:
		\begin{itemize}
			\item we denote by $\mathsf{Exit}_u \subseteq Q^+$ the set of finite paths
			ending in $Q_u$ with a positive probability to exit this value area:
			$\mathsf{Exit}_u := \{ \pi \cdot Q^* \cdot Q_u \mid
			\mathbb{P}^{\s_\rho,\s_\B}_{\pi}[Q \setminus Q_u] > 0\}$. %Let us also denote by
			%$\mathsf{Exit}_u^\omega$ the set of infinite paths where infinitely many
			%prefixes belong to $\mathsf{Exit}_u$: $\mathsf{Exit}_u^\omega := \{ \pi \in
			%\rho \cdot Q^\omega \mid \forall i \in \N,\; \exists j \geq i,\; \pi_{\leq j}
			%\in \mathsf{Exit}_u^\omega \}$.
			\item we also denote by $\mathsf{Deviate} \subseteq Q^+$ the set of finite
			paths where the strategies $\s_\A^\mathsf{pb}$ is not optimal: $\mathsf{Deviate}
			:= \{ \pi = \pi' \cdot q \in Q^+ \mid \MarVal{\G}[\s_{\mathsf{pb}}^{\rho \cdot
				\pi}](q) <  \MarVal{\G}(q) \}$.
		\end{itemize}
		Let us show the following facts:
		\begin{itemize}
			\item[(a).] $\mathbb{P}^{\s_\rho,\s_\B}_{\rho}[Q^* \cdot (Q_u)^\omega \cap
			(Q^* \cdot \mathsf{Exit}_u)^\omega]% =
			%\mathbb{P}^{\s_\rho^{\mathsf{pb}},\s_\B}_{\rho}[Q^* \cdot (Q_u)^\omega \cap
			%\mathsf{Exit}_u^\omega] 
			= 0$;
			\item[(b).] $\mathbb{P}^{\s_\rho,\s_\B}[Q^* \cdot (Q_u)^\omega \cap Q^* \cdot
			(Q \setminus \mathsf{Exit}_u)^\omega] \leq \mathbb{P}^{\s_\rho,\s_\B}[Q^* \cdot
			(Q_u)^\omega %\cap (Q^\omega \setminus \mathsf{Exit}_u^\omega) 
			\cap Q^* \cdot (Q \setminus \mathsf{Deviate})^\omega]$%and similarly for
			%$\s_\mathsf{pb}^\rho$
			;
			\item[(c).] $\mathbb{P}^{\s_\rho,\s_\B}_{\rho}[Q^* \cdot (Q_u \setminus
			\mathsf{Deviate})^\omega] = \mathbb{P}^{\s_\rho,\s_\B}_{\rho}[W \cap Q^* \cdot
			(Q_u \setminus \mathsf{Deviate})^\omega]$.
			%for all prefix-independent Borel sets $X \subseteq Q^\omega$ and finite
			%paths $\pi \in Q^*$, we have: $\mathbb{P}^{\s_{\rho \cdot \pi},\s_\B}_{\rho
			%\cdot \pi}[X \cap (Q^\omega \setminus \mathsf{Deviate}^\omega)] =
			%\mathbb{P}^{\s_{\mathsf{pb}}^{\rho \cdot \pi},\s_\B}_{\rho \cdot \pi}[X \cap
			%(Q^\omega \setminus \mathsf{Deviate}^\omega)]$;
			%\item [(d).] $\mathbb{P}^{\s_{\mathsf{pb}}^{\rho},\s_\B}_{\rho}[W \cap Q^*
			%\cdot (Q_u)^\omega] = \mathbb{P}^{\s_{\mathsf{pb}}^{\rho},\s_\B}_{\rho}[Q^*
			%\cdot (Q_u)^\omega]$.
		\end{itemize}
		If we assume that all these facts hold, then we obtain:
		\begin{align*}
		\mathbb{P}^{\s_\rho,\s_\B}_{\rho}[Q^* \cdot (Q_u)^\omega] & =
		\mathbb{P}^{\s_\rho,\s_\B}_{\rho}[Q^* \cdot (Q_u)^\omega \cap Q^* \cdot (Q
		\setminus \mathsf{Exit}_u)^\omega] & \text{ by fact (a)} \\
		& \leq \mathbb{P}^{\s_\rho,\s_\B}_{\rho}[Q^* \cdot (Q_u)^\omega \cap Q^* \cdot
		(Q \setminus \mathsf{Deviate})^\omega] & \text{ by fact (b)} \\
		& = \mathbb{P}^{\s_\rho,\s_\B}_{\rho}[Q^* \cdot (Q_u \setminus
		\mathsf{Deviate})^\omega]  \\
		& = \mathbb{P}^{\s_\rho,\s_\B}_{\rho}[W \cap Q^* \cdot (Q_u \setminus
		\mathsf{Deviate})^\omega] & \text{ by fact (c)} \\
		& \leq \mathbb{P}^{\s_\rho,\s_\B}_{\rho}[W \cap Q^* \cdot (Q_u)^\omega]  \\
		& \leq \mathbb{P}^{\s_{\rho},\s_\B}_{\rho}[Q^* \cdot (Q_u)^\omega] 
		%& = \mathbb{P}^{\s_{\mathsf{pb}}^{\rho},\s_\B}_{\rho}[Q^* \cdot (Q_u)^\omega
		%\cap (Q^\omega \setminus \mathsf{Deviate}^\omega)] & \text{ by fact (c)} \\
		%& \leq \mathbb{P}^{\s_{\mathsf{pb}}^{\rho},\s_\B}_{\rho}[Q^* \cdot
		%(Q_u)^\omega] &  \\
		%& = \mathbb{P}^{\s_{\mathsf{pb}}^{\rho},\s_\B}_{\rho}[W \cap Q^* \cdot
		%(Q_u)^\omega] & \text{ by fact (d)} \\
		%& = \mathbb{P}^{\s_{\mathsf{pb}}^{\rho},\s_\B}_{\rho}[W \cap Q^* \cdot
		%(Q_u)^\omega \cap (Q^\omega \setminus \mathsf{Exit}_u^\omega)] & \text{ by fact
		%(a)} \\
		%& \leq \mathbb{P}^{\s_{\mathsf{pb}}^{\rho},\s_\B}_{\rho}[W \cap Q^* \cdot
		%(Q_u)^\omega \cap (Q^\omega \setminus \mathsf{Deviate}_u^\omega)] & \text{ by
		%fact (b)} \\
		%& = \mathbb{P}^{\s_{\rho},\s_\B}_{\rho}[W \cap Q^* \cdot (Q_u)^\omega \cap
		%(Q^\omega \setminus \mathsf{Deviate}_u^\omega)] & \text{ by fact (c)} \\
		%& \leq \mathbb{P}^{\s_{\rho},\s_\B}_{\rho}[W \cap Q^* \cdot (Q_u)^\omega]
		%\leq \mathbb{P}^{\s_{\rho},\s_\B}_{\rho}[Q^* \cdot (Q_u)^\omega]
		\end{align*}
		%Attention: il faudra prouver que, en sous-jeu quand on ne dévit pas, alors la
		%stratégie $\s_\A^\mathsf{pb}$ est telle qu'elle est optimale sur tous les
		%sou-jeu compatible (dans $Q_u$), et donc on a $W$ avec proba 1 (meme argyment
		%que pour la preuve d'optimalite en sous-jeu).
		In fact, all these inequalities are equalities. We can then apply
		Theorem~\ref{thm:subgame_optimal_arbitrary_strategy} to conclude. 
		Let us now show all these facts one by one. 
		\begin{itemize}
			\item[(a).] %Consider a Player $\A$ pb strategy $\s$ that is locally optimal.
			%Then 
			%We have $\mathbb{P}^{\s_\rho,\s_\B}_{\rho}[Q^* \cdot (Q_u)^\omega \cap (Q^*
			%\cdot \mathsf{Exit}_u)^\omega] = 0$. Indeed, 
			Consider some $\pi = \pi' \cdot q \in \mathsf{Exit}_u$. We have
			$\mathbb{P}^{\s^\pi_\rho,\s_\B^\pi}_{\rho \cdot \pi}[Q \setminus Q_u] > 0$. Let
			$b := \s_\B(\pi)$ (recall that $\s_\B$ is a deterministic strategy) and let
			$A_{Q \setminus Q_u} := \{ a \in A_q \mid \distribFunc \circ \delta(q,a,b)[Q
			\setminus Q_u] > 0 \}$. Then, $\s^\pi_\rho[A_{Q \setminus Q_u}] > 0$ hence
			$\s^\pi_\rho[A_{Q \setminus Q_u}] \geq c$ for some fixed $c > 0$ (since
			$\s_\rho$ is p.b.). Furthermore, let $x := \min_{d \in \distribSet} \min_{q \in
				\Supp(\distribFunc(d))} \distribFunc(d)(q) > 0$. It follows that
			$\mathbb{P}^{\s^\pi_\rho,\s_\B^\pi}_{\rho \cdot \pi}[Q \setminus Q_u] \geq c
			\cdot x$. In fact, this holds for all $\pi \in \mathsf{Exit}_u$. Hence, for all
			$\pi \in Q^*$, we have $\mathbb{P}^{\s^\pi_\rho,\s_\B^\pi}_{\rho \cdot
				\pi}[(Q_u)^\omega \mid (Q^* \cdot \mathsf{Exit}_u)^\omega] \leq \lim_{n
				\rightarrow \infty} (1 - c \cdot x)^n = 0$. It follows that
			$\mathbb{P}^{\s_\rho,\s_\B}_{\rho}[Q^* \cdot (Q_u)^\omega \cap (Q^* \cdot
			\mathsf{Exit}_u)^\omega] = 0$.
			\item[(b).] Let us show that $\mathbb{P}^{\s_\rho,\s_\B}[Q^* \cdot
			(Q_u)^\omega \cap Q^* \cdot (Q \setminus \mathsf{Exit}_u)^\omega \cap (Q^* \cdot
			\mathsf{Deviate})^\omega] = 0$.	Let $\theta \in Q^* \cdot (Q_u)^\omega \cap Q^*
			\cdot (Q \setminus \mathsf{Exit}_u)^\omega$. Let $n \in \mathbb{N}$ be an index
			such that $\theta_{\geq n} \in (Q_u \setminus \mathsf{Exit}_u)^\omega$. %Let us
			%show that $\theta_{\geq n+1} \in (Q_u \setminus \mathsf{Deviate})^\omega$. 
			Consider, assuming it exists, the least index $i \geq n+1$ such that
			$\theta_i \in \mathsf{Deviate}$. That is, $\MarVal{\G}[\s_{\mathsf{pb}}^{\rho
				\cdot \theta_{\leq i}}](\theta_i) < \MarVal{\G}(\theta_i)$ and
			$\MarVal{\G}[\s_{\mathsf{pb}}^{\rho \cdot \theta_{\leq i-1}}](\theta_{i-1}) =
			\MarVal{\G}(\theta_{i-1})$. With a straightforward adaptation of
			Proposition~\ref{prop:optimal_implies_locally_optimal}, if
			$\mathbb{P}^{\s_\rho,\s_\B}_{\theta_{\leq i-1}}[\theta_i] > 0$, for $b :=
			\s_\B(\rho \cdot \theta_{\leq i-1})$ (recall that $\s_\B$ is deterministic), we
			have $\outM_{\formNF_{\theta_{i-1}}}(\s_\rho(\theta_{\leq i-1}),b) >
			\MarVal{\G}(\theta_{i-1}) = u$. Hence, at $\theta_{\leq i-1}$, there is a
			non-zero probability to reach a state of value different from $u$, i.e.
			$\mathbb{P}^{\s_\rho,\s_\B}_{\theta_{\leq i-1}}[Q \setminus Q_u] > 0$. That is,
			$\theta_{i-1} \in \mathsf{Exit}_u$.
			%$Q^* \cdot (Q_u)^\omega \cap (Q^\omega \setminus \mathsf{Exit}_u^\omega)
			%\subseteq Q^* \cdot (Q_u)^\omega \cap (Q^\omega \setminus
			%\mathsf{Deviate}^\omega)$;
			That is a path -- with a positive probability to occur -- that does not visit
			$\mathsf{Exit}_u$ does not visit $\mathsf{Deviate}$ as well. Hence,
			almost-surely, a path visiting $\mathsf{Exit}_u$ only finitely often visits
			$\mathsf{Deviate}$ only finitely often. It follows that
			$\mathbb{P}^{\s_\rho,\s_\B}[Q^* \cdot (Q_u)^\omega \cap Q^* \cdot (Q \setminus
			\mathsf{Exit}_u)^\omega \cap (Q^* \cdot \mathsf{Deviate})^\omega] = 0$. That is:
			$\mathbb{P}^{\s_\rho,\s_\B}[Q^* \cdot (Q_u)^\omega \cap Q^* \cdot (Q \setminus
			\mathsf{Exit}_u)^\omega] = \mathbb{P}^{\s_\rho,\s_\B}[Q^* \cdot (Q_u)^\omega
			\cap Q^* \cdot (Q \setminus \mathsf{Exit}_u)^\omega \cap Q^* \cdot (Q \setminus
			\mathsf{Deviate})^\omega] \leq \mathbb{P}^{\s_\rho,\s_\B}[Q^* \cdot (Q_u)^\omega
			\cap Q^* \cdot (Q \setminus \mathsf{Deviate})^\omega]$.
			\item[(c).] We proceed similarly to how we proved the necessary conditions of
			Theorem~\ref{thm:subgame_optimal_arbitrary_strategy} with an additional
			difficulty to conclude. Indeed, assume towards a contradiction that
			$\mathbb{P}^{\s_\rho,\s_\B}_{\rho}[Q^* \cdot (Q_u \setminus
			\mathsf{Deviate})^\omega] > \mathbb{P}^{\s_\rho,\s_\B}_{\rho}[W \cap Q^* \cdot
			(Q_u \setminus \mathsf{Deviate})^\omega]$. Consider the countable Markov chain
			induced by both strategies $\s_\rho$ and $\s_\B$ whose initial state is $\rho$.
			We denote by $\mathbb{P}$ the probability function in that Markov chain.
			Consider the prefix-independent objective $W_u := W \cup (Q^* \cdot (Q \setminus
			Q_u \cup \mathsf{Deviate}))^\omega$. We have:
			\begin{align*}
			\mathbb{P}[W_u] & = \mathbb{P}[W_u \cap Q^* \cdot (Q_u \setminus
			\mathsf{Deviate})^\omega] + \mathbb{P}[W_u \cap (Q^* \cdot (Q \setminus Q_u \cup
			\mathsf{Deviate})^\omega] \\
			& = \mathbb{P}[W \cap Q^* \cdot (Q_u \setminus \mathsf{Deviate})^\omega] +
			\mathbb{P}[(Q^* \cdot (Q \setminus Q_u \cup \mathsf{Deviate}))^\omega] \\
			& < \mathbb{P}[Q^* \cdot (Q_u \setminus \mathsf{Deviate})^\omega] +
			\mathbb{P}[(Q^* \cdot (Q \setminus Q_u \cup \mathsf{Deviate}))^\omega] \\
			& = 1
			\end{align*}
			Let us now show that there is a finite paths whose last state is in $Q_u
			\setminus \mathsf{Deviate}$ and from which the probability of $W_u$ is less than
			$u/2$. By Lemma~\ref{lem:chaterjee_value_0_1}, since $W_u$ is prefix
			independent, there is state $\pi \in Q^*$ (which corresponds to a finite path)
			in the Markov chain such that $\mathbb{P}_\pi[W_u] < u/2$. 
			
			Now, assume towards a contradiction that, for all $\pi' \in \pi \cdot Q^*
			\cdot (Q_u \setminus \mathsf{Deviate})$, we have either $\mathbb{P}_{\pi'}[W_u]
			\geq \frac{u}{2}$ or $\mathbb{P}_{\pi'}[Q^* \cdot (Q \setminus Q_u \cup
			\mathsf{Deviate})] > \frac{u}{2}$. Let us denote by $Q_1 := \{ \pi' \in \pi
			\cdot Q^* \cdot (Q_u \setminus \mathsf{Deviate}) \mid \mathbb{P}_{\pi'}[W_u]
			\geq \frac{u}{2} \}$ and by $Q_2 := \pi \cdot Q^* \cdot (Q_u \setminus
			\mathsf{Deviate}) \setminus Q_1$. By definition, $ \pi \cdot Q^* \cdot (Q_u
			\setminus \mathsf{Deviate}) = Q_1 \uplus Q_2$ and by assumption, for all $\pi'
			\in Q_2$, we have $\mathbb{P}_{\pi'}[Q^* \cdot (Q \setminus Q_u \cup
			\mathsf{Deviate})] > u/2$. It follows that $\mathbb{P}_{\pi}[Q^* \cdot (Q_u
			\setminus \mathsf{Deviate})^\omega \cap (Q^* \cdot Q_2)^\omega] = 0$. That is,
			$\mathbb{P}_{\pi}[Q^* \cdot (Q_u \setminus \mathsf{Deviate})^\omega] =
			\mathbb{P}_{\pi}[Q^* \cdot (Q_1 \uplus Q_2)^\omega] = \mathbb{P}_{\pi}[Q^* \cdot
			(Q_1)^\omega]$. Furthermore, $\mathbb{P}_\pi[W_u \cap Q^* \cdot (Q_1)^\omega] =
			\mathbb{P}_\pi[Q^* \cdot (Q_1)^\omega]$ by definition of $Q_1$ (since at some
			point, only states of with probability at least $u/2$ of $W_u$ are seen) and
			Theorem~\ref{thm:levy}. Then:
			\begin{align*}
			\mathbb{P}_\pi[W_u] & = \mathbb{P}_\pi[W_u \cap Q^* \cdot (Q_u \setminus
			\mathsf{Deviate})^\omega] + \mathbb{P}_\pi[W_u \cap (Q^* \cdot (Q \setminus Q_u
			\cup \mathsf{Deviate}))^\omega] \\
			& = \mathbb{P}_\pi[W_u \cap Q^* \cdot (Q_1)^\omega] + \mathbb{P}_\pi[(Q^*
			\cdot (Q \setminus Q_u \cup \mathsf{Deviate}))^\omega] \\
			& = \mathbb{P}_\pi[Q^* \cdot (Q_1)^\omega] + \mathbb{P}_\pi[(Q^* \cdot (Q
			\setminus Q_u \cup \mathsf{Deviate}))^\omega] \\
			& = \mathbb{P}_\pi[Q^* \cdot (Q_u \setminus \mathsf{Deviate})^\omega] +
			\mathbb{P}_\pi[(Q^* \cdot (Q \setminus Q_u \cup \mathsf{Deviate}))^\omega] \\
			& = 1
			\end{align*}
			This is a contradiction with the fact that $\mathbb{P}_\pi[W_u] \leq u/2$. In
			fact, there is some $\pi' \in \pi \cdot Q^* \cdot (Q_u \setminus
			\mathsf{Deviate})$ such that we have $\mathbb{P}_{\pi'}[W_u] < \frac{u}{2}$ and
			$\mathbb{P}_{\pi'}[Q^* \cdot (Q \setminus Q_u \cup \mathsf{Deviate})] \leq u/2$.
			Let $\pi' = \pi'' \cdot q$. Since $\pi' \notin \mathsf{Deviate}$, we have
			$\MarVal{\G}[\s_{\mathsf{pb}}^{\rho \cdot \pi'}](q) = \MarVal{\G}(q) = u$. It
			follows that:
			\begin{align*}
			u \leq \mathbb{P}_{\rho \cdot \pi'}^{\s_{\mathsf{pb}}^{\rho \cdot
					\pi'},\s_\B}[W] & \leq \mathbb{P}_{\rho \cdot \pi'}^{\s_{\mathsf{pb}}^{\rho
					\cdot \pi'},\s_\B}[W \mid (Q_u \setminus \mathsf{Deviate})^\omega] \cdot
			\mathbb{P}_{\rho \cdot \pi'}^{\s_{\mathsf{pb}}^{\rho \cdot \pi'},\s_\B}[(Q_u
			\setminus \mathsf{Deviate})^\omega] \\
			& + \mathbb{P}_{\rho \cdot \pi'}^{\s_{\mathsf{pb}}^{\rho \cdot
					\pi'},\s_\B}[Q^* \cdot (Q \setminus Q_u \cup \mathsf{Deviate})] \\
			& \leq \mathbb{P}_{\rho \cdot \pi'}^{\s_{\mathsf{pb}}^{\rho \cdot
					\pi'},\s_\B}[W \mid (Q_u \setminus \mathsf{Deviate})^\omega] \cdot (1 -
			\frac{u}{2}) + \frac{u}{2}
			\end{align*}
			Note that we can indeed relate these probabilities with the previous ones
			(expressed with $\mathbb{P}$) with the strategy $\s_\rho$ -- instead of
			$\s_{\mathsf{pb}}^{\rho \cdot \pi'}$ -- since these two strategies coincide
			outside of $\mathsf{Deviate}$. We obtain:
			\begin{displaymath}
			\mathbb{P}_{\rho \cdot \pi'}^{\s_{\mathsf{pb}}^{\rho \cdot \pi'},\s_\B}[W
			\mid (Q_u \setminus \mathsf{Deviate})^\omega] = \mathbb{P}_{\pi'}[W \mid (Q_u
			\setminus \mathsf{Deviate})^\omega] \geq \frac{u}{2 - u}
			\end{displaymath}
			We can then conclude that:
			\begin{align*}
			\mathbb{P}_{\pi'}[W_u] & \geq \mathbb{P}_{\pi'}[W_u \cap (Q_u \setminus
			\mathsf{Deviate})^\omega] \\
			& = \mathbb{P}_{\pi'}[W \mid (Q_u \setminus \mathsf{Deviate})^\omega] \cdot
			\mathbb{P}_{\pi'}[(Q_u \setminus \mathsf{Deviate})^\omega] \\
			& \geq \frac{u}{2 - u} \cdot (1 - \frac{u}{2}) \\
			& = \frac{u}{2}
			\end{align*}
			This is a contradiction with the fact that $\mathbb{P}_{\pi'}[W_u] <
			\frac{u}{2}$. In fact, our assumption $\mathbb{P}^{\s_\rho,\s_\B}_{\rho}[Q^*
			\cdot (Q_u \setminus \mathsf{Deviate})^\omega] >
			\mathbb{P}^{\s_\rho,\s_\B}_{\rho}[W \cap Q^* \cdot (Q_u \setminus
			\mathsf{Deviate})^\omega]$ does not hold. That is, we have
			$\mathbb{P}^{\s_\rho,\s_\B}_{\rho}[Q^* \cdot (Q_u \setminus
			\mathsf{Deviate})^\omega] = \mathbb{P}^{\s_\rho,\s_\B}_{\rho}[W \cap Q^* \cdot
			(Q_u \setminus \mathsf{Deviate})^\omega]$.
			%\item 
		\end{itemize}
	\end{proof}
	The proof of Theorem~\ref{thm:fair_strategy_uniformly_optimal} can then be done
	in a similar way with the additional remark that if $\s_{\mathsf{fc}}$ is a
	finite-choice strategy, then the strategy $\s_\A$ defined in this proof also is.
	
	\subsection{Proof of Theorem~\ref{thm:transfer_memory_finite_choice}}
	\label{proof:thm_transfer_memory_finite_choice}
	First note that $W_\mathsf{proj}$ is prefix-independent. Indeed, for $\rho \in
	W_\mathsf{proj}$. We have $\rho = k^n \cdot \pi_0 \cdot k \cdot \pi_1 \cdots$
	for $n \in \{ 0,1 \}$ and $\pi \in W$. Then, for all $i \in \N$, we have
	$\rho_{\geq i} = k^{n'} \cdot \pi_j \cdot k \cdot \pi_{j+1} \cdots$ for some $n'
	\in \{ 0,1 \}$ and $j \in \N$. Then, since $W$ is prfix independent, then
	$\pi_{\geq j} \in W$.
	
	Now, for a finite choice strategy $\s_\A$ and a value $u \in V_{\G} \setminus
	\{ 0 \}$, let us define the game $\G^{\mathsf{tb}}_u$:
	\begin{definition}
		\label{def:game_g_u_tb}
		%Consider a concurrent game $\G = \Games{\Aconc}{W}$ and a 
		For a positive value $u \in V_{\G} \setminus \{ 0 \}$, we define the game
		$\G^\mathsf{tb}_u = \Games{\Aconc_u}{W_{\mathsf{proj}}}$ with
		$\Aconc^\mathsf{tb}_u = \langle Q^\A_u \uplus Q^\B_u \uplus \Aconc_W,(A'_q)_{q
			\in Q'_u},(B'_q)_{q \in
			Q'_u},\distribSet,\delta',\distribFunc',\colSet,\colFunc' \rangle$ where we
		refer to the arena $\Aconc_W$ as a sink state to which there are some
		transitions:
		\begin{itemize}
			\item $Q_u^\A := Q_u$ the set of Player $\A$'s states;
			\item For all $q \in Q^\A_u$, we have $A'_q := S_q$ and $B'_q := \{
			b_{\mathsf{id}} \}$ for a new fresh action $b_{\mathsf{id}}$;
			\item $Q_u^\B := \{ (q,\sigma) \mid q \in Q_u,\; \sigma \in A'_q \}$ the set
			of Player $\B$'s states;
			\item For all $q \in Q^\B_u$, we have $A'_q := \{ a_{\mathsf{id}} \}$ for a
			new fresh action $a_{\mathsf{id}}$ and $B'_q := B_q$;
			\item $\distribSet' := \{ d_{q,\sigma} \mid q \in Q_u,\; \sigma \in A'_q \}
			\uplus \{ d_{q,\sigma,b} \mid q \in Q_u,\; \sigma_q \in A'_q, b \in B \}$;
			\item For all $q \in Q_u$ and $\sigma \in A'_q$ we have:
			$\delta'(q,\sigma,b_{\mathsf{id}}) := d_{q,\sigma_q}$;
			\item For all $q \in Q_u$, $\sigma \in A'_q$ and $b \in B_q$, we have:
			$\delta'((q,\sigma),a_\mathsf{id},b) := d_{q,\sigma,b}$;
			\item For all $q \in Q_u,\; \sigma \in A'_q$, we have
			$\distribFunc'(d_{q,\sigma})((q,\sigma)) := 1$. Furthermore, for all $b \in
			B'_q$ and $q' \in Q$, we have:
			\begin{displaymath}
			\distribFunc'(d_{q,\sigma_q,b})(q') := 
			\begin{cases}
			\sum_{a \in A} \sigma_q(a) \cdot \distribFunc(\delta(q,a,b))(q') & \text{ if
			} q' \in Q_u\\
			\sum_{a \in A} \sigma_q(a) \cdot \distribFunc(\delta(q,a,b))[Q \setminus Q_u]
			& \text{ if } q' = \Aconc_W\\
			\end{cases}
			\end{displaymath}
			\item Finally, for all $q \in Q_u$, we have $\colFunc'(q) := \colFunc(q)$ and
			for all $\sigma \in A'_q$: $\colFunc'(q,\sigma) := k$ for some arbitrary color
			$k \in \colSet$.
		\end{itemize}
	\end{definition}	
	
	Let us now proceed to the proof of
	Theorem~\ref{thm:transfer_memory_finite_choice}. 
	\begin{proof}
		As for the proof of Lemma~\ref{lem:value_one_Gu}, we can show that if there is
		a subgame optimal strategy that has finite choice, then for all $u \in V_{\G}
		\setminus \{ 0 \}$, all the states in the game $\G^\mathsf{tb}_u$ have a
		positive value. Indeed, if Player $\A$ plays the subgame optimal strategy that
		has finite choice in the turn-based game $\G_u^\mathsf{tb}$, we obtain the same
		MDPs (modulo intermediate states colored with $k$) than the MDPs obtained in the
		concurrent game $\G$ (restricted to $Q_u$) where Player $\A$ plays the same
		subgame optimal strategy that has finite choice. Furthermore, all the games
		$\G^\mathsf{tb}_u$ are finite and turn-based. We can then apply the same proof
		than for Theorem~\ref{thm:transfer_memory}: by
		Theorem~\ref{thm:all_not_zero_implies_almost_sure}, there exists a subgame
		almost-surely winning strategy in all games $\G_u$ for $u \in V_{\G} \setminus
		\{ 0 \}$. We then obtain a subgame optimal strategy by gluing these strategies
		into one, this is given by Lemma~\ref{prop:glued_strategy}. Then again, if all
		strategies $\s_\A^u$ are positional for all $u \in V_{\G} \setminus \{ 0 \}$,
		then so is the glued strategy $\s_\A[(\s_\A^u)_{u \in V_{\G} \setminus \{ 0
			\}}]$. 		
	\end{proof}
	
	\subsection{Proof of Corollary~\ref{coro:transfer_memory_finite_choice}}
	\label{proof:coro_transfer_memory_finite_choice}
	Let us formally define these objectives and argue that they have aneutral color
	and that they are \textsf{PSAWT}. 
	
	Consider first the parity objective. It is formally defined in
	Definition~\ref{def:parity_buchi_cobuchi}. The least color (which is an integer
	for the parity objective) in $\colSet$ is straightforwardly a neutral color
	(since we consider the maximum of the colors seen infinitely often).
	Furthermore, the parity objective is \textsf{PSAWT}, as shown for instance in
	\cite{DBLP:conf/soda/ChatterjeeJH04,DBLP:conf/fossacs/Zielonka04}.
	
	Let us now define the mean-payoff objective. 
	\begin{definition}[Mean-payoff]
		Let $\colSet := Q \cap [0,1]$ and $m \in \colSet$. The \emph{mean-payoff}
		objective $W_\mathsf{MP(m)}$ w.r.t. $m$ is $W_\mathsf{MP(m)} := \{ \rho \in
		\colSet^\omega \mid \limsup_{n \rightarrow \infty} \frac{1}{n+1} \sum_{i = 0}^n
		\rho_i \geq m \}$.
	\end{definition}
	Given a mean-payoff objective $W_\mathsf{MP(m)}$ for some $m \in Q \cap [0,1]$,
	one can see that $m$ is in fact a neutral color for $W_\mathsf{MP(m)}$.
	Furthermore, it is also \textsf{PSAWT}, as proved in
	\cite{liggett1969stochastic}. 
	
	Finally, consider the generalized Büchi objective. 
	\begin{definition}[Generalized Büchi]
		Let $\colSet \subseteq \N$ be a finite subset of integers. A \emph{generalized
			Büchi} objective is an intersection of Büchi objectives on $\colSet$. 
	\end{definition}
	This objective has a neutral color, up to adding a fresh color that does not
	appear in any of the intersected Büchi objectives. Furthermore, it is also
	\textsf{PSAWT}, as proved in \cite{DBLP:conf/qest/ChatterjeeAH04} as a sub-class
	of upward-closed Muller objectives. 
	
	The proof of Corollary~\ref{coro:transfer_memory_finite_choice} is then direct.
	\begin{proof}
		By Theorem~\ref{thm:fair_strategy_uniformly_optimal}, there is subgame optimal
		strategy in $\G$ that is finite choice. Since the objective considered has a
		neutral color and is \textsf{PSAWT}, it follows that there is a positional
		subgame optimal strategy by Theorem~\ref{thm:transfer_memory_finite_choice}.
	\end{proof}
\end{document}